\documentclass[aos]{imsart}

\newif\ifShowsupplement
\Showsupplementtrue 

\RequirePackage{amsthm,amsmath,amsfonts,amssymb}
\usepackage{enumerate}
\usepackage{bm}
\usepackage{anyfontsize}
\usepackage{appendix}
\usepackage[english]{babel}
\RequirePackage[colorlinks,citecolor=blue,urlcolor=blue]{hyperref}
\usepackage{mathtools}
\usepackage{xspace}
\usepackage{enumitem}
\RequirePackage{luatex85}
\usepackage[all]{xy}
\usepackage[dvipsnames]{xcolor}
\usepackage{tikz}
\usepackage{tikz-cd}
\usetikzlibrary{calc}
\usepackage{comment}
\usepackage{graphicx}

\usepackage{pdfpages}
\RequirePackage[authoryear]{natbib}

\startlocaldefs

\numberwithin{equation}{section}
\allowdisplaybreaks

\theoremstyle{plain}
\newtheorem{Theorem}{Theorem}[section]
\newtheorem{Lemma}[Theorem]{Lemma}
\newtheorem{Corollary}[Theorem]{Corollary}
\newtheorem{Proposition}[Theorem]{Proposition}
\theoremstyle{definition}
\newtheorem{Assumption}[Theorem]{Assumption}

\newtheorem{Definition}[Theorem]{Definition}
\newtheorem{Remark}[Theorem]{Remark}

\newcommand\thesupplement{Appendix}
\newcommand\simiid{ \stackrel{\text{i.i.d.}}{\sim}}
\newcommand\dsim{\sim_{\text{d}}} 

\DeclareMathOperator{\theLaw}{ {\mathsf{P} }}
\DeclareMathOperator{\Expected}{ {\mathsf{E} }}

\DeclareMathOperator{\Exp}{\mathrm{Exp}}
\DeclareMathOperator{\Log}{\mathrm{Log}}
\DeclareMathOperator{\vspan}{span}
\DeclareMathOperator{\image}{Im}
\DeclareMathOperator{\trace}{trace}

\DeclareMathOperator\PT{{PT}}

\newcommand\indicator{{1}}
\newcommand\esssup{\mathrm{ess\,sup}}
\newcommand{\Real}{\mathbb{R}}

\newcommand{\vep}{\varepsilon}
\newcommand\by{\mathbf{y}}
\newcommand\ud{\mathrm{d}}

\newcommand\Mprocess{\mathbb{M}}
\newcommand\Pprocess{\mathbb{P}}

\newcommand\Cregion{\mathfrak{C}}
\newcommand\Dregion{\mathfrak{D}}

\newcommand\Uset{\mathcal{U}}
\newcommand\Vset{\mathcal{V}}
\newcommand\covering{\mathcal{N}}
\newcommand\bracketing{\mathcal{N}_{[\,]}}

\newcommand\risk{\mathfrak{R}}

\newcommand\Fset{\mathcal{F}}
\newcommand\Gset{\mathcal{G}}
\newcommand\KIntegralOperator{\mathfrak{k}}
\newcommand\origin{\mathfrak{o}}
\newcommand\Xspace{\mathfrak{X}}

\newcommand\Yspace{\mathfrak{Y}}

\newcommand\Hess{\mathrm{Hess}}
\newcommand\Hspace{\mathcal{H}}

\newcommand\Mspace{\mathcal{M}}
\newcommand\continuous{\mathcal{C}}
\newcommand{\sphere}{ \mathbb{S}}

\newcommand{\sphereDist}{ \mathrm{d}_\sphere}
\newcommand{\innerProd}[1]{\langle #1 \rangle}
\newcommand{\YinnerProd}[1]{\langle #1 \rangle_\Yspace}
\newcommand{\HinnerProd}[1]{\langle #1 \rangle_\Hspace}
\newcommand{\norm}[1]{\lVert #1 \rVert}
\newcommand{\Ynorm}[1]{\left\lVert #1 \right\rVert_{\Yspace}}

\newcommand{\Hnorm}[1]{\lVert #1 \rVert_{\Hspace}}
\newcommand{\Hknorm}[1]{\lVert #1 \rVert_{\Hspace_k}}
\newcommand{\Ltwonorm}[1]{\lVert #1 \rVert_{L^2(\theLaw_X)}}
\newcommand{\LtwoX}{{L^2(\theLaw_X)}}
\newcommand{\inftynorm}[1]{\lVert #1 \rVert_{\infty}}
\newcommand{\operatornorm}[1]{\lVert #1 \rVert_{\mathrm{op}}}
\newcommand\proj{{\mathcal{P}}}
\newcommand\ball{\mathcal{B}}
\newcommand\tensor{\otimes}
\newcommand{\bxi}{{\boldsymbol{\xi}}}

\newcommand\ASconv{\xrightarrow{\mathrm{a.s.}}}
\newcommand\outerASconv{\xrightarrow{\mathrm{a.s.}*}}
\newcommand{\sinc}{\mathrm{sinc}}
\newcommand{\adjoint}{\dagger}
\newcommand{\transpose}{\mathsf{T}}
\newcommand{\eval}{\mathcal{E}}
\newcommand{\orthogonal}{\mathcal{O}}

\newcommand\bounded{\mathcal{L}}
\newcommand\id{\mathrm{Id}}

\DeclareMathOperator*{\hy}{\mathbb{Y}}

\newcommand{\nv}{\| v \|}
\newcommand{\vy}{\langle v, y \rangle}
\newcommand{\py}{\langle \origin, y \rangle}
\DeclareMathOperator*{\Ccal}{\mathcal{C}}
\DeclareMathOperator*{\Ucal}{\mathcal{U}}

\newcommand{\n}{\omega} 
\newcommand{\yo}{\rho} 
\newcommand{\vangle}{\theta} 
\newcommand{\dom}{\mathfrak{A}_{\pi/4}}
\newcommand{\domm}{\tilde{\mathfrak{A}}_{\pi/4}}
\newcommand{\h}{\operatorname{sinc}}
\newcommand{\hh}{h}

\endlocaldefs

\begin{document}

\begin{frontmatter}
\title{Infinite-Dimensional Spherical Kernel ridge Regression}
\runtitle{Spherical Kernel ridge Regression}

\begin{aug}
  \author[A]{\fnms{Beatrice}~\snm{Matteo} \ead[label=e1]{beatrice.matteo@unige.ch}}
  \author[B]{\fnms{Almond}~\snm{St\"ocker} \thanks{Supported by SNSF Grant 200020\_207367} \ead[label=e2]{almond.stoecker@epfl.ch}}
  \author[A]{\fnms{Shahin}~\snm{Tavakoli} \ead[label=e3]{shahin.tavakoli@unige.ch}}

  \address[A]{RISIS, GSEM, Universit\'e de Gen\`eve, Switzerland \printead[presep={ ,\ }]{e1,e3}}
  \address[B]{EPFL, Switzerland \printead[presep={ ,\ }]{e2}}
\end{aug}

\begin{abstract}
We introduce a novel regression framework designed to model non-linear responses situated on a sphere $\sphere$ of finite or infinite dimension. Unlike traditional tangent-space regressions, which lift responses to a tangent space $T_\origin \sphere$ and thereby violate intrinsic spherical distances, our proposed method employs an intrinsic approach. We model the conditional mean through an intercept $\origin \in \sphere$ and a linear predictor function $f: \Xspace \to T_\origin \sphere$. This formulation transforms the estimation problem into finding a linear predictor within a function space, but utilizing a metric defined by spherical geometry rather than standard Euclidean distance. Leveraging vector-valued reproducing kernel Hilbert space theory, our approach reduces the infinite-dimensional estimation challenge to a manageable finite-dimensional problem via the representer theorem, leading to an efficient BFGS-based estimation algorithm. We establish convergence rates and analyze the finite-sample behavior of our estimator, concluding with a practical application to density regression. The full implementation is available in R.
\end{abstract}

\begin{keyword}[class=MSC]
\kwd[Primary ]{62G08}
\kwd{62R10, 62R20, 62R30}
\kwd{62J07}
\kwd[; secondary ]{46E22}
\kwd{53C22}
\kwd{62H11}
\end{keyword}

\begin{keyword}
\kwd{Spherical Regression}
\kwd{Kernel Ridge Regression}
\kwd{Functional Data Analysis}
\end{keyword}

\end{frontmatter}

\section{Introduction}

The problem of regression or statistical learning, which consists of modelling the dependency between a covariate or feature $X$ and a response $Y$ is a central problem in statistics and machine learning, and its origins can be traced back to Newton, Legendre, Gauss and Galton. In its simplest form, both $X$ and $Y$ are scalars, the dependency between them is considered to be linear subject to an error term, $Y = \beta_0 + X \beta + \vep$. This model has been extended in various directions, by considering multivariate or even infinite-dimensional covariates and/or responses, and also nonlinear dependencies, such as through generalized linear models \citep{mccullaghGeneralizedLinearModels1989}, generalized additive models \citep{woodGeneralizedAdditiveModels2017a}, reproducing kernel Hilbert space methods \citep{aronszajn1950theory,paulsenIntroductionTheoryReproducing2016,shawe-taylorKernelMethodsPattern2004}, and machine learning methods \citep{murphyProbabilisticMachineLearning2022a}. In most of these extensions, covariates and responses lie in an Euclidean or linear space, where addition and taking averages are defined and have a natural meaning. This linear structure underpins most standard regression models. However some regression problems do not fall in this scenario: these include data that are probability distributions \citep[such as in compositional data analysis or distribution-valued responses;][]{aitchisonStatisticalAnalysisCompositional2003}, directional or circular data \citep[e.g., wind directions;][]{mardiaStatisticsDirectionalData2014}, shape data \citep{kendallShapeShapeTheory2009,srivastava2016functional}, network data \citep{zhouNetworkRegressionGraph2022}, or topological data \citep[such as persistence diagrams;][]{bubenikStatisticalTopologicalData}.

Another axis of complexity in regression is the dimension of the responses or covariates. In many modern problems, these are either high-dimensional data \citep[e.g., about 10,000 genes' expression can be measured in gene expression datasets;][]{buhlmannStatisticsHighDimensionalData2011} or functional data \citep[that is, data points that are infinite-dimensional but smooth, such as in gait analysis;][]{ramsayFunctionalDataAnalysis2005,wangFunctionalDataAnalysis2016b}.

\subsection{Metric spaces regression}

In many situations, data can be viewed as points on a metric space.
Regression for data in a metric spaces $(\mathcal M, \ud)$, also known as Fr\'echet regression, can be traced back to \citet{frechetElementsAleatoiresNature1948}. In this setting, Fr\'echet introduced the ``average position''---now commonly referred to as \emph{Fr\'echet mean}---of a random variable $Y \in \mathcal M$  as a minimizer of $\Expected \ud^2(Y,m) $ over $m \in \mathcal M$, and noticed that this minimizer coincides with the $\Expected(Y)$ if $\mathcal M$ is a Euclidean space. This unconditional model corresponds to the constant covariate case, i.e., an intercept-only regression model, and was extended by \citet{petersenFrechetRegressionRandom2019} to include Euclidean covariates $X$ by modelling the conditional Fr\'echet mean of $Y$ given $X=x$, i.e., $m(x) = \arg \min_{m \in \mathcal M} \Expected( d^2(Y,m) \mid X=x)$, and many extensions have been proposed \citep[see][for some recent ones]{chenUniformConvergenceLocal2022,ghosalFrechetSingleIndex2023,qiuRandomForestWeighted2024}.

\subsection{Manifold regression}

In some settings, data lie in a space with some geometric structure, such as a Riemannian manifold structure. In these cases, the geometry of the space allows for more refined algorithms \citep{boumal2023intromanifolds} and more interpretable assumptions for the theoretical guarantees. The unconditional Fr\'echet mean of $Y$ belonging to a finite-dimensional Riemannian manifold $\mathcal M$ was studied initially by \citet{karcherRiemannianCenterMass1977a} and more recently by \citet{bhattacharyaLargeSampleTheory2003,bhattacharyaLargeSampleTheory2005,pennecIntrinsicStatisticsRiemannian2006}. Modelling the conditional Fr\'echet mean of $Y$ given some covariates $X$ has been studied through extensions of MANOVA \citep{huckemannIntrinsicMANOVARiemannian2010a}, through the ``tangent-space regression'' approach, \citep[where the data are lifted to a common tangent space and then fitted on the latter (linear) space;][Section~13.4.1]{dryden2016statistical}, through local kernel regression \citep[an extension of the Nadaraya--Watson estimator;][]{davis2010population}.

A more recent approach is geodesic regression \citep{thomasfletcherGeodesicRegressionTheory2013}, where the conditional Fr\'echet mean of $Y$ given $X=x \in \Real$ is modelled as $\Exp_p(x v)$, i.e., a geodesic shooting from the point $p \in \mathcal M$ in the direction $v \in T_m \mathcal M$ for a distance (proportional to) $x$. Extensions of the geodesic model to multivariate predictors $(x_1,\ldots, x_n) \in \Real^n$  \citep{zhu2009intrinsic,kimMultivariateGeneralLinear2014} model the conditional mean as $\Exp_p(\sum_{i=1}^n x_i v_i)$, resulting in a Generalized Linear Model (GLM) type intrinsic regression model, potentially also replacing $\Exp_p$ with a more general response function  \citep{corneaRegressionModelsRiemannian2017}.

\subsection{Infinite-dimensional manifolds}

Most of the existing works focus on finite-dimensional manifold. Moving to responses on infinite-dimensional manifolds, modelled on a Hilbert space, brings other levels of complexity, in particular for the theory. Indeed, with such nonlinear problems, the estimators of interest are not known in closed form, and empirical process techniques must then be used. In most cases, these boil down to showing a quadratic growth of the population risk, and a control of the complexity of the problem, for instance through bounds on covering numbers. The latter, however, often implicitly restrict the dimension of the manifold to be finite \citep[see, e.g., Assumption (M2) in][]{choiHighdimensionalHilbertSchmidt2025}. The infinite-dimensional case is nevertheless needed in applications, such as when modelling probability density functions (see Section~\ref{sec:application}) or elastic shapes \citep{srivastava2016functional}. 

\subsection{Functional data analysis}

Our problem is related to functional data analysis \citep{ramsayFunctionalDataAnalysis2005}, since we consider a potentially infinite-dimensional response space (the sphere in a Hilbert space) or infinite-dimensional input space $\Xspace$. A substantial body of work has been devoted to the regression problem where either $X$ or $Y$ (or both) are infinite-dimensional and smooth and belong to a linear space \citep{morrisFunctionalRegression2015a}, such as curves in $L^2([0,1])$, using extensions of linear models, mixed models, additive models, or kernel smoothing methods \citep[e.g.,][and references therein]{cardotFunctionalLinearModel1999,ferraty2004nonparametric,yaoFunctionalLinearRegression2005,GrevenScheipl2017,jeonAdditiveRegressionHilbertian2020,detteStatisticalInferenceFunctiononfunction2024}.
Beyond linear response spaces,
\cite{stockerFunctionalAdditiveModels2023} model shapes/forms of curves $Y$ using a semi-parametric extension of GLM-type manifold regression. Functional responses in quotient spaces modulo warping are considered by \citet{steyerRegressionQuotientMetric2023}. 

\subsection{Contributions}

This paper considers the regression problem at the interface of manifold regression and functional data analysis, where the response $Y$ belongs to a sphere of arbitrary dimension (including the infinite-dimensional sphere in an abstract Hilbert space), and the covariate $X$ is in a Polish space. We model the conditional Fr\'echet mean of $Y$ given $X$ in a flexible manner using tools from vector-valued reproducing kernel Hilbert spaces \citep[VVRKHS;][Chapter~6]{paulsenIntroductionTheoryReproducing2016}. While the paper deals with the sphere, which is a specific manifold, it makes several solid contributions to the existing literature:
\begin{enumerate}
    \item We bring VVRKHS regression methods to the attention of functional data analysis community.
    \item We develop a novel representer result (Theorem~\ref{theorem:existence_of_global_minimizer_unconstrained}) that shows that model fitting reduces to a finite-dimensional optimization even in the infinite-dimensional case.
    \item We study convergence rates for our estimators under transparent and easily interpretable regularity assumptions. We provide rates for the infinite-dimensional sphere case with minimal assumptions (Section~\ref{sec:rates_minimal_assumptions}), and provide smoothness-dependent rates for the finite-dimensional sphere case (Section~\ref{sec:rates_depending_on_smoothness}).
    \item We provide an effective algorithm for model fitting, combining BFGS and a dual low-rank approximation (Section~\ref{sec:computational_aspects}). 
    \item We apply our method to a density regression data problem, showcasing our approach as an attractive solution to a class of imbalanced design problems, and illustrating its competitive performance (Section~\ref{sec:application}).
    \item We provide a ready-to-use R package \texttt{sphereg}\footnote{developer version: \url{https://github.com/Almond-S/sphereg}} and the code to reproduce our results.
\end{enumerate}

The remainder of this paper is organized as follows. Section~\ref{sec:notation} introduces the necessary mathematical preliminaries employed throughout the paper. We then present our model and estimation methodology in Section~\ref{sec:regression_problem}. Our theoretical results---including smoothness-dependent convergence rates---are presented and discussed in Section~\ref{sec:theoretical_results}. Following this, Section~\ref{sec:computational_aspects} discusses the computational aspect of the proposed method. We validate our methodology through a simulation study presented in Section~\ref{sec:simulation}, and demonstrate its practical utility via an application to regression with density responses in Section~\ref{sec:application}. We conclude with a discussion in Section~\ref{sec:discussion}. Finally, Appendix~\ref{sec:technical_results_of_independent_interest} contains two technical results of independent interest, while the complete set of proofs and technical results is provided in the \thesupplement. 

\section{Mathematical Preliminaries and Notation}
\label{sec:notation}

\subsection{The sphere as a Riemannian manifold}

Let $\Yspace$ be a real separable Hilbert space with inner-product $\YinnerProd{\cdot, \cdot}$ and induced norm $\Ynorm{\cdot}$. We denote by $\bounded(\Yspace)$ the space of bounded linear operators on $\Yspace$, by $S^\adjoint$ the adjoint operator of $S \in \bounded(\Yspace)$. 
For $y_1, y_2 \in \Yspace$, let $y_1 \tensor y_2 \in \bounded(\Yspace)$ be the bounded linear map defined by $(y_1 \tensor y_2)(y) = \YinnerProd{y, y_1} y_2$. Denote by $\sphere = \{ y \in \Yspace \mid \Ynorm{y} = 1 \} \subset \Yspace$ the unit sphere in $\Yspace$. The unit sphere $\sphere$ has a natural Riemannian submanifold structure induced by the inner product $\YinnerProd{\cdot, \cdot}$ \citep{klingenbergRiemannianGeometry2011}. At any point $p \in \sphere$, the tangent space $T_p \sphere$ is given by $T_p \sphere = \{ v \in \Yspace \mid \YinnerProd{v, p} = 0 \}$ and the Riemannian log map $\Log_p: \sphere \setminus \{-p\} \to T_p \sphere$ is defined by 
\[
    \Log_p(q) = \arccos ( \YinnerProd{p, q} ) \frac{\proj_p (q-p) }{ \Ynorm{\proj_p (q-p)} } 
    = \arccos ( \YinnerProd{p, q} ) \frac{\proj_p q }{ \Ynorm{\proj_p q} }, 
\]
where $\proj_p = \id - p \tensor p : \Yspace \to \Yspace$ is the orthogonal projector with null space $\vspan(p)$, and $\id$ is the identity operator on $\Yspace$. The exponential map at $p \in \sphere$ is the $\continuous^\infty$ map $\Exp_p: T_p \sphere \to \sphere$ defined by
\[
    \Exp_p(v) = \cos \Ynorm{v}\;\cdot  p + \sin\Ynorm{v} \; \cdot \frac{v}{\Ynorm{v}}, \quad v \in T_p \sphere,
\]
with the convention $\Exp_p(0) = p$.   Restricted to the open ball in $T_p \sphere$ of radius $\pi$ centered at the origin, $\Exp_p$ is injective, with image $\sphere \setminus \{-p\}$, and its inverse $\Log_p$ is $\continuous^\infty$ on $\sphere \setminus \{-p\}$.
The geodesic distance between two points $p, q \in \sphere$ on the unit sphere is $\sphereDist(p,q) = \arccos( \innerProd{p,q}_\Yspace )$, which is equal to  $\Ynorm{ \Log_p(q) }$ provided $q \neq -p$, and corresponds to the angle between $p$ and $q$. Since $T_p \sphere$ is a Hilbert subspace of $\Yspace$, we denote the norm on $T_p \sphere$ by $\Ynorm{\cdot}$ and its inner-product by $\YinnerProd{\cdot, \cdot}$ regardless of $p$, by slight abuse of notation.

When there is no risk of confusion, we shall write $\innerProd{\cdot, \cdot}$ and $\norm{\cdot}$ instead of 
$\YinnerProd{\cdot, \cdot}$ and $\Ynorm{\cdot}$ to alleviate notation.

\subsection{Reproducing kernel Hilbert spaces} \label{sec:rkhs}

A real Hilbert space $\Hspace$ with inner product $\HinnerProd{\cdot, \cdot}$ of functions from an arbitrary space $\Xspace$ to $\Real$ is a (scalar) \emph{reproducing kernel Hilbert space} (RKHS) if the pointwise evaluation operators $\eval_x: \Hspace \to \Real$ defined by $\eval_x f = f(x)$ are continuous operators for all $x \in \Xspace$. For each $x \in \Xspace$, the Riesz representation theorem implies the existence of $k_x \in \Hspace$ such that \[ \HinnerProd{k_x, f} = f(x),\] which is known as the \emph{reproducing property}.  The kernel of the RKHS is $k(x, \check x) = \eval_x \eval_{\check x}^\adjoint = k_{\check x}(x) \in \Real$, in other words we can identify $k_{\check x} = k(\cdot, \check x)$.

We sometimes write $\Hspace_k$ instead of $\Hspace$ to highlight the link to the kernel $k$. If $\Xspace$ is a separable topological space and the kernel $k$ is continuous on $\Xspace \times \Xspace$ then $\Hspace_k$ is a separable Hilbert space \citep[Lemma 4.33]{steinwart2008support}. 
A succinct treatment of RKHS is given in \citet{paulsenIntroductionTheoryReproducing2016}.

\subsection{Vector-valued reproducing kernel Hilbert spaces} \label{sec:vvrkhs}

A Hilbert space $\Hspace$ with inner product $\HinnerProd{\cdot, \cdot}$ of functions from an arbitrary space $\Xspace$ to a Hilbert space $\Yspace$ is a \emph{vector-valued reproducing kernel Hilbert space} (VVRKHS) if the pointwise evaluation operators $\eval_x: \Hspace \to \Yspace$ defined by $\eval_x f = f(x)$ are continuous operators for all $x \in \Xspace$. The operator-kernel of the VVRKHS is $K(x, \check x) = \eval_x \eval_{\check x}^\adjoint \in \bounded(\Yspace)$. We often use the notation $K_x := \eval_x^\adjoint \in \Hspace$, which implies $K_x(x') = K(x', x)$ for $x, x' \in \Xspace$. The reproducing property is
\[
    \YinnerProd{y, f(x)} = \HinnerProd{K_x y, f}, \quad \forall x \in \Xspace, y \in \Yspace, f \in \Hspace.
\]
A VVRKHS is uniquely defined by its operator-kernel $K$. We shall use in this paper \emph{simple multiplicative operator-kernels} (hereafter \emph{SMO kernels}; these are special cases of \emph{separable} kernels) that are of the form $K(x, x') = k(x,x') \id$, where $k$ is a (scalar) RKHS kernel function on $\Xspace \times \Xspace$, and $\id$ is the identity operator on $\Yspace$. For SMO kernels, we have that $\Hspace$ is isomorphic to  $\Hspace_k \otimes \Yspace$, the Hilbert space tensor product between the RKHS $\Hspace_k$ (associated with the scalar kernel $k$) and $\Yspace$. If $\Xspace$ is a separable topological space, $k$ is continuous and $\Yspace$ is a separable Hilbert space then $\Hspace$ is a separable VVRKHS.
Note that if $\Yspace = \Real$ then the VVRKHS is a (scalar) RKHS.
A succinct treatment of VVRKHS is given in \citet[][Chapter 6]{paulsenIntroductionTheoryReproducing2016}.

\subsection{Notation} 
Throughout this paper, we employ standard asymptotic notation to characterize the limiting behavior of sequences as $n \to \infty$. Let $(a_n)_{n \ge 1}$ and $(b_n)_{n \ge 1}$ be two sequences of real numbers, where $b_n \neq 0$ for all sufficiently large $n$. We write $a_n \lesssim b_n$ if there exists a universal constant $C > 0$ such that $|a_n| \le C |b_n|$ for all sufficiently large $n$, and  $a_n \asymp b_n$ if both $a_n \lesssim b_n$ and $b_n \lesssim a_n$ hold. 

We denote the underlying probability space by $(\Omega, \mathcal B, \theLaw)$ and the expectation operator by $\Expected$. For random elements $Z$ we denote its induced probability measure by $\theLaw_Z$.
We write $Z \dsim F$ to mean that $Z$ follows the distribution $F$. For $(Z_n)_{n \ge 1}$ be a sequence of random variables and $(a_n)_{n \ge 1}$ a sequence of strictly positive real numbers. We write $Z_n = O_{\theLaw}(a_n)$ if the sequence $(Z_n / a_n)$ is stochastically bounded, that is, for any $\varepsilon > 0$, there exists a constant $M > 0$ and an integer $N$ such that ${\theLaw}(|Z_n| / a_n > M) < \varepsilon$ for all $n \ge N$. We denote the outer-probability counterpart of $O_{\theLaw}(a_n)$ by $O_{{\theLaw}^*}(a_n)$, which is useful in contexts where measurability is not guaranteed---such as with suprema of empirical processes over general function classes.

For a Fr\'echet differentiable function $f:H \to \Real$ defined on a Hilbert space $H$, we denote the gradient of $f$ at $x$ by $\nabla f(x) \in H$. 
For a metric space $(\mathcal{M}, d_\mathcal{M})$, the closed ball with center $o \in \mathcal{M}$ and radius $r > 0$ is denoted by $\ball_\mathcal{M} (o, r) = \{ x \in \mathcal{M} \mid d_\mathcal{M}(o, x) \leq r\}$.

\section{The Sphere-on-Hilbert regression problem}
\label{sec:regression_problem}

Let $\Xspace$ be a Polish space.
Assume $(X,Y) \in \Xspace \times \sphere$ is a random element of the product space $\Xspace \times \sphere$ (equipped with the Borel $\sigma$-algebra) with joint distribution $\theLaw_{X,Y}$ and marginal distributions $\theLaw_X, \theLaw_Y$. 
The best predictor of $Y$ given $X=x \in \Xspace$ in the least squares sense (with respect to the spherical distance) 
is given by the conditional Fréchet mean \citep[see, e.g.,][]{petersenFrechetRegressionRandom2019},
\begin{equation} \label{eq:muX}
    \mu(x) = \arg\min_{q \in \sphere} \Expected\left( \sphereDist^2(q, Y) \mid X=x \right).
\end{equation}
The following result shows that $\mu$ is well defined under a bounded conditional support condition.
Let $\theLaw_{Y\mid X=x}$ be the regular conditional distribution of $Y \mid X = x$, defined  for all $x \in A$ where $A \subseteq \Xspace$ has $\theLaw_X$-measure $1$ \citep[Theorem~5.3]{kallenberg1997foundations}.
\begin{Proposition} \label{proposition:mu_well_defined}
    If, for each $x \in A$ there exist $p_x \in \sphere$ and $\eta_x >0$ such that
 $\theLaw( \sphereDist(Y, p_x) \leq \pi/2 - \eta_x \mid X=x) = 1$,
then there is a measurable function $\mu$ satisfying \eqref{eq:muX} $\theLaw_X$-almost everywhere. 
\end{Proposition}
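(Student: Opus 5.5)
The proof has two parts. First, for each fixed $x \in A$, we show that the conditional Fréchet function $F_x(q) = \int \sphereDist^2(q,y)\, \theLaw_{Y\mid X=x}(\ud y)$ has a unique minimizer on $\sphere$. Second, we show that the resulting map $x \mapsto \mu(x)$ can be chosen measurably. Throughout, write $\nu_x = \theLaw_{Y\mid X=x}$. Its support lies in the closed cap $C_x = \{y : \innerProd{y,p_x} \ge \sin\eta_x\}$, a geodesic ball of radius $\pi/2-\eta_x < \pi/2$.

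\emph{Existence and uniqueness for fixed $x$.} I would first show that any minimizer must lie in $C_x$. Let $\mathcal{P}_x$ be the nearest-point projection onto the cap. For $q \notin C_x$, this projection pulls $q$ strictly toward $p_x$ along the great circle, and it reduces the distance to every $y\in C_x$; this is a standard property of convex balls of radius $<\pi/2$. Hence $F_x$ can be minimized over $C_x$ alone.

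On $C_x$, existence needs care in infinite dimensions, because $C_x$ is not compact. I would use convexity: on a ball of radius $<\pi/2$, the squared distance $q\mapsto \sphereDist^2(q,y)$ is strictly, indeed uniformly, geodesically convex. Concretely, along a unit-speed geodesic $\gamma$ one has $\frac{d^2}{dt^2}\tfrac12 \sphereDist^2(\gamma(t),y) \ge r\cot r \ge c(\eta_x)>0$ whenever $r=\sphereDist(\gamma(t),y)\le \pi-2\eta_x$. All points of $C_x$ are within distance $\pi - 2\eta_x < \pi$ of each other, which keeps $r\cot r$ bounded below. Integrating gives a uniformly geodesically convex function $F_x$ on $C_x$.

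I would then take a minimizing sequence $q_n$. Uniform convexity applied at geodesic midpoints, which lie in $C_x$ by convexity of the cap, gives $\sphereDist^2(q_n,q_m) \lesssim F_x(q_n)+F_x(q_m)-2\inf F_x \to 0$, so the sequence is Cauchy. Completeness of $\sphere$, closedness of $C_x$, and continuity of $F_x$ (it is Lipschitz, since $\sphereDist\le\pi$) then give existence. Uniqueness follows from strict convexity. This convexity/Cauchy step is the main obstacle, since it replaces compactness. The key point is checking that the Hessian lower bound of $\tfrac12 \sphereDist^2$ on the sphere of curvature $1$ holds uniformly; this is the comparison $r\cot r$ from Jacobi field computations, valid in any dimension.

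\emph{Measurability.} I would define $\mu(x)$ as the unique minimizer for $x\in A$ and as a fixed point outside $A$. The map $(x,q)\mapsto F_x(q)$ is measurable in $x$, because $\nu_x$ is a regular conditional distribution, and continuous in $q$; so it is a Carathéodory function on the Polish space $\sphere$. Take a countable dense set $D \subset \sphere$. Then for any closed ball $B$,
\[
\{x \in A : \mu(x)\in B\} = \bigcap_{k\ge 1} \{x\in A: \inf_{q\in D\cap B^{1/k}} F_x(q) \le \inf_{q\in D} F_x(q)\},
\]
where $B^{1/k}$ denotes the open $1/k$-enlargement of $B$. To verify this identity, I would again use uniform convexity: near-minimizers are close to $\mu(x)$, so near-minimizers inside $B^{1/k}$ exist if and only if $\mu(x)$ lies in the closed enlargement, and intersecting over $k$ gives $\mu(x)\in B$. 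The right-hand side is measurable as a countable combination of measurable functions, so $\mu$ is Borel measurable. Alternatively, one could invoke a measurable selection theorem for Carathéodory argmins, such as Aliprantis–Border's measurable maximum theorem, but the compactness it requires fails here, so the direct argument is preferable. Finally, $\mu$ satisfies \eqref{eq:muX} on $A$, which has $\theLaw_X$-measure $1$.
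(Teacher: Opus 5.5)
Your proof has a genuine gap: the uniform convexity step is false in the regime the proposition allows. On the unit sphere, the Hessian of $\tfrac12\sphereDist^2(\cdot,y)$ at a point at distance $r$ from $y$ has eigenvalue $r\cot r$ in directions orthogonal to the geodesic towards $y$. This eigenvalue is negative for $r\in(\pi/2,\pi)$.

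The cap $C_x$ has radius $\pi/2-\eta_x$, so two of its points can be at distance up to $\pi-2\eta_x$, which exceeds $\pi/2$ whenever $\eta_x<\pi/4$. On $[0,\pi-2\eta_x]$ the function $r\cot r$ is then bounded below only by a negative number, not by some $c(\eta_x)>0$. Consequently neither $\sphereDist^2(\cdot,y)$ nor $F_x$ is geodesically convex on $C_x$ in general.

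Here is a concrete failure. Take $\nu_x=\tfrac12(\delta_{y_1}+\delta_{y_2})$ with $y_1,y_2$ at opposite ends of a diameter of $C_x$. Take an interior point of $C_x$ close to $y_2$ on the great circle through $y_1,y_2$, and move orthogonally to that circle. The second derivative of $F_x$ is then about $1+(\pi-2\eta_x)\cot(\pi-2\eta_x)<0$ for small $\eta_x$.

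Everything downstream rests on this step:
\begin{itemize}
\item the midpoint estimate that makes a minimizing sequence Cauchy;
\item uniqueness via strict convexity;
\item the well-posedness (near-minimizers are close to $\mu(x)$) that your measurability identity needs.
\end{itemize}
Your argument works only for caps of radius $<\pi/4$. That is precisely why the paper's later theory (Theorem~\ref{theorem:ell_Hessian_new}, Assumption~\ref{assumption:support_of_Y_for_theory_new}) is stated with $\pi/4$.

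Your reduction to $C_x$ is also wrong as stated: nearest-point projection onto a cap of radius $R=\pi/2-\eta_x$ can increase distances to points of the cap. Let $y\in\partial C_x$, and let $q$ lie on the great circle through $p_x$ and $y$, at angle $R+a$ from $p_x$ on the side opposite to $y$. Then $\sphereDist(\mathcal P_x q,y)=\pi-2\eta_x$, but $\sphereDist(q,y)=\pi+2\eta_x-a$. This is smaller once $a>4\eta_x$, which is possible whenever $\eta_x<\pi/6$.

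That part can be repaired. For $\langle q,p_x\rangle<0$, first reflect through $p_x^\perp$; this strictly decreases the distance to every point of the open hemisphere around $p_x$. Then move radially toward $p_x$.

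The convexity gap cannot be repaired this way. In infinite dimensions, existence of a minimizer for supports of radius in $[\pi/4,\pi/2)$ requires an argument that does not rely on convexity of $\sphereDist^2$ on $C_x$.

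The paper takes existence for each $x\in A$ from \citet{yokotaConvexFunctionsBarycenter2017}, whose results on barycenters in CAT(1) spaces of small radius handle this situation. It then obtains measurability from \citet[Theorem~3]{brown1973measurable}. That theorem needs only that the infimum be attained (no compactness, uniqueness or well-posedness) and returns an absolutely measurable selection $\mu$. So the selection-theorem route you set aside is in fact the one that works here.
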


A bounded support condition is usually required when considering Fréchet means without parametric assumptions, see \citet{afsari2011riemannian}.
The goal will be to estimate $\mu$ by $\hat \mu$, based on an i.i.d.\ sample $(X_1,Y_1), \ldots,  (X_n,Y_n) \simiid \theLaw_{X,Y}$, using a regularization approach, and thus minimizing 
\[
    \hat \mu \mapsto \frac 1n \sum_{i=1}^n \sphereDist^2( \hat \mu (X_i), Y_i) + J(\hat \mu),
\]
where $J(\hat \mu)$ is some measure of complexity of the function $\hat \mu$. To make this problem tractable, we linearize it by modelling $\hat \mu$ on a tangent space. 
Assuming that there exists a point $\origin \in \sphere$ such that $\theLaw(\mu(X) \neq -\origin) = 1$, we define
\begin{equation} \label{eq:f0}
    f_\origin(x) := \Log_\origin(\mu(x)), \quad x \in \Xspace.
\end{equation}
Since $\mu(X) \neq -\origin$ almost surely, $f_\origin$ is well-defined.
We further assume that $f_\origin \in \Hspace$, where $\Hspace$ is a Hilbert space of functions $\Xspace \to T_\origin \sphere$ for which pointwise evaluations are continuous operators. Specifically, we assume the following.
\begin{Assumption} \label{assumption:log-is-RKHS}
    $f_\origin: \Xspace \to T_\origin \sphere$ belongs to a 
VVRKHS $\Hspace$  of functions $\Xspace \to T_\origin \sphere$ with SMO kernel $K(x, \check x) = k(x,\check x) \id \in \bounded(T_\origin \sphere)$ for all $x, \check x \in \Xspace$, where $k$ is a scalar kernel and $\id$ is the identity operator on $T_\origin \sphere$.
\end{Assumption}
In this context, a natural way of quantifying the complexity of $\hat \mu = \Exp_\origin \circ f$ is through the VVRKHS norm $\Hnorm{f}$. This gives the following empirical risk, that we will seek to minimize over $f \in \Hspace$, 
\begin{equation} \label{eq:penalized_empirical_risk}
    \risk_n(f, \lambda) = \frac 1n \sum_{i=1}^n \sphereDist^2( \Exp_{\origin}(  f(X_i)), Y_i) + \lambda^2 \Hnorm{f}^2,
\end{equation}
where $\lambda > 0$ is a regularization parameter. 

The empirical risk $\risk_n$ can be interpreted as a penalized least-squares loss; however the square distance is taken with respect to the sphere's distance $\sphereDist$. The empirical risk is therefore not a convex function over $\Hspace$ since $v \in T_\origin \sphere \mapsto \Exp_\origin(v)$ ``wraps around'' the sphere for $\Ynorm{v} > \pi$.   

\begin{Remark}[Comparison with tangent-space regression]
    \label{remark:tangent_space_regression}
Another approach to model the regression of a manifold-valued response $Y \in \mathcal M$ on covariates $X$ is the so-called ``tangent-space regression.'' In this setting, a base point $p \in \mathcal M$ is chosen and the conditional mean of the logarithm at $p$ is modelled, i.e., $x \mapsto \Expected[ \Log_p(Y) \mid X=x]$. Such approaches essentially linearize the regression problem at the point $p$, and then model the problem as a linear one based on the data $( X_i, \Log_p Y_i)_{i=1,\ldots, n}$.

Impressive theoretical results can be obtained in such settings \citep[see, e.g.,][who study Hilbert--Schmidt regression between manifolds]{choiHighdimensionalHilbertSchmidt2025}. 
However, this strategy comes with several limitations when the ambient geometry is genuinely curved. Most importantly, it targets the conditional mean of $\Log_p(Y)$ in the Euclidean space $T_p\mathcal M$, not the conditional Fr\'echet mean of $Y$ with respect to the geodesic distance $\ud_{\mathcal M}$ on $\mathcal M$. Equivalently, it is naturally associated with the pull-back discrepancy
\[
\ud_p(y,y') := \bigl\| \Log_p(y)-\Log_p(y')\bigr\|_{T_p \sphere},
\]
which generally differs from the intrinsic geodesic distance $\ud_{\mathcal M}(y,y')$. As a consequence, the statistical target depends on $p$, and curvature effects can produce a bias between the tangent-space estimator and the intrinsic regression function. 

To avoid these geometric distortions, our approach relies solely on the intrinsic metric of $\sphere$, aligning with the frameworks of \citet{thomasfletcherGeodesicRegressionTheory2013} and \citet{corneaRegressionModelsRiemannian2017}. Instead of pulling back the raw responses $Y_i$ and suffering curvature-induced bias, we linearize the parameter space itself: we represent the intrinsic conditional Fr\'echet mean $\mu(x)$ in the tangent space at $\origin$, but all fitting errors are computed using the manifold's geodesic distance. 
\end{Remark}

\begin{Remark}[Types of data on a sphere]
\label{remark:types_data_sphere}
Aside from directional data \citep{mardiajupp2009directional} or data observed on a sphere in $\Real^3$, spherical data in higher dimensions arise in different contexts, including the following: 
\begin{enumerate}
  \item In statistical shape analysis \citep{dryden2016statistical}, shape spaces are quotient spaces of (pre-shape) spheres. This renders spherical regression a basis for shape regression, utilized explicitly for instance by \citet{huangGeoFARMGeodesicFactor2021a}. 
    \item A spherical approach to compositional data has a long tradition \citep{stephens1982use,watson1989measures}, considering square-root proportions $\by = [({c_j}/{\sum_{k=0}^D c_l})^{1/2}]_{j=0}^D$ as elements of the sphere $\sphere^D = \{ \by \in \Real^{D+1} \mid \norm{\by}=1 \}$ to account for components $c_0, \dots, c_D$ summing to a fixed constant, such as finite probability mass functions. Our setting therefore allows to address regression for compositional data.
    \item  Given two probability densities $h_1, h_2$, the spherical distance between their square-root representations $\sqrt h_1, \sqrt h_2$  corresponds to 
      their distance in the Fisher--Rao metric $\ud_{\mathrm{FR}}$ \citep{rao1945}, i.e., 
      $\ud_\mathrm{FR}(h_1,h_2) = \sphereDist(\sqrt{h}_1, \sqrt{h}_2)$.
      This connection was used by   \citet{srivastava2007densities} for the analysis of densities as data. An appealing feature of the Fisher--Rao metric is its invariance under smooth, monotone reparameterizations of the measurement scale: if $Z_1$ and $Z_2$ have densities
$h_1$ and $h_2$, respectively, and $g$ is a monotone diffeomorphism, then the transformed variables
$g(Z_1)$ and $g(Z_2)$ with densities $\tilde h_1$ and $\tilde h_2$ satisfy
$\ud_{\mathrm{FR}}(h_1,h_2) = \ud_{\mathrm{FR}}(\tilde h_1,\tilde h_2)$. For example, aerosol and pollution data---of the type analyzed in Section~\ref{sec:application}---are often
analyzed on both the original and various transformed scales 
\citep[e.g., logarithmic;][]{Hoek2008landuseregression}, which makes this invariance particularly attractive in such
settings. 

\end{enumerate}
\end{Remark}

Our regression setting is quite general; we now provide a few examples and compare to the literature:
\begin{Remark}[Comparison of our setting to the literature] \mbox{}
\begin{enumerate}
  \item For $\Xspace = [0,1]$, our estimator generalizes, as minimizer of \eqref{eq:penalized_empirical_risk}, smoothing splines for curve fitting on a sphere.
  Different motivations have led to different notions of generalized splines on non-linear spaces, including cubic splines for interpolation minimizing the Riemannian curvature tensor \citep{Noakes1989CubicSplinesCurvedSpaces}, splines based on generalized B{\'e}zier curves \citep[e.g.,][]{adouani2024regression}, or smoothing splines combined with unrolling or unwrapping techniques \citep{jupp1987fitting}. 
  An advantage of our approach is that it immediately generalizes beyond a single scalar covariate. 

  \item Choosing the linear scalar kernel  $k(x,\check x) = x^\transpose \check x$, our model corresponds exactly to geodesic regression \citep[e.g.,][]{thomasfletcherGeodesicRegressionTheory2013} for $\Xspace = \Real$, and translates to a penalized GLM-type intrinsic regression \citep[e.g.,][]{corneaRegressionModelsRiemannian2017} for $\Xspace = \Real^q$, with fixed intercept $\origin$. However, our approach also extends to infinite dimensions, with $\sphere$ being an infinite-dimensional (Hilbert) sphere, or $\Xspace$ being a Hilbert space. If $\sphere$ is the sphere in $L^2([0,1])$ and $\Xspace = L^2([0,1])$, choosing $k(x,\check{x}) = \int x(t) \check{x}(t)\, \ud t$ induces the usual class of linear Hilbert-Schmidt operators as in standard VVRKHS formulations \citep{grunewalder2013smooth}, while the spherical constraint on the response yields spherical function-on-function regression $\mu(x): s \mapsto \Exp_\origin(\int \beta(s,t) x(t)\, \ud t)$. 
  For non-linear effects of a functional covariate $X$ in a Hilbert space $\Xspace$, an additive kernel of the form $k_m(x, \check x)=\sum_{j=1}^m k_j(\langle x, e_j\rangle, \langle e_j, \check{x}\rangle)$ employs a kernel $k$ on the $m$ scores, say, with respect to the first principal components $e_1,\dots,e_m\in\Xspace$ of $X$, in the spirit of \citet{mullerFunctionalAdditiveModels2008,zhuStructuredFunctionalAdditive2014}.
  \item Constructions of additive and interaction smoothing splines with scalar kernel $k$ \citep[e.g.,][Chapter~10.2]{wahbaSplineModelsObservational1990} directly carry over to the corresponding vector-valued kernel. Hence, our framework also covers generalization from GLM-type intrinsic regression to Generalized Additive Model-type regression on the sphere.
  \citet{stockerFunctionalAdditiveModels2023} discuss a semi-parametric approach to such models for form and shape spaces.
\item As mentioned in Remark~\ref{remark:types_data_sphere}, our model encompasses regression with densities as responses. Several other geometries and methods have been proposed for such regression. These include transformation-based methods \citep{petersenFunctionalDataAnalysis2016,hanAdditiveFunctionalRegression2020}, Fréchet regression in Wassertein spaces \citep{petersenFrechetRegressionRandom2019,chen2023sliced}, Bayes Hilbert space approaches \citep{van2014bayes,menafoglio2014kriging}, and many others---see also Section~\ref{sec:application}.

\end{enumerate}
\end{Remark}

\section{Theoretical results}
\label{sec:theoretical_results}

\subsection{Representer theorem}

Although $f \mapsto \risk_n(f, \lambda)$ is a continuous function (with respect to $\Hnorm{\cdot}$ if $\sup_{x \in \Xspace} k(x,x) < \infty$), the existence of its minimizer over $\Hspace$ of is not obvious because the closed balls in $\Hspace$ are not compact unless $\Hspace$ is finite-dimensional, which is a very special case of the problem considered. 
If $\dim(\Hspace) = \infty$, since the balls in an infinite-dimensional Hilbert space are not compact, standard compactness arguments do not hold. Furthermore, the empirical risk $\risk_n$ is not convex over the entire space $\Hspace$, hence the standard argument involving weak compactness of closed balls in Hilbert spaces and convexity of the objective function do not hold.
The following result tells us that the empirical risk minimizer does in fact always exist, and gives a representer theorem for the form of $\hat f$. Recall that $(X_1,Y_1), \ldots, (X_n,Y_n) \simiid (X,Y)$.
\begin{Theorem} \label{theorem:existence_of_global_minimizer_unconstrained}
  Assume that $Y \neq - \origin$ almost surely and $\sup_{x \in \Xspace} k(x,x) < \infty$, and that 
  $\dim \vspan(Y_1, \ldots, Y_n) +n + 1 < \dim \Yspace$.
  For any choice of orthogonal vectors $w_1,\ldots, w_n \in \vspan(Y_1,\ldots, Y_n)^\perp \cap (\sphere \setminus \{-\origin\})$, we have
    \begin{equation} \label{eq:inf_reduces_to_finite_dim_inf}
        \inf_{f \in \Hspace} \risk_n(f, \lambda)= \inf_{f \in \widetilde \Hspace} \risk_n(f, \lambda),
    \end{equation}
    where 
    \[
        \widetilde \Hspace = \left\{ f = \sum_{i=1}^n K_{X_i} \xi_i \mid \xi_i \in \vspan\left( \{ \Log_\origin Y_i \}_{i=1}^n, \{ \Log_\origin w_i \}_{i=1}^n \right) \right\}.
    \]
    In particular, for $\lambda > 0$, the infimum in \eqref{eq:inf_reduces_to_finite_dim_inf} is achieved for some $\hat f_n \in \widetilde \Hspace$.
    If $\hat f_n$ is unique, then 
    $\hat f_n = \sum_{i=1}^n K_{X_i} \xi_i$ with
    \begin{equation} \label{eq:xi_range_if_unique_min}
      \xi_1,\ldots, \xi_n \in \vspan\{ \Log_\origin Y_i \}_{i=1}^n.
    \end{equation}

    If $\dim \vspan(Y_1, \ldots, Y_n) +n + 1 \geq \dim \Yspace$, then we can take $\xi_i \in T_\origin \sphere$ in the definition of $\widetilde \Hspace$, and \eqref{eq:xi_range_if_unique_min} holds if $\hat f_n$ is unique.
    
\end{Theorem}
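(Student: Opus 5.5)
We argue in two stages: first reduce the $\Hspace$-direction of $f$ to the span of the $K_{X_i}$, then reduce the $T_\origin\sphere$-direction of the coefficients.

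\textbf{Stage 1: the $\Hspace$-direction.} Write $\Hspace = \Hspace_k \otimes T_\origin\sphere$ and decompose any $f \in \Hspace$ as $f = f_1 + f_2$, with $f_1 \in \vspan\{K_{X_i} v : v \in T_\origin \sphere,\, i\le n\}$ and $f_2$ orthogonal to it. The reproducing property gives $f_2(X_i) = 0$ for every $i$. Hence the data-fit term of $\risk_n$ depends only on $f_1$, while $\Hnorm{f}^2 = \Hnorm{f_1}^2 + \Hnorm{f_2}^2$. Therefore
\[
\inf_{\Hspace}\risk_n = \inf_{f = \sum_i K_{X_i}\xi_i,\ \xi_i \in T_\origin\sphere} \risk_n ,
\]
and any minimizer has $f_2 = 0$.

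\textbf{Stage 2: the coefficient direction.} For $f=\sum_i K_{X_i}\xi_i$ we have $f(X_j)=\sum_i k(X_j,X_i)\xi_i$ and $\Hnorm{f}^2 = \sum_{i,j} k(X_i,X_j)\innerProd{\xi_i,\xi_j}$. Let $V=\vspan(\{\Log_\origin Y_i\}_{i})\subset T_\origin\sphere$. Since $Y\neq-\origin$ a.s., $\Log_\origin Y_i$ is defined, and $V \oplus \vspan(\origin) \supseteq \vspan(Y_1,\dots,Y_n)$. The loss at $X_j$ is $\arccos^2\innerProd{\Exp_\origin(v_j),Y_j}$ with $v_j=f(X_j)$. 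Writing $v_j = a_j + b_j$ with $a_j\in V$ and $b_j\perp V$ in $T_\origin\sphere$, we get
\[
\innerProd{\Exp_\origin v_j, Y_j} = \cos\norm{v_j}\innerProd{\origin,Y_j} + \frac{\sin\norm{v_j}}{\norm{v_j}}\innerProd{a_j, Y_j},
\]
so the loss depends on $v_j$ only through $a_j$ and $\norm{b_j}$. The key step is to replace the perpendicular parts by components in a fixed $n$-dimensional space $W=\vspan(\Log_\origin w_i)$ while preserving the Gram structure. Let $B$ be the matrix of orthogonal projections of the $\xi_i$ onto $V^\perp\cap T_\origin\sphere$. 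Their Gram matrix $G=(\innerProd{\xi_i^\perp,\xi_j^\perp})$ has rank at most $n$, so we can find $\eta_1,\dots,\eta_n \in W$ with the same Gram matrix. This requires $W \perp V$ and $\dim W = n$. Since the $w_i$ are orthogonal unit vectors orthogonal to all $Y_i$, we have $\Log_\origin w_i = \arccos\innerProd{\origin,w_i}\,\proj_\origin w_i/\norm{\proj_\origin w_i}$. The dimension condition $\dim\vspan(Y_i)+n+1<\dim\Yspace$ should allow choosing $w_i$ with $\proj_\origin w_i$ orthogonal to $V$ and linearly independent. This is the step I expect to be the main obstacle: one must check that $\proj_\origin w_i \perp V$, using $w_i\perp Y_j$ and $w_i\perp\origin$ or a suitable adjustment. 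If the statement's $w_i$ need not be orthogonal to $\origin$, I would show instead that $\vspan(\proj_\origin w_i)$ contains an $n$-dimensional subspace orthogonal to $V$ with room to spare, which is where the $+1$ in the dimension count enters. Replacing $\xi_i^\perp$ by $\eta_i$ preserves every $\norm{b_j}$, every $a_j$ and the penalty, since all of these are functions of the Gram matrix of the $\xi_i^\perp$ together with the $V$-components. Hence the infimum over $\widetilde\Hspace$ equals the infimum over $\Hspace$. When the dimension condition fails, we simply keep $\xi_i\in T_\origin\sphere$.

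\textbf{Existence and the uniqueness claim.} On $\widetilde\Hspace$, which is finite-dimensional, $\risk_n$ is continuous because $k$ is bounded. It is coercive since $\lambda>0$ and the loss is bounded by $\pi^2$. A minimizer therefore exists on $\widetilde\Hspace$, and by the equality of infima it is a global minimizer.

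For uniqueness, suppose the minimizer $\hat f_n$ has a nonzero perpendicular Gram matrix $G$. Applying any orthogonal transformation of $V^\perp$ to the $\xi_i^\perp$ yields another minimizer, for example $\xi_i^\perp\mapsto -\xi_i^\perp$. This minimizer is distinct because the $K_{X_i}$ components cannot all cancel: the map $\xi \mapsto \sum_i K_{X_i}\xi_i$ sends distinct $\xi$ to distinct $f$ modulo its kernel, and we may take representatives in the orthogonal complement of that kernel. This contradicts uniqueness, so $G=0$ and $\xi_i\in V$, which is \eqref{eq:xi_range_if_unique_min}. The same argument gives \eqref{eq:xi_range_if_unique_min} in the case $\dim\vspan(Y_i)+n+1\ge\dim\Yspace$.
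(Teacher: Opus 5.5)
\textbf{The step you flagged is a genuine gap, and your proposed fallback does not close it.} The rest of your argument is sound and differs from the paper's route. The paper fixes $f$ and uses Lemma~\ref{Lemma:rotations} to build an orthogonal $R$ on $\Yspace$. This $R$ fixes $\mathcal Y=\vspan\{\origin,Y_1,\dots,Y_n\}$ and maps $\mathcal Y+\vspan\{f(X_i)\}$ into $\widetilde W=\mathcal Y+\vspan\{w_i\}$. The paper lifts $R$ to an isometry of $\Hspace$ and then invokes the minimal-norm interpolant Lemma~\ref{Lemma:minimum_interpolant_vvrkhs}.

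You instead project onto $\{\sum_i K_{X_i}\xi_i\}$ first and then act on the coefficients. You use that the loss sees $f(X_j)$ only through its $V$-component and its norm, so any Gram-preserving replacement of the $V^\perp$-components leaves the risk unchanged. This avoids the rotation and lifting lemmas. Your uniqueness argument is also cleaner than the paper's: you flip the sign of the $V^\perp$-components, with $\bxi$ taken orthogonal to the kernel of $\bxi\mapsto\sum_iK_{X_i}\xi_i$. The paper's rotation with $R\origin=\origin$ and $Rw_l=-w_l$ tacitly forces $\innerProd{w_l,\origin}=0$.

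\textbf{Why $W\perp V$ fails.} We have $\innerProd{\proj_\origin w_i,\proj_\origin Y_j}=-\innerProd{w_i,\origin}\innerProd{\origin,Y_j}$, and the hypotheses only give $w_i\perp Y_j$. So the claim $W\perp V$ is false in general.

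\textbf{Why the fallback fails.} Since $\dim W\le n$, $W$ contains an $n$-dimensional subspace orthogonal to $V$ only if $W\perp V$ already. The dimension hypothesis cannot help either. The statement is for \emph{every} admissible choice of the $w_i$, so you may not pick them.

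\textbf{The fix.} Take the $\eta_i$ in $U:=(V+W)\ominus V$, the orthogonal complement of $V$ inside $V+W$. Then $\xi_i^V+\eta_i\in V+W$, and all your invariance computations go through verbatim because $U\perp V$. What must then be verified is $\dim U\ge n$.

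\textbf{The dimension count.} Since $V=\proj_\origin\mathcal Y$, $V+W=\proj_\origin\widetilde W$ and $\origin\in\mathcal Y$, one gets $\dim U=\dim\widetilde W-\dim\mathcal Y=n-\dim(\mathcal Y\cap\vspan\{w_i\})$. Moreover $\mathcal Y\cap\vspan\{w_i\}=\vspan\{\origin'\}\cap\vspan\{w_i\}$, where $\origin'$ is the component of $\origin$ orthogonal to $\vspan\{Y_i\}$. Hence:
\begin{itemize}
\item $\dim U=n$ whenever $\origin\in\vspan\{Y_i\}$. Then $w_i\perp\origin$ automatically and your original claim $W\perp V$, $\dim W=n$ holds.
\item $\dim U=n$ whenever all $w_i\perp\origin$.
\item One direction is lost if $\origin'\neq 0$ lies in $\vspan\{w_i\}$. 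An admissible example is $w_1=\origin'/\norm{\origin'}$.
\end{itemize}

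This is exactly the count implicit in the paper's unjustified assertion $\dim W_f\le\dim\widetilde W$. In that degenerate configuration, neither your argument nor the paper's delivers $n$ free directions as written. You should state and verify this nondegeneracy explicitly rather than attribute it to the ``$+1$'' in the dimension hypothesis.
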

The implications of Theorem~\ref{theorem:existence_of_global_minimizer_unconstrained} are manyfold. It shows that the (potentially doubly) infinite-dimensional optimization of $\risk_n(\cdot, \lambda)$ reduces to a finite-dimensional optimization problem. Indeed, writing 
\[
    \xi_i =  \sum_{j=1}^n a_{ij} \Log_\origin Y_i + \sum_{j=n+1}^{2n} a_{ij} \Log_\origin w_i
\]
the optimization over $f \in \Hspace$ reduces to an optimization over the coefficients $( a_{ij} ) \in \Real^{n \times 2n}$ (see also Section~\ref{sec:computational_aspects}).

The proof of Theorem~\ref{theorem:existence_of_global_minimizer_unconstrained} relies on two ingredients. The first is the group of isometries of the sphere, which implies in particular the minimizer is not always unique (in the same way that the Fr\'echet mean on the sphere is not always unique). The second ingredient of the proof is a minimal norm interpolant result for VVRKHS, which we now state. 
\begin{Lemma} \label{Lemma:minimum_interpolant_vvrkhs}
    Recall that $\Hspace$ is a VVRKHS with SMO kernel $K(x,x') = k(x,x')\id$.
 For any $f \in \Hspace$ and any $x_1,\ldots, x_n \in \Xspace$, there exists an $\tilde f \in \Hspace$ satisfying
    \[
      \tilde f(x_i) = f(x_i), \; i=1,\ldots, n,   \quad \text{and} \quad      \Hnorm{\tilde f} \leq \Hnorm{f}.
    \]
Furthermore,
    $\tilde f = \sum_{i=1}^n k(x_i, \cdot) \xi_i \in \Hspace$ for some
    $
        \xi_1,\ldots, \xi_n \in \vspan( f(x_1), \ldots, f(x_n) ),
        $
        
\end{Lemma}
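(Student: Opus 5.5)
The plan is to project $f$ onto a finite-dimensional subspace of $\Hspace$ and then verify that this projection has the claimed form. Let $V = \vspan(f(x_1),\ldots,f(x_n)) \subseteq T_\origin \sphere$, which is finite-dimensional. Consider the closed subspace
\[
  \Hspace_0 = \Big\{ \textstyle\sum_{i=1}^n K_{x_i} \xi_i \;\Big|\; \xi_i \in V \Big\} \subseteq \Hspace.
\]
It is finite-dimensional, since it is the image of the linear map $V^n \to \Hspace$, $(\xi_i) \mapsto \sum_i K_{x_i}\xi_i$. We define $\tilde f$ as the orthogonal projection of $f$ onto $\Hspace_0$. Then $\Hnorm{\tilde f} \le \Hnorm{f}$ automatically, and $\tilde f$ has the required representation $\tilde f = \sum_i k(x_i,\cdot)\xi_i$ with $\xi_i \in V$. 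Here we use $K_{x_i}\xi = k(\cdot, x_i)\xi$ for SMO kernels, and $k(\cdot,x_i) = k(x_i,\cdot)$ by symmetry.

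It remains to check the interpolation property $\tilde f(x_j) = f(x_j)$. Write $g = f - \tilde f$, so that $g \perp \Hspace_0$. By the reproducing property, for every $j$ and every $v \in V$,
\[
  \YinnerProd{v, g(x_j)} = \HinnerProd{K_{x_j} v, g} = 0 .
\]
Hence $g(x_j) \in V^\perp$. We must also show $g(x_j) \in V$, which would force $g(x_j)=0$. We already know $f(x_j) \in V$ by definition of $V$, so it suffices to show $\tilde f(x_j) \in V$. Since $\tilde f = \sum_i K_{x_i}\xi_i$ with $\xi_i \in V$, we get $\tilde f(x_j) = \sum_i k(x_j,x_i)\xi_i \in V$. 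This step uses the SMO structure in an essential way: the operator $K(x_j,x_i) = k(x_j,x_i)\id$ maps $V$ into $V$. Therefore $g(x_j) \in V \cap V^\perp = \{0\}$, as required.

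The only delicate point is this invariance of $V$ under the kernel operators, which is where the SMO assumption enters; the rest is routine. For a general operator-valued kernel, one would instead need $V$ replaced by a $K(x_j,x_i)$-invariant subspace. Closedness of $\Hspace_0$, needed for the projection to exist, is immediate from finite-dimensionality. Well-definedness of the representation is not an issue: uniqueness of the coefficients $\xi_i$ is not claimed, only existence.
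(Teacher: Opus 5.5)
Your proof is correct, but it takes a different and more direct route than the paper.

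\textbf{The paper's route.} The paper starts from the decomposition $\Hspace = \ker(A) \oplus \ker(A)^\perp$, where $A g = (g(x_1),\ldots,g(x_n))$ (Lemma~\ref{Lemma:dirctSum}), and takes $\tilde f$ to be the projection of $f$ onto $\ker(A)^\perp$. Interpolation and the norm bound are then immediate. All the work goes into showing that this projection is a finite kernel sum with coefficients in $\vspan(f(x_1),\ldots,f(x_n))$. For that, the paper:
\begin{enumerate}
\item builds an auxiliary VVRKHS of functions with values in $\tilde{\Yspace}=\vspan(f(x_1),\ldots,f(x_n))$ (Lemma~\ref{Lemma:subspace}, via Moore's theorem and a characterization of RKHS membership);
\item proves the system $f(x_j)=\sum_i k(x_i,x_j)\xi_i$ is solvable using $\image(AA^*)=\ker(A^*)^\perp$ in finite dimensions (Lemma~\ref{Lemma:useful});
\item concludes by uniqueness inside $\ker(A)^\perp$.
\end{enumerate}

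\textbf{Your route.} You reverse the order. Fixing the finite-dimensional candidate space $\Hspace_0$ first gives the representation of $\tilde f$ and the bound $\Hnorm{\tilde f}\le\Hnorm{f}$ for free. Interpolation then follows from a short orthogonality argument: $g(x_j)\perp V$ by the reproducing property, and $g(x_j)\in V$ because $k(x_j,x_i)\id$ leaves $V$ invariant. This avoids all the auxiliary machinery and pins down exactly where the SMO structure is used.

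\textbf{Relation between the two.} Both constructions produce the same function. Indeed $\Hspace_0\subseteq\ker(A)^\perp$ and you show $f-\tilde f\in\ker(A)$, so your $\tilde f$ is the projection of $f$ onto $\ker(A)^\perp$, i.e.\ the minimum-norm interpolant. The paper's route additionally states that every element of $\ker(A)^\perp$ is such a kernel sum. Your argument recovers this too: apply it to any $h\in\ker(A)^\perp$ and note that $h-\tilde h\in\ker(A)\cap\ker(A)^\perp=\{0\}$.
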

Although a proof of this result was given in \citet{micchelliLearningVectorValuedFunctions2005}, we include a corrected argument here, as the original argument contains a gap.
 As a side comment, note that Lemma~\ref{Lemma:minimum_interpolant_vvrkhs} directly implies the following representer theorem for the kernel ridge regression problem with linear output $Y \in \Yspace$ onto $X \in \Xspace$. 
\begin{Theorem} 
 Let $\Hspace$ be a VVRKHS of functions $\Xspace \to \Yspace$ with SMO kernel $K(x,x') = k(x,x') \id$. For any set $\{(x_i, y_i) : i=1,\ldots, n \} \subset \Xspace \times  \Yspace$ and $\lambda > 0$, let 
    \[
        L(f) = C(y_1,\ldots, y_n, f(x_1), \ldots, f(x_n)) + J( \Hnorm{f} ) , \quad f \in \Hspace,
    \]
    where $J:[0, \infty) \to [0, \infty)$ is an increasing function and $C$ is some arbitrary cost function.

    Then for any $f \in \Hspace$, there exists an $\tilde f \in \Hspace$ satisfying
    \[
        L(\tilde f) \leq L(f)
    \]
    where
    $\tilde f = \sum_{i=1}^n k(x_i, \cdot) \xi_i \in \Hspace$ for some
    $
    \xi_1,\ldots, \xi_n \in \vspan( f(x_1), \ldots, f(x_n) ).
    $
\end{Theorem}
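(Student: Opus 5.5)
The statement to prove is the final representer theorem for the general cost $L$. I would deduce it directly from Lemma~\ref{Lemma:minimum_interpolant_vvrkhs}, applied to the finite set of design points $x_1,\ldots,x_n$.

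\textbf{Step 1.} Fix $f \in \Hspace$. Lemma~\ref{Lemma:minimum_interpolant_vvrkhs} gives an $\tilde f \in \Hspace$ with three properties:
\begin{itemize}
\item[(i)] $\tilde f(x_i) = f(x_i)$ for $i=1,\ldots,n$;
\item[(ii)] $\Hnorm{\tilde f} \le \Hnorm{f}$;
\item[(iii)] $\tilde f = \sum_{i=1}^n k(x_i,\cdot)\,\xi_i$ with $\xi_i \in \vspan(f(x_1),\ldots,f(x_n))$.
\end{itemize}
Property (iii) is exactly the representation asserted in the theorem. Here $k(x_i,\cdot)\xi_i$ is identified with $K_{x_i}\xi_i$ through the SMO structure $K(x,x')=k(x,x')\id$.

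\textbf{Step 2.} I compare the two terms of $L$ separately.

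The cost term depends on $f$ only through the evaluations $f(x_1),\ldots,f(x_n)$. By (i), these coincide for $f$ and $\tilde f$. Hence
\[
C(y_1,\ldots,y_n,\tilde f(x_1),\ldots,\tilde f(x_n)) = C(y_1,\ldots,y_n,f(x_1),\ldots,f(x_n)),
\]
and no property of $C$ is needed.

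For the penalty, $J$ is increasing (in the weak sense suffices), so (ii) gives $J(\Hnorm{\tilde f}) \le J(\Hnorm{f})$. Adding the two relations yields $L(\tilde f)\le L(f)$.

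\textbf{Main obstacle.} All the real difficulty is contained in Lemma~\ref{Lemma:minimum_interpolant_vvrkhs}, which is assumed. Within this argument, the only point needing care is that the coefficients $\xi_i$ lie in the span of the values $f(x_i)$. This is stronger than lying in all of $\Yspace$, and it is what the lemma provides.

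If one wanted to re-derive that span property, the route would be as follows. Let $V=\vspan(f(x_1),\ldots,f(x_n))$, a finite-dimensional subspace, and let $P_V$ be the orthogonal projection onto it.
\begin{itemize}
\item Project $f$ orthogonally in $\Hspace$ onto the closed subspace $\{\sum_i K_{x_i} v_i : v_i\in\Yspace\}$. This preserves the evaluations at the $x_i$ by the reproducing property, and it does not increase the norm.
\item Then apply $\id\otimes P_V$, in the identification $\Hspace\cong\Hspace_k\otimes\Yspace$, which is a contraction. Since $P_V f(x_i)=f(x_i)$, it keeps the values at the $x_i$, and it maps the coefficients into $V$.
\end{itemize}

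If $J$ is only nondecreasing rather than strictly increasing, the conclusion $L(\tilde f)\le L(f)$ still holds, since only $\le$ is claimed.
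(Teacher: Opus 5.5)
Your proof is correct and is the argument the paper has in mind: the paper only remarks that Lemma~\ref{Lemma:minimum_interpolant_vvrkhs} directly implies the theorem, because the cost term depends on $f$ only through the values $f(x_i)$, which the lemma preserves, and $J$ is monotone while the lemma does not increase $\Hnorm{\cdot}$. Your optional re-derivation of the span property is also sound, and it is more direct than the paper's route through Lemmas~\ref{Lemma:useful} and~\ref{Lemma:dirctSum}: first project orthogonally onto the closed subspace $\vspan(k_{x_1},\dots,k_{x_n})\otimes\Yspace$, whose orthogonal complement is exactly $\{g \in \Hspace : g(x_i)=0 \ \forall i\}$, and then apply the contraction induced by $P_V$.
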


\subsection{Convergence rates}

The goal of this section will be to study the convergence of a minimizer $\hat f_n$ of $\risk_n(\cdot, \lambda_n)$ to $f_\origin$, which is a minimizer of the population risk,
\begin{equation} \label{eq:population_risk}
    \risk(f) = \Expected \risk_n(f, 0) = \Expected \sphereDist^2( \Exp_\origin f(X), Y),
\end{equation}
    see Proposition~\ref{proposition:mu_well_defined}. Theoretical analysis of the convergence is intricate in our setting because of several aspects. Since there is no closed solution for $\hat f_n$, a natural approach is to use tools from empirical process theory \citep{vandervaartWeakConvergenceEmpirical2023a}. Because our empirical risk depends on the squared distance on the sphere, the usual empirical process approach through the ``basic inequality'' is inapplicable in our case. Finally, our problem is not globally convex. Indeed, the basic term in our empirical risk is the function 
\begin{equation} \label{eq:ell_function}
    \ell_y(v) := \sphereDist^2(\Exp_\origin v, y) = \arccos^2 \YinnerProd{ \cos \Ynorm{v} \origin + \sin \Ynorm{v} \, \frac{v}{\Ynorm{v}}\; , y },  \quad v \in T_\origin \sphere, y \in \sphere,
\end{equation}
and for $v \neq 0$ fixed, the function $t \in \Real \mapsto \ell_y(tv)$ is $2\pi/\Ynorm{v}$-periodic. 

\subsubsection{Local convexity}

We reduce the study of the convexity of $\risk_n(\cdot, \lambda)$ to the study of the function $\ell_y$. Indeed, the empirical risk can be written as
\begin{equation} \label{eq:empirical_risk_as_function_of_ell}
    \risk_n(f, \lambda) = \frac 1n \sum_{i=1}^n \ell_{Y_i}(f(X_i)) + \lambda^2 \Hnorm{f}^2.
\end{equation}
Although we know that the mapping $s \in \sphere \mapsto \sphereDist^2(s,y)$ has positive definite Hessian if $\sphereDist(s,y) < \pi/2$ \citep{pennecBarycentricSubspaceAnalysis2018}, the convexity of $\ell_y$ does not follow because of the composition $s = \Exp_\origin (v)$, which does not map all lines in $T_\origin \sphere$ to geodesics in $\sphere$. Although the Hessian of $\ell_y$ is a self-adjoint operator on $\Yspace$ and thus high-dimensional, it follows from \eqref{eq:ell_function} that it really depends on three parameters: $\Ynorm{v}, \YinnerProd{v, y}$ and $\YinnerProd{\origin, y}$, and we can indeed show that analyzing the eigenvalues of the Hessian of $\ell_y$ boils down to analyzing the eigenvalues of a $3 \times 3$ matrix, which yields the following result. 

\begin{Theorem} \label{theorem:ell_Hessian_new}
    Let 
      $\Cregion := \ball_{T_\origin \sphere}(0, \pi/4)$.
    If $\sphereDist(\origin, y) \leq \pi/4$,
    \begin{enumerate}
        \item The minimal eigenvalue of the Hessian of $\ell_y$ is at least $\epsilon > 0$ on $\Cregion$, for some $\epsilon > 0$.
        \item $\nabla \ell_y(v)$ is well defined on $\Cregion$ and
            \[
                \ell_y(v') \geq \ell_y(v) + \YinnerProd{v'-v, \nabla \ell_y(v)} + \frac{\epsilon}{2} \Ynorm{v'-v}^2 
            \]
            In particular, $\ell_y$ is convex on $\Cregion$.
    \end{enumerate}
    
\end{Theorem}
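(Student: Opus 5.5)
We plan to compute the Hessian of $\ell_y$ explicitly and reduce its spectrum to a $3\times 3$ problem, as suggested before the statement. Write $r=\Ynorm{v}$, $u=v/r$, $a=\YinnerProd{\origin,y}$, $b=\YinnerProd{u,y}$, and $c(v)=\YinnerProd{\Exp_\origin v,y}=a\cos r+b\sin r$, so that $\ell_y(v)=g(c(v))$ with $g(t)=\arccos^2 t$. On $\Cregion$ with $\sphereDist(\origin,y)\le\pi/4$ we have $\sphereDist(\Exp_\origin v,y)\le \pi/2<\pi$. Hence $c(v)\in[0,1]$ and stays away from $-1$. The function $g$ is smooth on $(-1,1)$ and extends smoothly through $t=1$ (as $\arccos^2(\cos\theta)=\theta^2$). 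It is cleaner to work with $\theta(v)=\sphereDist(\Exp_\origin v,y)$ and $\ell_y=\theta^2$, so that $\nabla\ell_y$ is well defined on $\Cregion$. This settles the first claim of item 2.

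\textbf{Hessian decomposition.} By the chain rule,
\[
\nabla^2\ell_y(v)=g''(c)\,\nabla c\otimes\nabla c+g'(c)\,\nabla^2 c .
\]
Here $g'(t)=-2\arccos t/\sqrt{1-t^2}<0$ and $g''>0$. The function $c$ is $v\mapsto \cos r\, a+\sin r\, \YinnerProd{v,y}/r$. Its gradient and Hessian lie in, and act nontrivially only on, $V=\vspan(u,\proj_\origin y)\subset T_\origin\sphere$, up to a multiple of the identity. Concretely, $\nabla^2 c = \alpha(r,a,b)\,\proj + (\text{rank}\le 2 \text{ on } V)$, with $\alpha=\tfrac{d}{dr}[\text{radial part}]/r=(b\cos r - a\sin r)/r$ coming from the angular part of $\sin r/r$. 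On $V^\perp\cap T_\origin\sphere$ the Hessian therefore equals $g'(c)\,\alpha$ times the identity.

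\textbf{Eigenvalue bounds.} We must show two things:
\begin{enumerate}[label=(\roman*)]
\item $-g'(c)\,(a\sin r-b\cos r)/r\ge\epsilon$ on $V^\perp$;
\item the $2\times2$ (or $3\times3$, if we include the $\origin$ direction formally) matrix representing the Hessian restricted to $V$ is positive definite with eigenvalues $\ge\epsilon$.
\end{enumerate}
Both are explicit functions of $(r,a,b)$ on the compact parameter set
\[
\{r\in[0,\pi/4],\ a\ge\cos(\pi/4),\ b^2\le 1-a^2\},
\]
with removable singularities at $r=0$ and at $c=1$. Once we prove strict positivity pointwise and continuity up to these singular points, compactness yields a uniform $\epsilon>0$.

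For (i), a geometric interpretation helps. The quantity is the second variation of $\theta^2$ along directions orthogonal to the plane spanned by $\origin, u, y$. It corresponds to moving $\Exp_\origin v$ transversally by an amount $\sin r/r$ times the step, giving $\tfrac{\sin r}{r}\cdot\tfrac{\theta}{\sin\theta}\cdot\cos(\cdot)$-type positive factors. The key inequality needed is $a\sin r-b\cos r$ having the right sign relative to $\theta$, which uses $\theta\le\pi/2$ and $r\le\pi/4$. For (ii), the problem reduces to the two-dimensional sphere $\sphere^2$ containing $\origin,u,y$. There we verify positivity of the $2\times2$ Hessian of $(s,t)\mapsto\sphereDist^2(\Exp_\origin(s e_1+te_2),y)$ on the disc of radius $\pi/4$ for $y$ within $\pi/4$ of the pole, by explicit trace/determinant computation.

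\textbf{Main obstacle and conclusion.} I expect the main obstacle to be this last verification. The determinant is a messy trigonometric expression in three variables, and it degenerates near $\theta=\pi/2$ (reached when $r=\pi/4$ and $y$ is antipodal in direction). I would first try to bound it via comparison: the Hessian of $\sphereDist^2(\cdot,y)$ on $\sphere$ has eigenvalues $2$ and $2\theta\cot\theta\ge0$. The pullback under $d\Exp_\origin$, whose singular values are $1$ and $\sin r/r\ge \sin(\pi/4)/(\pi/4)$, contributes this positive part. The correction term from the second fundamental form of $\Exp_\origin$ (nonzero only in the transversal direction, proportional to $(1-r\cot r)$ times the gradient of $\sphereDist^2$) must then be shown to be dominated. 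If that fails at the boundary, I would fall back on a direct numerical-plus-interval check of the determinant on the compact parameter set. Finally, item 2's inequality follows from Taylor's theorem with integral remainder along the segment $[v,v']\subset\Cregion$ (which lies in $\Cregion$ by convexity of the ball), using Hessian $\succeq\epsilon\,\id$; convexity is immediate.
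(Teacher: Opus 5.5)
Your architecture is essentially the paper's. You reduce the Hessian to the scalars $(r,a,b)$, split $T_\origin\sphere$ into $V=\vspan(u,\proj_\origin y)$ and $V^\perp$, treat $V^\perp$ as a scalar eigenvalue and $V$ via the trace and determinant of a $2\times2$ block, and conclude by compactness and Taylor's theorem. Your handling of item 2 and of the well-definedness of $\nabla\ell_y$ is fine. The execution, however, has two gaps.

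\textbf{The transverse coefficient is wrong.} You differentiated $a\cos r+b\sin r$ in $r$ with $b$ held fixed. But along $v+te$ with $e\perp V$, one has $\YinnerProd{v+te,y}/\Ynorm{v+te}=rb/\sqrt{r^2+t^2}$, whose second derivative at $t=0$ is $-b/r^2$. The correct value is
\[
\alpha=\frac{b\cos r-a\sin r}{r}-\frac{b\sin r}{r^2}=-\eta,\qquad \eta:=a\,\frac{\sin r}{r}+b\Bigl(\frac{\sin r}{r^2}-\frac{\cos r}{r}\Bigr),
\]
so the eigenvalue on $V^\perp$ is $-g'(c)\,\eta$, which is the paper's $c_4=2B\eta$. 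With your formula, claim (i) is false and the singularity at $r=0$ is not removable. For $b>0$ (e.g.\ $u$ pointing towards $\Log_\origin y$), $(a\sin r-b\cos r)/r\to-\infty$ as $r\to0$. With the corrected $\eta$, (i) is elementary:
\begin{itemize}
\item $|b|\le\sqrt{1-a^2}\le a$;
\item $\frac{\sin r}{r}-\bigl(\frac{\sin r}{r^2}-\frac{\cos r}{r}\bigr)\ge 0.65$ on $[0,\pi/4]$, so $\eta\ge0.65\,a$;
\item $-g'(c)=2\theta/\sin\theta\ge2$.
\end{itemize}

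\textbf{The $2\times2$ block is left unproved.} Uniform positive definiteness of this block over $\{r\le\pi/4,\ a\ge1/\sqrt2,\ |b|\le\sqrt{1-a^2}\}$ is the substantive content of item 1, and you leave it open. The comparison route does not close as described, for two reasons:
\begin{itemize}
\item The correction $\YinnerProd{-2\Log_{\Exp_\origin v}y,\ \nabla^2\Exp_\origin(v)[w,w]}$ is not sign-definite. It contains a cross term between the radial and transverse components of $w$, with coefficient proportional to $(\sin r-r\cos r)/r^2$.
\item As $\theta\to\pi/2$ (attained in your parameter set), the Riemannian Hessian of $\sphereDist^2(\cdot,y)$ degenerates, with eigenvalue $2\theta\cot\theta\to0$. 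The pulled-back part then has a near-kernel direction lying in $V$, so it cannot dominate anything there; positivity must come from the correction term itself.
\end{itemize}
Showing this uniformly while controlling the cross term is exactly the determinant inequality. The paper supplies a certified version of it. It encloses $\sin r/r$, $\sin r/r^2-\cos r/r$, $\sqrt{1-a^2}$ and the factors $g''(c)$, $-g'(c)$ between piecewise polynomials with rational coefficients. From these it assembles polynomial lower bounds for $c_4$, $\trace H$ and $\det H$ on the box $[0,0.8]\times[0.7,1]\times[-1,1]$, in coordinates $(r,a,b/\sqrt{1-a^2})$. It then proves their positivity exactly by cylindrical algebraic decomposition.

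Your fallback would become a proof only if it were of that certified kind: rigorous enclosures covering the whole box, including the removable singularities at $r=0$ and $\theta=0$. As written, item 1, and hence the constant $\epsilon$ in item 2, is not established.
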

The proof of this result is technically challenging, as it requires bounding several functions by polynomials with rational coefficients, and then checking that these polynomials are strictly positive. If one is allows for an argument using numerical evaluations, $\Cregion$ can be extended and the condition $\sphereDist(\origin, y) \leq \pi/4$ can be relaxed. A thorough discussion is given in Section~\ref{sec:discussion}.
As a consequence of Theorem~\ref{theorem:ell_Hessian_new}, we define the following.

\begin{Definition} \label{defn:Uset_new}
    We define the \emph{deterministic} set $\Uset \subset \Hspace$ by
    \[
        \Uset
        :=
        \left\{
            f \in \Hspace \,\middle|\,
            \begin{array}{l}
                \Ynorm{f(X)}
                \leq \pi/4,
                \quad \mathrm{a.s.}
            \end{array}
        \right\}.
    \]
\end{Definition}
If $\sup_x k(x,x) < \infty$, Lemma~\ref{Lemma:Uset_is_closed_and_nonempty_new} in the \thesupplement\ implies that $\Uset$ is non-empty and closed.
We shall rely on the following assumption for our next results.
\begin{Assumption}
    \label{assumption:support_of_Y_for_theory_new}
    Assume that $\sphereDist(\origin, Y) \leq \pi/4$ almost surely, and that
    $f_\origin \in \Uset$.
\end{Assumption}
Note that the first part of Assumption~\ref{assumption:support_of_Y_for_theory_new} implies automatically that $\sphereDist(\origin, \mu(X)) \leq \pi/4$ almost surely  \citep[][Theorem B and 57]{yokotaConvexFunctionsBarycenter2017}, and hence $\Ynorm{f_\origin(X)} \leq \pi/4$ almost surely. The assumption $f_\origin \in \Uset$ is therefore only a smoothness assumption on the conditional mean $\mu$. Although the bounded support assumption $\sphereDist(\origin, Y) \leq \pi/4$ might seem quite restrictive, it is comparable to existing implicit assumptions in related literature on metric space regression or manifold regression \citep{petersenFrechetRegressionRandom2019,choiHighdimensionalHilbertSchmidt2025}. Although the latter works operate in a general abstract setting, assumptions about existence and uniqueness of the population risk minimizer are explicitly made, and abstract assumptions about the growth of the population risk are made, see  \citet[][Assumptions (U0), (P0), (U2), (P2)]{petersenFrechetRegressionRandom2019} or \citet[][Assumptions (M1), (M4)]{choiHighdimensionalHilbertSchmidt2025}. 

Let $\Ltwonorm{f} = ( \Expected \Ynorm{f(X)}^2 )^{1/2}$ denote the $L^2(\theLaw_X)$ norm of a measurable function $f: \Xspace \to \Yspace$. 
The following result plays a central role in the proofs of our convergence rates.
\begin{Theorem} \label{theorem:risk_is_convex_new}
    Under Assumption~\ref{assumption:support_of_Y_for_theory_new}, provided $\sup_x k(x,x) < \infty$, there is an $\epsilon> 0$ such that 
    \begin{enumerate}
        \item For all $f \in \Uset$, $\risk(f) \geq \risk(f_\origin) + \frac{\epsilon}{2} \Ltwonorm{f-f_\origin}^2$. 
        \item For $\lambda_n > 0$, $\risk_n(\cdot, \lambda_n)$ is strictly convex on $\Uset$, and it admits a unique minimizer $\hat f_n$ over $\Uset$.  
    \end{enumerate}
    
\end{Theorem}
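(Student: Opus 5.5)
The plan is to apply Theorem~\ref{theorem:ell_Hessian_new} pointwise and then integrate. Under Assumption~\ref{assumption:support_of_Y_for_theory_new} we have $\sphereDist(\origin, Y) \leq \pi/4$ almost surely. Hence for almost every realisation of $Y$ the function $\ell_Y$ is $\epsilon$-strongly convex on $\Cregion$, in the sense of part~2 of Theorem~\ref{theorem:ell_Hessian_new}. The constant $\epsilon$ there does not depend on $y$, and this uniformity is what I will use.

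\textbf{Part 1.} Take $f \in \Uset$. Since $f_\origin \in \Uset$ as well, both $f(X)$ and $f_\origin(X)$ lie in $\Cregion$ almost surely. Applying the inequality with $v = f_\origin(X)$, $v' = f(X)$ and $y = Y$ gives, almost surely,
\[
\ell_Y(f(X)) \geq \ell_Y(f_\origin(X)) + \YinnerProd{f(X)-f_\origin(X), \nabla \ell_Y(f_\origin(X))} + \tfrac{\epsilon}{2}\Ynorm{f(X)-f_\origin(X)}^2 .
\]
All terms are bounded: $\ell_Y \leq \pi^2$, the gradient is bounded on $\Cregion$, and the arguments have norm at most $\pi/4$. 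So we may take expectations. It then suffices to show that the linear term has nonnegative expectation. Conditioning on $X$, this term equals $\YinnerProd{f(X)-f_\origin(X), \Expected[\nabla \ell_Y(f_\origin(X)) \mid X]}$.

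\textbf{Main obstacle: the first-order term.} The key step is the first-order condition at $f_\origin$. For almost every $x$, the map $v \mapsto \Expected[\ell_Y(v) \mid X=x]$ is strongly convex on $\Cregion$. Its minimiser over $T_\origin\sphere$ corresponds, via $\Exp_\origin$, to the conditional Fréchet mean $\mu(x)$ of Proposition~\ref{proposition:mu_well_defined}. So $f_\origin(x) = \Log_\origin \mu(x)$ minimises it over all of $T_\origin\sphere$, and in particular over $\Cregion$. Because $\Cregion$ is convex, the variational inequality $\YinnerProd{v - f_\origin(x), \nabla_v \Expected[\ell_Y(v)\mid X=x]|_{v=f_\origin(x)}} \geq 0$ holds for all $v \in \Cregion$. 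In fact the gradient vanishes whenever $f_\origin(x)$ is interior, and $\mu(x)$ being a global minimiser on $\sphere$ gives stationarity regardless.

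Exchanging gradient and conditional expectation is justified by dominated convergence, using the boundedness of $\nabla \ell_y$ and of its derivative on a neighbourhood of $\Cregion$. Here $\sphereDist(\Exp_\origin v, y) \leq \pi/2 < \pi$, so $\arccos^2$ is smooth there. This gives the conditional inequality, and taking expectations proves part~1.

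\textbf{Part 2.} The same pointwise inequality, applied at each $X_i$ for arbitrary $f, g \in \Uset$, shows that $f \mapsto \frac1n\sum_i \ell_{Y_i}(f(X_i))$ is convex on $\Uset$. Here we need $(X_i, Y_i)$ to fall in the almost-sure sets, which holds with probability one. Adding $\lambda_n^2\Hnorm{f}^2$ makes $\risk_n(\cdot,\lambda_n)$ strictly convex, in fact $2\lambda_n^2$-strongly convex in $\Hnorm{\cdot}$.

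$\Uset$ is convex, since a convex combination of functions with $\Ynorm{f(X)} \leq \pi/4$ a.s.\ stays in that set. It is closed and nonempty by Lemma~\ref{Lemma:Uset_is_closed_and_nonempty_new}. The functional is continuous in $\Hnorm{\cdot}$ because $\sup_x k(x,x) < \infty$, so evaluations are uniformly bounded. A continuous convex function is weakly lower semicontinuous, and the closed convex set $\Uset$ is weakly closed. The $\lambda_n^2\Hnorm{f}^2$ term makes $\risk_n(\cdot,\lambda_n)$ coercive. Taking the intersection of $\Uset$ with a large ball gives a weakly compact set, so a minimiser exists, and strict convexity gives uniqueness.
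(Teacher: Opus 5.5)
Your argument is correct. Part~2 is essentially the paper's proof: convexity of each $\ell_{Y_i}$ on $\Cregion$ via Theorem~\ref{theorem:ell_Hessian_new}, strict convexity from the penalty, closedness from Lemma~\ref{Lemma:Uset_is_closed_and_nonempty_new}, and existence via weak lower semicontinuity plus coercivity as in Lemma~\ref{Lemma:convex-closed-semi_continuity}. Since $\risk_n$ is convex only on $\Uset$, the weak lower semicontinuity has to be taken for its restriction to $\Uset$, which is exactly what that lemma provides.

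Part~1 differs in how you handle the first-order term. The paper integrates the same pointwise inequality into the Hilbert-space statement of Lemma~\ref{Lemma:population_risk_convexity_inequality_new}. There the linear term is $\HinnerProd{\nabla\risk(f_\origin), f-f_\origin}$ with $\nabla\risk(g)=\Expected\,\eval_X^\adjoint\nabla\ell_Y(g(X))$. The paper then applies the variational inequality of Lemma~\ref{Lemma:convex_gradient}(1b) to $f_\origin$ as a minimiser of $\risk$ over the convex set $\Uset$.

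You instead use that $f_\origin(x)=\Log_\origin\mu(x)$ minimises the conditional risk over all of $T_\origin\sphere$ for a.e.\ $x$. This gives $\Expected[\nabla\ell_Y(f_\origin(X))\mid X]=0$ a.s., so the linear term vanishes conditionally.
\begin{itemize}
\item Your route never needs $\nabla\risk$ as an element of $\Hspace$, and it gives the bound for any measurable $f$ with $\Ynorm{f(X)}\le\pi/4$ a.s. The cost is regular conditional distributions and differentiating under the conditional integral. That is harmless here because $\Ynorm{\nabla\ell_y}\le 2\pi$ uniformly in $y$.
\item The paper's route uses only that $f_\origin$ minimises $\risk$ over $\Uset$. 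So it would still work if $f_\origin$ were redefined as the risk minimiser within $\Uset$ (a misspecified target), where pointwise stationarity is unavailable.
\end{itemize}

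One step in Part~2 needs more than ``holds with probability one''; the paper is equally silent here. Convexity on $\Uset$ requires $f(X_i)\in\Cregion$ for all $f\in\Uset$ simultaneously. But each $f$ carries its own null set, and $\Uset$ is uncountable.

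If $\Hspace$ is separable (e.g.\ $k$ continuous, since $\Xspace$ is Polish), you can fix this. Take a countable dense $D\subset\Uset$ and use $\Ynorm{f(x)-g(x)}\le\sqrt{c}\,\Hnorm{f-g}$ (Lemma~\ref{Lemma:bounding_fX}). This yields a single null set outside which $\Ynorm{f(x)}\le\pi/4$ for every $f\in\Uset$.

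Without such regularity the claim can fail. Take $k(x,x')=\mathbf{1}\{x=x'\}$ on $[0,1]$, Lebesgue $\theLaw_X$, and $Y=\origin$ a.s. Then every $f\in\Hspace$ vanishes a.e., so $\Uset=\Hspace$. Moreover $t\mapsto\risk_n(t\,k(X_1,\cdot)u,\lambda_n)$ with $\Ynorm{u}=1$ has a concave kink at $t=\pi$, so $\risk_n$ is not convex on $\Uset$.
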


\subsubsection{Rates of convergence under minimal assumptions}
\label{sec:rates_minimal_assumptions}

The following result gives rates of convergence for any Polish covariate space $\Xspace$ (no compactness or boundedness assumptions), no smoothness condition or eigenvalue rates of decay for the integral operator associated to the scalar kernel $k$---see \eqref{eq:KIntegralOperator_defn} below. The result also hold for the infinite-dimensional sphere, i.e., $\dim \Yspace = \infty$.
\begin{Theorem} \label{theorem:rates_for_f_weakest_assumptions_new}
    Assume  $\sup_x k(x,x) = c < \infty$. Under Assumption~\ref{assumption:support_of_Y_for_theory_new}, for $\lambda_n \asymp n^{-1/4}$,
    \[
        \Ltwonorm{\hat f_n - f_\origin}^2 = O_{\theLaw^*}(n^{-1/2})
    \]
    
\end{Theorem}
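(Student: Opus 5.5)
We argue directly from the strong convexity of Theorem~\ref{theorem:risk_is_convex_new}, with no entropy or eigenvalue conditions. Here $\hat f_n$ is the unique minimizer of $\risk_n(\cdot,\lambda_n)$ over $\Uset$ given by Theorem~\ref{theorem:risk_is_convex_new}(2).

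\textbf{Step 1: basic inequality.} Since $f_\origin \in \Uset$ by Assumption~\ref{assumption:support_of_Y_for_theory_new}, we have $\risk_n(\hat f_n,\lambda_n) \le \risk_n(f_\origin,\lambda_n)$. Write $\risk_n(f,0)-\risk(f) =: (\Pprocess_n - \Pprocess)\ell(f)$, with $\ell(f)(x,y)=\ell_y(f(x))$. Combining the basic inequality with Theorem~\ref{theorem:risk_is_convex_new}(1) gives
\[
\frac{\epsilon}{2}\Ltwonorm{\hat f_n - f_\origin}^2 + \lambda_n^2\Hnorm{\hat f_n}^2 \le \bigl|(\Pprocess_n-\Pprocess)\bigl(\ell(\hat f_n)-\ell(f_\origin)\bigr)\bigr| + \lambda_n^2 \Hnorm{f_\origin}^2 .
\]

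\textbf{Step 2: bounding the empirical process by linearization.} The point is to avoid covering numbers, since the ball of $\Hspace$ need not be totally bounded when $\dim\Yspace=\infty$. On $\Cregion$ the map $v\mapsto \ell_y(v)$ is smooth, with gradient and Hessian bounded uniformly in $y$. This follows from the explicit formulas behind Theorem~\ref{theorem:ell_Hessian_new}, since $\sphereDist(\Exp_\origin v,y)\le\pi/2$ keeps $\arccos$ away from its singularities.

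First Taylor expand to second order around $f_\origin(x)$:
\[
\ell_y(f(x))-\ell_y(f_\origin(x)) = \YinnerProd{\nabla\ell_y(f_\origin(x)), (f-f_\origin)(x)} + R,\qquad |R|\le M\Ynorm{(f-f_\origin)(x)}^2 .
\]
The linear term is handled by the reproducing property. It equals $\HinnerProd{f-f_\origin, K_x\nabla\ell_y(f_\origin(x))}$, so its empirical-process part is $\HinnerProd{f-f_\origin, Z_n}$, where $Z_n=(\Pprocess_n-\Pprocess)K_X\nabla\ell_Y(f_\origin(X))$ is a mean of i.i.d.\ centered elements of the Hilbert space $\Hspace$. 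Its norm is bounded by $\sqrt c\,\sup\Ynorm{\nabla\ell}$, hence $\Expected\Hnorm{Z_n}^2\lesssim 1/n$ and $\Hnorm{Z_n}=O_\theLaw(n^{-1/2})$.

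For the remainder I would instead use the cruder first-order bound, which avoids the measurability and uniformity issues of the quadratic term. The class $\{\ell(f)-\ell(f_\origin): f\in\Uset,\ \Hnorm{f}\le r\}$ is the composition of a uniformly Lipschitz (in $v$) function with the evaluation map, so I would apply a vector-contraction/Rademacher argument. For Hilbert-valued linear classes, the Rademacher complexity of $\{(x,y)\mapsto \YinnerProd{u_y,f(x)}\}$ is bounded by $r\sqrt{c}\,\sup\norm{u}/\sqrt n$ by Cauchy--Schwarz in $\Hspace$. A vector-contraction inequality (Maurer's), valid in separable Hilbert spaces, then gives
\[
\Expected^*\sup_{\Hnorm{f}\le r, f\in\Uset}\bigl|(\Pprocess_n-\Pprocess)(\ell(f)-\ell(f_\origin))\bigr| \lesssim \frac{r+\Hnorm{f_\origin}}{\sqrt n}.
\]
This single bound suffices and makes the Taylor split unnecessary; I expect justifying the contraction step in infinite dimensions, with the outer expectation, to be the main obstacle.

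\textbf{Step 3: peeling on $\Hnorm{\hat f_n}$ and choice of $\lambda_n$.} Write $a=\Hnorm{\hat f_n}$. Step 1 gives
\[
\lambda_n^2 a^2 \le O_{\theLaw^*}\bigl((a+1)/\sqrt n\bigr) + \lambda_n^2\Hnorm{f_\origin}^2 .
\]
To make this rigorous, use a slicing over shells $\{2^{j}\le \Hnorm{f}<2^{j+1}\}$ and apply Step 2 on each shell. This yields $a^2\lesssim 1 + a/(\lambda_n^2\sqrt n)$. With $\lambda_n\asymp n^{-1/4}$ we have $\lambda_n^2\sqrt n\asymp 1$, so $a=O_{\theLaw^*}(1)$.

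Substituting back into Step 1 gives
\[
\frac{\epsilon}{2}\Ltwonorm{\hat f_n-f_\origin}^2 \le O_{\theLaw^*}(n^{-1/2}) + \lambda_n^2\Hnorm{f_\origin}^2 = O_{\theLaw^*}(n^{-1/2}),
\]
which is the claimed rate. The choice $\lambda_n\asymp n^{-1/4}$ is exactly the one balancing the bias term $\lambda_n^2$ against the stochastic term $n^{-1/2}$.
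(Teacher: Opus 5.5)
Up to packaging, this is the paper's argument, and it has a genuine gap in Step~2 when $\dim\Yspace=\infty$. The paper applies \citet[Theorem~3.4.6]{vandervaartWeakConvergenceEmpirical2023a} with $\Theta_n=\Uset$ and penalty $\mathcal J_n=\Hnorm{\cdot}$. It feeds that theorem the quadratic growth of Theorem~\ref{theorem:risk_is_convex_new}, which is your Step~1. It also feeds it a modulus bound $v_n(\delta)\lesssim\delta/\lambda_n$ over $\{f\in\Uset:\Hnorm{f}<\delta/\lambda_n\}$, obtained by symmetrization and Proposition~\ref{proposition:vector-contraction-with-absolute-value}; this is your Step~2 with $r=\delta/\lambda_n$. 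The peeling inside that theorem replaces your Step~3. Your hand-made peeling on $\Hnorm{\hat f_n}$ is correct: Markov on dyadic shells gives probabilities of order $(\lambda_n^2\sqrt n\,2^{j})^{-1}$, which are summable since $\lambda_n^2\sqrt n\gtrsim 1$.

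\textbf{The gap in Step~2.} It is exactly the step you flagged, and it is not a technicality about outer expectations. Maurer's inequality does not reduce the problem to a class $\{(x,y)\mapsto\YinnerProd{u_y,f(x)}\}$ with one fixed direction per observation. Its right-hand side is $\Expected\sup_f\sum_{i,j}\vep_{ij}\YinnerProd{f(X_i),e_j}$, which for $\dim\Yspace<\infty$ equals
\[
r\,\Expected\Hnorm{\sum_{i,j}\vep_{ij}K_{X_i}e_j}\le r\sqrt{(\dim\Yspace-1)\,c\,n}.
\]
This is your Cauchy--Schwarz bound with the random direction $u_i=\sum_j\vep_{ij}e_j$, whose norm is $\sqrt{\dim T_\origin\sphere}$. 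So Step~2 holds only with an extra factor $\sqrt{\dim\Yspace-1}$. Moreover, $\Uset$ contains the $\Hspace$-ball of radius $\pi/(4\sqrt c)$, so Khintchine--Kahane shows the contraction bound is infinite when $\dim\Yspace=\infty$. Your proof is therefore complete for finite-dimensional spheres, but not for the infinite-dimensional sphere that the theorem explicitly covers.

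\textbf{The paper does not fill this.} Its dimension-free bound comes from the line ``taking the expectation with respect to the Rademacher variables''. That line replaces $\Expected_\vep\sup_f\bigl(\sum_{i,j}\vep_{ij}\YinnerProd{f(X_i),e_j}\bigr)^2$ by $\sup_f\sum_{i,j}\YinnerProd{f(X_i),e_j}^2$. This exchange goes the wrong way: for $n=1$, $k\equiv1$ and the unit ball, the left side is $\dim T_\origin\sphere$ and the right side is $1$.

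\textbf{What a fix would need.} You must use more of $\ell_y$ than its Lipschitz constant. One option is that $\ell_y(v)$ depends on $v$ only through $\Ynorm{v}$ and $\YinnerProd{v,y}$. The class $f\mapsto\YinnerProd{f(x),y}$ is of the same type as your discarded linear term $Z_n$. The class $f\mapsto\Ynorm{f(x)}^2$ is controlled through $\operatornorm{\sum_i\vep_i k_{X_i}\otimes k_{X_i}}$ on $\Hspace_k$. Both have dimension-free complexities. The second, however, scales like $r^2$, so the peeling in Step~3 would have to be redone.
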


The rates obtained translate directly into rates for the conditional mean estimator $\hat \mu_n(x)= \Exp_\origin( \hat f_n(x) )$. Recall the definition of the conditional mean $\mu$ from \eqref{eq:muX}.
\begin{Theorem} \label{theorem:L2_convergence_mu_weakest_assumptions_new}
    Assume $\sup_x k(x,x) = c < \infty$ and let $\hat \mu_n = \Exp_\origin \circ \hat f_n$. Under Assumptions~\ref{assumption:support_of_Y_for_theory_new}, for $\lambda_n \asymp n^{-1/4}$,
    \[
      \int_\Xspace \sphereDist^2(\mu(x), \hat \mu_n(x)) \ud\theLaw_X(x) = O_{\theLaw^*}(n^{-1/2}).
    \]
\end{Theorem}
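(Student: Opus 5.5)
The plan is to reduce the statement to Theorem~\ref{theorem:rates_for_f_weakest_assumptions_new}. The tool is a pointwise comparison between the spherical distance of exponentials and the tangent-space norm. For $v, w \in T_\origin \sphere$ we have
\[
\sphereDist(\Exp_\origin v, \Exp_\origin w) \le \Ynorm{v - w}.
\]
This holds because $\Exp_\origin$ is $1$-Lipschitz from the flat tangent space onto the sphere, whose sectional curvature is positive. Concretely, the differential of $\Exp_\origin$ at $v$ acts as the identity in the radial direction. In directions orthogonal to $v$ it scales by $\sin\Ynorm{v}/\Ynorm{v} \le 1$ (Rauch comparison, or a direct computation from the explicit formula). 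So the straight segment $t \mapsto (1-t)v + tw$ is mapped to a curve on $\sphere$ joining the two points whose length is at most $\Ynorm{v-w}$. The geodesic distance is the infimum of the lengths of such curves, which gives the inequality. This holds for all $v,w$, so no restriction to $\Cregion$ is needed. Alternatively, one can use the spherical law of cosines (Toponogov comparison) to get the same bound directly.

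We then apply this with $v = f_\origin(x)$ and $w = \hat f_n(x)$. Note that $\Exp_\origin(f_\origin(x)) = \mu(x)$ for $\theLaw_X$-almost every $x$. This is because $f_\origin = \Log_\origin \circ \mu$, $\mu(X) \neq -\origin$ almost surely (indeed $\sphereDist(\origin,\mu(X)) \le \pi/4$ under Assumption~\ref{assumption:support_of_Y_for_theory_new}), and $\Exp_\origin \circ \Log_\origin = \id$ on $\sphere\setminus\{-\origin\}$. Hence
\[
\sphereDist^2(\mu(x), \hat \mu_n(x)) \le \Ynorm{\hat f_n(x) - f_\origin(x)}^2
\]
for $\theLaw_X$-almost every $x$. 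Integrating against $\theLaw_X$ gives $\int_\Xspace \sphereDist^2(\mu, \hat\mu_n)\, \ud\theLaw_X \le \Ltwonorm{\hat f_n - f_\origin}^2$. Theorem~\ref{theorem:rates_for_f_weakest_assumptions_new}, with $\lambda_n \asymp n^{-1/4}$, then yields the $O_{\theLaw^*}(n^{-1/2})$ rate. The outer-probability statement transfers because the left-hand side is dominated by a quantity that is $O_{\theLaw^*}(n^{-1/2})$.

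The only genuine step is the Lipschitz bound for $\Exp_\origin$. I expect it to be short but it needs care. The differential computation for $v \neq 0$ is straightforward. At $v=0$, and for segments passing through $0$, a continuity or limiting argument handles the degenerate case. One should also note that the curve length argument works in infinite dimensions, since it only uses the Riemannian submanifold structure. A measurability remark is also needed: $x\mapsto \sphereDist^2(\mu(x),\hat\mu_n(x))$ is measurable for fixed data because $\mu$ is measurable by Proposition~\ref{proposition:mu_well_defined}, $\hat f_n \in \Hspace$ is continuous, and $\Exp_\origin$ is smooth.
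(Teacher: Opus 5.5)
Your proposal is correct and follows the paper's proof. The paper bounds the integrand pointwise using Lemma~\ref{Lemma:distance_and_exp_fixed_p}, which establishes $\sphereDist(\Exp_\origin u,\Exp_\origin u')\le\Ynorm{u-u'}$ by the same curve-length argument from $\operatornorm{D\Exp_\origin(v)}=1$, and then invokes Theorem~\ref{theorem:rates_for_f_weakest_assumptions_new}. One small correction to your measurability aside: $k$ is not assumed continuous here, so $\hat f_n$ need not be continuous, but this does not matter because the statement is in outer probability.
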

These results warrants some discussion.
\begin{Remark} \label{remark:low_rates} \mbox{}

    \begin{enumerate}
        \item Typical rates for kernel ridge regression depend on (1) the smoothness of the RKHS space, which is quantified by the rates of decays of the eigenvalues associated with the integral operator induced by the kernel (defined in \eqref{eq:KIntegralOperator_defn}), and source conditions. In the worst case (roughest possible functions and no source conditions) the same $n^{-1/2}$ as in Theorem~\ref{theorem:rates_for_f_weakest_assumptions_new} are obtained, see \citet{caponnettoOptimalRatesRegularized2007}. The difference is that our problem is highly non-linear in the response space (the sphere $\sphere$), whereas in kernel ridge regression the response space is linear.
        \item Our result holds for the sphere $\sphere$ of arbitrary dimension, in particular the infinite-dimensional sphere, and the rates do not depend on the dimension of the sphere. 
\item We cannot compare our rates to Fréchet regression \citep[][]{petersenFrechetRegressionRandom2019} because the estimated regression function of the latter is computed pointwise (for each value of $x \in \Xspace$) and the rates obtained are pointwise rates (or local rates) based on pointwise (or local) assumptions, such as quadratic growth of the population objective function for each $x \in \Xspace$. Our setting is different because we fit a non-parametric function $\hat f \in \Hspace$ simultaneously for all $x \in \Xspace$. 
    \end{enumerate}
\end{Remark}

\subsubsection{Rates of convergence depending on VVRKHS smoothness}
\label{sec:rates_depending_on_smoothness}

Under stronger assumptions, we can derive rates of convergence that depend on the smoothness of the VVRKHS functions. Recall that for SMO kernels $K(x,x') = k(x,x')\id$ our VVRKHS is isomorphic to the tensor product of the scalar RKHS associated to $k$, $\Hspace_k$, and the output space $T_\origin \sphere$. This implies that for any vector $y \in T_\origin \sphere$ and $f \in \Hspace$, the function $f_y(\cdot) = \YinnerProd{f(\cdot), y}$ belongs to $\Hspace_k$ (Lemma~\ref{Lemma:projection_against_y_is_in_scalar_RKHS} in the \thesupplement). The smoothness of $\Hspace$ functions is measured directly by the smoothness of the $\Hspace_k$ functions via the integral operator $\KIntegralOperator: \LtwoX \to \LtwoX$, 
\begin{equation} \label{eq:KIntegralOperator_defn}
    (\KIntegralOperator f)(x') = \int k(x',x)f(x) \ud\theLaw_X(x), \quad f \in \LtwoX,
\end{equation}
where $\LtwoX$ is the space of measurable function $f:\Xspace \to \Real$ with $\int f^2(X) \ud\theLaw_X(x) < \infty$.
If $\Expected k(X,X) < \infty$, this operator $\KIntegralOperator$ is trace-class  \citep[][Lemma~2.3]{steinwartMercerTheoremGeneral2012}, and $\Hspace_k$ is compactly embedded into $\LtwoX$. Furthermore, $\KIntegralOperator$ is self-adjoint, and admits a spectral decomposition in $\LtwoX$,
\begin{equation} \label{eq:KIntegralOperator_spectral_decomposition}
    \KIntegralOperator = \sum_{l=1}^\infty \sigma_l \varphi_l \otimes \varphi_l,
\end{equation}
where $(\sigma_l)_{l \geq 1} \subset [0, \infty)$ is a non-negative decreasing and summable sequence, $(\varphi_l)_{l \geq 1} \subset \LtwoX$ and $(f \otimes g)(h)(x') = \int f(x)g(x')h(x)\ud\theLaw_X(x)$ for $f,g,h \in \LtwoX$.

The rates of decay of the eigenvalues $\{\sigma_l\}$ encode the complexity of the RKHS $\Hspace_k$ and its interaction with the probability measure $\theLaw_X$. Typical eigenvalue decay rates are given by the following definition.
\begin{Definition}\label{definition:eigenvalue_decay_rates}
    Let $\{ \sigma_l \}$ be the decreasing sequence eigenvalues of $\KIntegralOperator$ , as in \eqref{eq:KIntegralOperator_spectral_decomposition}. We define the following types of decay of these eigenvalues:
        \begin{description}
            \item[Finite-rank kernel.] For some $L \in \mathbb N$, $l > L \Rightarrow \sigma_l = 0$.
            \item[Polynomial decay.]  For some $p>1$ and $C > 0$, $\sigma_l \leq C l^{-p}$ for all $l \geq 1$.
            \item[Stretched exponential.] For some $q >0$, $\alpha >0$ and $C>0$, $\sigma_l \leq C e^{-\alpha l^{1/q}}$ for all $l \geq 1$.
     \end{description}
\end{Definition}
\begin{Remark}
  \label{remark:eigenvalue_decay_rates}
Obtaining the eigenvalue decay rates for general distributions $\theLaw_X$ and general kernels $k$ is non-trivial. A well-understood case is $\Xspace = \Real^s$ with $\theLaw_X$ absolutely continuous with respect to the Lebesgue measure, with bounded density and bounded support, and for translation invariant kernels $k(x,x') = \varphi(x-x')$. This includes the Mat\'ern kernel of smoothness $\nu$, for which the eigenvalue decay rate is polynomial with $p = 1 + 2\nu/s$ (the decay is faster for higher smoothnesses and slower for higher dimensions),
and the Gaussian kernel $k(x,x') = \exp(- \norm{x-x'}^2/{2\sigma^2} )$, where $\sigma > 0$, for which the eigenvalue decay is a stretched exponential with parameter $q=s$.
\end{Remark}

We can now state a result on smoothness-dependent rates of convergence of our estimator.

\begin{Theorem} \label{theorem:rates_for_f_depending_on_smoothness_new}
  Assume that Assumption~\ref{assumption:support_of_Y_for_theory_new} holds, $\dim(\Yspace) < \infty$, and $\sup_x k(x,x) =c < \infty$.  We have the following results, depending on the rates of decay of the eigenvalues of $\KIntegralOperator$ given in Definition~\ref{definition:eigenvalue_decay_rates}:
            \[
               \Ltwonorm{\hat f_n - f_\origin}^2 = \begin{cases}
                     O_{\theLaw^*}(n^{-1}) & \text{for finite-rank kernel and } \lambda_n \asymp e^{-n}, \\
                 O_{\theLaw^*}(n^{-\frac{p}{p+1}}) & \text{for polynomial decay and } \lambda_n \asymp n^{-\frac{p}{2(p+1)}},\\
                 O_{\theLaw^*}( (\log(n))^{q}/n) & \text{for stretched exponential decay and } \lambda_n \asymp n^{-1/2}.
                \end{cases}
            \]
\end{Theorem}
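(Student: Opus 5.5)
The proof follows the standard localized empirical-process route for penalized M-estimators, with Theorem~\ref{theorem:risk_is_convex_new} supplying the curvature that a linear kernel ridge analysis would get from the quadratic loss.

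Throughout, $\hat f_n$ denotes the minimizer of $\risk_n(\cdot,\lambda_n)$ over $\Uset$ (Theorem~\ref{theorem:risk_is_convex_new}, item~2). Since $f_\origin \in \Uset$, the minimizing property gives the basic inequality
\[
\risk(\hat f_n)-\risk(f_\origin) + \lambda_n^2 \Hnorm{\hat f_n}^2 \le -(\Pprocess_n-\Pprocess)\bigl(\ell(\hat f_n)-\ell(f_\origin)\bigr) + \lambda_n^2\Hnorm{f_\origin}^2,
\]
where $\ell(f)(x,y) = \ell_y(f(x))$. Theorem~\ref{theorem:risk_is_convex_new}, item~1, bounds the left side from below by $\frac{\epsilon}{2}\Ltwonorm{\hat f_n-f_\origin}^2 + \lambda_n^2\Hnorm{\hat f_n}^2$.

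The loss difference is controlled on the bounded region $\Ynorm{v}\le\pi/4$ and $\sphereDist(\origin,y)\le \pi/4$. There $\ell_y$ is smooth with bounded gradient (Theorem~\ref{theorem:ell_Hessian_new}), so it is Lipschitz:
\[
|\ell_y(f(x))-\ell_y(f_\origin(x))|\le L\,\Ynorm{f(x)-f_\origin(x)}.
\]
Hence the loss class is a Lipschitz image of $\{f-f_\origin\}$. I would symmetrize and apply a vector-valued contraction inequality (Maurer's) to reduce the problem to Rademacher complexities of the ball
\[
\{g\in\Hspace : \Hnorm{g}\le R,\ \Ltwonorm{g}\le\delta\}.
\]

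For finite $\dim\Yspace=d$, write $g=\sum_{j=1}^d e_j g_j$ with $g_j\in\Hspace_k$. The local Rademacher complexity is then at most $\sqrt d$ times the scalar bound
\[
n^{-1/2}\Bigl(\sum_l \min(\sigma_l, \delta^2 R^2/ \dots)\Bigr)^{1/2},
\]
the Mendelson-type bound $\lesssim n^{-1/2}\bigl(\sum_l\min(\sigma_l R^2,\delta^2)\bigr)^{1/2}$. This is exactly where finite dimension is used: it keeps the constant $d$ finite.

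Next, a peeling argument over shells in both $\Ltwonorm{\cdot}$ and $\Hnorm{\cdot}$, with Talagrand's inequality for uniform concentration (functions are bounded by $\pi^2$), turns this into the following. With high probability, for all $g$ in the ball,
\[
|(\Pprocess_n-\Pprocess)(\ell(f_\origin+g)-\ell(f_\origin))| \lesssim \psi_n(\Ltwonorm{g}, \Hnorm{g}+\Hnorm{f_\origin}),
\]
where
\[
\psi_n(\delta,R) = n^{-1/2}\Bigl(\sum_l \min(\sigma_l R^2,\delta^2)\Bigr)^{1/2} + R\,n^{-1}.
\]
Plugging this into the basic inequality and using $ab\le a^2/2+b^2/2$ yields $\Hnorm{\hat f_n}=O_{\theLaw^*}(1)$. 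It also yields $\Ltwonorm{\hat f_n-f_\origin}^2\lesssim \delta_n^2+\lambda_n^2$, where $\delta_n$ solves the critical equation $\psi_n(\delta,1)\asymp\delta^2$.

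The critical radius is computed case by case.
\begin{itemize}
\item \emph{Finite rank:} $\sum_l\min(\sigma_l,\delta^2)\le L\delta^2$, so $\delta_n^2\asymp n^{-1}$. With $\lambda_n^2$ negligible, this gives $n^{-1}$.
\item \emph{Polynomial decay:} $\sum_l\min(l^{-p},\delta^2)\asymp \delta^{2-2/p}$, so $\delta_n^2\asymp n^{-p/(p+1)}$, which matches $\lambda_n^2$.
\item \emph{Stretched exponential:} the number of $l$ with $\sigma_l>\delta^2$ is about $(\log(1/\delta))^q$, so $\delta_n^2\asymp (\log n)^q/n$. Here $\lambda_n^2=n^{-1}$ is smaller.
\end{itemize}

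The main obstacle is making the localization work with a penalty that controls the $\Hspace$-norm only at scale $\lambda_n$. One must carry $R=\Hnorm{g}$ through the peeling jointly with $\delta$. In particular, the min in the entropy sum must be set up so that the penalty term $\lambda_n^2\Hnorm{\hat f_n}^2$ absorbs the $R$-dependence. I would handle this first by the two-parameter peeling in van de Geer's style for penalized M-estimators.

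A secondary subtlety is that $\Uset$ restricts the candidates, so the increments $g=f-f_\origin$ satisfy $\Ynorm{g(x)}\le\pi/2$. This keeps all losses bounded, which Talagrand's inequality needs.
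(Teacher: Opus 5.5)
Your architecture is the same as the paper's: curvature from Theorem~\ref{theorem:risk_is_convex_new}, symmetrization, Maurer's vector contraction with Lipschitz constant $2\pi$, a coordinatewise reduction to $\Hspace_k$ using $\dim\Yspace<\infty$, Mendelson's eigenvalue bound for an $\Hspace_k$-ball intersected with an $L^2$-ball, and then the effective-dimension computations. The difference is how localization is handled.

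The paper does no peeling by hand. It bounds, in expectation only, the modulus over $\{f\in\Uset:\Ltwonorm{f-f_\origin}\lesssim\delta,\ \Hnorm{f}<\delta/\lambda_n\}$ and plugs this into the penalized rate theorem of van der Vaart and Wellner (Theorem~3.4.6). Coupling the $\Hspace$-radius to $\delta/\lambda_n$ turns your $\psi_n(\delta,\delta/\lambda_n)$ into $n^{-1/2}\delta N(\lambda_n^2)^{1/2}$ up to constants, where $N(\lambda)=\sum_l\sigma_l/(\sigma_l+\lambda)$. This is linear in $\delta$. It gives the rate condition $\delta_n\gtrsim\max\{n^{-1/2}N(\lambda_n^2)^{1/2},\lambda_n\Hnorm{f_\origin}\}$ for every choice of $\lambda_n$, and no Talagrand inequality is needed.

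The step that fails as written is the claim that the basic inequality yields $\Hnorm{\hat f_n}=O_{\theLaw^*}(1)$, after which you solve the critical equation at $R=1$. Absorbing the $R$-dependence of $\psi_n$ into $\lambda_n^2R^2$ requires the penalty to sit at the critical scale, $\lambda_n\gtrsim\delta_n$. That holds only in the polynomial case.
\begin{itemize}
\item Stretched exponential: here $\lambda_n\asymp n^{-1/2}\ll\delta_n\asymp((\log n)^q/n)^{1/2}$, and the basic inequality only gives $\Hnorm{\hat f_n}\lesssim(\log n)^{q/2}$.
\item Finite rank: with $\lambda_n\asymp e^{-n}$, the term $Rn^{-1}$ in your $\psi_n$ is dominated by $\lambda_n^2R^2$ only once $R\gtrsim e^{2n}/n$. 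The inequality then gives no control of $\Ltwonorm{\hat f_n-f_\origin}$.
\end{itemize}

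The three rates are still correct, but for reasons your proof has to supply.
\begin{itemize}
\item Replace $Rn^{-1}$ by $C/n$. Increments within $\Uset$ satisfy $\Ynorm{g(x)}\le\pi/2$, so the envelope is constant, as you note yourself.
\item For finite rank, $\sum_l\min(\sigma_lR^2,\delta^2)\le L\delta^2$ does not depend on $R$ at all.
\item For the stretched exponential, $\sum_l\min(\sigma_lR^2,\delta^2)\lesssim\delta^2(\log(R/\delta))^q$. Comparing with $0\in\Uset$ gives $\lambda_n^2\Hnorm{\hat f_n}^2\le\risk_n(0,\lambda_n)\le\pi^2/16$, hence the a priori bound $\Hnorm{\hat f_n}\le\pi/(4\lambda_n)$. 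This gives $\log(R/\delta)\lesssim\log n$ without any peeling in $R$.
\end{itemize}
A simpler fix is to adopt the paper's coupling $R\asymp\delta/\lambda_n$. It handles all three regimes at once and is exactly where the prescribed choices of $\lambda_n$ come from.
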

The rates obtained translate directly into rates for the conditional mean estimator $\hat \mu_n(x)= \Exp_\origin( \hat f_n(x) )$. 
\begin{Theorem} \label{theorem:L2_convergence_mu_smoothness_dependent_new}
  Assume that Assumption~\ref{assumption:support_of_Y_for_theory_new} holds, $\dim(\Yspace) < \infty$, and $\sup_x k(x,x) =c < \infty$.  We have the following results, depending on the rates of decay of the eigenvalues of $\KIntegralOperator$ given in Definition~\ref{definition:eigenvalue_decay_rates}:
            \[
                \int_\Xspace \sphereDist^2(\mu(x), \hat \mu_n(x)) \ud\theLaw_X(x) = 
                \begin{cases}
                     O_{\theLaw^*}(n^{-1}) & \text{for finite-rank kernel and } \lambda_n \asymp e^{-n} , \\
                 O_{\theLaw^*}(n^{-\frac{p}{p+1}}) & \text{for polynomial decay and } \lambda_n \asymp n^{-\frac{p}{2(p+1)}},\\
                 O_{\theLaw^*}( (\log(n))^{q}/n) & \text{for stretched exponential decay and } \lambda_n \asymp n^{-1/2}.
                \end{cases}
            \]
\end{Theorem}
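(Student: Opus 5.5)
The plan is to transfer the rates of Theorem~\ref{theorem:rates_for_f_depending_on_smoothness_new} from the tangent-space level to the sphere through a pointwise Lipschitz bound for $\Exp_\origin$. Writing $\mu = \Exp_\origin \circ f_\origin$ ($\theLaw_X$-a.s., by \eqref{eq:f0} and $\Exp_\origin\circ\Log_\origin = \id$ on $\sphere\setminus\{-\origin\}$) and $\hat\mu_n = \Exp_\origin\circ \hat f_n$, we will show
\[
  \sphereDist\bigl(\Exp_\origin v, \Exp_\origin w\bigr) \le \Ynorm{v-w}, \qquad v,w \in T_\origin\sphere .
\]
Integrating the square of this inequality against $\theLaw_X$ gives $\int_\Xspace \sphereDist^2(\mu(x),\hat\mu_n(x))\,\ud\theLaw_X(x) \le \Ltwonorm{\hat f_n - f_\origin}^2$. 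Each of the three cases then follows directly from the corresponding case of Theorem~\ref{theorem:rates_for_f_depending_on_smoothness_new}, with the same choice of $\lambda_n$. Outer probability is monotone, so the $O_{\theLaw^*}$ statement passes through a deterministic pointwise inequality without any measurability issue.

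The key step is the inequality itself, and I would prove it in one of two ways. The first route uses a Rauch/Toponogov comparison. The sphere has sectional curvature $1 \ge 0$, so $\Exp_\origin$ is $1$-Lipschitz: its differential has norm at most one. Concretely, in the radial direction the differential is an isometry, and in orthogonal directions it scales by $\sin r/r \le 1$ at radius $r = \Ynorm{v}$.

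The second route, which I prefer, is a direct length argument that avoids Rauch's theorem in infinite dimensions. Take the segment $\gamma(t) = v + t(w-v)$ in $T_\origin\sphere$. Since $\Exp_\origin$ is $\continuous^\infty$, the image curve $\Exp_\origin\circ\gamma$ has length at most $\sup\|D\Exp_\origin\|\cdot\Ynorm{w-v} \le \Ynorm{w-v}$, which bounds the geodesic distance. To bound the differential, I would use the explicit formula
\[
  \Exp_\origin(v) = \cos r\,\origin + \sin r\,u, \qquad r = \Ynorm{v},\ u = v/r .
\]
Splitting a perturbation $h$ into a component along $u$ and a component orthogonal to $u$ gives the images $-\sin r\,\origin + \cos r\,u$ (of norm $1$) and $(\sin r/r)\,h_\perp$, respectively. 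These two images are orthogonal in $\Yspace$, so $\|D\Exp_\origin(v)h\|^2 = h_\parallel^2 + (\sin r/r)^2\|h_\perp\|^2 \le \|h\|^2$.

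I expect the only real obstacle to be technical: handling the point $v=0$ (where the formula is singular), and the possibility that the segment passes through the origin. Both are resolved by continuity of $D\Exp_\origin$, since $D\Exp_\origin(0) = \id$. The remaining ingredients are that $\mu$ and $\Exp_\origin\circ f_\origin$ agree $\theLaw_X$-almost everywhere, and that $\hat f_n$ is the minimizer over $\Uset$ given by Theorem~\ref{theorem:risk_is_convex_new}; both are routine. Note that the bound needs no constraint like $\Ynorm{v}\le\pi/4$, since the Lipschitz property holds globally.
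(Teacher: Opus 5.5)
Your proposal is correct and follows essentially the same route as the paper. The paper bounds $\int_\Xspace \sphereDist^2(\mu(x),\hat\mu_n(x))\,\ud\theLaw_X(x)$ by $\Ltwonorm{\hat f_n - f_\origin}^2$ using Lemma~\ref{Lemma:distance_and_exp_fixed_p}, whose proof is your segment-length argument combined with the radial/orthogonal differential computation of Lemma~\ref{Lemma:operator_norm_exp}, and it then invokes Theorem~\ref{theorem:rates_for_f_depending_on_smoothness_new} with the same choices of $\lambda_n$.
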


These results warrant some discussion.
\begin{Remark}
The rates obtained can be compared to the closest similar regression settings for which theory is available: functional regression with RKHS, linear vector-valued kernel ridge regression, and manifold regression.
    \begin{enumerate}
        \item In functional regression with RKHS techniques, the problem is to estimate $\beta$ in the model $Y = \int \beta(t) X(t)dt + \vep$ by assuming $\beta$ belongs to an RKHS. In this problem, the minimax rate obtained for instance under the assumption that $\KIntegralOperator$ and the covariance operator of $X$ are simultaneously diagonalizable, and under polynomial decay of their eigenvalues, \citet{yuanReproducingKernelHilbert2010} obtain a minimax optimal rate similar to our rate.

            \item For linear vector-valued kernel ridge regression, the rates depend on two quantities: (1) source conditions and (2) rates of decay of the eigenvalues of $\KIntegralOperator$. 
Source conditions encode how much smoother $f_\origin$ is compared to typical functions in $\Hspace$. This is crystalized into the assumption that there is a $c \in [1,2]$ such that for all $y \in T_\origin \sphere$, the function $\YinnerProd{f_\origin(\cdot), y} \in \image( \KIntegralOperator^{(c-1)/2})$. The case $c=1$ corresponds to no source condition. In the linear setup, the exact form of $\hat f_n$ is known, and such source conditions allow to control the bias term in the overall error. In our setting the closed form expression of $\hat f_n$ is unknown and such source conditions are not directly useful. We therefore compare our rates to the linear rates with $c=1$.
\citet{caponnettoOptimalRatesRegularized2007} provide rates for the least squares problem $\arg \inf_{f \in \Hspace} \Expected \Ynorm{Y - f(X)}^2$ under the assumption that $\trace K(x,x) < \infty$  (which implies in our SMO kernel setting that $\dim \Yspace < \infty$). The rates they obtain with no source conditions (case $c=1$) match our finite-rank rates exactly, and our polynomial decay rates up to a $\log(n)$ factor. 
We cannot compare to the rates of \citet{liOptimalSobolevNorm2024}  since they consider the misspecified regression problem.
Our rates for the stretched exponential decay also match the kernel ridge regression rates, see, e.g., \citet{bakEffectDimensionalityConvergence2025}.

\item We can also compare our rates with with the manifold regression literature. \citet{corneaRegressionModelsRiemannian2017} considers finite-dimensional manifolds as responses, vector covariates. They operate under a parametric setting, and obtains the usual parametric $\sqrt{n}$ rates for the parameter estimators. The equivalent in our setting would be a finite-rank kernel $k$, such as the polynomial kernel $k(x,\check x) = (1+ x^\transpose \check x)^p$, $p \in \mathbb N$ and $\Xspace = \Real^s$, which would yield the same rates. \citet{choiHighdimensionalHilbertSchmidt2025} consider regression between manifolds by mapping the manifold points onto tangent spaces and then performing linear regression, which in the general setting boils down to estimating a Hilbert--Schmidt operator between the tangent spaces. In the case of finite-dimensional response and covariate, they obtain the parametric $\sqrt{n}$ rate. Their linear setting can be compared to an inner-product kernel $k(x,\check x) = x^\transpose \check x$ of our setting, for which we obtain the same rate.

\item We cannot compare our rates to Fréchet regression \citep[][]{petersenFrechetRegressionRandom2019} for the reasons discussed in Remark~\ref{remark:low_rates}.
   
\end{enumerate}
\end{Remark}

\subsection{Strong Consistency}
\label{sec:strong_consistency}

In this section we show the outer almost sure consistency in $\Ltwonorm{\cdot}$ of the empirical risk minimizer over the set
\begin{equation} \label{eq:tildeUset}
    \widetilde \Uset  = \{f \in \Uset \mid \Hnorm{f} \leq C \},
\end{equation}
where $C> \Hnorm{f_\origin}$ is some arbitrary constant. 
We shall need the following assumptions.
\begin{Assumption}\label{assumption:for_almost_sure_consistency}
    \mbox{}
    \begin{enumerate}
        \item $\Xspace$ is a compact metric space,
        \item $k: \Xspace \times \Xspace \to \Real$ is continuous,
        \item $\Yspace$ is a finite-dimensional Hilbert space (in particular, $\sphere$ is a finite-dimensional sphere).
    \end{enumerate}
\end{Assumption}

Let $\outerASconv$ denote outer almost sure convergence, and $\tilde f_n$ be the minimizer of $\risk_n(\cdot, \lambda_n)$ over $\widetilde \Uset$, defined in \eqref{eq:tildeUset}.
\begin{Theorem}\label{theorem:strong_consistency_new} 
  Assume that Assumptions~\ref{assumption:support_of_Y_for_theory_new} and~\ref{assumption:for_almost_sure_consistency} hold, and $\sup_x k(x,x) =c < \infty$.
    If $\lambda_n \downarrow 0$ as $n \to \infty$,
    \begin{align*}
      \Ltwonorm{\tilde f_n - f_\origin} \outerASconv 0, \quad \text{as } n \to \infty.
    \end{align*}
\end{Theorem}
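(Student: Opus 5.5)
Since $\tilde f_n$ minimizes $\risk_n(\cdot,\lambda_n)$ over $\widetilde\Uset$ and $f_\origin\in\widetilde\Uset$ (because $C>\Hnorm{f_\origin}$ and $f_\origin\in\Uset$ by Assumption~\ref{assumption:support_of_Y_for_theory_new}), I would run a standard argmin-consistency argument. Uniform convergence of the empirical risk comes from a Glivenko--Cantelli property, and identifiability comes from the quadratic growth in Theorem~\ref{theorem:risk_is_convex_new}(1).

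\textbf{Step 1: basic inequality.} Write $\risk^0_n(f)=\risk_n(f,0)$. By optimality of $\tilde f_n$,
\[
\risk^0_n(\tilde f_n)+\lambda_n^2\Hnorm{\tilde f_n}^2\le \risk^0_n(f_\origin)+\lambda_n^2\Hnorm{f_\origin}^2 .
\]
Since $\Hnorm{\tilde f_n}\le C$, this gives $\risk^0_n(\tilde f_n)\le\risk^0_n(f_\origin)+\lambda_n^2C^2$. Combining with Theorem~\ref{theorem:risk_is_convex_new}(1) yields
\[
\tfrac{\epsilon}{2}\Ltwonorm{\tilde f_n-f_\origin}^2\le \risk(\tilde f_n)-\risk(f_\origin)\le 2\sup_{f\in\widetilde\Uset}|\risk^0_n(f)-\risk(f)|+\lambda_n^2C^2 .
\]
Because $\lambda_n\to0$, it suffices to show $\sup_{f\in\widetilde\Uset}|\risk^0_n(f)-\risk(f)|\outerASconv0$. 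In other words, the class $\mathcal G=\{(x,y)\mapsto \ell_y(f(x)) : f\in\widetilde\Uset\}$ must be $\theLaw$-Glivenko--Cantelli.

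\textbf{Step 2: Glivenko--Cantelli via compactness.} This is the main step. Under Assumption~\ref{assumption:for_almost_sure_consistency}, $\Xspace$ is compact and $k$ is continuous, so the scalar RKHS unit ball is a relatively compact subset of $\continuous(\Xspace)$. The reason is that it is uniformly bounded by $\sqrt c$ and equicontinuous, since $|g(x)-g(x')|\le\norm{g}\,(k(x,x)-2k(x,x')+k(x',x'))^{1/2}$; Arzel\`a--Ascoli then applies. Since $\dim\Yspace<\infty$, I pick an orthonormal basis $e_1,\dots,e_d$ of $T_\origin\sphere$. Each coordinate $\YinnerProd{f(\cdot),e_j}$ lies in a ball of radius $C$ of $\Hspace_k$ (Lemma~\ref{Lemma:projection_against_y_is_in_scalar_RKHS}). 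Hence the ball $\{\Hnorm{f}\le C\}$ is totally bounded in $\continuous(\Xspace;T_\origin\sphere)$ with the sup norm, so for every $\delta>0$ the covering number $\covering(\delta,\widetilde\Uset,\inftynorm{\cdot})$ is finite.

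\textbf{Step 3: transfer to the loss class.} On $\{\Ynorm{v}\le \pi/4\}$ (or even on $\Ynorm{v}\le\sqrt c\,C$), with $\sphereDist(\origin,y)\le\pi/4$, the map $v\mapsto\ell_y(v)$ is Lipschitz uniformly in $y$. This holds because $\Exp_\origin$ is $1$-Lipschitz, $\sphereDist^2(\cdot,y)$ is Lipschitz with constant at most $2\pi$ on $\sphere$, and we avoid the antipode issue. So a sup-norm $\delta$-cover of $\widetilde\Uset$ gives a $L\delta$-cover of $\mathcal G$ in sup norm. Sup-norm brackets of width $2L\delta$ are $L^1(\theLaw)$-brackets, and $\mathcal G$ is uniformly bounded by $\pi^2$. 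Finite bracketing numbers for every $\delta$ therefore give the Glivenko--Cantelli property by the standard bracketing GC theorem \citep{vandervaartWeakConvergenceEmpirical2023a}, with outer almost sure convergence, which handles measurability. Plugging this into Step 1 gives $\Ltwonorm{\tilde f_n-f_\origin}^2\le o(1)$ outer almost surely.

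I expect the main obstacle to be Step 2, specifically making the compactness of the VVRKHS ball in the uniform norm rigorous. This is where compactness of $\Xspace$, continuity of $k$ and finite dimensionality of $\Yspace$ are all used. If equicontinuity is delicate, I would instead use Mercer's theorem on compact $\Xspace$ to approximate uniformly by finite-rank truncations.
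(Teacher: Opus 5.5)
Your proposal is correct and follows essentially the paper's route: the basic inequality combined with the quadratic growth of Theorem~\ref{theorem:risk_is_convex_new}(1), with uniform outer almost sure convergence of the empirical risk over $\widetilde\Uset$. That convergence comes from finite $L^1(\theLaw)$ bracketing numbers, obtained via the $2\pi$-Lipschitz bound (Lemma~\ref{Lemma:bracketing_for_strong_consistency}) and sup-norm compactness of $\ball_\Hspace(0,C)$ (Lemma~\ref{Lemma:ball_in_VVRKHS_compact_in_sup_norm}), and then the bracketing Glivenko--Cantelli theorem.

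The only variation is that you get total boundedness coordinatewise from the scalar RKHS ball, which is all that finite covering numbers require, so the closedness argument of Lemma~\ref{Lemma:Bclosedsupnorm} is not needed. One small slip in Step~1: the bound $\lambda_n^2C^2$ comes from $\Hnorm{f_\origin}<C$ and dropping the nonnegative penalty on $\tilde f_n$, not from $\Hnorm{\tilde f_n}\le C$.
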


\section{Computational aspects}
\label{sec:computational_aspects}

Unlike kernel regression with Euclidean responses, the minimizer of our
empirical risk does not admit a closed-form expression, due to the spherical loss.
Consequently, we rely on gradient-based
optimization of the empirical loss function~\eqref{eq:penalized_empirical_risk}.

By Theorem~\ref{theorem:existence_of_global_minimizer_unconstrained}, and under
the assumption that a unique minimizer of $\risk_n(\cdot,\lambda_n)$ exists, we
may search for the optimal $f$ by optimizing over coefficients
$\xi_1,\ldots,\xi_n \in \vspan(\Log_\origin(Y_1),\ldots,\Log_\origin(Y_n))$.
Letting $\boldsymbol{\xi} = (\xi_1,\ldots,\xi_n)$, we define the reparametrized
empirical risk $R_n(\boldsymbol{\xi}) := \risk_n(f_{\boldsymbol{\xi}},\lambda_n)$,
where $f_{\boldsymbol{\xi}}(\cdot) = \sum_{i=1}^n k(X_i,\cdot)\,\xi_i$. At each
iteration, we compute the Euclidean gradient of a smooth ambient extension
$\overline{R}_n : \Yspace^n \rightarrow \Real$ of $R_n$ with respect to
$\boldsymbol{\xi}$, using the closed-form expression derived in
the \thesupplement, Section~\ref{subsec:grad_computation}, and project this ambient gradient
orthogonally onto the tangent space at $\origin$, yielding the Riemannian
gradient within $T_\origin\sphere$.

In our implementation, we fix $\origin$ to the Fr\'{e}chet mean of
$Y_1,\dots,Y_n$ (see Section~\ref{sec:discussion}). We initialize
$\boldsymbol{\xi}$ at tangent-space regression estimates,
obtained under the same model specifications (kernel $k$ and penalty $\lambda_n$)
but replacing the spherical distances in $R_n(\boldsymbol{\xi})$ by the linear
distances $\|f_{\boldsymbol{\xi}}(X_i) - \Log_{\origin}(Y_i)\|_{\Yspace}$, which
yields the familiar closed-form solution \citep{kadriOperatorvaluedKernelsLearning2016}.

Standard gradient descent from this initialization can be of limited practical
use: in our numerical experiments (not shown here) it made negligible progress beyond the initial
tangent-space fit. A further challenge is
the dimensionality of $\boldsymbol{\xi}$: for a $D$-dimensional sphere, the
coefficient vector is of dimension $\max(D,n) \times n$, which becomes
computationally prohibitive and numerically ill-conditioned as $n$ grows. We
address both issues simultaneously by combining a low-rank approximation with
BFGS optimization \citep[][]{broyden1970convergence, fletcher1970new, goldfarb1970family, shanno1970conditioning}. BFGS retains the same
gradient but builds a low-rank approximation of the inverse Hessian, accelerating
convergence without requiring explicit Hessian computation or inversion.

More precisely, we use the leading $m_k$ eigenvectors of the kernel Gram matrix
$\mathbf{K} = [k(X_i, X_j)]_{i,j=1}^{n}$ to obtain a basis
$g_1, \dots, g_{m_k}$ for an optimally approximating $m_k$-dimensional subspace
of $\vspan\{k_{X_1}, \dots, k_{X_n}\}$ \citep[for details, see][]{wood2003lowrank}.
Analogously, we use the leading $m_y$ eigenvectors of the response Gram matrix
$\mathbf{G} = [\langle Y_i, Y_j \rangle_{\Yspace}]_{i,j=1}^{n}$ to obtain a
low-dimensional representation $\mathbf{z}_i \in \Real^{m_y}$ for each $Y_i$,
such that $[\mathbf{z}_i^\top \mathbf{z}_j]_{i,j=1}^{n} \approx \mathbf{G}$.
Together, these reductions lower the total number of parameters 
down to $m_k \times m_y$ (instead of $O(n^2)$). In practice, we observed that even a near-lossless low-rank approximation typically requires far fewer parameters, while simultaneously improving numerical
conditioning and empirical fit. An implementation is provided in the \texttt{R}
package \texttt{sphereg}.

\section{Simulation study: spherical functional responses}
\label{sec:simulation}
\subsection{Overview}

In this section, we investigate the finite-sample performance of our proposed estimator. We focus on response variables $Y$ belonging to the sphere $\sphere$ in $L^2([0,1])$, that is, $\int_0^1 Y^2(t) dt = 1$.

Although our simulation study draws inspiration from its application to probability density functions (Section~\ref{sec:application}), we extend the scope by allowing the function $Y$ to take negative values. This makes our approach more general than the density-estimation case, and the two are therefore not directly comparable. Furthermore, we consider scenarios that fall outside our consistency assumptions (Assumptions~\ref{assumption:support_of_Y_for_theory_new}).  We demonstrate that our estimator exhibits robust performance even beyond the scope of these theoretical guarantees.

We benchmark our model against tangent-space regression. This contrast effectively illustrates the benefit of utilizing an intrinsic geometry over a linear approximation. The performance gain is sizeable when data points are sufficiently spread apart on the sphere.

\subsection{Simulation setup} \label{subsec:simulation}

The core of the data generating process involves two components: a conditional mean function $\mu: [0,1]^2 \rightarrow \sphere$ and a perturbation process $\varepsilon$, which is generated in the tangent space $T_\origin\sphere$.
The random response variable $Y$ is then generated conditional on the covariate $X \dsim \operatorname{Unif}([0,1]^2)$, by adding a perturbation process through the Riemannian exponential map. Given $X=x$,
\begin{equation}
  \label{eq:sim:model}
  Y  = \Exp_{\mu(x)} \PT_{\origin\rightarrow\mu(x)} \varepsilon,
\end{equation}
where $\origin$ is the Fréchet mean of $\mu(X)$,  $\PT_{\origin \rightarrow \mu(x)} :T_\origin \sphere \to T_{\mu(x)} \sphere$ denotes the parallel transport along the geodesic connecting $\origin$ and $\mu(x)$, and $\varepsilon$ is a perturbation/noise term. The use of parallel transport is needed because the perturbation process $\varepsilon$ is generated on $T_\origin \sphere$ and not on $T_{\mu(x)} \sphere$.
In the following, we describe how we determine the conditional mean function $\mu$ and the perturbation term $\epsilon$.

\subsubsection*{Defining the Conditional Mean Function ($\mu$)}

The construction of the conditional mean function $\mu(\cdot)$ is designed to mimic a 2-mixture density with modes and weights depending on the covariate $x \in [0,1]^2$. Its construction is given in the \thesupplement, Section~\ref{sec:simulation_mu}. A plot of $\mu$ is given in Figure~\ref{fig:simulation:illustration} (a). Note that with this construction, $\sup_{x \in [0,1]^2} \sphereDist(\origin, \mu(x)) = 4\pi/5$, and the setting of the numerical simulations does not satisfy our consistency assumption.

\subsubsection*{Constructing the Perturbation Process ($\varepsilon$)}

To model perturbations on the Riemannian manifold, we utilize a smooth perturbation process defined via a Fourier basis expansion in the template tangent space $T_\origin \sphere$. The perturbation is then transported to the tangent space at the conditional mean $\mu(x)$, 
where it is subsequently mapped via the exponential map to produce $Y$, as described in \eqref{eq:sim:model}. 

Consider the basis functions $e_j(t) = 2 \cos(j \pi t),\ j=1, \dots, m:=20$ on the interval $[0,1]$.
For each data point $i = 1, \dots , n$, we sample independent coefficients $\vartheta_{j} \dsim N(0, j^{-4})$ for $e_j$, with total variance $\tilde{\tau}^2 = \sum_{j=1}^{m} j^{-4}$. 
This choice, corresponding to a quadratic penalty on the second-order derivatives, ensures that the resulting process is smooth and possesses a covariance operator whose eigenvalues decay polynomially. Since the $e_j$ are orthogonal to the constant $q: t \mapsto 1$, they are elements of $T_{q} \sphere$, and we define
\[
  {\varepsilon}_i=\frac{\tau_\varepsilon^2}{\tilde{\tau}^2} \PT_{q \rightarrow \origin} \Big(\sum_{j=1}^m \vartheta_{ij} e_j\Big),
\]
parallel transporting them to $\origin$ and rescaling the residual variance to the desired $\tau_\varepsilon^2=\Expected [\|\varepsilon_{i}\|^2]$. 
Responses $Y$ given $x_i$ are then obtained using \eqref{eq:sim:model}.
While parallel transport preserves the variance, wrapping for $\|\varepsilon_{i}\|\geq\pi$ leads to a slightly reduced intrinsic variance  $\tau_Y^2=\Expected [\sphereDist^2(Y, \mu(X)) \mid X ]$, with Monte-Carlo ratio estimate $\tau_Y^2/\tau_\varepsilon^2 \approx 0.97$ ($\geq 0.968$ one-sided 99\% confidence interval) for $\tau_\varepsilon = 4\pi/10$, increasing for higher concentrations, with $\tau_\varepsilon = 2\pi/10$ and $\tau_\varepsilon = \pi/10$, to more than $\tau_Y^2/\tau_\varepsilon^2>0.99$.

\begin{figure}
    \centering
    \begin{tikzpicture}
    	\node[anchor=south west] (mu)
    	at (0,0) {\scalebox{1.021}{\input{supplementary-files/figs/simulation/simulation_illustration-mu.tex}}};
    	\node[anchor=south west] (y)
    	at ([xshift=-2pt]mu.south east)
    	{\scalebox{1.02}{\input{supplementary-files/figs/simulation/simulation_illustration-y.tex}}};
    	\node[anchor=south west] (sphere)
    	at ([xshift=-15pt]y.south east)
    	{\input{supplementary-files/figs/simulation/simulation_illustration-sphere.tex}};
    	
    	\node[anchor=center] at (mu.south)     {(a)};
    	\node[anchor=center] at (y.south)      {(b)};
    	\node[anchor=center] at (sphere.south) {(c)};
    \end{tikzpicture}
    \caption{Panel (a) shows values of $\mu(x_1, x_2)$ for different locations $(x_1, x_2)$ on the unit square. Panel (b) show $\mu(0.5, 0.5)$ (thick black line) along with the fitted $\hat \mu(0.5, 0.5)$ (thick blue dashed line) for each of the three noise levels considered in the simulation study. 10 realisations of $Y | X = (0.5, 0.5)$ are also plotted to illustrate the sampling variability for each given noise level (thin gray lines).
      Panel (c) shows the spherical distance between 10 realisations of $Y | X = (0.5, 0.5)$ (gray dots) and $\mu(0.5, 0.5)$ (black dot) for each noise level shown in the corresponding row of panel (b). 
    }
    \label{fig:simulation:illustration}
\end{figure}

\subsubsection*{Tangent-space regression}

We compare our results against tangent-space regression implemented using a VVRKHS framework. This approach involves projecting the observed data onto the tangent space at a point $\origin$ via $Z_i := \Log_\origin Y_i$. Subsequently, we perform a VVRKHS regression \citep{kadriOperatorvaluedKernelsLearning2016} of these projected variables ($Z_i$) as a function of the covariate $X_i$, utilizing a SMO kernel with Gaussian radial basis function (GRBF) scalar kernel. The output is a fitted tangent vector function $\hat g: \Xspace \to T_\origin \sphere$. Finally, we reconstruct the conditional mean estimator on the sphere using the exponential map, $\hat{\mu}(x) := \Exp_\origin(\hat g(x))$.

\subsubsection*{Simulation parameters}

The simulation study was designed to evaluate the model's performance under various sample sizes ($n$) and observational noise ($\tau_\epsilon$). We performed $B=100$ repetitions for each parameter combination.
The simulations were structured across four different sample sizes $n \in \{20, 50, 100, 200\}$ and three distinct residual variance levels $\tau_\epsilon \in \{ \pi/10, 2\pi/10, 4\pi/10 \}$. These $\tau_\epsilon$ values allow us to investigate the model's performance across varying noise levels, ranging from low (corresponding to low data diameters) to high.
Figure~\ref{fig:simulation:illustration} (b, c) show the effect of increasing $\tau_\varepsilon$ on the generated responses $Y| X=x$, the estimated $\hat \mu(x)$, as well as the spherical distance between $Y|X=x$ and $\mu(x)$.
Our model and the tangent-space regression are fitted with GRBF  kernel. 
For each simulation repetition, the point $\origin$ in our model and in tangent-space regression is taken to be the Fréchet mean of $y_1, \dots, y_n$
The regularization parameter $\lambda$ is optimized via 5-fold cross-validation (CV).
The kernel bandwidth $\sigma$ is held fixed to a constant value throughout the simulation in order to ensure a consistent kernel across all replications, thereby enabling meaningful comparisons; varying $\sigma$ would effectively change the kernel itself and confound the assessment. 

After running the simulations, we obtain $B=100$ estimators of the conditional mean function, $\{\hat{\mu}_j\}_{j=1}^B$, for each regression method.

\subsubsection*{Measure of performance}

For each setting of $n$ and $\tau_\epsilon$, we measure the performance of the estimators using the mean squared error (MSE), defined as
\begin{equation*}
    \widehat{\operatorname{MSE}}= 
    \frac{1}{B} \sum_{j=1}^B
      \Expected_X[\sphereDist^2\!\left(\hat \mu_j(X), \mu(X)\right)],
\end{equation*}  
where the expectation is computed on a grid. 

\subsection{Simulation results}

The results are show in Figure~\ref{fig:sim_performance}, and can be reproduced using the code available in the \thesupplement.  
Note that Assumption~\ref{assumption:support_of_Y_for_theory_new} does not hold in our simulation setup (the diameter of $\mu(X)$ is $4\pi/5$); we nevertheless compare our results to the theoretical rates 
 predicted from the theory (Sections~\ref{sec:rates_minimal_assumptions} and~\ref{sec:rates_depending_on_smoothness}).

 Strictly speaking, since the response $Y$ is on an infinite-dimensional sphere, an $O_{\theLaw^*}(n^{-1/2})$ rate would hold. On the other hand, since all the computations are performed on finite-dimensional representations of the data, one could argue that the rates in Section~\ref{sec:rates_depending_on_smoothness} would apply. Since we use a Gaussian kernel, Theorem~\ref{theorem:L2_convergence_mu_smoothness_dependent_new} and Remark~\ref{remark:eigenvalue_decay_rates} imply that we should see a $O_{\theLaw^*}( \log(n)^2/n )$ rate for the MSE. We infer the value of $\alpha$ for which the rate is of order $\log(n)^2n^{\alpha}$ using a simple adjusted linear regression through the log transformed $\widehat{\operatorname{MSE}}$.  The first two values ($\hat \alpha = -1.14$ for $\tau_\varepsilon= \pi/10$, $\hat \alpha = -0.99$ for $\tau_\varepsilon= 2\pi/10$) are in line with (or better than) what could be expected from theory. For the high noise scenario, $\tau_\varepsilon= 4\pi/10$, we obtain $\hat \alpha = -0.77$; this however does not contradict theory, since all our simulation settings violate our theoretical assumptions. Overall, our simulations demonstrate that the performance of our method beyond our theoretical assumptions.

When compared with tangent-space regression, our method demonstrates superior performance; however, in the low sample size case ($n=20$), it is on par with tangent-space regression. This improvement becomes more significant as the sample size increases.

\begin{figure}[ht]
    \centering
    \input{supplementary-files/figs/simulation/simulation_plot-loglog}
    \vspace*{-2em}
    \caption{
      Estimates of the MSE in our simulation study in log-log scale, with increasing noise variance $\tau_\varepsilon^2$ from left to right.
      In all plots, the black thick lines represent our proposed method (spherical kernel ridge regression), and the dashed gray lines represent tangent-space regression.
      The small boxplots indicate 95\% confidence intervals for the MSE.
    }
    \label{fig:sim_performance}
\end{figure}

\section{Spherical kernel ridge regression for densities: analysis of urban pollution}
\label{sec:application}

We analyze the aerosol particle light absorption coefficient (AAC), which quantifies how strongly aerosol particles absorb light per unit path length and thus contributes to atmospheric warming and radiative forcing. We use our proposed method (spherical kernel ridge regression) to model square-root densities of the AAC probability distribution at different locations in the Netherlands.  

The presented data problem is representative of a common setting in geophysical and
environmental applications, where automated measurement devices record a variable $Z$
at high temporal frequency but only at a small number of geographical locations $X$. This yields
few covariate values $x_1,\dots,x_n$ and many samples $z_{i1},\dots,z_{in_i}$ from
$Z \mid X = x_i$. This arises, for instance, in oceanographic monitoring \citep{ARGO} or radiosondes observations \citep{IGRA}. 
In practice, full modeling all raw observations $(z_{ij})$ is computationally costly and requires strong distributional and dependence assumptions, and $Z \mid X$ is therefore often summarized by spatio-temporal aggregates such as means $\bar z_i$ or similar summaries \citep{gruzieva2024monitoringresolution}, which suppress within-location variability and change the interpretation of the response. We adopt an intermediate strategy: for each location $x_i$ we estimate nonparametrically the density $Z \mid X = x_i$, obtaining density-valued responses $y_1,\dots,y_n$ that we then regress on the covariates $(x_i)$. 
 Indeed, the particle light absorption coefficient is largely
driven by black carbon aerosols from traffic and residential combustion, and black carbon
mass concentrations are typically inferred from absorption via an assumed mass
absorption cross section. Variability and uncertainty in absorption therefore propagate
directly to black carbon estimates \citep{moosmuller2009aerosol}. This further motivates our
distributional approach, rather than a purely mean-based characterization of absorption.

We use hourly absorption measurements from $14$ air-quality monitoring stations in the Netherlands and neighboring regions (Belgium, Germany), obtained from the ACTRIS data portal \citep{ACTRIS}. Stations outside the Netherlands are included mainly to mitigate boundary effects, while inference focuses on the Dutch domain. The data we use is of the year 2019 for all stations with sufficiently complete records; for the station in B\"osel (Germany), where no 2019 data are available, we substitute data from the last chronologically available year (2014). Each station $i$ is represented by its geographic coordinates $x_i =$ (latitude, longitude), which serve as spatial covariates. Over the small and low-relief study region, we treat these coordinates as points in $\Real^2$.

Due to the data-collection mechanism and measurement errors, occasional small negative values of the absorption coefficient arise at low concentrations; these reflect uncertainty rather than physical absorption, and are retained in the analysis. Importantly, the presence of occasional negative observations does not pose difficulties for the proposed regression framework, whereas methods relying on log-transformations of the raw data \citep[such as][]{alas2022pedestrian} would require ad hoc adjustments or data truncation. 

For each station, the observations aggregated over one year are treated as a sample from an underlying distribution that is summarized by a kernel density estimator restricted to an interval $\mathcal{Z} = [z_{\min}, z_{\max}]$, where $z_{\min}$ and $z_{\max}$ are taken as the minimum and maximum, respectively, of all observed values across all stations in the dataset. The estimated densities are mapped to the unit sphere in $L^2(\mathcal Z)$ via the square-root transformation, yielding responses on the positive orthant of a Hilbert sphere.

Figure \ref{fig:aac_map} shows the fit of the spherical kernel ridge regression model, with a Gaussian kernel defined on station latitude-longitude coordinates. We performed a 5-fold leave-one-station-out-cross-validation over a grid of hyperparameters to select the best shape parameter ($\sigma =15.2$) and regularization parameter ($\lambda = 1/4$). The optimal model---fitted with the best parameters---can then be used in many ways, such as for predicting the absorption coefficient distributions at locations without monitoring stations (e.g., Amsterdam). Furthermore, estimates of different  characteristics of the distribution, 
such as quantiles of the absorption coefficient, can be obtained from the predicted distributions. Figure \ref{fig:aac_map} shows the $90\%$ quantile and it reveals higher pollution levels in the central-western part of the Netherlands. Direct access to higher quantiles, rather than only the mean, is particularly useful in this context, for environmental decision making, as it highlights regions where absorption (might) persistently exceeds high pollution thresholds and may warrant targeted mitigation.

\begin{figure}
	\input{supplementary-files/figs/application/aac_model_plot-densities.tex} 
	\input{supplementary-files/figs/application/aac_model_plot-map.tex}
	\caption{
    Density comparison across geographical sites. The left panel displays the observed versus fitted densities (using our method) for three study locations. Note that Rotterdam contributes three distinct density observations due to its multiple measuring stations. The right panel provides a geospatial map of the dataset locations, highlighting the 90\% fitted quantile as determined by our method. Locations displayed in the left panel are marked by filled circles on the map; the other locations are marked as circles.
  }
		\label{fig:aac_map}
\end{figure}

\subsection{Comparison of density regression geometries}
\label{sec:comparison_of_methods_in_application}

\newcommand\frechetModel{\textsl{Fréchet regression}}
\newcommand\bayesModel{\textsl{Bayes Hilbert regression}}
\newcommand\sphericalModel{\textsl{Spherical regression}}
\newcommand\logNNModel{\textsl{lognormal-normal model}}

We compare four approaches to model the aerosol absorption coefficient densities: 
(i) Fréchet regression in the $2$-Wasserstein space \citep{petersenFrechetRegressionRandom2019} (hereafter \frechetModel), 
(ii) the approach of \citet{maier2025additive} to regression in the Bayes Hilbert space of densities (hereafter \bayesModel) and (iii) our approach (hereafter \sphericalModel) described in Section~\ref{sec:application}, performing regression on the Hilbert sphere of square-root densities \citep[which corresponds to an analysis based on the Fisher--Rao metric of the original density functions;][]{rao1945,srivastava2007densities}. In addition to these approaches for density responses, we also add (iv) the lognormal-normal convolution
model (hereafter \logNNModel) proposed by \citet{alas2022pedestrian} for estimating pollution concentration distributions directly on the raw AAC samples $\{ (x_i, z_{ij}) \}$, implemented in the \texttt{R} package \texttt{bamlss}.

\frechetModel\ is implemented in \texttt{R} in the
package \texttt{frechet}. We employ the local version of Fréchet regression, as natural competitor for our nonparametric estimator, utilizing a anisotropic Gaussian kernel with two bandwidth parameters (one for each covariate). 

Following \citet{maier2025additive}, we use component-wise $L^2$-Boosting for \bayesModel\ which is implemented in the \texttt{R} package \texttt{FDboost}.
Densities in a Bayes Hilbert space are required to be almost-everywhere positive, which is not necessarily the case for our density estimates $y_j$. 
We thus slightly shift them away from zero by adding a small constant $c_j>0$, a common approach for zero counts in compositional data, 
working with $\tilde{y}_j = (y_j + c_j)(1+c_j)^{-1}$ instead of $y_j$. 
Here, we use a fixed small constant $c_j=10^{-5}$ which we found to produce good reliable results, better than $n_j$-dependent shifts inspired by  \citet{martin2015bayesian}, such as $c_j=n_j^{-1/2}$, $c_j=n_j^{-1}$ or $c_j=n_j^{-1}(z_{\text{max}}-z_{\text{min}})^{-1}$.  
In practice, computations are carried out using the isometric centered log-ratio transforms $\operatorname{clr} \tilde{y}_j = \log \tilde{y}_j - \frac{1}{z_{\text{max}}-z_{\text{min}}} \int_{z_{\text{min}}}^{z_{\text{max}}} \log \tilde{y}_j(z)\, dz$ which maps the densities 
into $L^2(\mathcal Z)$. We fit the $\operatorname{clr} \tilde{y}_j$ with tensor-product B-splines, which are decomposed into marginal effects and interaction effects, as suggested, e.g., by \citet{brockhaus2020boosting} for functional boosting models.

In the \logNNModel, the AAC measurements are modeled as $Z = \tilde{Z} + \epsilon$ with $\log\tilde{Z} \mid X = x_i \dsim N(\theta_1(x_i), \theta_2^2(x_i))$ and independent noise $\epsilon \dsim N(0, \theta_3^2)$.  
The distribution parameters $\theta_1(x) = g_1(x)$ and $\theta_2(x) = \exp(g_2(x))$, estimated using low-rank Gaussian processes on $g_1$ and $g_2$ with Gaussian kernel \citep{woodGeneralizedAdditiveModels2017a}, and as location-independent measurement noise variance $\theta_3^2$, in a Bayesian framework for distributional regression \citep{umlauf2018bamlss}. To mitigate the substantially longer computation times of this model compared to the other methods, we fit it on data subsampled to $\tilde{n}_j = 100$ measurements per location. Increasing the per-location sample size to $\tilde{n}_j = 1000$ did not improve performance.

To evaluate out-of-sample performance, we perform a leave-one-station-out procedure.
For each station $j=1,\dots,14$, we remove its density from the training set, fit each regression model to the remaining $13$ stations, and predict the density at the held-out location. For \bayesModel\ and \sphericalModel, the hyperparameters are selected by {leave-one-station-out} cross-validation, minimizing the respective intrinsic prediction error over a grid of bandwidth values. For \frechetModel,  cross-validation for the bandwidth selection is embedded in the R package \texttt{frechet}. 
Due to the long computation times of \logNNModel, we do not perform an inner cross-validation loop here. Instead, we select the range parameter directly on the outer cross-validation error from a grid of values, which gives this method a slight advantage over the others.
Recall that the empirical density at station $j$ is $y_j$ and let us denote by $\hat y^{(m)}_j$ the corresponding leave-one-out prediction for each method $m$ considered, 
we quantify discrepancies under three metrics:
\begin{enumerate}
  \item The squared spherical distance between the square-root densities 
    $\sqrt{y_j}$ and $ \sqrt{\hat y_j^{(m)}}$.
  \item The squared Bayes-space distance
    \(
    \ud_{B^2}^2(y_j,\hat y^{(m)}_j)
    \),
    defined as the $L^2$ norm of the difference between centered log-ratio transforms,
    \[
      \ud_{B^2}^2(y_j,\hat y_j^{(m)})
      \;=\;
      \int_{\mathcal{Z}} 
      \bigl(\operatorname{clr}(y_j)(z) - \operatorname{clr}(\hat y_j^{(m)})(z)\bigr)^2
      \,\mathrm{d}z.
    \]

  \item The squared $2$-Wasserstein distance
    \(
    W_{2}^2(y_j,\hat y^{(m)}_j)
    \)
    based on the $L^2$ distance between quantile functions $Q_j$ and $\hat Q_j^{(m)}$ of $y_j$ and $\hat y_j^{(m)}$,
    \[
      W_2^2(y_j,\hat y_j^{(m)})
      \;=\;
      \int_0^1 \bigl(Q_j(u) - \hat Q_j^{(m)}(u)\bigr)^2 \, \mathrm{d}u.
    \]
\end{enumerate}

The boxplots in Figure~\ref{fig:boxplot_errors} (a,b,c) illustrate the leave-one-station-out prediction errors across the various metrics. Our proposed method, \sphericalModel\ (highlighted in dark gray), is expected to perform strongly under spherical distance since it is inherently based on this geometry. Consistent with this expectation, it outperforms other methods when evaluated using the spherical metric. Furthermore, \sphericalModel\ demonstrates a strong capability by also excelling in the Bayes-space distance, though \bayesModel\ remains the second best method. Regarding the $2$-Wasserstein distance $W_2$, while \sphericalModel\ is slightly surpassed by both \frechetModel\ and \bayesModel, its performance significantly exceeds that of \logNNModel; this sub-optimal result under $W_2$ is unsurprising given that it was not optimized using this metric.

In terms of computational cost (Figure~\ref{fig:boxplot_errors} (d)), our method exhibits superior stability and efficiency. Its total runtime is consistently below the median runtime observed for \frechetModel, and critically, it is several orders of magnitude faster than both \bayesModel\ and \logNNModel.

\begin{figure}
\centering
\begin{tikzpicture}
	\node[anchor=south west] (mu)
	at (0,0) {\scalebox{.98}{\input{supplementary-files/figs/comparison/comparison_boxplot.tex}}};
	\node[anchor=south west] (y)
	at ([xshift=10pt]mu.south east)
	{\scalebox{.98}{
\begin{tikzpicture}[x=1pt,y=1pt]
\definecolor{fillColor}{RGB}{255,255,255}
\path[use as bounding box,fill=fillColor,fill opacity=0.00] (0,0) rectangle ( 97.56,119.25);
\begin{scope}
\path[clip] (  0.00,  0.00) rectangle ( 97.56,119.25);
\definecolor{drawColor}{RGB}{255,255,255}
\definecolor{fillColor}{RGB}{255,255,255}

\path[draw=drawColor,line width= 0.6pt,line join=round,line cap=round,fill=fillColor] ( -0.00,  0.00) rectangle ( 97.56,119.25);
\end{scope}
\begin{scope}
\path[clip] (  0.00, 13.72) rectangle ( 67.40,102.67);
\definecolor{fillColor}{RGB}{255,255,255}

\path[fill=fillColor] ( -0.00, 13.72) rectangle ( 67.40,102.67);
\definecolor{drawColor}{gray}{0.92}

\path[draw=drawColor,line width= 0.6pt,line join=round] ( -0.00, 20.28) --
	( 67.40, 20.28);

\path[draw=drawColor,line width= 0.6pt,line join=round] ( -0.00, 45.91) --
	( 67.40, 45.91);

\path[draw=drawColor,line width= 0.6pt,line join=round] ( -0.00, 71.54) --
	( 67.40, 71.54);

\path[draw=drawColor,line width= 0.6pt,line join=round] ( -0.00, 97.17) --
	( 67.40, 97.17);
\definecolor{drawColor}{gray}{0.20}

\path[draw=drawColor,line width= 0.6pt,line join=round] (  9.63, 32.77) -- (  9.63, 33.63);

\path[draw=drawColor,line width= 0.6pt,line join=round] (  9.63, 19.73) -- (  9.63, 17.76);
\definecolor{fillColor}{gray}{0.95}

\path[draw=drawColor,line width= 0.6pt,fill=fillColor] (  3.61, 32.77) --
	(  3.61, 19.73) --
	( 15.65, 19.73) --
	( 15.65, 32.77) --
	(  3.61, 32.77) --
	cycle;

\path[draw=drawColor,line width= 1.1pt] (  3.61, 27.84) -- ( 15.65, 27.84);
\definecolor{fillColor}{gray}{0.20}

\path[draw=drawColor,line width= 0.4pt,line join=round,line cap=round,fill=fillColor] ( 25.68, 93.06) circle (  1.96);

\path[draw=drawColor,line width= 0.4pt,line join=round,line cap=round,fill=fillColor] ( 25.68, 98.63) circle (  1.96);

\path[draw=drawColor,line width= 0.4pt,line join=round,line cap=round,fill=fillColor] ( 25.68, 97.78) circle (  1.96);

\path[draw=drawColor,line width= 0.6pt,line join=round] ( 25.68, 95.90) -- ( 25.68, 96.01);

\path[draw=drawColor,line width= 0.6pt,line join=round] ( 25.68, 95.34) -- ( 25.68, 95.24);
\definecolor{fillColor}{gray}{0.95}

\path[draw=drawColor,line width= 0.6pt,fill=fillColor] ( 19.66, 95.90) --
	( 19.66, 95.34) --
	( 31.70, 95.34) --
	( 31.70, 95.90) --
	( 19.66, 95.90) --
	cycle;

\path[draw=drawColor,line width= 1.1pt] ( 19.66, 95.60) -- ( 31.70, 95.60);
\definecolor{fillColor}{gray}{0.20}

\path[draw=drawColor,line width= 0.4pt,line join=round,line cap=round,fill=fillColor] ( 41.73, 26.68) circle (  1.96);

\path[draw=drawColor,line width= 0.6pt,line join=round] ( 41.73, 25.21) -- ( 41.73, 26.51);

\path[draw=drawColor,line width= 0.6pt,line join=round] ( 41.73, 24.26) -- ( 41.73, 23.45);
\definecolor{fillColor}{gray}{0.50}

\path[draw=drawColor,line width= 0.6pt,fill=fillColor] ( 35.71, 25.21) --
	( 35.71, 24.26) --
	( 47.74, 24.26) --
	( 47.74, 25.21) --
	( 35.71, 25.21) --
	cycle;

\path[draw=drawColor,line width= 1.1pt] ( 35.71, 24.88) -- ( 47.74, 24.88);

\path[draw=drawColor,line width= 0.6pt,line join=round] ( 57.77, 82.37) -- ( 57.77, 86.01);

\path[draw=drawColor,line width= 0.6pt,line join=round] ( 57.77, 78.54) -- ( 57.77, 74.35);
\definecolor{fillColor}{gray}{0.95}

\path[draw=drawColor,line width= 0.6pt,fill=fillColor] ( 51.76, 82.37) --
	( 51.76, 78.54) --
	( 63.79, 78.54) --
	( 63.79, 82.37) --
	( 51.76, 82.37) --
	cycle;

\path[draw=drawColor,line width= 1.1pt] ( 51.76, 81.00) -- ( 63.79, 81.00);

\path[draw=drawColor,line width= 0.6pt,line join=round,line cap=round] ( -0.00, 13.72) rectangle ( 67.40,102.67);
\end{scope}
\begin{scope}
\path[clip] (  0.00,102.67) rectangle ( 67.40,119.25);
\definecolor{drawColor}{RGB}{0,0,0}
\definecolor{fillColor}{RGB}{255,255,255}

\path[draw=drawColor,line width= 0.6pt,line join=round,line cap=round,fill=fillColor] ( -0.00,102.67) rectangle ( 67.40,119.25);
\definecolor{drawColor}{gray}{0.10}

\node[text=drawColor,anchor=base,inner sep=0pt, outer sep=0pt, scale=  0.88] at ( 33.70,107.93) {Runtime};
\end{scope}
\begin{scope}
\path[clip] (  0.00,  0.00) rectangle ( 97.56,119.25);
\definecolor{drawColor}{gray}{0.20}

\path[draw=drawColor,line width= 0.6pt,line join=round] (  9.63, 10.97) --
	(  9.63, 13.72);

\path[draw=drawColor,line width= 0.6pt,line join=round] ( 25.68, 10.97) --
	( 25.68, 13.72);

\path[draw=drawColor,line width= 0.6pt,line join=round] ( 41.73, 10.97) --
	( 41.73, 13.72);

\path[draw=drawColor,line width= 0.6pt,line join=round] ( 57.77, 10.97) --
	( 57.77, 13.72);
\end{scope}
\begin{scope}
\path[clip] (  0.00,  0.00) rectangle ( 97.56,119.25);
\definecolor{drawColor}{gray}{0.30}

\node[text=drawColor,anchor=base,inner sep=0pt, outer sep=0pt, scale=  0.88] at (  9.63,  2.71) {F};

\node[text=drawColor,anchor=base,inner sep=0pt, outer sep=0pt, scale=  0.88] at ( 25.68,  2.71) {B};

\node[text=drawColor,anchor=base,inner sep=0pt, outer sep=0pt, scale=  0.88] at ( 41.73,  2.71) {\bfseries S};

\node[text=drawColor,anchor=base,inner sep=0pt, outer sep=0pt, scale=  0.88] at ( 57.77,  2.71) {L};
\end{scope}
\begin{scope}
\path[clip] (  0.00,  0.00) rectangle ( 97.56,119.25);
\definecolor{drawColor}{gray}{0.20}

\path[draw=drawColor,line width= 0.6pt,line join=round] ( 70.15, 20.28) --
	( 67.40, 20.28);

\path[draw=drawColor,line width= 0.6pt,line join=round] ( 70.15, 45.91) --
	( 67.40, 45.91);

\path[draw=drawColor,line width= 0.6pt,line join=round] ( 70.15, 71.54) --
	( 67.40, 71.54);

\path[draw=drawColor,line width= 0.6pt,line join=round] ( 70.15, 97.17) --
	( 67.40, 97.17);
\end{scope}
\begin{scope}
\path[clip] (  0.00,  0.00) rectangle ( 97.56,119.25);
\definecolor{drawColor}{gray}{0.30}

\node[text=drawColor,anchor=base west,inner sep=0pt, outer sep=0pt, scale=  0.88] at ( 72.35, 17.25) {1};

\node[text=drawColor,anchor=base west,inner sep=0pt, outer sep=0pt, scale=  0.88] at ( 72.35, 42.88) {10};

\node[text=drawColor,anchor=base west,inner sep=0pt, outer sep=0pt, scale=  0.88] at ( 72.35, 68.51) {$10^{2}$};

\node[text=drawColor,anchor=base west,inner sep=0pt, outer sep=0pt, scale=  0.88] at ( 72.35, 94.14) {$10^{3}$};
\end{scope}
\begin{scope}
\path[clip] (  0.00,  0.00) rectangle ( 97.56,119.25);
\definecolor{drawColor}{RGB}{0,0,0}

\node[text=drawColor,rotate=-90.00,anchor=base,inner sep=0pt, outer sep=0pt, scale=  1.10] at ( 89.99, 58.20) {Total runtime [min]};
\end{scope}
\end{tikzpicture}}};
	\node[anchor=west] (F)
	at ([xshift=10pt, yshift=-5pt]mu.south west) {F: \frechetModel};
	\node[anchor=west] (B)
	at (F.east) {B: \bayesModel};
	\node[anchor=west] (S)
	at (B.east) {\bfseries S: \sphericalModel};
	\node[anchor=west] (L)
	at (S.east) {L: \logNNModel};
\end{tikzpicture}
\vspace*{-2em}
\caption{Boxplots of the leave-one-station-out prediction errors (in log scale) for the four regression method compared in our application, as well as runtime of each method; see details in Section~\ref{sec:comparison_of_methods_in_application}. Our proposed method is highlighted in dark gray. Each of the first three panels correspond to a metric under which the average leave-one-station-out squared prediction error is computed; the last panel gives the total runtime of each method. The fourth boxplot represents the runtime in minutes for the four methods. The time required for density estimation, as additional pre-processing for the first three approaches, is negligible (less than 1/2 seconds).
} 
\label{fig:boxplot_errors}
\end{figure}

\section{Discussion}
\label{sec:discussion}

We have introduced a novel method for performing regression where the responses lie on a sphere. Our approach fundamentally decomposes the spherical non-linearity of the response into two components: a base intercept $\origin \in \sphere$, and a function $f: \Xspace \to T_\origin \sphere$ which belongs to a VVRKHS space. The theoretical guarantees presented in this paper rely primarily on bounded support assumptions for $Y$. This section proceeds by discussing the choice of the intercept $\origin$, the assumptions for our theoretical guarantees, and extensions of the method to more general Riemannian manifolds.

\subsection{Choice of $\origin$}

The point $\origin$ plays the role of an intercept in our method. We suggest to set it as the Fréchet mean of $Y_1,\ldots, Y_n$ in practice. 
From a modeling perspective, the next result shows that in some settings, the choice of $\origin$ does not affect the model considered. For $\Xspace \subset \Real^s$ open and bounded, denote by $H^m(\Xspace, \Yspace)$ the vector-valued Sobolev spaces of functions $f:\Xspace \to \Yspace$ order $m \in \{1,2,\ldots \}$, and recall that for $m > s/2$, $H^m(\Xspace, \Yspace)$ is a VVRKHS (see the \thesupplement, Section~\ref{sec:sobolev}).
 Let $\continuous^m(\bar\Xspace, \sphere)$ denotes the functions $\Xspace\rightarrow\sphere$ with $m$-times continuously differentiable  extensions to some open set $\mathcal{U} \supset \overline \Xspace$ around the closure of $\Xspace$.
\begin{Proposition}
  \label{proposition:choice_of_origin_is_irrelevant_to_the_model}
  Let $\Xspace \subset \Real^s$ be open and bounded, and $\sphere \subset \Yspace$ be the sphere in a Hilbert space $\Yspace$. Let $\mu: \Xspace \to \sphere$ be a function.
  Then, 
  \begin{enumerate}
  \item If $\mu \in \continuous^m(\overline \Xspace, \sphere)$, with $m > s/2$, for any $\origin \in \sphere$ satisfying $\sup_{x \in \Xspace} \sphereDist(\origin, \mu(x)) < \pi$,
      \[
      f_\origin := \Log_\origin \circ \mu \in H^m(\Xspace, T_\origin \sphere).
    \] 
  \item If  $s \in \{1,2\}$ and $\dim(\Yspace) < \infty$, and there exists $ \origin' \in \sphere$ such that $c:= \sup_{x \in \Xspace} \sphereDist(\origin', \mu(x)) < \pi$,
    then
    \[
      \Log_{\origin'} \circ \mu \in H^m(\Xspace, T_{\origin'} \sphere), \; m>s/2   \Rightarrow 
      \Log_{\origin} \circ \mu \in H^m(\Xspace, T_{\origin} \sphere) \text{ for any } \origin \in \ball_\sphere(\origin', \pi - c)
    \]
\end{enumerate}
\end{Proposition}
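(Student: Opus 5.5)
For part 1, I would argue by composition with a smooth map. By the convention of the statement, $\mu$ extends to a $\continuous^m$ map $\tilde\mu$ on an open set $\mathcal U \supset \overline\Xspace$. Since $\overline\Xspace$ is compact and $\sup_{x\in\Xspace}\sphereDist(\origin,\mu(x))<\pi$, continuity gives $\sup_{\overline\Xspace}\sphereDist(\origin,\tilde\mu(x))<\pi$. Shrinking $\mathcal U$ to a neighbourhood of $\overline\Xspace$, $\tilde\mu$ then takes values in a closed geodesic ball $\ball_\sphere(\origin,r)$ with $r<\pi$, which stays a positive distance from $-\origin$. On the open set $\sphere\setminus\{-\origin\}$, $\Log_\origin$ is $\continuous^\infty$, as recalled in Section~\ref{sec:notation}. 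Its derivatives of order up to $m$ are bounded on $\ball_\sphere(\origin,r)$, because the explicit formula involves only $\arccos\langle \origin,q\rangle$ and $\proj_\origin q/\norm{\proj_\origin q}$. The only singularities are at $q=\pm\origin$, and the one at $q=\origin$ is removable: writing $\Log_\origin q = \phi(\langle\origin,q\rangle)\,\proj_\origin q$ with $\phi(t)=\arccos(t)/\sqrt{1-t^2}$ shows this, since $\phi$ is real-analytic near $t=1$. By the chain rule (Faà di Bruno), $\Log_\origin\circ\tilde\mu$ is then $\continuous^m$ on $\mathcal U$ with derivatives bounded on $\overline\Xspace$. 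Since $\Xspace$ is bounded, bounded derivatives of order up to $m$ lie in $L^2(\Xspace,T_\origin\sphere)$, so $f_\origin\in H^m(\Xspace,T_\origin\sphere)$. I will check that Faà di Bruno works for Hilbert-valued maps, which is standard for Fréchet derivatives. This part presents no real obstacle.

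For part 2, I would write $\Log_\origin\circ\mu = \Phi\circ g$, where $g:=\Log_{\origin'}\circ\mu\in H^m(\Xspace,T_{\origin'}\sphere)$ and $\Phi:=\Log_\origin\circ\Exp_{\origin'}$. First, I will check that $\Phi$ is $\continuous^\infty$ on a neighbourhood of the set of values of $g$. By definition $\norm{g(x)}=\sphereDist(\origin',\mu(x))\le c$. For $\norm{v}\le c$ and $\origin\in\ball_\sphere(\origin',\pi-c)$, the triangle inequality gives $\sphereDist(\origin,\Exp_{\origin'}v)\le \sphereDist(\origin,\origin')+c$. I must make this strictly less than $\pi$, so either the ball is taken open, or I note that equality would force $\Exp_{\origin'}v=-\origin$, which I handle by the slack of a small neighbourhood. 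I plan to phrase it as $\sphereDist(\origin,\origin')<\pi-c$, or else argue that the supremum is attained only in degenerate cases. Then $\Phi$ is smooth with bounded derivatives on a neighbourhood of the closed ball $\ball_{T_{\origin'}\sphere}(0,c)$, and this uses finite dimensionality to get compactness of that ball. Smoothness of $\Exp_{\origin'}$ is stated in Section~\ref{sec:notation}.

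The main obstacle is the Sobolev composition step: showing $g\in H^m$ and $\Phi$ smooth imply $\Phi\circ g\in H^m$. In general this needs $g\in L^\infty$, which holds here, together with control of the products of derivatives arising in Faà di Bruno. This is where I use $s\in\{1,2\}$ and $m>s/2$. For $s=1$, $H^m\hookrightarrow\continuous^{m-1}$, so every lower-order derivative factor is bounded and only the top derivative is merely in $L^2$. For $s=2$, each term of $\partial^\alpha(\Phi\circ g)$ has the form $D^j\Phi(g)[\partial^{\beta_1}g,\dots,\partial^{\beta_j}g]$ with $\sum|\beta_i|=|\alpha|\le m$. I will bound its $L^2$ norm by Hölder together with the Gagliardo–Nirenberg/Sobolev embeddings $\partial^{\beta}g\in W^{m-|\beta|,2}\hookrightarrow L^{p}$ for all $p<\infty$ when $m-|\beta|\ge1$, or $L^\infty$ when $m-|\beta|\ge 2$. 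This shows that the products are in $L^2$. The boundary regularity needed for these embeddings on a bounded open $\Xspace$ is a gap I would address by citing the standard composition result (e.g. Moser-type estimates), assuming an extension domain. Finite dimensionality of $\Yspace$ lets me apply these componentwise in a basis of $T_{\origin'}\sphere$.
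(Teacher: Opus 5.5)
Your part~1 is the paper's argument: Corollary~\ref{cor:continuous_mu} with the chart $\Log_\origin$, plus $\continuous^m(\overline\Xspace,\Yspace)\subset H^m$ on bounded $\Xspace$. Part~2 uses the paper's decomposition, composing $g=\Log_{\origin'}\circ\mu$ with the transition map $\Phi=\Log_\origin\circ\Exp_{\origin'}$ (Corollary~\ref{cor:change_of_charts}).

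Where you differ is the Sobolev composition step. The paper (Propositions~\ref{prop:Sobo2Diff} and~\ref{proposition:chain_rule_sobolev_real}) uses $H^m\subset\continuous^{[m-s/2]}(\overline\Xspace)$ to make every derivative of order at most $m-1$ bounded. After that only the $H^1\cap L^\infty$ product rule is needed, and requiring $[m-s/2]=m-1$ is exactly where $s\in\{1,2\}$ comes from.

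For $s=1$ this is your first case. For $s=2$, however, $m-s/2$ is an integer, so the embedding only reaches $\continuous^{m-2}$ (already $H^2\not\subset\continuous^1$ in the plane). Derivatives of order $m-1$ therefore need not be bounded, and the paper's route does not close. Your Fa\`a di Bruno plus H\"older argument is what actually handles this case: $\partial^\beta g\in L^\infty$ for $|\beta|\le m-2$, and $\partial^\beta g\in L^p$ for all $p<\infty$ when $|\beta|=m-1$. Two factors of order $m-1$ occur only when $m=2$, and then each is in $L^4$.

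Since $\Ynorm{g(x)}\le c$ gives the $L^\infty$ bound for free, the same Gagliardo--Nirenberg/Moser reasoning would work for any $s$ with $m>s/2$. The restriction on $s$ is thus an artifact of the paper's lemma.

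Two things you still owe:
\begin{enumerate}
\item The weak Fa\`a di Bruno formula. Approximate $g$ by smooth maps; they converge uniformly because $m>s/2$ gives $H^m\subset\continuous^0(\overline\Xspace)$, so they eventually stay where $\Phi$ is smooth.
\item The extension-domain hypothesis on $\Xspace$. You rightly flag this, and the paper needs it equally, since the embeddings it cites are for regular domains.
\end{enumerate}

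On the boundary case, drop your fallback options and commit to $\sphereDist(\origin,\origin')<\pi-c$. Neither ``slack of a neighbourhood'' nor ``degenerate attainment'' can rescue the closed ball $\ball_\sphere(\origin',\pi-c)$, because the statement with the closed ball is false. Take $\mu\equiv q$ constant with $c=\sphereDist(\origin',q)<\pi$ and $\origin=-q$. Then $\sphereDist(\origin,\origin')=\pi-c$, and $\Log_{\origin'}\circ\mu$ is constant, hence in $H^m$. But $\mu\equiv-\origin$, so $\Log_\origin\circ\mu$ is not even defined. The paper's proof passes over this point too.
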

 Potential relaxations of 1.\ and for which VVRKHS beyond Sobolev spaces a statement similar to Proposition~\ref{proposition:choice_of_origin_is_irrelevant_to_the_model} holds is left for future work.

\subsection{Bounded support assumptions}

A fundamental prerequisite for our theoretical results is Assumption~\ref{assumption:support_of_Y_for_theory_new}: the bounded support of $Y$, $\sphereDist(\origin, Y) \leq \pi/4$ almost surely. These assumptions define the set $\Uset$ on which empirical risk is minimised.
Without parametric assumptions on the model, bounded support assumptions are crucially needed. Indeed, the existence and uniqueness of the unconditional Fréchet mean of $Y$, which is a simpler problem, has so far only been shown under bounded support assumptions \citep{afsari2011riemannian, yokotaConvexFunctionsBarycenter2017}.

This bounded support assumption on $Y$ ($\sphereDist(\origin, Y) \leq \pi/4$ a.s.) can be relaxed if one is willing to accept proofs based on numerical evaluations. In that case, we can relax the assumption to 
\[
  \sphereDist(\origin, Y) \leq r \quad \mathrm{a.s.} \qquad \text{and} \qquad \Ynorm{ \Log_\origin(\mu(X)) - \Log_\origin(Y) } \leq \rho \quad \mathrm{a.s.},
\]
for pairs of values $(r,\rho)$. The intuition here is that $r$ is the radius of the observations, whereas $\rho$ is the noise level. We show in Section~\ref{sec:results_numerical_proofs} of the \thesupplement\ that for $(r,\rho)=(\pi/2, \pi/2.1)$ and $(r, \rho) = (2\pi/3, \pi/6.6)$ the results of Section~\ref{sec:rates_minimal_assumptions}, \ref{sec:rates_depending_on_smoothness} and \ref{sec:strong_consistency} carry over after suitably modifying the set $\Uset$ on which $\risk_n$ is optimized. The case $r=\pi/2$ allows the response $Y$ to lie on an entire hemisphere, provided the pull-back noise magnitude, $\Ynorm{ \Log_\origin(\mu(X)) - \Log_\origin(Y) }$, is not larger than $\pi/2.1$. We can increase the radius on which the response lies to $r=2\pi/3$ but then the pull-back noise magnitude needs to be less than $\pi/6.6$. Generally, increasing the support ($r$) requires decreasing the noise ($\rho$). The noise assumption can also be made intrinsic: requiring $\sphereDist(\mu(X), Y) \leq \pi/4$ and using $r = \pi/2$ yields similar results, but the proofs are more intricate because they require verifying that sets of the form $\Log_\origin( \ball_\sphere(y, \pi/4) )$ are convex sets, which is highly challenging, 

\subsection{A patchwork approach}

An approach to weaken the bounded support assumption on $Y$ would be to partition $\Xspace$ into disjoint subsets $A_1,\ldots, A_s \subset \Xspace$ covering $\Xspace$, and assume there exist $\origin_1, \ldots, \origin_s \in \sphere$ such that $\theLaw(\sphereDist(\origin_j, Y)\leq \pi/4  \mid X \in A_j ) = 1$ for $j=1,\ldots, s$. This would then lead to a separate estimate $\hat \mu_j$ on each set $A_j$, and the global estimator would be obtained by combining them, i.e.,
\[
  \hat \mu(x)  = \sum_{j=1}^s \hat \mu_j(x) \indicator_{A_j}(x).
\]

\subsection{Extensions to other manifolds}

Our method could be extended to responses on more general and potentially infinite-dimensional manifolds $\Mspace$. There are several technical challenges for such an extension: (1) existence (and uniqueness) of the conditional mean function $\mu$ is not guaranteed \citep[to the best of our knowledge, existing results for general manifolds are only available for finite-dimensional manifolds, see][]{afsari2011riemannian}; and (2) a key technical difficulty for the theoretical analysis of the method would be to obtain a local quadratic growth of the population risk, such as the one in Theorem~\ref{theorem:risk_is_convex_new}, which boils down to studying the convexity of the map $v \in T_\origin \Mspace \mapsto \ud^2_\Mspace( \Exp_\origin v, y)$.

\section*{Acknowledgments}

We thank Honey D.\ Alas, Yun Ho, Victor-Emmanuel Brunel, Stephan Huckemann, Victor Panaretos, Xavier Pennec, William Underwood, and Sven Wang for helpful discussions.
AS acknowledges funding from SNSF Grant 200020\_207367.

\bibliographystyle{imsart-nameyear} 
\bibliography{sphere-bib-shahin,sphere-bib-beatrice}

\appendix

\section{Two technical results of independent interest}
\label{sec:technical_results_of_independent_interest}

The following results draw inspiration from \citet{pillonettoSolutionsNonlinearControl2008}. Recall that $\ball_\Hspace(0, r) \subset \Hspace$ is the closed ball of radius $r$ with center $0$. 
Recall that the space $C(\Xspace, \Yspace)$  of bounded continuous functions $g: \Xspace \to \Yspace$, equipped with the norm
\[
    \inftynorm{g} = \sup_{x \in \Xspace} \Ynorm{g(x)},
\]
is a Banach space. Let $\operatornorm{\cdot}$ denote the operator norm of operators.
The two following results hold for general vector-valued operator kernels.
\begin{Lemma} \label{Lemma:Bclosedsupnorm}
    Let $\Xspace$ be a metric space  and $\Yspace$ be a finite-dimensional Hilbert space and $\Hspace$ Vector-Valued RKHS of functions  $f :\Xspace \to \Yspace$, with continuous operator-kernel $K:\Xspace \times \Xspace \rightarrow \bounded(Y)$ satisfying $\sup_{x \in \Xspace} \operatornorm{K(x,x)} < \infty$. Then, $B := \ball_\Hspace(0, r)$ is closed in $C(\Xspace, \Yspace)$.
\end{Lemma}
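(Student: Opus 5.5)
We first check that $B\subset C(\Xspace,\Yspace)$. Continuity of each $f\in\Hspace$ follows from continuity of $K$ via the reproducing property:
\[
\Ynorm{f(x)-f(x')}^2 \le \Hnorm{f}^2\,\operatornorm{K(x,x)-K(x,x')-K(x',x)+K(x',x')},
\]
and the right-hand side tends to $0$ as $x'\to x$. Boundedness follows from $\Ynorm{f(x)}\le \Hnorm{f}\sup_x\operatornorm{K(x,x)}^{1/2}$.

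Now let $(f_m)\subset B$ and $g\in C(\Xspace,\Yspace)$ with $\inftynorm{f_m-g}\to 0$. The aim is to show $g\in B$. Since $B$ is a closed ball in a Hilbert space, it is weakly sequentially compact (bounded sequences in Hilbert spaces have weakly convergent subsequences). We therefore extract a subsequence $f_{m_j}\rightharpoonup f$ weakly in $\Hspace$, and weak lower semicontinuity of the norm gives $\Hnorm{f}\le \liminf_j\Hnorm{f_{m_j}}\le r$, so $f\in B$.

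It remains to identify $f$ with $g$. For every $x\in\Xspace$ and $y\in\Yspace$, the reproducing property gives
\[
\YinnerProd{y,f_{m_j}(x)}=\HinnerProd{K_x y,f_{m_j}}\to\HinnerProd{K_x y,f}=\YinnerProd{y,f(x)}.
\]
Uniform convergence gives $\YinnerProd{y,f_{m_j}(x)}\to\YinnerProd{y,g(x)}$. Hence $\YinnerProd{y,f(x)-g(x)}=0$ for all $y$, so $f(x)=g(x)$ for every $x$, i.e.\ $g=f\in B$. Thus $B$ is sequentially closed, hence closed, since $C(\Xspace,\Yspace)$ is a metric space.

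There is no real obstacle. The one conceptual point is that norm-boundedness combined with weak compactness replaces the lack of norm compactness, and pointwise (evaluation) functionals are weakly continuous; this is the only place the VVRKHS structure enters. Finite-dimensionality of $\Yspace$ is not actually needed for this argument, but it does no harm. Note also that the same argument shows $B$ is closed even under pointwise convergence, not only in $\inftynorm{\cdot}$.
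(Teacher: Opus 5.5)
Your proof is correct and follows essentially the same argument as the paper: extract a weakly convergent subsequence from the $\Hspace$-ball, then identify its weak limit with the uniform limit through the reproducing property. Your remark that finite-dimensionality of $\Yspace$ is unnecessary is also right. The paper uses it only to upgrade weak pointwise convergence in $\Yspace$ to strong convergence, but uniqueness of weak limits in $\Yspace$ already suffices, exactly as you argue.
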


\begin{Lemma} \label{Lemma:ball_in_VVRKHS_compact_in_sup_norm}
    Let $\Xspace$ be a compact metric space with metric $d_X$, $\Yspace$ be a finite-dimensional Hilbert space and $\Hspace$ Vector-Valued RKHS of continuous functions form $\Xspace$ to $\Yspace$, with continuous and bounded kernel $K:\Xspace \times \Xspace \rightarrow \bounded(\Yspace)$. Then, $B := \ball_\Hspace(0, r)$, $r >0$, is a compact subset of $C(\Xspace, \Yspace)$.
\end{Lemma}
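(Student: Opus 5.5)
The plan is an Arzelà--Ascoli argument: show that $B=\ball_\Hspace(0,r)$ is a uniformly bounded, equicontinuous family of functions into the finite-dimensional space $\Yspace$. This makes $B$ relatively compact in $C(\Xspace,\Yspace)$. Lemma~\ref{Lemma:Bclosedsupnorm} then says $B$ is closed in $C(\Xspace,\Yspace)$, and relative compactness plus closedness gives compactness. Since $\Xspace$ is a compact metric space and $\Yspace$ is a finite-dimensional Hilbert space (closed bounded sets are compact), the vector-valued Arzelà--Ascoli theorem applies directly.

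\emph{Step 1: inclusion in $C(\Xspace,\Yspace)$ and uniform boundedness.} Both follow from the reproducing property. For $f\in B$, $x\in\Xspace$ and $y\in\Yspace$ with $\Ynorm{y}=1$,
\[
\YinnerProd{y,f(x)}=\HinnerProd{K_xy,f}\le \Hnorm{K_xy}\,\Hnorm{f},
\qquad
\Hnorm{K_xy}^2=\YinnerProd{y,K(x,x)y}\le \operatornorm{K(x,x)}.
\]
Hence $\Ynorm{f(x)}\le r\,\sup_x\operatornorm{K(x,x)}^{1/2}<\infty$. Continuity of each $f$ is assumed in the statement; it also follows from Step~2.

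\emph{Step 2: equicontinuity.} For $x,x'\in\Xspace$ and unit $y$,
\[
\YinnerProd{y,f(x)-f(x')}=\HinnerProd{(K_x-K_{x'})y,f}\le r\,\Hnorm{(K_x-K_{x'})y},
\]
and
\[
\Hnorm{(K_x-K_{x'})y}^2=\YinnerProd{y,\bigl(K(x,x)-K(x,x')-K(x',x)+K(x',x')\bigr)y}.
\]
Taking the supremum over unit $y$ gives
\[
\Ynorm{f(x)-f(x')}\le r\,\operatornorm{K(x,x)-K(x,x')-K(x',x)+K(x',x')}^{1/2}.
\]
The right-hand side is independent of $f$. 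Continuity of $K$ on the compact set $\Xspace\times\Xspace$ makes $K$ uniformly continuous, so this bound tends to $0$ uniformly as $d_X(x,x')\to0$. This gives uniform equicontinuity of $B$. In finite dimensions, continuity of $K$ in operator norm and in any other natural topology coincide, so no subtlety about the topology on $\bounded(\Yspace)$ arises.

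\emph{Step 3: conclusion.} By Arzelà--Ascoli, $B$ is relatively compact in $C(\Xspace,\Yspace)$, and by Lemma~\ref{Lemma:Bclosedsupnorm} it is closed there, so it is compact. I expect the main obstacle to be the closedness, but it is exactly Lemma~\ref{Lemma:Bclosedsupnorm}, whose hypotheses (metric $\Xspace$, finite-dimensional $\Yspace$, continuous kernel with bounded diagonal) are implied by the present ones. If I had to reprove it, I would argue as follows. Take $f_m\in B$ with $f_m\to g$ uniformly. Extract a subsequence converging weakly in $\Hspace$ to some $f$ with $\Hnorm{f}\le r$. Weak convergence gives $\YinnerProd{y,f_m(x)}=\HinnerProd{K_xy,f_m}\to\HinnerProd{K_xy,f}=\YinnerProd{y,f(x)}$ for every $x$ and $y$, while uniform convergence gives $\YinnerProd{y,f_m(x)}\to\YinnerProd{y,g(x)}$. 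Hence $g=f\in B$.
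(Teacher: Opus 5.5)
Your proposal is correct and follows essentially the same route as the paper. Both use Arzelà--Ascoli, with equicontinuity from the reproducing property and uniform continuity of $K$ on the compact $\Xspace\times\Xspace$, pointwise boundedness from $\Ynorm{f(x)}\le \Hnorm{f}\,\operatornorm{K(x,x)}^{1/2}$, and closedness in the sup norm from Lemma~\ref{Lemma:Bclosedsupnorm}. Your four-term expansion of $\Hnorm{(K_x-K_{x'})y}^2$ is, if anything, slightly more careful than the paper's two-term version, which implicitly uses $K(x',x)=K(x,x')^\adjoint$.
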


\ifShowsupplement


\makeatletter
\renewcommand{\thesection}{S\arabic{section}}
\makeatother
\setcounter{section}{0}

\section{Notation}

Let $\Yspace$ be a real separable Hilbert space with inner-product $\YinnerProd{\cdot, \cdot}$ and induced norm $\Ynorm{\cdot}$. We denote by $\bounded(\Yspace)$ the space of bounded linear operators on $\Yspace$, by $T^\adjoint$ the adjoint operator of $T \in \bounded(\Yspace)$, and by $\orthogonal(\Yspace) \subset \bounded(\Yspace)$ the group of orthogonal linear operators on $\Yspace$, i.e. operators $R$ such that $R R^\adjoint = R^\adjoint R=\id$. We denote by $\operatornorm{\cdot}$ the operator norm.
Let $\mathbb{N}$ denote the strictly positive natural numbers and $\bar{\Real} = \Real \cup \{\pm \infty\}$ denote the extended real line.
Finally, we write $a_n \sim b_n$ if $\lim_{n \to \infty} a_n / b_n = 1$.

\subsection{Differential, gradient and Hessian}

For a function $f: \Mspace \to \Mspace'$ between two manifolds $\Mspace, \Mspace'$, the differential at $x \in \Mspace$ is the function $Df(x): T_x \Mspace \to T_{f(x)} \Mspace'$ defined by 
\[
    Df(x)[u] = \frac{\ud}{\ud t}f(c(t))_{|t=0}, \quad u \in T_x \Mspace,
\]
where $c$ is a smooth curve in $\Mspace$ with $c(0)=x, c'(0)=u$. If $f(x,y)$ is defined on a product of manifold, we define $D_x f(x,y)[u]$ as the differential with respect to variable $x$, i.e., $D_x f(x,y)[u] = (D f(\cdot, y) )(x)[u]$. 

Let $H$ be a Hilbert space and $f:H \to \Real$ be Fr\'echet differentiable. We denote the gradient of $f$ at $x$ by $\nabla f(x) \in H$. 
If $\nabla f: H \to H$ is Fr\'echet differentiable, we denote by $\Hess_f(x):H \times H \to \Real$ the Hessian of $f$ at $x$. For each $x \in H$, $\Hess_f(x)$ is a bilinear map on $H \times H$.
If $f: H \times H' \to \Real$ is a function $f=f(x,y)$, where $H,H'$ are two Hilbert spaces, we denote by $\nabla_x f (x,y) \in H$ the gradient of the function $x \mapsto f(x,y)$.

\section{Technical results}
\label{sec:technical}

\subsection{Empirical process theory}

The following is an extension of \citet[][corollary 1]{maurerVectorContractionInequalityRademacher2016}.
\begin{Proposition}
    \label{proposition:vector-contraction-with-absolute-value}
Let \(\mathcal{X}\) be any set, \((x_{1},\dots,x_{n}) \in \mathcal{X}^{n}\), let \(\mathcal{F}\) be a class of functions
\(f : \mathcal{X} \to \Yspace \), where $\Yspace$ is a separable Hilbert space. Let \(h_{i} : \Yspace \to \Real\) be Lipschitz functions with Lipschitz constant bounded by \(L < \infty \), for $i=1,\ldots,n$.
Assume there is a $\tilde f \in \mathcal F$ such that $h_i(\tilde f(x_i))=0$ for all $i=1,\ldots,n$. 
Then
\[
\Expected \sup_{f \in \mathcal{F}} \left| \sum_{i} \varepsilon_{i} h_{i}\bigl(f(x_{i})\bigr) \right|
    \le  2 \sqrt{2} L \, \Expected \sup_{f \in \mathcal{F}} \sum_{i=1}^n \sum_{j=1}^\infty \varepsilon_{ij} \YinnerProd{f(x_{i}), e_j},
\]
where \(\varepsilon_{ij}\) is an independent doubly indexed Rademacher sequence and $(e_j)_{j\geq 1}$ is an orthonormal basis of $\Yspace$.
\begin{proof}
    Define
    \[
        Z(f) := \sum_{i=1}^n \varepsilon_i h_i\bigl(f(x_i)\bigr).
    \]
    Then pointwise (for any realization of the $\varepsilon_i$),
    \[
        \sup_{f \in \mathcal{F}} |Z(f)| 
        = \max\Bigl\{ \sup_{f \in \mathcal{F}} Z(f),\ \sup_{f \in \mathcal{F}} \bigl(-Z(f)\bigr) \Bigr\}.
    \]
    By assumption each term in the max is positive or zero, then $\max(A,B) \le A + B$ for $A,B \geq 0$ yields
    \[
        \Expected \sup_{f \in \mathcal{F}} |Z(f)|
        \le \Expected \sup_{f \in \mathcal{F}} Z(f) 
        + \Expected \sup_{f \in \mathcal{F}} \bigl(-Z(f)\bigr).
    \]
    By symmetry of the Rademacher variables, $(\varepsilon_i) \stackrel{d}{=} (-\varepsilon_i)$, so
    \[
        \Expected\sup_{f \in \mathcal{F}} \bigl(-Z(f)\bigr)
        = \Expected \sup_{f \in \mathcal{F}} \sum_i (-\varepsilon_i) h_i\bigl(f(x_i)\bigr)
        = \Expected \sup_{f \in \mathcal{F}} \sum_i \varepsilon_i h_i\bigl(f(x_i)\bigr)
        = \Expected \sup_{f \in \mathcal{F}} Z(f).
    \]
    Therefore, by \citet[][corollary 1]{maurerVectorContractionInequalityRademacher2016},
    \begin{align*}
        \Expected \sup_{f \in \mathcal{F}} \left| \sum_i \varepsilon_i h_i\bigl(f(x_i)\bigr) \right|
 & \le 2 \, \Expected \sup_{f \in \mathcal{F}} \sum_i \varepsilon_i h_i\bigl(f(x_i)\bigr).
 \\ &
 \le 2 \sqrt{2} L \, \Expected \sup_{f \in \mathcal{F}} \sum_{i,j} \varepsilon_{ij} \YinnerProd{f(x_{i}), e_j},
    \end{align*}
\end{proof}
\end{Proposition}

\begin{Definition} \label{definition:covering}
Let $(\mathcal M,d)$ be a metric space and let $\mathcal F \subseteq \mathcal M$. For $\varepsilon > 0$, the \emph{covering number} of $\mathcal F$ at scale $\varepsilon$ is
\[
\covering(\varepsilon, \mathcal F, d)
:= \min\Bigl\{\, N \in \mathbb{N} : \exists m_1,\dots,m_N \in \mathcal M \text{ such that }
\mathcal F \subseteq \bigcup_{i=1}^N B_d(m_i,\varepsilon) \Bigr\},
\]
where $B_d(m,\varepsilon) := \{ m' \in \mathcal M : d(m,m') \le \varepsilon \}$.
\end{Definition}

Recall that if $(\Xspace,\mathcal{A},\theLaw)$ is a probability space and let $p \geq 1$, then
$ L^p(\theLaw)$ is the set of (equivalence classes) of measurable functions $f: \Xspace \to \Real$ such that $\norm{f}_p < \infty$, where
\begin{equation} \label{eq:Lp_norm}
    \norm{f}_p := (\int |f|^p  d \theLaw)^{1/p}.
\end{equation}

For a function $f: \Xspace \to \Real$ where $\Xspace$ is a set, define
\begin{equation} \label{eq:sup_norm_over_set}
    \norm{f}_\Xspace = \sup_{x \in \Xspace} |f(x)|
\end{equation}

\begin{Definition}[Bracketing number in $L^p$]
    \label{definition:bracketing}
    Let $\Xspace$ be a measurable space, let $\theLaw$ be a probability measure on $\Xspace$, and let $p \ge 1$. 
    Let $\mathcal{F} \subseteq L^p(\theLaw)$.
    \begin{itemize}
        \item For $\varepsilon > 0$, the \emph{bracketing number} of $\mathcal{F}$ at scale $\varepsilon$ with respect to $L^p(\theLaw)$ is
            \begin{align*}
                \bracketing\bigl(\varepsilon, \mathcal{F}, L^p(\theLaw)\bigr) &:= \min\biggl\{ N \in \mathbb{N} : \exists\, l_1,u_1,\dots,l_N,u_N \in L^p(\theLaw) \text{ such that}\\
                &\quad l_i \le u_i \ \theLaw\text{-a.e.}, \quad \norm{u_i - l_i}_p \le \varepsilon \quad (i=1,\dots,N), \quad \text{and } \mathcal{F} \subseteq \bigcup_{i=1}^N [l_i, u_i] \biggr\},
            \end{align*}
            where $[l,u] := \{ f \in L^p(\theLaw) : l \le f \le u \ \theLaw\text{-a.e.} \}$.
    \end{itemize}
\end{Definition}

The following result is essentially \citet[][Theorem~2.7.17]{vandervaartWeakConvergenceEmpirical2023a}.
\begin{Lemma}
    \label{Lemma:bracketing_and_covering_for_Lipschitz}
    \mbox{}

    Let $(\mathcal{M},d)$ be a metric space, $(\mathcal{Z},\mathcal{A})$ a measurable space, 
    and assume that $\theLaw$ is a probability measure  on $(\mathcal{Z},\mathcal{A})$.
    Let $\psi : \mathcal{M} \times \mathcal{Z} \to \Real$ be such that $\psi(m,\cdot)$ 
    is $\mathcal{A}$-measurable for every $m \in \mathcal{M}$. 
    Assume the following Lipschitz condition holds:  there exists $\alpha > 0$ such that
    \[
        \norm{\psi(m,\cdot) - \psi(m',\cdot)}_{\mathcal{Z}}
        \;\le\; \alpha \, d(m,m') \qquad \text{for all } m,m' \in \mathcal{M}.
    \]
Define the class
    \[
        \mathcal{F} \;:=\; \bigl\{ f_m : m \in \mathcal{M} \bigr\}, 
        \qquad f_m(z) := \psi(m,z),
    \]
Assume furthermore that $\mathcal{F} \subseteq L^p(\theLaw)$.

    Then, for every $\varepsilon > 0$, and every $p \geq 1$,
    \[
        \bracketing\bigl(\varepsilon, \mathcal{F}, L^p(\theLaw)\bigr)
        \;\le\;
        \covering \left(\frac{\varepsilon}{2\alpha}, \mathcal{M}, d \right),
    \]
    where $\covering$ and $\bracketing$ are defined in Definitions~\ref{definition:covering} and~\ref{definition:bracketing}.

     \begin{proof}
         Fix $\varepsilon > 0$. Let 
         \[
             \{m_1,\dots,m_N\} \subset \mathcal{M}
         \]
         be a $\frac{\varepsilon}{2\alpha}$-cover of $\mathcal{M}$ with respect to $d$, 
         so that for every $m \in \mathcal{M}$ there exists $j \in \{1,\dots,N\}$ with
         \[
             d(m,m_j) \;\le\; \frac{\varepsilon}{2\alpha},
         \]
         and $N = \covering \left(\frac{\varepsilon}{2\alpha}, \mathcal{M}, d \right)$ by definition 
         of the covering number.

         By the Lipschitz condition, for all $m \in \mathcal{M}$, $j \in \{1,\dots,N\}$ and 
         all $z \in \mathcal{Z}$,
         \[
             d(m,m_j) \le \frac{\varepsilon}{2\alpha}
             \;\Longrightarrow\;
             \bigl|\psi(m,z) - \psi(m_j,z)\bigr|
             \;\le\;
             \alpha \, d(m,m_j)
             \;\le\;
             \frac{\varepsilon}{2}.
         \]
         For each $j = 1,\dots,N$, define functions
         \[
             \ell_j(z) := \psi(m_j,z) - \frac{\varepsilon}{2},
             \qquad
             u_j(z) := \psi(m_j,z) + \frac{\varepsilon}{2},
             \qquad z \in \mathcal{Z}.
         \]
         Then, for every $m \in \mathcal{M}$, choosing $j$ with 
         $d(m,m_j) \le \frac{\varepsilon}{2\alpha}$ yields
         \[
             \ell_j(z) \;\le\; \psi(m,z) \;\le\; u_j(z)
             \qquad \text{for all } z \in \mathcal{Z},
         \]
         so the bracket $[\ell_j,u_j]$ contains $\psi(m,\cdot)$.

         Moreover, the $L^p(\theLaw)$-size of each bracket is
         \[
             \left(\int_{\mathcal{Z}} \bigl|u_j(z) - \ell_j(z)\bigr|^p \, \ud\theLaw(z)\right)^{1/p}
             \;=\;
             \left(\int_{\mathcal{Z}} \varepsilon^p \, \ud\theLaw(z)\right)^{1/p}
             \;=\;
             \varepsilon.
         \]
         Hence each $[\ell_j,u_j]$ is an $\varepsilon$-bracket in $L^p(\theLaw)$, and
         the collection $\{[\ell_j,u_j] : j=1,\dots,N\}$ covers $\mathcal{F}$.
         By the definition of the bracketing number,
         \[
             \bracketing \bigl(\varepsilon, \mathcal{F}, L^p(\theLaw)\bigr)
             \;\le\; N
             \;=\;
             \covering \left(\frac{\varepsilon}{2\alpha}, \mathcal{M}, d\right).
         \]
         This completes the proof.
     \end{proof}
\end{Lemma}

\begin{Lemma} \label{Lemma:bracketing_for_strong_consistency}
    Let 
    $\psi_f(x,y) = \sphereDist^2( \Exp_p( f(x) ), y)$
    and
    \[
        \mathcal{F}(\eta)
        = \{\psi_f \mid f \in \ball_\Hspace(0,\eta) \}.
    \]
    Then,
        \[
            \bracketing (\vep, \mathcal F (\eta), L^1(\theLaw)) \leq \covering( \vep/\alpha, \ball_\Hspace(0,\eta), \inftynorm{.}),
        \]
        where $\alpha = 4\pi$.
    \begin{proof}
         Let $f,g \in \Hspace$, then
                \begin{align*}
                    \bigl|\psi_f(x,y) - \psi_{g}(x,y)\bigr| &= \Bigl| \sphereDist^{2}\bigl(\Exp_{\origin}(f(x)), y\bigr) - \sphereDist^{2}\bigl(\Exp_{\origin}(g(x)), y\bigr) \Bigr| \\
                    \intertext{using the reverse triangle inequality}
                       &\le 2\pi\,\sphereDist\bigl(\Exp_{\origin}(f(x)), \Exp_{\origin}(g(x))\bigr)\\
                       \intertext{By Lemma~\ref{Lemma:distance_and_exp_fixed_p},}
                       &\le 2\pi\, \Ynorm{f(x) - g(x)} \\
                \end{align*}
                We directly get $\bigl|\psi_f(x,y) - \psi_{g}(x,y)\bigr| \leq 2 \pi \inftynorm{f-g}$. 
                The claim follows from an application of Lemma~\ref{Lemma:bracketing_and_covering_for_Lipschitz}.
    \end{proof}
\end{Lemma}

\subsection{Technical results for measure theory}

\begin{Lemma} \label{Lemma:positive_integral}
    Let $(X, \mathcal A, \mu)$ be a measure space, $\varphi: X \to [0, \infty)$ be measurable, and let $A = \{ x \in X \mid \varphi(x) > 0 \}$.
    If $\mu(A) > 0$ then $\int_X \varphi \ud\mu > 0$.
    \begin{proof}
        For each $n\in\mathbb{N}$, set
        \[
            A_n := \{x\in X : \varphi(x) > 1/n\}.
        \]
        Then $A_n \subseteq A$ for all $n$ and
        \[
            A = \bigcup_{n=1}^{\infty} A_n.
        \]
        Moreover, for every $n$,
        \[
            \varphi(x) \ge \frac{1}{n}\,\mathbf{1}_{A_n}(x) \quad \text{for all } x\in X,
        \]
        so
        \[
            \int_X \varphi \, \ud\mu \;\ge\; \frac{1}{n}\,\mu(A_n).
        \]
        Assume, for contradiction, that $\int_X \varphi \, \ud\mu = 0$. Then the above inequality implies $\mu(A_n) = 0$ for all $n\in\mathbb{N}$. Hence, by continuity from below of measures,
        \[
            \mu(A)
            = \mu\!\left(\bigcup_{n=1}^{\infty} A_n\right)
            = \lim_{n\to\infty} \mu(A_n)
            = 0,
        \]
        which contradicts $\mu(A) > 0$. Therefore $\int_X \varphi \, \ud\mu > 0$.

    \end{proof}
\end{Lemma}

\begin{Lemma}\label{Lemma:essential_sup}
Let $(\mathcal Z, \mathcal A)$ be a measurable space, $(\Omega, \mathcal F, \theLaw)$ be a probability space, and $Z: \Omega \to \mathcal Z$ be measurable. Let $\mathcal{M}$ be a metric space with metric $d$.
Assume that $\Psi: \mathcal{Z} \times \mathcal{M} \to \Real$ is measurable in the first variable, i.e., $\Psi(\cdot, m)$ is measurable for all $m \in \mathcal M$, and that
\[
\left|\Psi(z,m) - \Psi(z,m')\right| \leq \eta(d(m,m')), \quad \forall z \in \mathcal Z,
\]
for some modulus of continuity $\eta$, with $\eta(t)\to 0$ as $t \to 0$.

Then, for any $c \in \Real$,
\[
A := \{ m \in \mathcal{M} \mid \Psi(Z,m) \leq c,\, a.s. \}
\]
is a closed subset of $\mathcal M$, and
\[
    B := \{ m \in \mathcal{M} \mid \Psi(Z,m) < c,\, a.s. \}
\]
is  an open subset of $\mathcal M$.

\begin{proof}
Let $g(m) = \esssup \,\Psi(Z,m)$.
    We can rewrite
$A = \{ m \in \mathcal M \mid g(m) \leq c \} = g^{-1}( (-\infty, c])$,  and
$B =  g^{-1}( (-\infty, c))$.

We will show that $g$ is continuous, which will conclude the proof. By the modulus of continuity,
\[
\Psi(z,m') - \eta(d(m,m')) \leq \Psi(z,m) \leq \Psi(z,m') + \eta(d(m,m')), \quad \forall z \in \mathcal{Z},
\]
hence
\[
g(m') - \eta(d(m,m')) \leq g(m) \leq g(m') + \eta(d(m,m')).
\]
Thus $g(m)$ is continuous and the claims follow. 
\end{proof}
\end{Lemma}

\subsection{Results for convex functions on Hilbert spaces}

The following results are for convex functions on convex sets, and hold in particular for points at the boundary of the convex set.
\begin{Lemma}\label{Lemma:convex_gradient}
    Let $H$ be a Hilbert space with inner-product $\innerProd{\cdot, \cdot}$ and induced norm $\norm{\cdot}$. Let $A \subseteq H$ be open, and $\mathcal{C} \subseteq A$ be convex.
    \begin{enumerate}
        \item If $\Phi: A \to \Real$ be Fr\'echet differentiable, then 
    \begin{enumerate}
        \item $\Phi$ is convex on $\mathcal{C}$ if and only if $\forall u,v \in \mathcal{C}, \; \Phi(v) \geq \Phi(u) + \innerProd{\nabla \Phi(u), v-u}$ 
        \item If $\Phi$ is convex on $\mathcal{C}$ and if there exists $u_0 \in \mathcal{C}$ such that $\Phi(u) \geq \Phi(u_0)$ for all $u \in \mathcal{C}$, then $\innerProd{\nabla \Phi(u_0), u - u_0} \geq 0$ for all $u \in \mathcal{C}$.
    \end{enumerate}
        \item If $\Phi$ is twice Fr\'echet differentiable on $A$, and the minimal eigenvalue of its Hessian $H_\Phi(u)$ is $c \geq 0$ for all $u \in \mathcal C$, then for all $u,v \in \mathcal{C}$,
    \[
        \Phi(v) \geq \Phi(u) + \innerProd{\nabla \Phi(u), v-u} + \frac c2 \norm{v-u}^2.
    \]
    In particular, $\Phi$ is convex on $\mathcal C$, and strictly convex if $c > 0$, and at a minimum $u_0 \in \mathcal C$ of $\Phi$, 
    \[
        \Phi(v) \geq \Phi(u_0) + \frac c2 \norm{v-u_0}^2, \quad \forall v \in \mathcal{C}.
\]
    \end{enumerate}
    \begin{proof}
        \begin{enumerate}
            \item 
                \begin{enumerate}
                    \item By convexity,
                        \[
                            {\Phi}\big(u + t(v-u)\big) \leq t\, {\Phi}(v) + (1-t)\, {\Phi}(u) = {\Phi}(u) + t\big({\Phi}(v) - {\Phi}(u)\big)
                        \]
                        This implies 
                        \[
                            \frac{{\Phi}\big(u + t(v-u)\big) - {\Phi}(u)}{t} \leq {\Phi}(v) - {\Phi}(u)
                        \]

                        Taking the limit $t \to 0^+$ yields
                        \[
                            \langle \nabla {\Phi}(u),\, v-u \rangle + {\Phi}(u) \leq {\Phi}(v)
                        \]
                        Note that this is even valid for $u \in \partial \mathcal{C}$.
                     
                        For the converse, let $u,v \in \mathcal{C}$ and $t \in [0,1]$. 
                        Since $\mathcal{C}$ is convex, the point
                        \[
                            w := t u + (1-t)v \in \mathcal{C}.
                        \]
                        By assumption, applied at $w$ with $u$ and $v$ respectively, we have
                        \begin{align}
                            \Phi(u) &\ge \Phi(w) + \innerProd{\nabla \Phi(w),\, u-w}, \label{eq:lemma_grad_convexity1}\\
                            \Phi(v) &\ge \Phi(w) + \innerProd{\nabla \Phi(w),\, v-w}. \label{eq:lemma_grad_convexity2}
                        \end{align}
                        Multiply \eqref{eq:lemma_grad_convexity1} by $t$ and \eqref{eq:lemma_grad_convexity2} by $(1-t)$ and add:
                        \begin{align*}
                            t\,\Phi(u) + (1-t)\,\Phi(v)
                            &\ge t\,\Phi(w) + (1-t)\,\Phi(w)
                               + \innerProd{\nabla \Phi(w),\, t(u-w) + (1-t)(v-w)} \\
                            &= \Phi(w) 
                               + \innerProd{\nabla \Phi(w),\, t u + (1-t)v - w}.
                        \end{align*}
                        Using $w = t u + (1-t)v$, the inner product term vanishes, so
                        \[
                            t\,\Phi(u) + (1-t)\,\Phi(v) \ge \Phi(w) = \Phi\big(tu + (1-t)v\big),
                        \]
                        which is exactly the convexity of $\Phi$ on $\mathcal{C}$.
                    \item If \( u_0 \) is in the interior of $\mathcal{C}$, the statement holds since \( \nabla {\Phi}(u_0) = 0 \) since $u_0$ is a minimizer.

                        In general, for $t > 0$, $\Phi(u_0 + t(u - u_0)) \geq \Phi(u_0)$ thus
                        \[
                            \frac{ \Phi(u_0 + t(u - u_0)) - \Phi(u_0) }{t} \geq 0.
                        \]
                        Taking the limit $t \to 0^+$ yields the result.
                \end{enumerate}
            \item for $t \in [0,1]$ define $g(t) = \Phi(u + t(v-u))$. By Taylor's theorem, for each $t \in (0,1]$  there is an $\alpha \in (0,t)$ such that 
                \begin{align*}
                    g(t) &= g(0) + g'(0)t + g''(\alpha)t^2/2 \\
                         &= \Phi(u) + t\innerProd{\nabla \Phi(u), v-u} + \frac{t^2}2 H_\Phi(u + \alpha(v-u))[v-u, v-u] \\
                         &\geq \Phi(u) + t\innerProd{\nabla \Phi(u), v-u} + \frac {ct^2}2 \norm{v-u}^2.
                \end{align*}
                For $t = 1$, the first statement follows by noticing that $\Phi(v) = g(1)$. Then we get $\Phi(v) \geq \Phi(u) + \innerProd{\nabla \Phi(u), v-u}$ for all $u,v \in \mathcal C$. Take $\theta \in [0,1]$ and set $z = \theta u + (1-\theta)v$. Apply the latter inequality to $u,z$ and to $v,z$ to get 
                \[
                    \Phi(u) \geq \Phi(z) + \innerProd{ \nabla \Phi(z), u-z}, \quad 
                    \Phi(v) \geq \Phi(z) + \innerProd{ \nabla \Phi(z), v-z}.
                \]
                Multiplying these inequalities by $\theta$ and $1-\theta$, respectively, and summing up implies that $\phi(z) \leq \theta \Phi(u) + (1-\theta) \phi(v)$, thus $\Phi$ is convex on $\mathcal C$.
                The last statement follows by applying part 1.
        \end{enumerate}
\end{proof}
\end{Lemma}
Properties of convex functions defined on Hilbert spaces, and their link between strong and weak semi-continuity are available in the literature for functions that are convex over the entire Hilbert space. We need results for functions that are only convex on a closed convex subset.
\begin{Lemma}\label{Lemma:convex-closed-semi_continuity}
    Let $H$ be a Hilbert space, $C \subseteq H$ be convex and closed. Assume $\phi:H \to \mathbb R$ is a function such that $\phi_{|C}$ is lower semi-continuous and convex. Then $\phi_{|C}$ weakly lower semi-continuous. Furthermore, if $C$ is bounded or $\phi$ is coercive, the infimum $\inf_{x \in C} \phi(x)$ is achieved for a $x^* \in C$.
    \begin{proof}
        Let $\psi = \phi_{|C}$.
        For the first claim, by \citet[][Proposition 2.3]{peypouquetConvexOptimizationNormed2015}, $\mathrm{Epi}(\psi) \subseteq C \times \Real$ is closed, and it is easily checked that it is a convex set. Since $C$ is closed in $H$ then $\mathrm{Epi}(\psi)$ is also closed in $H \times \Real$. By \citet[][Proposition 1.23]{peypouquetConvexOptimizationNormed2015} it is therefore weakly closed. Applying again \citet[][Proposition 2.3]{peypouquetConvexOptimizationNormed2015}, $\psi$ is weakly lower semi-continuous.

        For the claim about the infimum, take $\{x_n\} \subset C$ a sequence achieving the infimum, i.e., $I := \inf_{x \in C} \phi(x) = \lim_{n \to \infty} \phi(x_n)$. We already know that $C$ is weakly closed since it is closed and convex. Furthermore since either $C$ is bounded or $\phi$ is coercive, the sequence $\{x_n\}$ is contained in a certain closed ball $B \subset H$, which is weakly compact \citep[][V.4.2 Theorem]{conwayCourseFunctionalAnalysis1997} and we can extract a subsequence from $\{x_n\}$, which we again denote by $\{x_n\}$ such that $x_n$ converges weakly to $x^* \in B$ for a certain $x^* \in B$, and since we have already established that $C$ is weakly closed, $x^* \in C$. By weak lower semi-continuity,
        \[
            \phi(x^*) \leq \lim_{n \to \infty} \phi(x_n) = I
        \]
        but since $x^* \in C$, $I \leq \phi(x^*)$ and thus $I = \phi(x^*)$.
    \end{proof}
\end{Lemma}

\subsection{Technical results for VVRKHS} \label{sec:technical_vvrkhs}

\begin{Lemma}
    \label{Lemma:projection_against_y_is_in_scalar_RKHS}
    Let $\Hspace$ be a VVRKHS of functions $f: \Xspace \to \Yspace$ with SMO kernel $K(x,x') = k(x,x') \id$,
    where $k$ is a scalar positive-definite kernel with RKHS $\mathcal{H}_k$, and $\Yspace$ is
    a Hilbert space. For any $f \in \Hspace$ and $y \in Y$, the function $f' : x \mapsto \YinnerProd{ f(x), y }$ belongs to $\mathcal{H}_k$, and 
    $f' = \Psi_y f$ where $\Psi_y: \Hspace \to \mathcal H_k$ has operator norm $\Ynorm{y}$.
    \begin{proof}
        Define the bounded linear map $\Phi_y : \mathcal{H}_k \to \Hspace$ by $\Phi_y(g) := g \cdot y$, whre $(g \cdot y)(x) = g(x)y$. Since $\Hspace \cong \mathcal H_k \otimes \Yspace$, $\operatornorm{\Phi_y} = \Ynorm{y}$.

        For any $x \in X$, the reproducing property of $\Hspace$ and the identity
        $K(\cdot,x)\,y = k(\cdot,x)\,y = \Phi_y(k_x)$ give
        \[
            f'(x)
            = \YinnerProd{ f(x),\, y }
            = \HinnerProd{ f,\, k(\cdot,x)\,y }
            = \HinnerProd{ f,\, \Phi_y(k_x) }
            = \innerProd{ \Phi_y^\adjoint f,\, k_x }_{\mathcal{H}_k},
        \]
        so $f' = \Phi_y^\adjoint f \in \mathcal{H}_k$, with 
        \[
            \norm{f'}_{\mathcal{H}_k} \leq \operatornorm{\Phi_y^\adjoint} \Hnorm{f} = \Ynorm{y} \Hnorm{f}.
        \]
        Setting $\Psi_y := \Phi_y^\adjoint$ finishes the proof.
    \end{proof}
\end{Lemma}

\begin{Lemma}\label{Lemma:bounding_fX}
    \[
        \Ynorm{f(x)} \leq \Hnorm{f} \sqrt{k(x,x)},
    \]
    for all $x \in \Xspace$ and $f \in \Hspace$.
    More generally, if $K: \Xspace \times \Xspace \to \bounded(\Yspace)$ is not necessarily a SMO kernel, then
    \[
        \Ynorm{f(x)} \leq \Hnorm{f} \operatornorm{K(x,x)}^{1/2},
    \]

    \begin{proof}
See the proof of \citet[][Proposition 2]{carmeliVectorValuedReproducing2010}.
    \end{proof}
\end{Lemma}

\begin{Lemma} \label{Lemma:operator_on_Y_induces_operator_on_H}
    For $B \in \bounded(\Yspace)$ and $f \in \Hspace$, and define $\tilde B f: \Xspace \to \Yspace$ by 
    \[
        (\tilde B f)(x) = B (f(x)), \quad \forall x \in \Xspace.
    \]
    Then $\tilde B \in \bounded(\Hspace)$ and $\operatornorm{\tilde B} = \operatornorm{B}$.
    In particular, if $\proj: \Yspace \to \Yspace$ is an orthogonal projection,
    \[
        \Hnorm{\tilde \proj f} \leq \Hnorm{f}, \quad f \in \Hspace.
    \]
    \begin{proof}
        The proof uses \citet[][Example 5]{carmeliVectorValuedReproducing2010} and properties of tensor products of operators on Hilbert spaces. Details are left to the reader.
    \end{proof}
\end{Lemma}

The following result tells us that orthogonal operators on $\Yspace$ induce orthogonal operators on $\Hspace$.  
\begin{Lemma} \label{Lemma:orthogonal_operator_Hspace}
    For $R \in \mathcal{O}(\Yspace)$ and $f \in \Hspace$ define $\tilde R f: \Xspace \to \Yspace$ by
    \[
        (\tilde R f)(x) = R f(x), \quad x \in \Xspace.
    \]
    Then $\tilde R\in \orthogonal(\Hspace)$, and hence $\tilde R f \in \Hspace,  \Hnorm{\tilde R f} = \Hnorm{f}$ and $\operatornorm{\tilde R} = 1$.
    \begin{proof}
        By Lemma~\ref{Lemma:operator_on_Y_induces_operator_on_H}, $\tilde R \in \bounded(\Yspace)$.
       Recall that $\Hspace$ is the completion of linear combinations of elements of the form $K_x y$ with $x \in \Xspace, y \in \Yspace$. From the definition of $\tilde R$,
        \[
            \tilde R K_x y = K_x Ry,
        \]
        which implies 
        \begin{align*}
            \HinnerProd{\tilde R K_x y , \tilde R K_{x'} y' } & = \HinnerProd{ K_x R y ,  K_{x'}  Ry' } \\
                                                              & = k(x', x) \YinnerProd{ R y ,  Ry' } \\
                                                              & = k(x', x) \YinnerProd{ y ,  y' } \\
                                                              & = \HinnerProd{ K_x y ,  K_{x'} y' } 
        \end{align*}
        By linear extension, $\tilde R \in \orthogonal(\Hspace)$ is an isometry on $\Hspace$.
    \end{proof}
\end{Lemma}

The following Lemma helps constructing orthogonal operators that map specific subspaces to other subspaces while acting as the identity on some other subspaces.
\begin{Lemma} \label{Lemma:rotations}
    Let $\Yspace$ be a separable Hilbert space. Assume $U,V,W \subseteq \Yspace$ are finite-dimensional subspaces.
    If $U \subseteq V \cap W$ and $\dim(V) \leq \dim(W)$ then there exists $R \in \mathcal O(\Yspace)$ such that $R_{|U} = \id_U$ and $Rv \in W$ for all $v \in V$.
    \begin{proof}
        Set
        \[
            m_1 := \dim(U^{\perp}\cap V), \qquad
            m_2 := \dim(U^{\perp}\cap W).
        \]
        Since $U\subset V,W$ and the spaces are finite-dimensional, we have
        \[
            \dim V = \dim U + m_1, \qquad
            \dim W = \dim U + m_2,
        \]
        hence $m_1 \le m_2$.
        \medskip
        \noindent\emph{Step 1: case $m_1 = m_2$.}
        Assume $m_1 = m_2 =: m$.
        Choose orthonormal bases
        \[
            \{u_1,\dots,u_n\} \text{ of } U, \qquad
            \{v_1,\dots,v_m\} \text{ of } U^{\perp}\cap V, \qquad
            \{w_1,\dots,w_m\} \text{ of } U^{\perp}\cap W .
        \]
        Extend $\{u_1,\dots,u_n,v_1,\dots,v_m\}$ to an orthonormal basis of $\Yspace$ by adding
        a family $\{\tilde v_j\}_{j\in J}\subset V^{\perp}$.
        Similarly, extend $\{u_1,\dots,u_n,w_1,\dots,w_m\}$ to an orthonormal basis of $\Yspace$
        by adding a family $\{\tilde w_j\}_{j\in J}\subset W^{\perp}$ indexed by the same set $J$.

        Define $R:\Yspace\to \Yspace$ on these bases by
        \[
            Ru_i = u_i,\quad
            Rv_i = w_i \ (1\le i\le m),\quad
            R\tilde v_j = \tilde w_j \ (j\in J),
        \]
        and extend $R$ linearly and continuously.
        Then $R$ sends an orthonormal basis of $\Yspace$ onto another orthonormal basis of $\Yspace$,
        so $R$ is an orthogonal operator.

        For $u\in U$ we have $Ru=u$, hence $R_{\mid U}=\id_U$.
        If $v\in V$, write $v=u+v'$ with $u\in U$ and $v'\in U^{\perp}\cap V=\vspan\{v_1,\dots,v_m\}$.
        Then
        \[
            Rv = Ru + Rv' = u + Rv' \in U \oplus \vspan\{w_1,\dots,w_m\} = W,
        \]
        so $Rv \in  W$ in this case.

        \medskip
        \noindent\emph{Step 2: general case $m_1 < m_2$.}
        Choose any orthonormal vectors $v_{m_1+1},\dots,v_{m_2}$ in $V^{\perp}$.
        Since $V^{\perp}\subseteq U^{\perp}$, the family
        $\{v_1,\dots,v_{m_1},v_{m_1+1},\dots,v_{m_2}\}$ is an orthonormal subset of $U^{\perp}$.
        Define
        \[
            \widehat V := U \oplus \vspan\{v_1,\dots,v_{m_2}\}.
        \]
        Note that $V \subseteq \widehat V$.
        Then $U\subseteq \widehat V \cap W$, and $\dim \widehat V = \dim U + m_2 = \dim W$. 
        Applying Step~1 to the triple $(U,\widehat V,W)$ yields an orthogonal operator
        $R\in\mathcal O(\Yspace)$ such that $R_{\mid U}=\id_U$ and
        $R \hat v \in W$ for all $\hat v \in \widehat V$, thus $Rv \in W$ for all $v \in V$.
    \end{proof}

\end{Lemma}

\begin{Lemma} \label{Lemma:Uset_is_closed_and_nonempty_new}
  Let
  \[
    \widetilde \Uset
    :=
    \left\{
      f \in \Hspace \,\middle|\,
      \begin{array}{l}
        \operatorname*{ess\,sup}_X
        \Ynorm{f(\cdot)}
        < \pi 
      \end{array}
    \right\}.
  \]
  Provided $\sup_{x \in \Xspace} k(x,x) < \infty$,
  $\Uset$ is non-empty and closed, $\widetilde \Uset$ is open, and 
  $\Uset \subseteq \widetilde \Uset \subseteq \Hspace$.
  \begin{proof}
    The statements follows from Lemma~\ref{Lemma:bounding_fX}, and
    Lemma~\ref{Lemma:essential_sup}. Details are left to the reader.
  \end{proof}
\end{Lemma}

\begin{Lemma} \label{Lemma:subspace}
Let $\Hspace$ be a vector-valued RKHS of functions mapping set $\Xspace$ into some Hilbert space $\hy$ with kernel $K(x,x'):\hy \rightarrow \hy$, and consider a closed linear subspace $\tilde\hy \subset \hy$ (i.e.\ a sub-Hilbert-space). Then $\tilde \Hspace = \{\proj_{\tilde\hy}\circ f, f \in \Hspace\}$ is a vector-valued RKHS of functions from $\Xspace$ to $\tilde\hy$ and its kernel is given by $\tilde K(x,x')=\proj_{\tilde\hy}\circ K(x,x')\proj_{\tilde\hy}$.
In the special case of SMO kernels, $K(x,x') = k(x,x') \id_{\hy}$, $\tilde \Hspace$ is a Hilbert subspace of $\Hspace$ with SMO kernel $K(x,x') = k(x,x')\id_{\tilde \hy}$.

\begin{proof}
  First, we will show that $\tilde K$ is a kernel function, such that there exists some RKHS $F$ of functions $\Xspace\rightarrow\hy$ with kernel $\tilde K$ by the vector-valued version of Moore's theorem \citep[][Theorem 6.12]{paulsenIntroductionTheoryReproducing2016}. Then, we show that $F = \tilde H$.

Indeed -using definition 6.11 of \citet{paulsenIntroductionTheoryReproducing2016}- $\tilde K$ is a kernel functions since for any $x_1, \dots, x_n \in \Xspace$ and $\tilde y_1, \dots, \tilde y_n \in \tilde\hy$ we have
\begin{equation*}
    \sum_{i=1}^n \sum_{j=1}^n \langle \tilde y_i, \tilde K(x_i, x_j) \tilde y_j \rangle_{\hy} 
    = \sum_{i=1}^n \sum_{j=1}^n \langle \tilde y_i, \proj_{\tilde\hy} \circ K(x_i, x_j) \tilde y_j \rangle_{\hy} 
     \overset{\proj_{\tilde\hy} \text{ projection}}= \sum_{i=1}^n \sum_{j=1}^n \langle \tilde y_i, K(x_i, x_j) \tilde y_j \rangle_{\hy}
     \overset{K \text{ kernel}}\geq 0.
\end{equation*}
So, let $F$ be the corresponding RKHS. 
First, we show that $F \subseteq \tilde H$. For any $\tilde h\in F$ we have 
\begin{equation*}
    \tilde h = \lim_{n\rightarrow \infty} \tilde h_n
\end{equation*}
with $\tilde h_n = \sum_{i=1}^n \tilde K_{x_i} \tilde y_i$ for sequences $x_1,\dots \in \Xspace$ and $\tilde y_1, \dots \in \tilde\hy$.  
Defining $h_n = \sum_{i=1}^n K_{x_i} \tilde y_i \in \Hspace$, we have $\tilde h_n(x) = \proj_{\tilde \hy} \circ h_n(x) \in \tilde H$. Since $\tilde H= \ker(f\mapsto \proj_{\tilde \hy^\perp} \circ f)$ is closed, we also know that $\tilde h \in \tilde H$. \newline
To show $ \tilde H\subseteq F$, we use Theorem 6.23 of \citet{paulsenIntroductionTheoryReproducing2016}, stating that a function $(\proj_{\tilde\hy}f) \in \tilde H$ is in $F$ if and only if for all $x, \check x \in \Xspace$ and $\tilde y_1,\dots, \tilde y_n \in \tilde\hy$ it holds that $\sum_{i=1}^n \sum_{j=1}^n \langle \tilde y_i, (\proj_{\tilde\hy}f)(x) \otimes (\proj_{\tilde\hy}f)(\check x) \tilde y_j \rangle_{\tilde\hy} \leq \sum_{i=1}^n \sum_{j=1}^n \langle \tilde y_i, \tilde K(x, \check x) \tilde y_j \rangle_{\tilde\hy}$. We have
\begin{align*}
   \sum_{i=1}^n \sum_{j=1}^n \langle \tilde y_i, (\proj_{\tilde\hy}f)(x) \otimes (\proj_{\tilde\hy}f)(\check x) \tilde y_j \rangle_{\tilde\hy} &=   \sum_{i=1}^n \sum_{j=1}^n \langle \tilde y_i, \proj_{\tilde\hy}f(x) \rangle_{\hy} \langle \proj_{\tilde\hy}f(\check x) \tilde y_j \rangle_{\tilde\hy} = \sum_{i=1}^n \sum_{j=1}^n \langle \tilde y_i, f(x) \rangle_{\hy} \langle f(\check x) \tilde y_j \rangle_{\tilde\hy} \\
     \leq \sum_{i=1}^n \sum_{j=1}^n \langle \tilde y_i,  K(x_i, x_j) \tilde y_j \rangle_{\tilde\hy} &= \sum_{i=1}^n \sum_{j=1}^n \langle \tilde y_i, \proj_{\tilde\hy} K(x_i, x_j) \tilde y_j \rangle_{\tilde\hy} = \sum_{i=1}^n \sum_{j=1}^n \langle \tilde y_i, \tilde K(x_i, x_j) \tilde y_j \rangle_{\tilde\hy}
\end{align*}
Therefore, $\tilde H \subseteq F$. The last statement follows from the proof above noticing that if $\tilde y \in \tilde \hy$, then $ K(x,x')\tilde y=k(x,x')\tilde y \in \hy$, so $ k(x,x')\tilde y=\pi_{\tilde\hy}\circ k(x,x')\tilde y = \pi_{\tilde\hy}\circ K(x,x')\tilde y$.
\end{proof}
\end{Lemma}
\begin{Remark}
    Note that $\tilde H $ is not necessarily a subspace of $H$, unless the kernel preserves the subspace $\tilde \hy$, i.e. for all $x, \check x$ and $\tilde y \in \tilde \hy $,  $K(x, \check x)\tilde y \in \tilde \hy$.
\end{Remark}

\begin{Lemma}
 \label{Lemma:useful}
Let $x_1,\dots,x_n$ in $\Xspace$ and consider $\Hspace$, a RKHS with a vector-valued SMO kernel $K(x,x') = k(x,x') \id_{\hy}$. Then for any $f\in \Hspace$, there exist $\xi_1,\dots, \xi_n \in \vspan(f(x_1), \dots,f(x_n))\subset \hy$ such that:
\begin{align}
    f(x_j)=\sum_{i=1}^nK(x_i,x_j)\xi_i, \quad \text{ for all } j = 1, \dots, n
\end{align}

\begin{proof}
Take $\tilde \hy:=$ span($f(x_1), \dots,f(x_n))\subset \hy$ and $\tilde H:=\{\proj_{\tilde\hy}\circ f, f \in \Hspace\}$. Call 
\begin{align*}
A&:\tilde H\rightarrow \tilde \hy^n \\
\tilde g &\mapsto (\tilde g(x_1),\dots, \tilde g(x_n))
\end{align*}
where $\tilde \hy^n$ is the (product) Hilbert space with inner product $\langle \boldsymbol{y}, \boldsymbol{y}'\rangle_{\tilde \hy^n}= \sum_{i=1}^n\langle y_i, y_i'\rangle_{\Yspace}$. Then the adjoint operator $A^*:\tilde \hy^n \rightarrow \tilde H$ is, with $\boldsymbol{\xi}\in \tilde \hy^n$ and $\tilde g \in \tilde H$, determined by:
\begin{align*}
    \langle A\tilde g , \boldsymbol{\xi}\rangle_{\tilde \hy^n}= \sum_{i=1}^n\langle (A\tilde g)_i, \xi_i\rangle_{\Yspace} = \sum_{i=1}^n\langle \tilde g(x_i), \xi_i\rangle_{\Yspace} \overset{\text{Lemma} \ref{Lemma:subspace}}= \langle \tilde g, \sum_{i=1}^n \proj_{\tilde \hy} \circ K_{x_i}\xi_i\rangle_{\tilde H} =  \langle \tilde g, \sum_{i=1}^n K_{x_i}\xi_i\rangle_{\Hspace} 
\end{align*}
where in the last equality we used the fact that the kernel in $\tilde H$ coincides with the kernel in $\Hspace$ by Lemma \ref{Lemma:subspace}.
Therefore, $A^*\boldsymbol{\xi}= \sum_{i=1}^n k_{x_i}\xi_i$. Now consider the symmetric operator $B = AA^*:\tilde \hy^n \rightarrow \tilde \hy^n $,
\begin{align*}
    B\boldsymbol{\xi} = A(\sum_{i=1}^n K_{x_i}\xi_i)= (\sum_{i=1}^n K(x_i, x_j)\xi_i , j = 1,\dots,n).
\end{align*}
then we have:  $\tilde g(x_j)=\sum_{i=1}^nk(x_i,x_j)\xi_i \Leftrightarrow A\tilde g\in \image(B)$. 
We have that $\overline{\image(B)}=\overline{\image(AA^*)}=\ker(A^*)^\perp$ \citep[][Theorem 3.3.7 (iv)]{HsingEubank2015}. Since $\tilde\hy^n$ is finite-dimensional, 
 $\image(B)=\overline{\image(B)}$.
 Therefore it is sufficient to show that $A\tilde g\in \ker(A^*)^\perp$, so let $\boldsymbol{y}\in \ker(A^*)$
\begin{align*}
    \langle A\tilde g, \boldsymbol{y}\rangle_{\tilde \hy^n}= \langle\tilde g, A^*\boldsymbol{y}\rangle_{\tilde \hy^n}=0.
\end{align*}
Therefore we have that for any $\tilde f \in \tilde H$, there exist $\xi_1,\dots,\xi_n\in \tilde \hy$ such that $\tilde f(x_j)=\sum_{i=1}^nK(x_i,x_j)\xi_i$, in particular by construction for any $f \in \Hspace$, $f(x_j)=\proj_{\tilde\hy}\circ f(x_j)$ and $\proj_{\tilde\hy}\circ f\in \tilde H$, thus exist $\xi_1,\dots,\xi_n\in \tilde \hy$ such that:
\begin{align*}
    f(x_j)=\sum_{i=1}^nK(x_i,x_j)\xi_i.
\end{align*}
\end{proof}
\end{Lemma}

\begin{Lemma}\label{Lemma:dirctSum}
Let $\Xspace$ be a set, $\hy$ be a Hilbert space and call $\Hspace$ the vector-valued RKHS of functions from $\Xspace$ to $\hy$ with vector-valued SMO kernel $K(x,x') = k(x,x') \id_{\hy}$. Suppose we have $x_1,\dots,x_n \in \Xspace$, then any element $f\in \Hspace$ can be written as:
\begin{align*}
    f = \tilde{f}+ \sum_{j\geq 1} \alpha_j v_j,
\end{align*}
where $\tilde{f}=\sum_{i=1}^n K_{x_i}\xi_i $, $\xi_i\in \vspan(f(x_1), \dots, f(x_n))=\vspan(\tilde{f}(x_1), \dots, \tilde{f}(x_n)), \alpha_j\in \Real$ and $v_j(x_i)=0$ for all $i, j$.

\begin{proof}
 Consider the functional 
\begin{align*}
A&: \Hspace \rightarrow {\hy}^n\\
g &\mapsto (g(x_1),\dots,g(x_n)),
\end{align*}
then we can write $\Hspace=\ker(A) \oplus \ker(A)^\perp$, then $\ker(A)=\{g \in \Hspace , \;g(x_i)=0 \; \; \forall \;i=1,\dots,n\}$ is a closed linear sub-Hilbert space of $\Hspace$ and any element $g\in \ker(A)$ can be written as $g=\sum_{j\geq 1} \alpha_j v_j$ where $v_j$ is a CONS of $\ker(A)$. Then we have to prove that any element $\tilde{f}\in \ker(A)^\perp$ can be written as $\tilde{f} = \sum_{i=1}^n K_{x_i}\xi_i$ for some $\xi_i\in \vspan(f(x_1), \dots, f(x_n))$. By lemma \ref{Lemma:useful} we have that $\tilde{f}(x_j)=\sum_{i=1}^n K(x_i,x_j)\xi_i$, for some $\xi_i\in \vspan(\tilde{f}(x_1), \dots, \tilde{f}(x_n))$, thus consider $h=\tilde{f}- \sum_{i=1}^n K_{x_i}\xi_i $, we have that $\tilde{f}\in \ker(A)^\perp$ and also $\sum_{i=1}^n K_{x_i}\xi_i \in \ker(A)^\perp$ so $h\in \ker(A)^\perp$, however $h(x_i)=0$ for all $i=1,\dots,n$, thus $h\in \ker(A)$. Therefore $h$ must be 0. Additionally, we have that $\vspan(\tilde{f}(x_1), \dots, \tilde{f}(x_n))=\vspan(f(x_1), \dots, f(x_n))$, therefore $\xi_i\in \vspan(f(x_1), \dots, f(x_n))$, for all $i=1,\dots,n$.
\end{proof}
\end{Lemma}

\subsection{Technical results for effective dimensions}

In this section we assume that $(\sigma_j)_{j\geq 1}$ is a non-increasing positive and summable sequence. For $\lambda >0$ let
\[
    N(\lambda)\ :=\ \sum_{j\ge 1}\frac{\sigma_j}{\sigma_j+\lambda}.
\]

\begin{Lemma}
    \label{Lemma:effective_dimension_polynomial_decay}
    Assume $(\sigma_j)_{j\ge 1}$ is non-increasing and there exist constants $C>0$ and $p>1$ such that
    \[
        \sigma_j \le C j^{-p} \qquad \text{for all } j\ge 1.
    \]
    Then 
    \[
        N(\lambda) = O(\lambda^{-1/p}), \quad \text{for } \lambda \downarrow 0
    \]
    \begin{proof}
        Let $J \geq 1$ be an integer. Then,
        \[
            N(\lambda)\ =\ \sum_{j=1}^{J}\frac{\sigma_j}{\sigma_j+\lambda}\ +\ \sum_{j>J}\frac{\sigma_j}{\sigma_j+\lambda}
            \ \le\ J\ +\ \sum_{j>J}\frac{\sigma_j}{\lambda}.
        \]
        Using $\sigma_j\le C j^{-p}$, we get
        \[
            \sum_{j>J}\frac{\sigma_j}{\lambda}\ \le\ \frac{C}{\lambda}\sum_{j>J} j^{-p}.
        \]
        Since $p>1$,
        \[
            \sum_{j>J} j^{-p}\ \le\ \int_{J}^{\infty} x^{-p}\,dx\ =\ \frac{J^{1-p}}{p-1}.
        \]
        Therefore
        \[
            N(\lambda)\ \le\ J\ +\ \frac{C}{\lambda}\cdot \frac{J^{1-p}}{p-1}.
        \]
        For $\lambda\in(0,C]$  set
        \[
            J\ :=\ \left\lceil (C/\lambda)^{1/p}\right\rceil.
        \]
        Then $(C/\lambda)^{1/p} \leq J\le 1+(C/\lambda)^{1/p}$,
        and both terms are $O(\lambda^{-1/p})$.
    \end{proof}
\end{Lemma}

\begin{Lemma}
\label{Lemma:effective_dimension_stretched_exponential}
Assume $(\sigma_j)$ is non-increasing and $\sigma_j \le C\exp(-\alpha j^{1/q})$ for some $C,\alpha>0$ and $q > 0$.
Then $N(\lambda) = O( (\log(1/\lambda))^{q})$ as $\lambda \downarrow 0$.
\begin{proof}
    For $J\geq 1$ a integer, 
    \[
        N(\lambda)\le J + \lambda^{-1}\sum_{j>J}\sigma_j \le\ J + \frac{C}{\lambda}\sum_{j>J} e^{-\alpha j^{1/q}}.
    \]
    We bound the tail by an integral:
    \[
        \sum_{j>J} e^{-\alpha j^{1/q}}\ \le\ \int_{J}^{\infty} e^{-\alpha x^{1/q}}\,dx.
    \]
    \begin{enumerate}
        \item Consider first the case $q \in (0,1]$ in which case $p = 1/q \geq 1$
    and the tail integral becomes
    \[
        \sum_{j>J} e^{-\alpha j^{p}}\ \le\ \int_{J}^{\infty} e^{-\alpha x^{p}}\,dx.
    \]
    Substitute $u=\alpha x^p$ so $dx = \frac{1}{p}\alpha^{-1/p}u^{1/p-1}\,du$:
    \[
        \int_{J}^{\infty} e^{-\alpha x^p}\,dx
        =
        \frac{1}{p}\alpha^{-1/p}\int_{\alpha J^p}^{\infty} u^{1/p-1}e^{-u}\,du.
    \]
    When $p\ge 1$, the exponent $1/p-1\le 0$, so $u^{1/p-1}$ is decreasing; thus for $u\ge \alpha J^p$,
    $u^{1/p-1}\le (\alpha J^p)^{1/p-1}$. Hence
    \[
        \int_{\alpha J^p}^{\infty} u^{1/p-1}e^{-u}\,du
        \le
        (\alpha J^p)^{1/p-1}\int_{\alpha J^p}^{\infty} e^{-u}\,du
        =
        (\alpha J^p)^{1/p-1}e^{-\alpha J^p}.
    \]
    Therefore
    \[
        \sum_{j>J} e^{-\alpha j^p}
        \le
        \frac{1}{p}\alpha^{-1/p}(\alpha J^p)^{1/p-1}e^{-\alpha J^p}
        =
        \frac{1}{p\alpha} J^{1-p} e^{-\alpha J^p}.
    \]
    For $\lambda\in(0,C)$ set
    \[
        J\ :=\ \left\lceil \Big(\alpha^{-1}\log(C/\lambda)\Big)^{1/p}\right\rceil,
    \]
    hence $C\exp(-\alpha J^p)\le \lambda$.
    Plugging back,
    \[
        \frac{C}{\lambda}\sum_{j>J} e^{-\alpha j^p}
        \le
        \frac{C}{\lambda}\cdot \frac{1}{p\alpha} J^{1-p} e^{-\alpha J^p}
        =
        \frac{1}{p\alpha} J^{1-p}\cdot \frac{C e^{-\alpha J^p}}{\lambda}
        \le
        \frac{1}{p\alpha}J^{1-p},
    \]
    which is bounded by $1/(p\alpha)$ since $J\ge 1$ and $p\ge 1$.
    Thus $N(\lambda)\le J + C_0$ with $C_0:=1/(p\alpha)$, and the stated polylog rate follows from
    $J \asymp (\log(1/\lambda))^{1/p} = (\log(1/\lambda))^{q}$.
\item Now consider the case $q >1$.
    With the change of variables $u=\alpha x^{1/q}$, the tail integral becomes
    \[
        \int_{J}^{\infty} e^{-\alpha x^{1/q}}\,dx. = \frac q{\alpha^q} \Gamma(q, U)
    \]
    where $\Gamma(q,s) = \int_{s}^\infty e^{-u} u^{q-1} du$ is the incomplete Gamma function and $U = \alpha J^{1/q}$. As $J \to \infty$, we have
    \[
        \frac{\Gamma(q, U)}{U^{q-1} e^{-U}} \to 1,
    \]
    see \citet{temmeAsymptoticExpansionIncomplete1979}.
    Thus for $J$ large enough,
    \[
        N(\lambda) \leq J + \frac{C}{\lambda} \frac{q}{\alpha^q} ( U^{q-1} e^{-U} + 1)
    \]
    The choice $J = \lfloor (\alpha^{-1} \log(C/\lambda) ) ^{q} \rfloor$ yields $U \sim \log(C/\lambda)$  
    and thus as $\lambda \downarrow 0$,
    \[
        N(\lambda) = O( (\log(1/\lambda))^{q} ) + O( (\log(1/\lambda))^{q-1} \lambda ) = O( (\log(1/\lambda))^{q} )
    \]
    \end{enumerate}
    
\end{proof}
\end{Lemma}

\subsection{Technical results for the sphere}
\label{sec:technical_sphere}

This section contains some technical results for the sphere in a Hilbert space. These are mostly known and available in literature for the finite-dimensional sphere, but are harder to find for the infinite-dimensional sphere.
In the following,
we will write $\sinc(x) = \sin(x)/x$ if $x \neq 0$ and $\sinc(0)=1$. By using a Taylor expansion of $\sin(x)$, we notice that $\sinc$ is $\continuous^\infty$ on $\Real$.

\begin{Lemma} \label{Lemma:bound_ynorm_and_sphereDist}
    For any $q, \check q \in \sphere \subseteq \Yspace$,
    \[
        \Ynorm{q - \check q} \leq \sphereDist(q, \check q) \leq \frac{\pi}{2} \Ynorm{q - \check q}.
    \]
    \begin{proof}
        Let $\theta =  \sphereDist(q, \check q) = \arccos\innerProd{q , \check q}$, or equivalently $\cos \theta = \innerProd{q, \check q}$. Since $\Ynorm{q - \check q}^2 = 2 - 2 \cos \theta = 4 \sin^2(\theta/2)$, we get
        \[
            \Ynorm{q - \check q} = 2 \sin(\theta/2)
        \]
        Using the inequality $2 |\sin(t/2)| \leq |t|$ for all $t \in \Real$ and 
the inequality $\sin(t) \geq 2t/\pi$ for $t \in [0,\pi/2]$ finishes the proof.
        \end{proof}
\end{Lemma}

\begin{Lemma} \label{Lemma:rotations_and_sphere}
    For any orthogonal transformation $R \in \orthogonal(\Yspace)$, $\sphereDist(p, \check p) = \sphereDist(Rp, R\check p)$ and $R \Exp_p(v) = \Exp_{Rp}(Rv)$, for all $p \in \sphere, v \in T_p \sphere$. 
    \begin{proof}
        Note that $Rv \in T_{Rp} \sphere$ since $R$ is orthogonal.
        The proof follows directly from the formulas for $\sphereDist$ and $\Exp_p(v)$. Details are left to the reader. 
    \end{proof}
\end{Lemma}

\begin{Lemma} \label{Lemma:operator_norm_exp}

For all $p \in \sphere, v, w \in T_p \sphere$,
\begin{equation} \label{eq:differential_of_Exp_in_v}
    D \Exp_p(v)[w] = -\sinc \norm{v}  \innerProd{v,w} p + \frac{ \cos \norm{v} - \sinc \norm{v} }{\norm{v}^2} \innerProd{v,w} v + \sinc \norm{v} w.
\end{equation}
In particular,
$\operatornorm{ D \Exp_p(v) } = 1$.
\begin{proof}
Direct calculations yields \eqref{eq:differential_of_Exp_in_v}.
In particular, 
\[
    D \Exp_p(v)[v] = -\sin( \norm{v} ) \norm{v} p + \cos \norm{v} v
\]
and since $\innerProd{p,v}=0$,  $\norm{D \Exp_p(v)[v]}^2 =  \norm{v}^2$ and thus $\operatornorm{D \Exp_p(v)} \geq 1$ for $v \neq 0$.
For $v=0, \; D \Exp_p(0) = \id$ and hence  $\operatornorm{D \Exp_p(0)} = 1$.
Let us assume $v \neq 0$ from now on.
If $w \in T_p \sphere$ and $\innerProd{w,v} = 0$ then 
\[
    D \Exp_p(v)[w] = \sinc( \norm{v} ) w
\]
and hence $\norm{D \Exp_p(v)[w]}^2 \leq  \norm{w}^2$. Note also that $\innerProd{D \Exp_p(v)[v], D \Exp_p(v)[w]} = 0$.
    For $u \in T_p \sphere$, writing $u = \alpha v + w$ where $\alpha \in \Real, \innerProd{w,v} = 0$, we get
\[
D \Exp_p(v)[u] = \alpha D \Exp_p(v)[v] + D \Exp_p(v)[w],
\]
and 
\[
    \norm{D \Exp_p(v)[u]}^2 = \alpha^2 \norm{ D \Exp_p(v)[v]}^2 + \norm{D \Exp_p(v)[w]}^2 \leq \alpha^2 \norm{v}^2 + \norm{w}^2 = \norm{u}^2.
\]

Thus $\operatornorm{D \Exp_p(v)} = 1$ if $v \neq 0$. This concludes the proof.
\end{proof}
\end{Lemma}

\begin{Lemma} \label{Lemma:distance_and_exp_fixed_p}
 Let $p\in \sphere$, and let $u, u' \in T_p \sphere$. Then,
    \[
        \sphereDist( \Exp_p(u),\Exp_p(u') ) \leq \norm{u-u'}.
    \]
    \begin{proof}
        Define
        \[
            \gamma(t) = u + t(u'-u), \quad t\in[0,1].
        \]
        Noting that $\norm{\gamma(t)} < \pi$ for all $t \in [0,1]$,
        let
        \[
            \sigma(t) = \Exp_p(\gamma(t)), \quad t\in[0,1].
        \]
        Then $\sigma$ is a $\continuous^1$ curve joining $\Exp_p(u)$ to $\Exp_p(u')$, and its length is
        \[
            L(\sigma)
            = \int_0^1 \|\dot{\sigma}(t)\|\,dt
            = \int_0^1 \norm{ D\Exp_p(\gamma(t))[\dot{\gamma}(t)]} \,dt.
        \]
        Using the operator norm, 
        \[
            L(\sigma)
            \le \int_0^1 \operatornorm{D\Exp_p(\gamma(t))}
            \,\norm{\dot{\gamma}(t)}\,dt.
        \]
        Since $\dot{\gamma}(t) = u'-u$, Lemma~\ref{Lemma:operator_norm_exp} implies
        \[
            L(\sigma) \le \int_0^1 \norm{u'-u} \,dt = \norm{u - u'}.
        \]
        By definition of the Riemannian distance,
        \[
            \sphereDist(\Exp_p(u),\Exp_p(u'))
            \le L(\sigma)
            \le \norm{u-u'}
        \]
    \end{proof}
\end{Lemma}

\begin{Lemma} \label{Lemma:gradient_of_squared_distance}
    Let $y \in \sphere$ and $\phi: \sphere \to \Real$ be defined by $\phi(x) = \sphereDist^2(x,y)$. Then for $x \neq -y$,
    $\nabla \phi (x) = -2\Log_x(y)$.
    \begin{proof}
        The proof can be done by explicitly calculating the gradient of $\phi$ from the formula of $\sphereDist$. Details are left to the reader.
    \end{proof}
\end{Lemma}

Recall the definition of $\ell_y$ from \eqref{eq:ell_function}.
\begin{Lemma} \label{Lemma:gradient_l_v}
    Provided  $\Exp_{\origin}(v) \neq -y$,
    \[
         \nabla \ell_y(v) = -2 ( D \Exp_\origin(v) )^\adjoint \Log_{\Exp_\origin(v)} (y) \in T_\origin \sphere \quad \text{and} \quad \norm{\nabla \ell_y(v)} \leq 2\pi.
    \]
    \begin{proof}
        The proof follows by using the chain rule, and using Lemmas~\ref{Lemma:operator_norm_exp} and~\ref{Lemma:gradient_of_squared_distance}. Details are left to the reader.
    \end{proof}
\end{Lemma}

\subsection{Gradient computation}\label{subsec:grad_computation}

To compute our estimator, the goal is to minimize the empirical risk $\risk_n(f, \lambda_n)$ with respect to $f$. By Theorem~\ref{theorem:existence_of_global_minimizer_unconstrained}, assuming that a unique minimizer of $\risk_n(\cdot, \lambda_n)$ exists, we can search for the optimal $f$ by finding the optimal coefficients $\xi_1,\ldots, \xi_n \in \vspan( \Log_\origin(Y_1), \ldots, \Log_\origin(Y_n) )$.
Letting $\bxi = (\xi_1, \ldots, \xi_n)$, the reparametrized empirical risk is $R_n( \bxi) := \risk_n( f_\bxi, \lambda_n)$ where $f_\bxi(\cdot) = \sum_{i=1}^n K(x_i, \cdot) \xi_i$, so
\begin{align*}
    R_n(\boldsymbol{\xi})&= \frac{1}{n}\sum_{i=1}^n (\arccos(\langle\Exp_\origin f_\bxi(x_i), y_i \rangle_{\Yspace}) )^2 + \lambda_n^2 \Hnorm{f_\bxi}^2.
\end{align*}
The strategy to compute the gradient of $ R_n(\boldsymbol{\xi})$ with respect to one $\xi_l \in T_\origin\sphere$, $l \in \{1, \dots, n\}$, is to first consider the smooth extension $ \bar R_n(\boldsymbol{\xi})$ defined on the ambient space $(\Yspace)^n$, and then project the gradient onto $T_\origin \sphere$. In the following, we denote $\mu_i = \Exp_\origin (f_\bxi(x_i))$ and $K_{il} = K(x_i, x_l)$. 

For fixed \(i\), consider the function
$\mu \mapsto \big(\arccos\langle\mu,y_i\rangle_{\Yspace}\big)^2,$ $ \mu\in\Yspace.$
Its Fréchet derivative at \(\mu_i\) in direction \(v\in\Yspace\) is
\begin{align*}
    D_{\mu_i} \left[ (\arccos(\langle\mu_i , y_i \rangle_{\Yspace}) )^2 \right]  [v]= - 2 \frac{\arccos \langle \mu_i, y_i \rangle}{\sqrt{1-(\langle \mu_i, y_i \rangle)^2}} \langle v, y_i \rangle
\end{align*}
Using the explicit expression of \(\Exp_\origin\) on the sphere, one obtains
for the ambient differential (computed as a map between open subsets of
\(\Yspace\)):

\begin{align*}
D_{\xi_l} \left[ \Exp_\origin \left(\sum_{j=1}^n K_{ij}\xi_j\right) \right] [v] &= \sum_{i = 1}^n \big( - \sin\| f_\bxi(x_i)\| \frac{\langle f_\bxi(x_i), K_{il} v \rangle }{\|f_\bxi(x_i)\|}  \origin+ \cos\| f_\bxi(x_i)\| \frac{\langle f_\bxi(x_i), K_{il} v \rangle }{\|f_\bxi(x_i)\|^2} f_\bxi(x_i) \\
&+ \frac{\sin\| f_\bxi(x_i)\| }{\| f_\bxi(x_i)\|} \left[K_{il} v- f_\bxi(x_i) \frac{\langle K_{il} v, f_\bxi(x_i)\rangle}{\| f_\bxi(x_i)\|^2}\right]\big)
\end{align*}
Consider the smooth extension $\overline R_n : (\Yspace)^n \rightarrow \Real$ of $R_n: (T_\origin \sphere)^n \rightarrow \Real$ defined by 
\begin{align*}
    \overline R_n(\boldsymbol{\xi})&= \frac{1}{n}\sum_{i=1}^n (\arccos(\langle\Exp_\origin \sum_{j=1}^n K_{ij}\xi_j, y_i \rangle_{\Yspace}) )^2 + \lambda_n^2 \Hnorm{\sum_{j=1}^n K_{j}\xi_j}^2,
\end{align*}
Then, the gradient of $\overline R_n$ is given by 
\begin{align*}
D_{\xi_l}   \overline R_n (\xi_l) [v] = - 2 \sum_{i = 1}^n  \frac{\arccos \langle \mu_i, y_i \rangle}{\sqrt{1-(\langle \mu_i, y_i \rangle)^2}} \langle D_{\xi_l} \mu_i(\xi_l)[v] , y_i \rangle + 2\lambda_n^2 f_\bxi(x_l) 
\end{align*}
and the gradient of $R_n$ is given by 
\begin{align*}
    \nabla_{\xi_l} R_n (\xi_l) = \proj_\origin(\nabla_{\xi_l} \overline R_n (\xi_l) ).
\end{align*}
Finally, the gradient 
$\nabla R_{n} (\boldsymbol{\xi}) \in (T_\origin\sphere)^n$ is given by 
\begin{align} \label{eq:gradxi}
\nabla_{\xi_l} R_n (\bxi) = 2 \sum_{i=1}^n \left[-  \frac{K_{il} \arccos(\langle \mu_i, y_i\rangle_{\Yspace})}{\sqrt{1- \langle \mu_i, y_i\rangle_{\Yspace}^2}} \proj_\origin[ \gamma_i] \right] + 2\lambda_n^2 f_\bxi(x_l) ,
\end{align}
where 
\begin{align*}
\gamma_i
=
\frac{\sin\|f_\bxi(x_i)\|_{\Yspace}}{\|f_\bxi(x_i)\|_{\Yspace}}
\!\left(
-\left(\langle \origin, y_i\rangle_{\Yspace}
+ \frac{\langle f_\bxi(x_i),y_i\rangle_{\Yspace}}{\|f_\bxi(x_i)\|^2}\right) f_\bxi(x_i)
+ y_i
\right)
+
\frac{\cos\|f_\bxi(x_i)\|_{\Yspace}}{\|f_\bxi(x_i)\|_{\Yspace}^2}
\langle f_\bxi(x_i),y_i\rangle_{\Yspace}\, f_\bxi(x_i).
\end{align*}
Note that 
\begin{align*}
    \proj_\origin[ \gamma_i] = \frac{\sin\|f_\bxi(x_i)\|_{\Yspace}}{\|f_\bxi(x_i)\|_{\Yspace}} \Big(-\Big(\langle \origin, y_i\rangle_{\Yspace} + \frac{\langle f_\bxi(x_i), y_i\rangle_{\Yspace}}{\|f_\bxi(x_i)\|_{\Yspace}^2} \Big)f_\bxi(x_i) +\proj_\origin[ y_i] \Big)  + \frac{\cos\|f_\bxi(x_i)\|_{\Yspace}}{\|f_\bxi(x_i)\|_{\Yspace}^2} \langle f_\bxi(x_i), y_i\rangle_{\Yspace} f_\bxi(x_i).
\end{align*}

Another point of view would be to leverage Theorem~\ref{theorem:existence_of_global_minimizer_unconstrained} and write $\xi_l = \sum_{j=1}^n a_{lj} \Log_\origin(y_j)$. Letting $A$ be the $n \times n$ matrix with $(A)_{lj} = a_{lj}$, we get the empirical risk
\[
    \widetilde R_n(A) := \risk_n(f_A, \lambda_n)
\]
where $f_A = \sum_{l=1}^n K(x_l, \cdot) \sum_{j=1}^n a_{lj} y_j$.
The partial derivative of $ \widetilde R_n$ with respect to the coefficient $ a_{l,j}$ is given by
\[
\frac{\partial \widetilde R_n}{\partial a_{l,j}}
\;=\;
\left\langle \nabla_{\xi_l}R_n ,\, \frac{\partial \xi_l}{\partial a_{l,j}} \right\rangle_{\Yspace}
\;=\;
\left\langle \nabla_{\xi_l}R_n ,\, \Log_\origin (y_j) \right\rangle_{\Yspace}.
\]
Using the explicit expression of $\nabla_{\xi_l}R_n$ in~\eqref{eq:gradxi}, we obtain
\begin{align*}
\frac{\partial\widetilde R_n}{\partial a_{l,j}}
=\;2\sum_{i=1}^n 
\left\{
-\,\frac{K_{il}\,\arccos(\langle \mu_i , y_i \rangle_{\Yspace})}
{\sqrt{1-\langle \mu_i , y_i \rangle_{\Yspace}^{\,2}}}
\left\langle 
\proj_\origin\big[ \gamma_i \big],\,
\Log_\origin (y_j)
\right\rangle_{\Yspace}
\right\}
\;+\;
2\lambda_n^2 \,\langle f_A (x_l),\Log_\origin (y_j)\rangle_{\Yspace}.
\end{align*}

Hence, for each $l\in\{1,\dots,n\}$, the gradient of $\widetilde R_n$ with respect to the coefficient vector 
$(a_{l,1},\dots,a_{l,n})$ is obtained by projecting $\nabla_{\xi_l}R_n$ onto the basis 
$\{\Log_\origin (y_j)\}_{j=1}^n$ of $\Yspace$.

\subsection{Results for identifying regions of local convexity}

\begin{Proposition}
	\label{proposition:ell_Hessian_computation}
	For $y, \origin \in \sphere$, the Hessian $\Hess_{\ell_y}(v)$ of $\ell_y(v)$ at $v\in T_\origin \sphere$ depends on $y$ and $v$ only via
	$\zeta = \sphereDist(y, \origin)$, $\langle v, y \rangle_\Yspace$ and $\|v\|_\Yspace$, with the explicit form given in the proof. 
\end{Proposition}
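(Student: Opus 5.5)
We compute the Hessian of $\ell_y = \phi\circ\Exp_\origin$ explicitly, where $\phi(s)=\arccos^2(\innerProd{s,y})$ is extended to the ambient space $\Yspace$. Write $r=\norm{v}$, $a=\innerProd{v,y}$, $b=\innerProd{\origin,y}=\cos\zeta$, and $g(v)=\innerProd{\Exp_\origin(v),y}=\cos r\, b+\sinc(r)\, a$. Then $\ell_y(v)=h(g(v))$ with $h(t)=\arccos^2 t$, which is smooth on $(-1,1)$. By the chain rule,
\[
\Hess_{\ell_y}(v)[w,w] = h''(g)\,(Dg(v)[w])^2 + h'(g)\,D^2 g(v)[w,w], \qquad w\in T_\origin\sphere,
\]
so it suffices to compute $Dg$ and $D^2g$ on $T_\origin\sphere$.

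For the gradient of $g$, we differentiate $r=\norm{v}$ and $a=\innerProd{v,y}$. This gives $\nabla r=v/r$ and $\nabla a=\proj_\origin y$; note that $\proj_\origin y$ lies in $T_\origin\sphere$ and has norm $\sin\zeta$. Hence
\[
\nabla g(v)=\bigl(-\sin r\, b+\sinc'(r)\, a\bigr)\frac{v}{r}+\sinc(r)\,\proj_\origin y .
\]
This agrees with the formula for $D\Exp_\origin(v)$ in Lemma~\ref{Lemma:operator_norm_exp}, which we use as a consistency check. Differentiating once more, the Hessian of $g$ is a linear combination of the operators $\id$, $v\tensor v$, $v\tensor\proj_\origin y$, and $\proj_\origin y\tensor v$. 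The coefficients are smooth functions of $r$, multiplied by $a$ or $b$. The $\id$ term arises from $\nabla(v/r)=(\id - v\tensor v/r^2)/r$. So $\Hess_{\ell_y}(v)$ is a combination of $\id$ and rank-one terms built from $v$ and $\proj_\origin y$. Every scalar coefficient is a function of $(r,a,b)$ alone, that is, of $(\norm{v},\innerProd{v,y},\zeta)$, because $b=\cos\zeta$ and $g$ is itself a function of $(r,a,b)$. We then record these coefficients explicitly; this is the "explicit form given in the proof". It also shows that on $\vspan(v,\proj_\origin y)^\perp\cap T_\origin\sphere$ the Hessian acts as the scalar $h'(g)\, D_r$-term times $\id$. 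Together with the two-dimensional block on $\vspan(v,\proj_\origin y)$, this yields the $3\times3$ reduction mentioned before Theorem~\ref{theorem:ell_Hessian_new}.

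The main obstacle is the removable singularities at $r=0$ and at $g=\pm1$. At $r=0$, we express the coefficients through $\sinc$ and its derivatives, together with functions such as $(\cos r-\sinc r)/r^2$. These are smooth even functions, so the formula extends continuously to $r=0$. When $\Exp_\origin(v)=y$, we have $g=1$, and $h'(t)=-2\arccos t/\sqrt{1-t^2}$ is singular as written. We resolve this by noting that $h$ is in fact analytic at $t=1$: $\arccos^2 t$ is analytic near $1$ since $\arccos^2(\cos\theta)=\theta^2$. Therefore $h'$ and $h''$ extend smoothly, with $h'(1)=-2$ and $h''(1)=2/3$. Provided $\Exp_\origin(v)\neq -y$, the Hessian is thus well defined and depends only on the three stated quantities.
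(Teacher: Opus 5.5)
Your proposal is correct and follows essentially the paper's route: the chain rule through $\phi(v)=\langle \Exp_\origin v, y\rangle$ and $t\mapsto\arccos^2 t$, which writes the Hessian as a multiple of $\id$ plus rank-one terms whose coefficients are functions of $(\|v\|,\langle v,y\rangle,\cos\zeta)$. The only difference is a minor one. The paper computes the Hessian of the ambient extension $\overline\ell_y$ on $\Yspace$, with rank-one terms in $v$ and $y$, and then compresses it by $\proj_\origin$. You instead work directly on $T_\origin\sphere$ with $\proj_\origin y$, which gives the intrinsic two-dimensional block exactly, since its Gram data $\|v\|^2$, $\langle v,\proj_\origin y\rangle=\langle v,y\rangle$ and $\|\proj_\origin y\|^2=\sin^2\zeta$ are precisely the three stated quantities. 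You also make explicit why the removable singularities at $\|v\|=0$ and at $\langle \Exp_\origin v,y\rangle=1$ are harmless.
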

\begin{proof}
	In this proof, we will write $\|\cdot\|$ for $\|\cdot\|_\Yspace$ and $\langle\cdot,\cdot\rangle$ for $\langle\cdot,\cdot\rangle_\Yspace$ to simplify the notation, since there is no risk of confusion. 
	
	To compute the Hessian of $\ell_y: T_\origin \sphere \rightarrow \Real$, we will first compute the Hessian of the smooth extension
		\begin{align*}
			\overline \ell_y& : \Yspace \rightarrow \Real, \quad \overline \ell_y(v) =\arccos^2 \langle \cos \|v\| \origin + \sin \|v\|  \, \frac{v}{\|v\| }, y \rangle.
		\end{align*}
  Since the domain of $\ell_y$, $T_\origin \sphere$, is a linear subspace of $\Yspace$, the Hessian $\Hess_{\ell_y}(v)$ corresponds to the orthogonal projection of the Hessian $\Hess_{\overline\ell_y} (v)$ on $T_\origin \sphere$, 
		\begin{align}\label{eq:project_Hessian}
			\Hess_{\ell_y}(v) &= \proj_\origin \Hess_{\overline\ell_y} (v)\proj_\origin,
		\end{align}
where $\proj_\origin = \id- \origin \otimes \origin $, with $\id$ the identity operator on $\Yspace$.
		Now we compute $\Hess_{\overline\ell_y} (v)$. Denote $\phi(v) = \langle\Exp_\origin v,y\rangle$
		\begin{align*}
			\nabla\phi (v) &= v \Big( - \frac{\sin(\nv)}{\nv} \py + \frac{\cos(\nv)}{\nv^2}  \vy - \frac{\sin(\nv)}{\nv^3} \vy \Big) + \frac{\sin(\nv)}{\nv} y. \\
			\nabla\overline \ell_y(v) &= -2 \underbrace{ \frac{\arccos(\langle\Exp_\origin v,y\rangle)}{\sqrt{1-(\langle\Exp_\origin v,y\rangle)^2}} }_{h(v)}   \nabla\phi(v)\\
			\nabla h(v) & = \Big(\frac{-1}{1-(\langle\Exp_\origin v,y\rangle)^2} + \frac{\langle\Exp_\origin v,y\rangle \arccos(\langle\Exp_\origin v,y\rangle) }{(1-(\langle\Exp_\origin v,y\rangle)^2)^{3/2}} \Big)    \nabla\phi(v)\\
			\Hess_{\phi} (v) & = \Big( - \frac{\sin(\nv)}{\nv} \py + \frac{\cos(\nv)}{\nv^2}  \vy - \frac{\sin(\nv)}{\nv^3} \vy \Big) \text{Id} \\ &+\nabla_v \Big( \underbrace{- \frac{\sin(\nv)}{\nv} \py + \frac{\cos(\nv)}{\nv^2}  \vy - \frac{\sin(\nv)}{\nv^3} \vy }_{g(v)}\Big)\otimes v \\ &+ \left(-\frac{\sin(\nv)}{\nv^3}  + \frac{\cos(\nv)}{\nv^2} \right)v\otimes y 
		\end{align*}
		\begin{align*}
			\nabla g(v)\otimes v  & = \left(  \vy\left(  -  \frac{3\cos\nv}{\nv^4} -\frac{\sin\nv}{\nv^3} + \frac{3\sin\nv}{\nv^5}  \right) -\py \Big( \frac{\cos(\nv)}{\nv^2}  -  \frac{\sin(\nv)}{\nv^3}\Big)  \right) v \otimes v \\ &+ \left(-\frac{\sin(\nv)}{\nv^3}  + \frac{\cos(\nv)}{\nv^2} \right)y \otimes v \\
			\Rightarrow \Hess_{\overline\ell_y} (v) & = 2\Big( \frac{1}{1-(\langle\Exp_\origin v,y\rangle)^2} - \frac{\langle\Exp_\origin v,y\rangle \arccos(\langle\Exp_\origin v,y\rangle) }{(1-(\langle\Exp_\origin v,y\rangle)^2)^{3/2}} \Big) \times\\ &\nabla\phi (v) \otimes  \nabla\phi (v)  - 2 \left(\frac{\arccos(\langle\Exp_\origin v,y\rangle)}{\sqrt{1-(\langle\Exp_\origin v,y\rangle)^2}}   \right) \Hess_{\phi}(v).
		\end{align*}
		Therefore 
		\begin{align*}
			\Hess_{\overline\ell_y} (v) &= c_1v\otimes v+ c_2(v\otimes y + y\otimes v)+ c_3 y\otimes y+ c_4 \; \id,
		\end{align*}
		where $c_i = c_i(\nv, \vy, \py)$ are defined as
		\begin{align*}
			& c_1(\nv, \vy, \py) =  2\nv^2 A \eta^2  - 2B \left(  \vy\left(  -  \frac{3\cos\nv}{\nv^2} -\frac{\sin\nv}{\nv} + \frac{3\sin\nv}{\nv^3}  \right) +\nv\py \gamma \right)\\
			&c_2(\nv, \vy, \py)= - 2 A \eta  \sin(\nv)+ 2 B  \gamma; \quad c_3 (\nv, \vy, \py) = 2  A  \left(\frac{\sin(\nv)}{\nv} \right)^2 \\
			&c_4(\nv, \vy, \py)=  2 B \eta
      \intertext{where}
			&A = A(\nv, \vy, \py)= \frac {1}{1-\psi ^2} - \frac{\psi \arccos(\psi )}{(1-\psi ^2)^{3/2}}, \\ 
      &B = B(\nv, \vy, \py) = \frac{\arccos(\psi )}{\sqrt{1-\psi ^2}}, \\
			&\psi = \psi(\nv, \vy, \py) =  \frac{\sin(\nv)}{\nv} \vy + \cos(\nv)  \py, \\
			& \eta = \eta(\nv, \vy, \py) =  \left( \frac{\sin(\nv)}{\nv^2} - \frac{\cos(\nv)}{\nv} 
            \right)\frac{\vy}{\nv} +\frac{\sin(\nv)}{\nv}  \py, \\
			& \gamma = \gamma(\nv, \vy, \py)  = \frac{\sin(\nv)}{\nv^2} - \frac{\cos(\nv)}{\nv}.  
		\end{align*}
		
    Later result will seek to establish the positive-definiteness of $\Hess_{\ell_y}$ as an operator on $T_\origin \sphere$, and using \eqref{eq:project_Hessian}, it is enough to show that $\Hess_{\bar{\ell}_y}$ is positive definite as an operator on $\Yspace$. To help with this task, we will now express $\Hess_{\bar{\ell}_y}$ in another basis.
		Let
		$e_1 = y$ and $e_2 = \frac {v-\nv\alpha_1  e_1} {\nv\alpha_2}$ 
		where $\nv\, \alpha_1 = \langle v, y \rangle$ and $ \nv\, \alpha_2 =  \sqrt{\|v\|^2-\langle v, y \rangle^2}$. Notice that $e_1, e_2$ are orthogonal and of unit norm unless $\nv=|\langle v,y\rangle|$. 
        We shall for the time being assume that $\nv\neq|\langle v,y\rangle|$. 
		Complete $e_1, e_2$ into a complete orthonormal sequence $(e_j)_{j \geq 1} \subset \Yspace$ and let $E: \ell_2(\mathbb N) \to \Yspace$ be the orthogonal operator mapping the $j$th canonical basis vector of $\ell_2(\mathbb N)$ to $e_j$. 
        
		We can rewrite the Hessian as
		\begin{align}\label{eq:Hessian}
			\Hess_{\overline\ell_y} (v) = E \left(c_1 \begin{bsmallmatrix} 
				\alpha_1^2 & \alpha_1 \alpha_2 & 0 &  \dots  \\
				\alpha_1\alpha_2 & \alpha_2 ^2 &  0 & \dots  \\
				0 & 0 & 0 & \dots  \\ 
				\vdots & \vdots & \vdots & \ddots 
			\end{bsmallmatrix} + c_2 \begin{bsmallmatrix} 
				2 \alpha_1 & \alpha_2 & 0 &  \dots  \\
				\alpha_2 & 0 &  0 & \dots  \\
				0 & 0 & 0 & \dots  \\
				\vdots & \vdots & \vdots & \ddots 
			\end{bsmallmatrix} + c_3 \begin{bsmallmatrix} 
				1 & 0 &  \dots  \\
				0 & 0 & \dots  \\
				\vdots & \vdots & \ddots
			\end{bsmallmatrix} + c_4 \; \text{Id} \right) E^*
		\end{align}
		smoothly extending at $\|v\|=0$, and provided $\nv\neq|\langle v,y\rangle|$. 
        If $\nv=|\langle v,y\rangle|$,
        the Hessian reduces to
		\[
		(c_1 \alpha_1^2 + 2 c_2 \alpha_1 + c_3 + c_4) y \tensor y + c_4 \id.
		\]
		
		\smallskip
		
		Accordingly, \(\Hess_{\overline\ell_y}(v)\)---and hence, by \eqref{eq:project_Hessian}, also \(\Hess_{\ell_y}(v)\) and its spectrum---depends
		on \(y\) and \(v\) only through the three scalar quantities
		\[
		\zeta=\sphereDist(\origin,y)=\arccos\, \langle y, \origin\rangle,
		\qquad
		\|v\|,
		\qquad
		\langle v,y\rangle.
		\]
\end{proof}

\begin{Corollary}
	\label{corollary:ell_Hessian_decomposition}
	The Hessian $\Hess_{\ell_y}(v)$ depends on $y$ and $v$ only via $\zeta = \sphereDist(y, \origin)$ and 
	\begin{align}
		\label{eq:Phi_map}
		\Phi_y(v) = \begin{bmatrix}
			\langle v, u\rangle_\Yspace \\
      \\
			( \|v\|_\Yspace^2 - \langle v, u\rangle_\Yspace^2 )^{1/2} 
		\end{bmatrix}								 
	\end{align}
	with $u = \frac{\Log_{\origin} y}{\|\Log_{\origin} y\|}$  if  $y \notin \{\origin, - \origin\}$ and $u = \origin$ otherwise.
	In particular, also the smallest eigenvalue $\nu_{y}(v) \in \bar{\Real} = \Real \cup \{-\infty, \infty\}$ of $\Hess_{\ell_y}(v)$ depends on $y$ and $v$ only via $\zeta$ and $\Phi_y(v)$, such that we can write $\nu_y(v) = g_\zeta \circ \Phi_y(v)$ for functions $g_\zeta : \Real^2 \to \bar{\Real}$, 
	yielding the decomposition
	\begin{equation} 
		\label{eq:psiR2}
		\begin{tikzcd}
			T_\origin\sphere  \arrow[rr, "\nu_y"] \arrow[dr, "\Phi_y"', " "] & & \bar{\Real} \\
			& \Real^2 \arrow[ur, "g_\zeta"'] &
		\end{tikzcd}
	\end{equation}
\end{Corollary}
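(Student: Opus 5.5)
The plan is to derive the corollary from Proposition~\ref{proposition:ell_Hessian_computation} by rewriting its three invariants $(\zeta,\|v\|,\langle v,y\rangle)$ in terms of $\zeta$ and $\Phi_y(v)$. The key observation is that for $v\in T_\origin\sphere$ we have $\langle v,\origin\rangle=0$.

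First, split off the tangential part of $y$. Write $y=\cos\zeta\,\origin+\proj_\origin y$. If $y\notin\{\origin,-\origin\}$, then $\proj_\origin y=\sin\zeta\,u$ with $u=\Log_\origin y/\|\Log_\origin y\|$, since $\Log_\origin y$ is a positive multiple of $\proj_\origin y$ and $\|\proj_\origin y\|=\sin\zeta$. Hence
\[
\langle v,y\rangle=\sin\zeta\,\langle v,u\rangle .
\]
If $y=\pm\origin$, then $\langle v,y\rangle=0$ for every tangent $v$. With the convention $u=\origin$ we also get $\langle v,u\rangle=0$, so again $\langle v,y\rangle=\sin\zeta\,\langle v,u\rangle$, where $\sin\zeta=0$.

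Second, recover $\|v\|$ and $\langle v,y\rangle$ from $\zeta$ and $\Phi_y(v)=(a,b)$. We have $\|v\|=(a^2+b^2)^{1/2}$, which is well defined because Cauchy--Schwarz gives $\|v\|^2\ge\langle v,u\rangle^2$, so $b\ge0$ is real. We also have $\langle v,y\rangle=a\sin\zeta$. Therefore the triple $(\zeta,\|v\|,\langle v,y\rangle)$ is a function of $(\zeta,\Phi_y(v))$.

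Third, conclude. By the proposition, $\Hess_{\ell_y}(v)$ is given by the explicit formula in its proof, built from the coefficients $c_i$ and from $\proj_\origin$ applied to combinations of $v$ and $y$. The operator itself depends on the actual vectors $v$ and $y$, but its spectrum depends only on the $c_i$ and on the $2\times2$ (or $3\times3$) Gram data of $\{v,y\}$. This follows from the representation \eqref{eq:Hessian}: the operator is $E$-conjugate to a fixed matrix with entries depending only on $c_i,\alpha_1,\alpha_2$, and the orthogonal conjugation preserves the spectrum.

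More precisely, $\Hess_{\ell_y}(v)$ has the form $\proj_\origin(c_1v\otimes v+c_2(v\otimes y+y\otimes v)+c_3y\otimes y+c_4\id)\proj_\origin$. Since $\proj_\origin v=v$ and $\proj_\origin y=\sin\zeta\,u$, this equals
\[
c_1v\otimes v+c_2\sin\zeta\,(v\otimes u+u\otimes v)+c_3\sin^2\zeta\,u\otimes u+c_4\proj_\origin .
\]
Its spectrum on $T_\origin\sphere$ is $\{c_4\}$ together with the eigenvalues of a $2\times2$ matrix in an orthonormal basis of $\vspan(u,v)$. The entries of that matrix are determined by $a$, $b$ and the $c_i$, and the $c_i$ in turn depend only on $(\|v\|,\langle v,y\rangle,\cos\zeta)$. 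Hence the smallest eigenvalue can be written $\nu_y(v)=g_\zeta(\Phi_y(v))$.

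The degenerate cases need separate handling. These are $b=0$, where $v$ is parallel to $u$, and $v=0$ or $y=\pm\origin$; in each the span has lower dimension and the matrix shrinks accordingly. Values in $\bar\Real$ cover the point $\Exp_\origin v=-y$, where $B$ blows up. I expect the main obstacle to be this bookkeeping of degenerate cases, together with making sure that in infinite dimensions the infimum of the spectrum is exactly the minimum of $c_4$ and the $2\times2$ eigenvalues.
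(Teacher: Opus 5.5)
Your proposal is correct and follows the paper's own argument. Like the paper, you reconstruct $\|v\|$ and $\langle v,y\rangle$ from $\zeta$ and $\Phi_y(v)=(a,b)$ via $\|v\|^2=a^2+b^2$ and $\langle v,y\rangle=a\sin\zeta$, and then appeal to Proposition~\ref{proposition:ell_Hessian_computation}. Your explicit compression $\proj_\origin\Hess_{\overline\ell_y}(v)\proj_\origin$ onto $\vspan(u,v)$, together with the identity $\proj_\origin y=\sin\zeta\,u$ (which also covers $y=\pm\origin$), is more careful than the paper, which divides by $\sqrt{1-\langle y,\origin\rangle^2}$ and defers the spectral claim to the ambient representation \eqref{eq:Hessian}. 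One bookkeeping caveat: for small $\dim\Yspace$ (e.g.\ $\dim\Yspace=3$) the $c_4$-eigenspace on $T_\origin\sphere$ can be trivial, but this only changes the formula for $g_\zeta$, not its existence.
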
	
\begin{proof}
		\item Given $\Phi_y(v)$ and $\langle \origin ,y\rangle = \cos \zeta$, we can reconstruct $(\|v\|, \langle v,y\rangle)$ as
		\begin{align*}
			\langle v, y \rangle &= [\Phi_y(v)]_1 \sqrt{1-\langle y, \origin  \rangle^2}, \\
			\|v\|^2& = [\Phi_y(v)]_1^2 + [\Phi_y(v)]_2^2,
		\end{align*}
		using that $\langle v, u \rangle = \langle v, \frac{y-\langle \origin ,y\rangle\origin}{\|y-\langle \origin ,y\rangle\origin\|} \rangle = \frac{\langle v, y \rangle}{\sqrt{1-\langle y, \origin  \rangle^2}}$.
		Hence, \(\Hess_{\ell_y}(v)\) can also be expressed in terms of $\zeta$ and $\Phi_y(v)$, and 
		the functions $g_\zeta$ can be computed by plugging $\|v\|, \langle v,y\rangle$ and $\zeta$ into \eqref{eq:Hessian} in the proof of Proposition~\ref{proposition:ell_Hessian_computation}.
\end{proof}

\begin{Remark}[Interpretation of $\Phi_y$]
	The first component $[\Phi_y(v)]_1$ is the coordinate of $v$ in the direction $\Log_\origin y$; the second component reflects the part of $v$ orthogonal to $y$, preserving the norm. For the standard sphere $\sphere \subset \Real^3$, the map $\Phi_y$ simply represents $T_\origin\sphere$ in coordinates.
\end{Remark}

\begin{Lemma}
	\label{lemma:Phi_map}
	The map $\Phi_y: T_\origin \sphere \rightarrow \Real^2$, defined in Equation \eqref{eq:Phi_map}, 
	preserves certain balls of radius $\rho>0$ as follows:
	\begin{enumerate}[label=\roman*)]
		\item $v \in \ball_{T_\origin \sphere}(0, \rho)$ if and only if $\Phi_y(v) \in \ball_{\Real^2}(0, \rho)$,
		\item $v \in \ball_{T_\origin \sphere}(\Log_\origin y, \rho)$ if and only if $\Phi_y(v) \in \ball_{\Real^2}(\Phi_y(\Log_\origin y), \rho)$,
	\end{enumerate}
	where $\Phi_y(0) = 0$ and $\Phi_y(\Log_\origin y) = ( \zeta, 0)$ with $\zeta = \sphereDist(y, \origin)$
\end{Lemma}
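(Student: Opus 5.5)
The plan is to compute $\norm{\Phi_y(v)}$ and $\norm{\Phi_y(v)-\Phi_y(w)}$ directly from the definition \eqref{eq:Phi_map} and compare them with $\Ynorm{v}$ and $\Ynorm{v-w}$. Throughout, write $u$ for the unit vector appearing in \eqref{eq:Phi_map}. In the generic case $y\notin\{\origin,-\origin\}$ it lies in $T_\origin\sphere$. In the degenerate case $u=\origin$, so $\innerProd{v,u}=0$ for every $v\in T_\origin\sphere$. For $v\in T_\origin\sphere$ I decompose $v=\innerProd{v,u}u+v^\perp$ with $v^\perp\perp u$. Then $\norm{v}^2=\innerProd{v,u}^2+\norm{v^\perp}^2$, so $[\Phi_y(v)]_2=\norm{v^\perp}$.

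For part i), this decomposition gives $\norm{\Phi_y(v)}_{\Real^2}^2=\innerProd{v,u}^2+\norm{v^\perp}^2=\Ynorm{v}^2$. In particular $\Phi_y$ preserves norms, so $\Ynorm{v}\le\rho$ if and only if $\norm{\Phi_y(v)}\le\rho$. It also gives $\Phi_y(0)=0$ immediately.

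For part ii), first consider $y\notin\{\origin,-\origin\}$. Here $\Log_\origin y=\zeta u$ with $\zeta=\sphereDist(\origin,y)$, so $\Phi_y(\Log_\origin y)=(\zeta,0)$. I then compute
\[
\Ynorm{v-\zeta u}^2=(\innerProd{v,u}-\zeta)^2+\norm{v^\perp}^2=\norm{\Phi_y(v)-(\zeta,0)}^2,
\]
which gives the equivalence. The key point is that $\Log_\origin y$ has no component orthogonal to $u$, so the second coordinate carries no cross term.

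The degenerate cases need separate handling, and this is the only real subtlety. If $y=\origin$, then $\Log_\origin y=0$ and $\zeta=0$, so part ii) reduces to part i). If $y=-\origin$, then $\Log_\origin y$ is undefined, and the statement should be read as excluding this case, or as vacuous there. I would add a sentence making this convention explicit rather than attempt to define $\Log_\origin(-\origin)$.
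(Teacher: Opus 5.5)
Your proof is correct and follows the paper's argument. For i), both use the identity $\norm{\Phi_y(v)}^2=\Ynorm{v}^2$. For ii), both use $\Log_\origin y=\zeta u$ to expand $\norm{\Phi_y(v)-(\zeta,0)}^2=\Ynorm{v}^2-2\zeta\innerProd{v,u}+\zeta^2=\Ynorm{v-\Log_\origin y}^2$. Your treatment of the degenerate cases is slightly more careful than the paper's: $y=\origin$ reduces to i), and $y=-\origin$ must be excluded from ii) because $\Log_\origin y$ is undefined there, a case the paper does not mention.
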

\begin{proof}
	First, note that $\|\Phi_y(v)\|_{\Real^2}^2 = \|v\|_{\Yspace}^2$ by construction. This yields the Statement i).
	Moreover,
	\begin{align*}
		\| \Phi_y(v) - \Phi_y(\Log_\origin y) \|^2 &= (\langle v, u\rangle - \zeta)^2 + (\|v\|^2 - \langle v, u\rangle^2) \\
		&= \langle v, u\rangle^2 - 2 \zeta \langle v, u\rangle + \zeta^2 + \|v\|^2 - \langle v, u\rangle^2 \\
		&= \|v\|^2 - 2 \|\Log_{\origin} y\| \langle v, \Log_{\origin} y \rangle / \|\Log_{\origin} y\| + \|\Log_{\origin} y\|^2\\
		&= \|v - \Log_{\origin} y \|^2
	\end{align*}
	for $\zeta = \|\Log_\origin y\| = \langle u, \Log_\origin y \rangle \neq 0$, and again $\| \Phi_y(v)\|^2 = \|v\|^2$ as in i).	
\end{proof}

\begin{Proposition}
	\label{proposition:convexity_plot}
	For $y, \origin \in \sphere$ with $\sphereDist(y, \origin) = \zeta$, let $\nu_y(v) = g_\zeta \circ \Phi_y(v)$ be the decomposition of the smallest eigenvalue of $\Hess_{\ell_y} (v)$ as described in Corollary~\ref{corollary:ell_Hessian_decomposition}.
	Then we have that 
	\begin{enumerate}[label=\roman*)]
		\item If for some $\rho>0$, $g_\zeta(a,b) > 0$ for all $(a,b) \in \ball_{\Real^2}(0, \rho)$, then $\ell_y$ restricted to $\ball_{T_\origin \sphere}(0, \rho)$ is convex, and the smallest eigenvalue of its Hessian is $\nu_y(v) \geq \epsilon$ for some $\epsilon > 0$.
		\item If for  some $\rho>0$, $g_\zeta(a,b) > 0$ for all $(a,b) \in \ball_{\Real^2}((\zeta,0), \rho)$, then $\ell_y$ restricted to $\ball_{T_\origin \sphere}(\Log_\origin y, \rho)$ is convex, and the smallest eigenvalue of its Hessian is $\nu_y(v) \geq \epsilon$ for some $\epsilon > 0$.
	\end{enumerate}
\end{Proposition}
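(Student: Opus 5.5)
The plan is to transfer positivity of $g_\zeta$ on a planar ball to positivity of $\nu_y$ on the corresponding ball in $T_\origin\sphere$ using Corollary~\ref{corollary:ell_Hessian_decomposition} and Lemma~\ref{lemma:Phi_map}. We then upgrade pointwise positivity to a uniform lower bound $\epsilon>0$ by compactness, and conclude convexity from Lemma~\ref{Lemma:convex_gradient}. Both parts i) and ii) follow the same pattern; only the ball in $\Real^2$ changes: it is centred at $0$ in i) and at $(\zeta,0)=\Phi_y(\Log_\origin y)$ in ii).

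\emph{Step 1: reduction to the plane.} Take $v\in\ball_{T_\origin\sphere}(0,\rho)$, respectively $v\in\ball_{T_\origin\sphere}(\Log_\origin y,\rho)$. Lemma~\ref{lemma:Phi_map} gives $\Phi_y(v)$ in the corresponding closed ball of $\Real^2$. Conversely, every point $(a,b)$ of that ball with $b\ge 0$ is attained, provided $\dim T_\origin\sphere\ge 2$ (so that a direction orthogonal to $u$ exists); the case $\dim T_\origin\sphere =1$ is trivial. Since $\nu_y=g_\zeta\circ\Phi_y$, the hypothesis yields $\nu_y(v)>0$ on the ball in $T_\origin\sphere$.

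\emph{Step 2: uniformity.} Pointwise positivity is not enough in infinite dimensions, so I would use the finite-dimensional description. By \eqref{eq:Hessian}, $\Hess_{\overline\ell_y}(v)$ is $c_4\id$ plus an operator acting on $\vspan(e_1,e_2)$. Hence its spectrum, and by \eqref{eq:project_Hessian} that of $\Hess_{\ell_y}(v)$, consists of $c_4$ together with the eigenvalues of an explicit $3\times3$ block. The $3\times 3$ size accounts for the projection onto $T_\origin\sphere$, which involves $\origin$. The entries of this block are continuous functions of $(\|v\|,\langle v,y\rangle,\zeta)$, hence of $(a,b)$, wherever $\psi=\langle \Exp_\origin v,y\rangle\in(-1,1)$. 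They extend continuously at $\psi=1$ (through the smooth extensions of $A,B$ near $\psi=1$) and at $\|v\|=0$. Therefore $g_\zeta$, as the minimum eigenvalue of a continuously varying symmetric matrix, is continuous on the closed disc. A continuous positive function on a compact disc attains a positive minimum $\epsilon$. Applying this via Step 1 gives $\nu_y(v)\ge\epsilon$ on the whole ball in $T_\origin\sphere$.

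\emph{Step 3: convexity.} The balls are convex subsets of the open set where $\ell_y$ is $\continuous^2$, namely where $\Exp_\origin v\ne -y$. Lemma~\ref{Lemma:convex_gradient}, part 2, with $c=\epsilon$ then gives the strong convexity inequality, and in particular convexity.

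The main obstacle is Step 2. One must justify continuity of $g_\zeta$ up to the boundary and at the degenerate points $\|v\|=0$, $\|v\|=|\langle v,y\rangle|$ (where the basis $e_1,e_2$ degenerates) and $\psi=\pm1$. One must also exclude $\Exp_\origin v=-y$ inside the closed ball; this is implicit in the hypothesis that $g_\zeta$ is finite and positive there. I would handle the degenerate points using the stated reduced form $(c_1\alpha_1^2+2c_2\alpha_1+c_3+c_4)y\otimes y+c_4\id$ and continuity of the $c_i$ as removable singularities.
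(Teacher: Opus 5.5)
Your proposal is correct and follows essentially the same route as the paper. Positivity and continuity of $g_\zeta$ on the compact closed disc give a uniform $\epsilon>0$, Lemma~\ref{lemma:Phi_map} transfers this bound to the corresponding ball in $T_\origin\sphere$, and convexity of that ball then yields convexity of $\ell_y$ via Lemma~\ref{Lemma:convex_gradient}. The one difference is that the paper simply asserts continuity of $g_\zeta$, whereas your Step~2 sketches why it holds: removable singularities at $\|v\|=0$ and $\psi=1$, and exclusion of $\Exp_\origin v=-y$.
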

\begin{proof}
	First, note that, since $\ball_{\Real^2}((a_0, b_0), \rho)$ is compact and $g_\zeta$ is continuous, there is a minimizer $(a^*,b^*) \in \ball_{\Real^2}((a_0, b_0), \rho)$ of $g_\zeta$, such that $0 < \epsilon := g_\zeta(a^*,b^*) \leq g_\zeta(a,b)$ for all $(a,b) \in \ball_{\Real^2}((a_0, b_0), \rho)$, under assumption of positivity of $g_\zeta(a,b)$.
	This, in combination with Statement i) and ii) of Lemma~\ref{lemma:Phi_map} yields that the eigenvalues of the Hessian $\Hess_{\ell_y}(v)$ on $\ball_{T_\origin \sphere}(0, \rho)$ and $\ball_{T_\origin \sphere}(\Log_{\origin} y, \rho)$, respectively, are bounded away form $0$ by $\epsilon$. This yields, together with the convexity of the balls, Statement i) and ii).
	
\end{proof}

\begin{Proposition}
  \label{proposition:analytic_convexity}	
  For $y, \origin \in \sphere$ with $\sphereDist(y, \origin) \leq \pi/4$, the restriction of the loss-function $\ell_y: \ball_{T_\origin \sphere}(0, \pi/4) \rightarrow \Real$ is convex and there exists an $\epsilon > 0$ such that $\langle u, \Hess_{\ell_y}(v)\, u\rangle_\Yspace \geq \epsilon$ for all $v\in \ball_{T_\origin \sphere}(0, \pi/4)$ and $u\in T_\origin \sphere$.
  \begin{proof}
    As in the proof of Proposition~\ref{proposition:ell_Hessian_computation}, we simply write $\|\cdot\|$ for $\|\cdot\|_\Yspace$ and $\langle\cdot,\cdot\rangle$ for $\langle\cdot,\cdot\rangle_\Yspace$.
    Though we provide below an analytical proof, we refer the reader to Proposition~\ref{proposition:convexity_plot} and Figure~\ref{figure:local_convexity} for convincing themselves quickly that the smallest eigenvalue of $\Hess_{\ell_y}(v)$ is bounded away from zero by $\epsilon>0$, and hence $\langle u, \Hess_{\ell_y}(v)\, u\rangle_\Yspace \geq \epsilon$ and $\ell_y$ is, in particular, convex on $\ball_{T_\origin \sphere}(0, \pi/4)$ since the ball is a convex set.

    Let us begin our analytical proof. 
    Using Proposition~\ref{proposition:ell_Hessian_computation}, $\Hess_{\ell_y}(v)$ can be reparametrized by three scalars $\n = \|v\|$, $\yo = \langle y, \origin\rangle$, and $\vangle = \frac{\langle v, y \rangle}{\n \sqrt{1 - \yo^2}}$ for $\omega \neq 0$ and $\yo \neq 1$. If $\yo \neq 1, \omega =0$ then $\vangle = 0$. The case $\yo=1$ is trivial since in that case $y = \origin \Rightarrow \ell_y(v) = \norm{v}^2$, which is a convex function.
    Note that $\cos \vangle$ is the angle between $v$ and $\Log_\origin y$, since $\|\Log_\origin y\| = \sqrt{1 - \langle y, \origin \rangle^2}$. Hence $(\n, \vangle)$ correspond to polar coordinates of the representation $\Psi_y(v)$ defined Corollary~\ref{corollary:ell_Hessian_decomposition}. By assumption, $\rho = \cos \sphereDist(y, \origin) \in [\cos(\pi/4), \cos(0)] = [1/\sqrt{2}, 1]$. 
    Since $v \in \ball_{T_\origin \sphere}(0, \pi/4)$, these imply that $(\n, \yo, \vangle)$ belongs to the compact set  $\dom := [0, \pi/4] \times [1/\sqrt{2}, 1] \times [-1,1]$.

    Using \eqref{eq:Hessian}, showing that $\Hess_{\ell_y}(v)$ is positive definite boils down to showing that the following $(\n, \yo, \vangle)$-dependent matrix is positive definite:
    \begin{equation}
      \label{eq:Hessian3d}
      \widetilde H(\n, \yo, \vangle) := \left(c_1 \begin{bmatrix} 
          \alpha_1^2 & \alpha_1 \alpha_2 & 0 \\
          \alpha_1\alpha_2 & \alpha_2 ^2 & 0 \\
          0 & 0 & 0
          \end{bmatrix} + c_2 \begin{bmatrix} 
          2 \alpha_1 & \alpha_2 & 0 \\
          \alpha_2 & 0          & 0 \\
          0 & 0 & 0
          \end{bmatrix} + c_3 \begin{bmatrix} 
          1 & 0 & 0 \\
          0 & 0 & 0 \\
          0 & 0 & 0
        \end{bmatrix}
      \right)
      + c_4 \begin{bmatrix} 
        1 & 0 & 0 \\
        0 & 1 & 0 \\
        0 & 0 & 1
      \end{bmatrix},
    \end{equation}
    where the $c_j$s and the $\alpha_j$s are defined in Proposition~\ref{proposition:ell_Hessian_computation} but have been reparametrized here to be functions of $(\n, \yo, \vangle)$.
    $\widetilde H$ is positive definite only if  $c_4 > 0$. If that's the case, it is then enough to show the positive definiteness of the matrix
    \begin{equation}
      \label{eq:Hessian2d}
      H(\n, \yo, \vangle) := c_1 \begin{bmatrix} 
        \alpha_1^2 & \alpha_1 \alpha_2 \\
        \alpha_1\alpha_2 & \alpha_2 ^2 \\
        \end{bmatrix} + c_2 \begin{bmatrix} 
        2 \alpha_1 & \alpha_2 \\
        \alpha_2 & 0 \\
        \end{bmatrix} + c_3 \begin{bmatrix} 
        1 & 0 \\
        0 & 0 \\
        \end{bmatrix} + c_4 \begin{bmatrix} 
        1 & 0 \\
        0 & 1 \\
      \end{bmatrix}. 
    \end{equation}
    to show that the smallest eigenvalue of $\widetilde H$ is strictly positive pointwise.
    This can be done by showing $\trace H > 0$ and $\det H > 0$. Indeed, $H$ is a $2 \times 2$ matrix, and $\trace H > 0, \det H >0$ imply that the sum and product of the two eigenvalues of $H$ are strictly positive, and hence each eigenvalue is strictly positive. Then, since $\dom$ is compact and eigenvalues of $\widetilde H$ are continuous in $(\n, \yo, \vangle)$, we get that the smallest eigenvalue of $\widetilde H$ over $\dom$ is not smaller than some $\epsilon > 0$, 

    The remainder of our proof is structured as follows.
    \begin{enumerate}
      \item[Part A.] We construct piece-wise polynomial lower bounds $\underline{c}_4(\n, \yo, \vangle)$, $\underline{\trace H}(\n, \yo, \vangle)$ and $\underline{\det H}(\n, \yo, \vangle)$ for $c_4$, $\trace H$ and $\det H$, respectively.
      \item[Part B.] We show that $\underline{c}_4(\n, \yo, \vangle)>0$, $\underline{\trace H}(\n, \yo, \vangle)>0$ and $\underline{\det H}(\n, \yo, \vangle)>0$ for $(\n, \yo, \vangle) \in \dom$.
    \end{enumerate}

    We now provide the details for each part.

    \emph{Part A.} Here, we construct lower bounds $\underline{c}_4$, $\underline{\trace H}$ and $\underline{\det H}$ in such a way that we can show in Part B that they are positive on $\dom$. The challenge is to keep the polynomial degree as small as possible to keep the minimization in Part B computationally feasible.
    \smallskip

    According to Representation~\eqref{eq:Hessian}, $c_4$, $\trace H$ and $\det H$ can be written as algebraic combinations of $\n, \yo, \vangle$ and the following univariate functions
    \begin{align*}
      \h(\n) &= \frac{\sin \n}{\n} \qquad \text{ for } \n \neq 0 \text{ and } \h(0) = 0,\\
      \hh(\n) &= \frac{\sin \n}{\n^2} - \frac{\cos \n}{\n} \qquad \text{ for } \n \neq 0 \text{ and } \hh(0) = 0,\\
      s(\yo) &= \sqrt{1 - \yo^2}, \\
      a(\psi) &= \frac{1}{1-\psi^2} - \frac{\psi\arccos\psi}{(1-\psi^2)^{3/2}} \qquad \text{ for } \n \neq 1 \text{ and } a(1) = 1/3,\\
      b(\psi) &= \frac{\arccos\psi}{\sqrt{1-\psi^2}} \qquad \text{ for } \n \neq 1 \text{ and } b(1) = 1.
    \end{align*}
    It can be individually verified that they are continuous and non-negative on their domains of interest, with $\h,s,a,b$ monotonously decreasing, and $\hh$ monotonously increasing. 

    For each of these functions, we define piece-wise polynomial upper and lower bounds on $(\n, \yo, \vangle) \in [0, 0.8] \times [0.7, 1] \times [-1,1] =: \domm \supset \dom$ 
    with the convention that an index $\sigma<0$---or simply ``$-$'' as index---denotes a lower bound and an index $\sigma\geq 0$---or simply ``$+$'' as index---denotes the upper bound. For $\h$, we use piecewise constant bounds
    \begin{align*}
      \h_{-}(\n) &= \begin{cases} 0.9735, & \n \le 0.4,\\ 0.9310, & 0.4<\n\le 0.65,\\ 0.8966, & 0.65<\n\le 0.8,\end{cases}
                 &
      \h_{+}(\n) &= \begin{cases} 1.0000, & \n \le 0.4,\\ 0.9736, & 0.4<\n\le 0.65,\\ 0.9311, & 0.65<\n\le 0.8.\end{cases}
    \end{align*}
    For $\hh$, we use two systems of bounds: affine linear bounds
    \begin{align*}
      \hh_-(\n) &= \frac{249}{800}\,\n, &
      \hh_+(\n) &= 0.225 + 0.31\,(\n - 0.7)
    \end{align*}
    and also piecewise constant bounds
    \begin{align*}
      \tilde{\hh}_-(\n) &= \begin{cases} 0.0000, & \n \le 0.4,\\ 0.1312, & 0.4<\n\le 0.65,\\ 0.2076, & 0.65<\n\le 0.8,\end{cases} &
      \tilde{\hh}_+(\n) &= \begin{cases} 0.1313, & \n \le 0.4,\\ 0.2077, & 0.4<\n\le 0.65,\\ 0.2500, & 0.65<\n\le 0.8.\end{cases}
    \end{align*}
    Hence, also bounds for functions depending on it will vary with the choice of the bounds on $\hh$. Yet, we widely suppress that in the notation, making it explicit only when necessary.
    \smallskip

    Based thereon, we derive bounds on $\cos \n $ as
    \begin{align*}
      \cos_-(\n) &= \h_-(\n) - \n\,\hh_+(\n), &
      \cos_+(\n) &= \h_+(\n) - \n\,\hh_-(\n).
    \end{align*}
    \smallskip

    For $s$, we use piecewise linear bounds
    \begin{align*}
      s_{-}(\yo) &= 
      \begin{cases}
        0.71 + (0.39-0.71)/0.22 (\yo-0.7) & \yo < 0.92,\\
        0.39 - 0.39/.8 (\yo-0.92)     & \yo \ge 0.92,
      \end{cases}
      \\[2pt]
      s_{+}(\yo) &=
      \begin{cases}
        0.73 + (0.50-0.73)/(0.92-0.7) (\yo-0.7)  & \yo < 0.92,\\
        0.50 + (0.10-0.50)/(1-0.92) (\yo-0.92)   & \yo \ge 0.92.
      \end{cases}
    \end{align*}

    With that, we obtain piece-wise polynomial bounds on $\psi$, $\eta$ and $\gamma$, defined
    in Proposition~\ref{proposition:ell_Hessian_computation}, as
    \begin{align*}
      \psi_{\sigma}(\n, \yo, \vangle)
      &= \h_{\theta\sigma}(\n)\,\n\,\vangle\,s_{\theta\sigma}(\yo)
      + \cos_{\sigma}^\tau(\n)\,\yo, \\
      \eta_{\sigma}(\n, \yo, \vangle)
      &= \hh_{\theta\sigma}(\n)\,\vangle\,s_{\theta\sigma}(\yo)
      + \h_{\sigma}(\n)\,\yo, \\
      \gamma_{\sigma}(\n, \yo, \vangle)
      &= \hh_{\sigma}(\n),
    \end{align*}
    for $\sigma\in\{-1,+1\}$ identifying lower bounds (negative sign of the index) and upper bounds (non-negative sign of the index).
    \smallskip

    Moreover, we individually bound $\mathrm{br}(\n) = 3 \hh(\n) - \n\, \h(\n)$ by
    \begin{align*}
      \mathrm{br}_{-}(\n) &= 0,
                          &
      \mathrm{br}_+(\n) &= \begin{cases} 0.0043, & \n \le 0.4,\\ 0.0178, & 0.4<\n\le 0.65,\\ 0.0327, & 0.65<\n\le 0.8.\end{cases}
    \end{align*}

    Define the functions $a(\psi) = \frac{1}{1-\psi^2} - \frac{\psi\arccos\psi}{(1-\psi^2)^{3/2}}$ and $b(\psi) = \frac{\arccos\psi}{\sqrt{1-\psi^2}}$ such that 
    $A(\n, \yo, \vangle) = a(\psi(\n, \yo, \vangle))$ and $B(\n, \yo, \vangle) = b(\psi(\n, \yo, \vangle))$.
    Depending on the sign of $\vangle$, we bound the function $a(\psi)$ by
    \begin{align*}
      a_{-}(\psi, \vangle)  &= 
      \begin{cases} 
        \tfrac{29}{50} - \tfrac{2667}{10000}\,\psi,  & \vangle \geq 0,\\
        \tfrac{20861}{25000} - \tfrac{416}{625}\,\psi, & \vangle < 0 
      \end{cases} \\
      a_{+}(\psi, \vangle)  &= 
      \begin{cases} 
        \tfrac{25847}{28700} - \tfrac{145}{287}\,\psi, & \vangle \geq 0,\\
        \tfrac{6697}{6450} - \tfrac{175}{258}\,\psi, & \vangle < 0. 
      \end{cases} 
    \end{align*}
    and the function $b(\psi)$ by
    \begin{align*}
      b_{-}(\psi, \vangle)  &= 
      \begin{cases}
        \tfrac{6567}{5000} - \tfrac{1667}{5000}\,\psi,  & \vangle \geq 0,\\
        \tfrac{74009}{50000} - \tfrac{2941}{5000}\,\psi, & \vangle < 0 
      \end{cases} \\
      b_{+}(\psi, \vangle)  &= 
      \begin{cases}
        \tfrac{1439}{1030} - \tfrac{153}{515}\,\psi,   & \vangle \geq 0,\\
        \tfrac{1703}{1060} - \tfrac{59}{106}\,\psi,   & \vangle < 0 
      \end{cases} \\
      \tilde{b}_{+}(\psi, \vangle)  &= 
      \begin{cases}
        \tfrac{1931}{1260} - \tfrac{109}{252}\,\psi,   & \vangle \geq 0,\\
        \tfrac{4109}{2580} - \tfrac{70}{129}\,\psi, & \vangle < 0. 
      \end{cases} 
    \end{align*}
    where an alternative upper bound $\tilde{b}_{+}(\psi, \vangle)$ is given for the case when $\hh$ is bounded by the piecewise constant $\tilde{\hh}_\sigma$.

    The validity of each of the univariate bounds can be individually verified for $(\n, \yo, \vangle) \in \domm$ and $\psi$ in the range of the respective bound  $\psi_{-\sigma}(\n, \yo, \vangle)$ for $a_\sigma$ and $b_\sigma$, $\sigma\in\{-1,+1\}$, where the change of signs is due to the fact that $a$, $b$ and their bounds are monotonously decreasing in $\psi$.
    Accordingly, we obtain bounds on $A$ and $B$ as
    \begin{align*}
      A_{\sigma}(\n, \yo, \vangle) &= a_{\sigma}\bigl(\psi_{-\sigma}(\n, \yo, \vangle),\, \vangle\bigr), &
      B_{\sigma}(\n, \yo, \vangle) &= b_{\sigma}^\tau\bigl(\psi_{-\sigma}(\n, \yo, \vangle),\, \vangle\bigr).
    \end{align*}

    Combining all previous bounds, we can bound the functions $c_1, \dots, c_4$ by 
    \begin{align*}
      c_{4,\sigma} &= 2\, B_\sigma\, \eta_\sigma,\\
      c_{3,\sigma} &= 2\, A_\sigma\, \h_{\sigma}^2,\\
      c_{2,\sigma}   &=  2\, B_\sigma\, \gamma_\sigma  -2\,\n\, A_{-\sigma}\, \eta_{-\sigma}\, \h_{-\sigma},\\
      c_{1,\sigma}   &= c_{1,\sigma}^+ + c_{1,\sigma}^-, \\ 
      c_{1,\sigma}^+ &= 2\n^2 A_\sigma\,\eta_\sigma^2 - 2\,\mathbf{1}_{\theta<0}\,\theta\,B_\sigma\, s_{\sigma}\,\mathrm{br}_\sigma, \\ 
      c_{1,\sigma}^- &= -2 B_{-\sigma}\,\rho\,\n\,\gamma_{-\sigma}
      - 2\,\mathbf{1}_{\theta>0}\,\theta\, B_{-\sigma}\, s_{-\sigma}\,\mathrm{br}_{-\sigma},
    \end{align*}
    where again the sign of $\sigma$ in the index indicates the direction of the bound.
    Here, $c_1$ is subdivided into a non-negative part $c_j^+$ and a non-positive part $c_j^-$, given by
    \begin{align*}
      c_1^+(\n,\rho,\theta) &=
      2\n^2 A_{fn}\,\eta^2
      - 2\, \mathbf{1}_{\theta<0}\,\theta\,B\, s(\rho)\,\mathrm{br}(\n)  \ \ge 0,
      \\[2pt]
      c_1^-(\n,\rho,\theta) &=
      -2 B_{fn}\,\rho\,\n\,\gamma
      - 2\, \mathbf{1}_{\theta>0}\,\theta\, B\, s(\rho)\,\mathrm{br}(\n) \ \le 0,
    \end{align*}
    such that $c_1 = c_1^+ + c_1^-$.
    \medskip 

    With $\alpha_1 = \vangle\, s(\yo)$ and $\alpha_2^2 = 1 - \vangle^2 (1-\yo^2)$,
    expansion of \eqref{eq:Hessian2d} gives
    \begin{equation*}
      \trace H = c_1 + 2\vangle\, c_2 s(\yo) + c_3 + 2 c_4
    \end{equation*}
    and
    \begin{align*}
      \det H &= c_4\,(c_1 + 2\,\vangle\, c_2\, s(\yo) + c_3 + c_4)
      \;+\; \alpha_2^2\bigl(c_1 c_3 - c_2^2\bigr) \\
             &= c_4 \bigl(\trace H - c_4\bigr) + \alpha_2^2 \bigl(c_1 c_3 - c_2^2\bigr).
    \end{align*}

    Now, the desired lower bounds are obtained as 
    \begin{align*}
      \underline{c}_4 &= c_{4-} \\
      \underline{\trace H} &= c_{1,-} + 2\vangle\, c_{2,-\vangle}\, s_{-\vangle}(\yo) + c_{3,-} + 2 c_{4,-} \\
      \underline{\det H} &=  c_{4,-}\, \bigl(\underline{\trace H} - \underline{c}_4\bigr) + \alpha_2^2 \bigl(\tilde{c}_{1,-}^+ \tilde{c}_{3, -} + \tilde{c}_{1,-}^- \tilde{c}_{3, +} - \tilde{c}_{2,+}^2 - \iota^2\bigr)
    \end{align*}
    where \begin{itemize}
      \item the bound on $\underline{\det H}$ is only valid after showing $c_{4-} > 0$, 
      \item the ``$\tilde{c}$'' in the second term of $\underline{\det H}$ indicate that here the piecewise constant $\tilde{\hh}_\sigma$ are used to bound $\hh$, while elsewhere we use the affine linear bounds $\hh_\sigma$. 
      \item we use $\trace H - c_4 = c_1 + 2\vangle\, c_2 s(\yo) + c_3 + c_4$ to see that $\underline{\trace H} - \underline{c}_4$ is a lower bound (rather than only $\underline{\trace H} - c_{4,+}$),
      \item we set $\iota = \min_{\domm}(c_{2,-})$ to guarantee that $c_{2,+}^2 + \iota^2 \geq c_2^2$ irrespective of the sign, using that $c_{2,+}>c_{2,-}$.
        The last choice is generally valid, but due motivated by the fact that we expect $\iota$ to be a small negative number.
    \end{itemize}
    \bigskip

    \emph{Part B.} We use the implementation of the cylindrical algebraic decomposition (CAD) algorithm for exact global optimization in Mathematica~14.3 for minimization/maximization of the resulting piecewise polynomial lower bounds with rational coefficients \citep{Mathematica143}, computing 
    \begin{align*}
      \min_{\domm} \underline{c}_4 &> 0 \\
      \min_{\domm} \bigl( \underline{\trace H} - \underline{c}_4 \bigr) &> 0\\
      - \min_{\domm} \bigl( \tilde{c}_{1,-}^+ \tilde{c}_{3, -} + \tilde{c}_{1,-}^- \tilde{c}_{3, +} - \tilde{c}_{2,+}^2 - \iota^2 \bigr) &< (\min_{\domm} \underline{c}_4) \min_{\domm} \bigl( \underline{\trace H} - \underline{c}_4 \bigr)
    \end{align*}
    The first two inequalities show that $c_4>0$ and $\trace H>0$ on $\dom$.  
    Since $\alpha_2^2 \in [0,1]$, $\underline{c}_4 > 0$ and $\bigl( \underline{\trace H} - \underline{c}_4 \bigr) > 0$, the last inequality shows that
    \begin{align*}
      \det H &\geq \underline{\det H} \\
             &\geq (\min_{\domm} \underline{c}_4) \min_{\domm} \bigl( \underline{\trace H} - \underline{c}_4 \bigr) + \alpha_2^2 \min_{\domm}\bigl(\tilde{c}_{1,-}^+ \tilde{c}_{3, -} + \tilde{c}_{1,-}^- \tilde{c}_{3, +} - \tilde{c}_{2,+}^2 - \iota^2\bigr) \\
             &> 0
    \end{align*}
    which concludes the proof. The Mathematica code and output can be found in the \thesupplement.
  \end{proof}
\end{Proposition}

\section{Additional details on the simulation study}

\subsection{Construction of the conditional mean} \label{sec:simulation_mu}
To construct $\mu$ in a principled way, we proceed as follows: as an initial template for $\mu$, we use $\Psi: x \mapsto \psi_x^{1/2}$ given by a mixture density
    \begin{equation*}
        \psi_x(z) = w_1(x)\phi\left(\frac{z-m_1(x)}{s_2(x)}\right) + w_2(x) \phi\left(\frac{z-m_2(x)}{s_2(x)}\right) 
    \end{equation*}
    of Gaussians truncated to $z\in[0,1]$, where $\phi$ is the standard Gaussian probability density function, and  $m:\Xspace \rightarrow \Real^2$, $s:\Xspace \rightarrow \Real_+^2$, $w:\Xspace \rightarrow \Real_+^2$ are suitable location, scale, and weight functions respectively. We choose the GRBF kernel with length-scale parameter $\sigma$, then we fit the spherical kernel ridge regression model to a set $\{\Psi(\boldsymbol{x_i})\}$, for covariates $\boldsymbol{x}_i \in [0,1]^2$, $i = 1,\dots, n$ on a regular grid, fixing $\origin$ to the spherical Fréchet mean of $\{\Psi(\boldsymbol{x_i})\}$. This gives a set of regression coefficients $\tilde \xi_1, \dots, \tilde \xi_n \in T_\origin \sphere$. To obtain a prescribed model variance level $\tau_\mu^2 = (2/5 \pi)^2$, we rescale $\tilde{\xi}_i$ by their variance so that the resulting $\xi_i$ have this fixed model variance. This guarantees that the data are sufficiently spread out on the sphere, in a domain of radius larger than $\pi/2$ around $\origin$. The conditional mean function is then given by
    $$\mu(\cdot) = \Exp_\origin\left( \sum_{i=1}^n k (x_i,\cdot ) \xi_i \right).$$ 
    Figure \ref{fig:simulation:illustration} shows that the resulting $\mu(\boldsymbol{x}_i)$ exhibits spatially structured variation in both location and shape, originated by the gaussian mixture.

\section{Sobolev spaces}
\label{sec:sobolev}
\subsection*{Introduction to the real-valued case}

Let $\Omega$ be an open set in $\Real^s$ and $\mathcal{C}^m(\Omega)$ the set of $m$ times continuously differentiable functions $\Omega \rightarrow \Real$. 
Let further $\mathcal{C}^m_c(\Omega) = \{\psi \in \mathcal{C}^m(\Omega): \psi(x) = 0, x \notin \Psi, \Psi \subset \Omega \text{ compact}\}$ be those with compact support.
For an integer multi-index $\iota = (\iota_1,\dots,\iota_d)$, we write $|\iota| = \sum_{i = 1}^s \iota_i$ and $D^{\iota} = \frac{\partial^{|\iota|} \psi}{\partial^{\iota_1} x_1 \dots \partial^{\iota_d} x_d}$ for $\mathcal{C}^m(\Omega)$.

Define the weak $\iota$th derivative of a function $f\in L^2(\Omega) := L^2(\Omega, \mathbb{R})$ as the element $D^\iota f$ of $L^2(\Omega)$ satisfying 
$$
    \int D^\iota f(x) \psi(x)\, dx= (-1)^{\iota} \int f(x) D^{\iota} \psi(x)\, dx
$$
for all smooth functions $\psi \in \mathcal{C}_c^\infty(\Omega)$. 
If such a $D^\iota f$ exists, it is unique and $f$ is called weakly differentiable. If $f\in \mathcal{C}^m(\Omega)$, its weak derivatives coincide with the corresponding derivatives.  

The Sobolev space $H^m(\Omega)$ is defined as 
$$
    H^m(\Omega) = \{ f \in L^2(\Omega) : D^\iota f \in L^2(\Omega), |\iota| \leq m \}
$$
with inner product 
$$
	\langle f, g \rangle_{H^m(\Omega)} = \sum_{\iota: |\iota| \leq m} \int D^\iota f(x) D^\iota g(x) \, dx
$$
where usually the integer $m$ is assumed positive, while $H^0(\Omega) = L^2(\Omega)$. 

As usual let $\bar{\Omega}$ denote the closure of $\Omega$, and define $\mathcal{C}^{m}(\bar{\Omega}) = \{\psi \in \mathcal{C}^{m}(\Omega) : D^{\iota} \psi = \left.\phi\right|_{\Omega}, \phi \in \mathcal{C}(\bar{\Omega})\}$ where $\left.\phi\right|_{\Omega}: \Omega \rightarrow \Real, x\mapsto \phi(x)$ denotes the restriction of function $\phi$ to $\Omega$.

\begin{Proposition}[\cite{Brezis2011FunctionalAnalysis}, Proposition 9.4]
	\label{prop:chainruleH1H1}
	Let $f,g\in H^1(\Omega) \cap L^\infty(\Omega)$. Then $f\cdot g \in H^1(\Omega) \cap L^\infty(\Omega)$ and $D^\iota(fg) = (D^\iota f) g + f D^\iota g$ for $|\iota|=1$.
\end{Proposition}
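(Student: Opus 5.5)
The statement is the product rule in $H^1(\Omega)\cap L^\infty(\Omega)$ (Brezis, Proposition 9.4). My plan is to prove it by approximation with smooth functions, working directly with the definition of weak derivative via test functions $\psi \in \mathcal{C}^\infty_c(\Omega)$.

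\emph{Step 1: approximation.} By Friedrichs' theorem (mollification, applied locally), there are sequences $f_k, g_k \in \mathcal{C}^\infty(\Omega)$ such that, for every open $\omega \subset\subset \Omega$:
\begin{itemize}
\item $f_k \to f$ in $L^2(\omega)$ and $D^\iota f_k \to D^\iota f$ in $L^2(\omega)$;
\item the same convergences hold for $g_k$;
\item $\|f_k\|_{L^\infty} \le \|f\|_{L^\infty}$ and $\|g_k\|_{L^\infty} \le \|g\|_{L^\infty}$.
\end{itemize}
The $L^\infty$ bounds hold because mollifiers are nonnegative with unit mass. Passing to a subsequence, we may also assume $f_k \to f$ and $g_k \to g$ almost everywhere.

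\emph{Step 2: classical product rule.} Fix $\psi \in \mathcal{C}^\infty_c(\Omega)$ and pick $\omega \subset\subset \Omega$ containing its support. Integration by parts for smooth functions gives
\[
\int f_k g_k D^\iota\psi = -\int \bigl((D^\iota f_k) g_k + f_k D^\iota g_k\bigr)\psi .
\]

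\emph{Step 3: passing to the limit.} This is the main (mild) obstacle, because products of $L^2$-convergent sequences need not converge. The uniform $L^\infty$ bounds resolve it.
\begin{itemize}
\item Left-hand side: $f_k g_k \to fg$ a.e. and $|f_k g_k D^\iota\psi|$ is dominated by a constant times $|D^\iota\psi|$, so dominated convergence applies.
\item The term with $(D^\iota f_k) g_k$: split
\[
(D^\iota f_k) g_k - (D^\iota f) g = (D^\iota f_k - D^\iota f) g_k + (D^\iota f)(g_k - g).
\]
The first piece tends to $0$ in $L^2(\omega)$ by the uniform bound on $g_k$. The second tends to $0$ in $L^1(\omega)$ by dominated convergence: it converges to $0$ a.e. and is dominated by $2\|g\|_\infty |D^\iota f| \in L^2(\omega) \subset L^1(\omega)$.
\item The term with $f_k D^\iota g_k$ is handled symmetrically.
\end{itemize}
Since $\psi$ is bounded, integrating against $\psi$ passes to the limit, and we obtain
\[
\int fg\, D^\iota\psi = -\int \bigl((D^\iota f) g + f D^\iota g\bigr)\psi .
\]

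\emph{Step 4: conclusion.} The right-hand side $(D^\iota f)g + f D^\iota g$ lies in $L^2(\Omega)$, since each term is an $L^2$ function times an $L^\infty$ function. Also $fg \in L^2 \cap L^\infty$ with $\|fg\|_\infty \le \|f\|_\infty \|g\|_\infty$. Hence $fg \in H^1(\Omega)\cap L^\infty(\Omega)$, and its weak derivative is given by the claimed formula.
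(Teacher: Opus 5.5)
Your proof is correct and is essentially the standard argument in Brezis (Friedrichs mollification with the uniform bounds $\|f_k\|_{L^\infty}\le\|f\|_{L^\infty}$, the classical product rule, then dominated convergence). The paper does not prove this statement itself; it only cites Brezis, Proposition 9.4, whose proof follows the same route as yours.
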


Let $[z]$ denote the integer part of a number $z\in\Real$.

\begin{Proposition}[\cite{Brezis2011FunctionalAnalysis}, Corollaries 9.13 and 9.15]
	\label{prop:Sobo2Diff}
	For an integer $m \geq 1$, let $\kappa = [m - d/2]$. Then 
	$H^m(\Omega) \subset \mathcal{C}^\kappa(\bar{\Omega})$.
\end{Proposition}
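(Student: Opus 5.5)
The plan is to reduce to the whole space and then use the Fourier transform. A caveat comes first. As literally written, the statement needs two qualifications that are implicit in \citet{Brezis2011FunctionalAnalysis}. First, $\Omega$ should be regular enough, for instance bounded with $\continuous^1$ boundary, or $\Omega = \Real^d$, so that an extension operator exists. Second, $\kappa$ should satisfy $\kappa < m - d/2$ strictly. The borderline integer case fails: $H^1(\Real^2) \not\subset \continuous^0$. I would prove the statement under these readings, with $d = s$ the dimension of $\Omega$; this covers the use made of it for Sobolev VVRKHS with $m > s/2$.

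\emph{Step 1 (extension).} Let $P: H^m(\Omega) \to H^m(\Real^d)$ be a bounded linear extension operator with $(Pf)_{|\Omega} = f$. Such an operator exists for $\continuous^1$ domains when $m=1$ (Brezis, Theorem~9.7), and for higher $m$ when $\Omega$ has $\continuous^m$ boundary or satisfies a cone condition (Stein's extension). It then suffices to show $H^m(\Real^d) \subset \continuous^\kappa(\Real^d)$, with each $D^\iota g$, $|\iota|\le\kappa$, bounded and uniformly continuous. Restricting to $\Omega$ then gives derivatives that extend continuously to $\bar\Omega$.

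\emph{Step 2 (Fourier estimate).} For $g \in H^m(\Real^d)$, Plancherel characterizes $g \in H^m(\Real^d)$ by
\[
  \int (1+|\xi|^2)^m |\hat g(\xi)|^2 \ud\xi < \infty .
\]
For a multi-index with $|\iota| \le \kappa$, Cauchy--Schwarz gives
\[
  \int |\xi^\iota \hat g(\xi)| \ud \xi \le \Bigl(\int (1+|\xi|^2)^m |\hat g|^2 \ud\xi\Bigr)^{1/2} \Bigl(\int |\xi|^{2|\iota|}(1+|\xi|^2)^{-m} \ud\xi\Bigr)^{1/2}.
\]
The second factor is finite exactly when $2m - 2|\iota| > d$, which holds because $|\iota| \le \kappa < m - d/2$. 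Hence $\xi^\iota \hat g \in L^1(\Real^d)$.

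\emph{Step 3 (continuity and identification).} By Riemann--Lebesgue, $h_\iota := \mathcal F^{-1}\bigl((i\xi)^\iota \hat g\bigr)$ is bounded and uniformly continuous. It coincides almost everywhere with the weak derivative $D^\iota g$, by testing against $\psi \in \continuous_c^\infty$ and using Parseval. I would then replace $g$ by its continuous representative $h_0$. Differentiating under the integral sign in the inversion formula is justified by dominated convergence, since the integrands are dominated by $|\xi^\iota \hat g| \in L^1$. This shows $h_0 \in \continuous^\kappa$ with classical derivatives $h_\iota$.

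\emph{Main obstacle.} I expect the main obstacle to be Step 1: constructing an extension operator that preserves $H^m$ for $m \ge 2$ on a general domain. This requires boundary regularity, so I would state that hypothesis explicitly rather than try to avoid it. Alternatively, to get the interior statement $\continuous^\kappa(\Omega)$ without any boundary assumption, I would localise with cutoff functions $\chi \in \continuous_c^\infty(\Omega)$ and apply Steps 2--3 to $\chi f \in H^m(\Real^d)$.
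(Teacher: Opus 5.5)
Your two qualifications are right, and they are exactly the hypotheses carried by the cited source. For an arbitrary open $\Omega$ the statement fails (take $\Omega=(0,1)\cup(1,2)$, or a slit disc). With $\kappa=[m-d/2]$ it also fails whenever $m-d/2$ is a non-negative integer. Brezis assumes boundary regularity, and states the $\continuous^k$ conclusion with $k=[m-N/p]$ only for non-integer $m-N/p$.

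Under this reading your proof is correct, but it differs from the route behind the citation; the paper itself gives no argument. Brezis treats general $W^{m,p}$, $1\le p<\infty$, on $\Real^N$ by iterating the Gagliardo--Nirenberg--Sobolev and Morrey inequalities on derivatives. He then transfers the result to $\Omega$ via an extension operator. You instead exploit $p=2$: Plancherel and Cauchy--Schwarz give $\xi^\iota\hat g\in L^1$ for $|\iota|<m-d/2$, and Fourier inversion with differentiation under the integral does the rest.

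Your argument has three advantages:
\begin{itemize}
\item it is shorter and self-contained;
\item it makes the threshold visible as integrability of $(1+|\xi|^2)^{|\iota|-m}$;
\item it carries over verbatim to $\Yspace$-valued functions via the vector-valued Plancherel theorem, without needing $\dim\Yspace<\infty$.
\end{itemize}
Brezis's argument covers every $p$. It also yields the H\"older modulus $|x-y|^{\theta}$, $\theta=m-N/p-k$, for the top-order derivatives, which an $L^1$ bound on $\xi^\iota\hat g$ alone does not provide. The extension step is common to both.

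Two small corrections. First, Stein's extension theorem needs a Lipschitz (minimally smooth) boundary, not merely the cone condition. A slit disc satisfies the cone condition, yet its $H^m$ functions can jump across the slit, so they admit no extension and need not be continuous on $\bar\Omega$. Since a bounded $\continuous^1$ domain is Lipschitz, your Step~1 survives with the corrected citation.

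Second, your remark that the corrected statement "covers the use made of it" is only partly true. It would give the RKHS property for $m>s/2$. However, the paper actually invokes the proposition in its chain-rule argument with $\kappa=[m-d/2]$. In particular, the change-of-charts corollary takes $d=2$ and $\kappa=m-1$, which is exactly the borderline case you rightly exclude.
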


\begin{Theorem}[\cite{BerlinetThomas-Agnan2004ReproducingKernelHilbertSpaces}, Theorem 121]
	\label{thm:SobolevRKHS}
	For an open set $\Omega \subset \Real$, $H^m(\Omega)$ is a reproducing kernel Hilbert space if and only if $m \geq 1$. 
	$H^m(\Real^s)$ is a reproducing kernel Hilbert	space if and only if $m > d/2$.
\end{Theorem}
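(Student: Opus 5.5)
The plan is to use the characterization recalled in Section~\ref{sec:rkhs}: a Hilbert space of functions is an RKHS exactly when every evaluation functional $\eval_x$ is continuous. Elements of $H^m$ are a priori only equivalence classes in $L^2$. So ``$H^m$ is an RKHS'' will mean that each class has a continuous representative, and that $f\mapsto f(x)$, evaluated on this representative, is bounded. Throughout I read $d$ in the statement as the dimension $s$ of the domain.

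\emph{Sufficiency on an open $\Omega\subset\Real$ with $m\geq 1$.} Since $H^m(\Omega)\subset H^1(\Omega)$ with $\norm{f}_{H^1}\leq\norm{f}_{H^m}$, it is enough to treat $m=1$. First I will show that every $f\in H^1(\Omega)$ agrees a.e.\ with a locally absolutely continuous function satisfying $f(x)-f(y)=\int_y^x f'$ on each interval contained in $\Omega$; this is the one-dimensional case of Proposition~\ref{prop:Sobo2Diff}, applied on the component intervals. Fix $x\in\Omega$ and choose $\delta>0$ with $I=[x-\delta,x+\delta]\subset\Omega$. Averaging $f(x)=f(y)+\int_y^x f'$ over $y\in I$ and applying Cauchy--Schwarz gives
\[
|f(x)|\leq (2\delta)^{-1/2}\norm{f}_{L^2}+(2\delta)^{1/2}\norm{f'}_{L^2}\leq C_x\norm{f}_{H^1}.
\]
So $\eval_x$ is bounded for every $x$. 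Necessity is immediate: $H^0=L^2(\Omega)$, and there evaluation is not even well defined on classes. Indeed, bumps $f_n=1_{(x-1/n,\,x+1/n)}$ satisfy $f_n(x)=1$ while $\norm{f_n}_{L^2}\to0$.

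\emph{$H^m(\Real^s)$ with $m>s/2$.} I will use the Fourier characterization
\[
\norm{f}_{H^m}^2\asymp\int(1+|\omega|^2)^m|\hat f(\omega)|^2\,d\omega,
\]
which follows from Plancherel together with $\widehat{D^\iota f}=(i\omega)^\iota\hat f$ and $\sum_{|\iota|\leq m}\omega^{2\iota}\asymp(1+|\omega|^2)^m$. Cauchy--Schwarz then gives
\[
\int|\hat f|\leq\Big(\int(1+|\omega|^2)^m|\hat f|^2\Big)^{1/2}\Big(\int(1+|\omega|^2)^{-m}\Big)^{1/2}.
\]
The last integral is finite exactly when $2m>s$. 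Hence $\hat f\in L^1$, $f$ has the continuous representative $x\mapsto(2\pi)^{-s}\int e^{i\omega\cdot x}\hat f(\omega)\,d\omega$, and $\sup_x|f(x)|\leq C\norm{f}_{H^m}$. This proves boundedness of every $\eval_x$.

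\emph{Necessity for $H^m(\Real^s)$ with $m\leq s/2$.} Here I construct a sequence that defeats boundedness of $\eval_0$. Set
\[
\hat f_n=(1+|\omega|^2)^{-m}1_{\{|\omega|\leq n\}}/Z_n,\qquad Z_n=\int_{|\omega|\leq n}(1+|\omega|^2)^{-m}\,d\omega .
\]
Because $2m\leq s$, we have $Z_n\to\infty$. Each $f_n$ is smooth and lies in $H^m$, with $f_n(0)=(2\pi)^{-s}$ for all $n$, while
\[
\norm{f_n}_{H^m}^2\asymp Z_n^{-1}\to0.
\]
If $\eval_0$ were continuous we would need $f_n(0)\to0$, a contradiction. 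By translation invariance the same holds at every point.

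The main obstacle I expect is the bookkeeping rather than the estimates. One needs the norm equivalence between the weak-derivative norm and the Fourier-weighted norm. One also needs that the continuous representative is well defined, so that evaluation makes sense on $H^m$ as a space of functions, which is required for the iff to be meaningful. I would handle both by citing Plancherel and Proposition~\ref{prop:Sobo2Diff} rather than reproving them.
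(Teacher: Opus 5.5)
The paper gives no proof of this statement; it cites Berlinet and Thomas-Agnan (Theorem~121). Your proposal therefore supplies an argument the paper leaves to the reference, and that argument is correct and standard.

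\begin{itemize}
\item \emph{Sufficiency on $\Real^s$.} The Fourier-weighted norm reduces the question to whether $\int(1+|\omega|^2)^{-m}\,d\omega$ is finite, i.e.\ whether $2m>s$. Norm equivalence ($\asymp$) is all you need, because boundedness of evaluation functionals is preserved under equivalent norms.
\item \emph{Necessity on $\Real^s$.} Your normalized truncations $\hat f_n$ are a clean way to show $\eval_0$ is unbounded when $2m\le s$.
\item \emph{One dimension.} Local absolute continuity plus averaging gives an explicit constant $C_x$.
\end{itemize}

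Compared with the bare citation, your route makes the result self-contained. As a by-product it identifies the kernel for the paper's exact norm as $(2\pi)^{-s}\int e^{i\omega\cdot(x-y)}/w(\omega)\,d\omega$, with $w(\omega)=\sum_{|\iota|\le m}\omega^{2\iota}$. The citation buys only brevity.

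Two small points to tighten.

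\begin{itemize}
\item \emph{The one-dimensional identity.} Proposition~\ref{prop:Sobo2Diff} yields only a continuous representative. It does not give the identity $f(x)-f(y)=\int_y^x f'$ that you average. Either cite the interval version of the fundamental theorem of calculus for $H^1$ functions, or bypass it. To bypass it, multiply $f$ by a cutoff $\chi\in\mathcal{C}^\infty_c(\Omega)$ equal to $1$ near $x$, extend by zero, and apply your $\Real^1$ Fourier bound with $m=1>1/2$. This makes the first claim a corollary of the second.
\item \emph{Necessity for classes without canonical representatives.} For $m\le s/2$, and for $L^2(\Omega)$, elements have no canonical representatives. To exclude every realization with bounded evaluations, note the following. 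Any linear choice of representatives agrees with the continuous representatives of the countably many translates $f_n(\cdot-q)$, $q\in\mathbb{Q}^s$, off a common null set. At any point $x$ outside that set, $|f_n(x-q)|\le c_x\norm{f_n}_{H^m}$. Letting $q\to x$ gives $(2\pi)^{-s}\le c_x\norm{f_n}_{H^m}\to0$, a contradiction.
\end{itemize}

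This second point is exactly the kind of bookkeeping you anticipated; it does not change your argument.
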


The form of the kernels of $H^m(\Real^s)$ is provided by \cite{Novak2018reproducing}.

\subsection*{Introduction to the vector-valued case}

Links between Vector-valued Sobolev spaces and VVRKHS are, for instance, discussed by \citet{carmeliVectorValuedReproducing2010} and \citet{liOptimalSobolevNorm2024}.
Directly generalizing the previous definition, a vector-valued Sobolev space is defined as
\begin{equation*}
	H^m(\Omega, \Yspace) = \{ f \in L^2(\Omega, \Yspace) : D^{\iota} f \in L^2(\Omega, \Yspace), |\iota|\leq m \}
\end{equation*} 
with inner product
\begin{equation*}
	\langle f, g \rangle_{H^m(\Omega, \Yspace)} = \sum_{\iota: |\iota| \leq m} \int \langle D^\iota f(x),  D^\iota g(x) \rangle_{\Yspace} \, dx.
\end{equation*}

Without giving the precise definition of weak derivatives $D^{\iota}$ of vector-valued functions, see, e.g., \citep{Aubin2000AppliedFunctionalAnalysis}. We can directly give an equivalent definition of the vector-valued Sobolev spaces as follows.

\begin{Theorem}[\cite{Aubin2000AppliedFunctionalAnalysis}, Theorem 12.7.1]
	\label{thm:SobolevTensorProduct}
	The Sobolev space $H^m(\Omega, \Yspace)$ is isometric to the Hilbert space tensor product $H^m(\Omega) {\otimes} \Yspace$.
\end{Theorem}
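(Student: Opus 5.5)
The plan is to build the isometry through an orthonormal basis of $\Yspace$ and reduce everything to the scalar Sobolev space $H^m(\Omega)$. Since $\Yspace$ is separable, fix an orthonormal basis $(e_j)_{j\geq 1}$ of $\Yspace$. Recall that the Hilbert tensor product $H^m(\Omega)\otimes\Yspace$ is the completion of finite sums $\sum_k g_k\otimes y_k$ under the inner product $\innerProd{g\otimes y, g'\otimes y'} = \innerProd{g,g'}_{H^m(\Omega)}\YinnerProd{y,y'}$. Every element of it can be written uniquely as $\sum_j g_j\otimes e_j$ with $\sum_j \norm{g_j}_{H^m(\Omega)}^2<\infty$.

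\emph{Step 1: define $U$ on simple tensors.} Set $U(g\otimes y) := g(\cdot)\,y$, a $\Yspace$-valued function on $\Omega$. For $\psi\in\continuous_c^\infty(\Omega)$, the function $g\,y$ has weak derivative $D^\iota(g\,y) = (D^\iota g)\,y$; this is immediate by pulling the constant vector $y$ out of the Bochner integral. Hence $g\,y\in H^m(\Omega,\Yspace)$ and
\[
\innerProd{U(g\otimes y),U(g'\otimes y')}_{H^m(\Omega,\Yspace)}=\sum_{|\iota|\le m}\int D^\iota g\,D^\iota g'\,\YinnerProd{y,y'}\,dx=\innerProd{g,g'}_{H^m(\Omega)}\YinnerProd{y,y'}.
\]
So $U$ preserves inner products on simple tensors. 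By linearity and density it extends to a linear isometry $U: H^m(\Omega)\otimes\Yspace\to H^m(\Omega,\Yspace)$, using completeness of $H^m(\Omega,\Yspace)$. Concretely, $U\bigl(\sum_j g_j\otimes e_j\bigr)=\sum_j g_j e_j$, with the series converging in $H^m(\Omega,\Yspace)$.

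\emph{Step 2: surjectivity via coordinates.} Given $f\in H^m(\Omega,\Yspace)$, let $f_j:=\YinnerProd{f(\cdot),e_j}$. The key claim is that $f_j\in H^m(\Omega)$ with $D^\iota f_j=\YinnerProd{D^\iota f,e_j}$. This follows by applying the bounded functional $\YinnerProd{\cdot,e_j}$ to the defining identity $\int D^\iota f\,\psi = (-1)^{|\iota|}\int f\,D^\iota\psi$, since bounded linear maps commute with Bochner integrals. Parseval applied pointwise, followed by integration (monotone convergence), then gives
\[
\sum_j\norm{f_j}_{H^m(\Omega)}^2=\sum_{|\iota|\le m}\int\sum_j\YinnerProd{D^\iota f,e_j}^2\,dx=\norm{f}_{H^m(\Omega,\Yspace)}^2<\infty.
\]
The same computation applied to the tail shows
\[
\Bigl\|f-\sum_{j\le J}f_je_j\Bigr\|_{H^m(\Omega,\Yspace)}^2=\sum_{|\iota|\le m}\int\sum_{j>J}\YinnerProd{D^\iota f,e_j}^2\,dx\to0
\]
by dominated convergence. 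Therefore $f=U\bigl(\sum_j f_j\otimes e_j\bigr)$, so $U$ is onto and hence an isometric isomorphism.

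\emph{Main obstacle.} The delicate point is the interchange of weak differentiation with the coordinate maps and with infinite sums when $\dim\Yspace=\infty$. I handle it only through bounded functionals and $L^2(\Omega,\Yspace)$-convergence of the partial sums, never through pointwise arguments. This requires the vector-valued weak derivative to be defined via Bochner integrals against scalar test functions, together with Pettis measurability of $f$; both are guaranteed by separability of $\Yspace$.
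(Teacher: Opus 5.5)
Your proof is correct. Note that the paper does not prove this statement: it is imported from Aubin's book (Theorem~12.7.1 there) and used without proof, so your argument is a self-contained substitute rather than a variant of something in the paper.

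You take the elementary coordinate route. You check that $g\otimes y\mapsto g(\cdot)\,y$ preserves inner products on simple tensors, which also gives well-definedness on finite sums. You then extend to the completion, and get surjectivity from $D^\iota\YinnerProd{f,e_j}=\YinnerProd{D^\iota f,e_j}$ plus Parseval under the integral. This uses only what the paper's setting provides: separable $\Yspace$, and weak derivatives tested against scalar test functions through Bochner integrals. The one fact you take for granted, completeness of $H^m(\Omega,\Yspace)$, is the standard closedness of weak derivatives under $L^2$-limits.

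The main gain over the bare citation is that your version pins the isometry down as the natural map $g\otimes y\mapsto g(\cdot)\,y$. This is exactly what the paper's next corollary needs. An abstract isometry between $H^m(\Omega,\Yspace)$ and $\Hspace_k\otimes\Yspace$ would not show that $H^m(\Omega,\Yspace)$, as a space of functions, is the VVRKHS with kernel $k(x,x')\,\id$. With your $U$, and $H^m(\Omega)$ an RKHS with kernel $k$, this becomes a one-line check. Indeed $U(k_x\otimes y)=k(\cdot,x)\,y$, and $\YinnerProd{(Uh)(x),y}=\innerProd{h,k_x\otimes y}$ for simple tensors $h$, which is the reproducing identity for $k(x,x')\,\id$.

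The citation, in turn, is shorter and avoids committing to a particular definition of the vector-valued weak derivative, which the paper leaves to the reference.
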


Given that a VVRKHS with kernel $K(x,x')=k(x,x') \id$ defined on $\Yspace$ is isometric to $\mathcal{H}_k{\otimes} \Yspace$, with $\mathcal{H}_k$ the RKHS of $k$, the characterization of Sobolev spaces as RKKHS directly carries over to the vector-valued case:  

\begin{Corollary}
	If $H^m(\Omega)$ is an RKHS with kernel $k$, then $H^m(\Omega, \Yspace)$ is a vector-valued RKHS with kernel $K: (x,x') \mapsto k(x,x') \id$. 
\end{Corollary}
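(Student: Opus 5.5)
The statement is the Corollary: if $H^m(\Omega)$ is an RKHS with kernel $k$, then $H^m(\Omega,\Yspace)$ is a VVRKHS with kernel $K(x,x')=k(x,x')\id$. The plan is to combine Theorem~\ref{thm:SobolevTensorProduct} with the identification, recalled in Section~\ref{sec:vvrkhs}, of the SMO-kernel VVRKHS with $\Hspace_k\otimes\Yspace$.

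\textbf{Step 1 (the candidate space).} Moore's theorem \citep[][Theorem~6.12]{paulsenIntroductionTheoryReproducing2016} applies because $K$ is positive semidefinite whenever $k$ is. For $x_i\in\Omega$ and $y_i\in\Yspace$,
\[
\sum_{i,j}\YinnerProd{y_i,k(x_i,x_j)y_j}\ge 0 .
\]
To see this, expand each $y_i=\sum_l a_{il}e_l$ in an orthonormal basis $(e_l)$; the sum becomes $\sum_l\sum_{i,j}a_{il}a_{jl}k(x_i,x_j)$, and each inner sum over $i,j$ is nonnegative. So there is a VVRKHS $\Hspace_K$ with kernel $K$. By the fact recalled in Section~\ref{sec:vvrkhs}, $\Hspace_K\cong\Hspace_k\otimes\Yspace=H^m(\Omega)\otimes\Yspace$.

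\textbf{Step 2 (matching the isometries on elementary tensors).} The isometry $\Hspace_k\otimes\Yspace\to\Hspace_K$ sends $g\otimes y$ to the function $x\mapsto g(x)y$. This holds on kernel sections, since $k_x\otimes y\mapsto K_xy=k(\cdot,x)y$, and it extends by linearity and continuity because point evaluations are continuous on both sides. Theorem~\ref{thm:SobolevTensorProduct} gives an isometry $H^m(\Omega)\otimes\Yspace\to H^m(\Omega,\Yspace)$, which I will check also sends $g\otimes y$ to $g(\cdot)y$. For that, note that $D^\iota(gy)=(D^\iota g)\,y$ weakly, and
\[
\innerProd{gy,g'y'}_{H^m(\Omega,\Yspace)}=\innerProd{g,g'}_{H^m(\Omega)}\YinnerProd{y,y'} .
\]
Both isometries are therefore continuous linear maps agreeing on the total set of elementary tensors, and so $\Hspace_K=H^m(\Omega,\Yspace)$ as sets of functions, with equal norms.

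\textbf{Step 3 (the identification is pointwise).} Elements of $H^m(\Omega,\Yspace)$ are a priori $L^2$-classes, so "is a VVRKHS" requires a continuous representative. A limit in $H^m(\Omega,\Yspace)$ of finite sums $\sum g_j y_j$ converges in $L^2$, and also pointwise in $\Hspace_K$-norm because evaluation is continuous there. Hence the $\Hspace_K$ function is a representative of the $L^2$ class, and the identification of Step~2 is genuinely one of functions.

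\textbf{Step 4 (reproducing property).} Continuity of evaluation is then inherited from $\Hspace_K$: $\Ynorm{f(x)}\le\Hnorm{f}\sqrt{k(x,x)}$ by Lemma~\ref{Lemma:bounding_fX}. The reproducing identity $\YinnerProd{y,f(x)}=\innerProd{k(\cdot,x)y,f}$ holds on elementary tensors and extends by density.

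The main obstacle is Steps~2--3. Theorem~\ref{thm:SobolevTensorProduct} only asserts an abstract isometry, so I must confirm it is the canonical one on elementary tensors and that the limiting $L^2$ classes coincide with the pointwise functions.
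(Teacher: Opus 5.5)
Your proposal is correct and follows the same route as the paper, which justifies the corollary in one sentence by combining Theorem~\ref{thm:SobolevTensorProduct} ($H^m(\Omega,\Yspace)\cong H^m(\Omega)\otimes\Yspace$) with the identification of the SMO-kernel VVRKHS with $\Hspace_k\otimes\Yspace$. Your Steps~2--4 supply what the paper leaves implicit: that both isometries are the canonical map $g\otimes y\mapsto g(\cdot)y$, and that the $L^2$ classes receive pointwise representatives from $\Hspace_K$. Your Step~2 computation does not itself show that the canonical map is onto $H^m(\Omega,\Yspace)$; you take that from the cited theorem. It also follows directly: for an orthonormal basis $(e_l)$ of $\Yspace$, every $f\in H^m(\Omega,\Yspace)$ is the $H^m$-limit of $\sum_{l\le L}\YinnerProd{f(\cdot),e_l}e_l$ by dominated convergence.
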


Let $\Ccal^m(\Omega, \Yspace)$ denote the set of $m$ times continuously differentiable functions from $\Omega$ to $\Yspace$, and $\Ccal^m(\bar{\Omega}, \Yspace)$ the subset of functions that can be continuously extended to $\bar{\Omega}$ such that also all their derivatives can be continuously extended. 

\begin{Corollary}
	\label{cor:Sobo2Diff}
	Let $\dim(\Yspace) < \infty$, integer $m \geq 1$, and $\kappa = [m - d/2]$. Then 
	$H^m(\Omega,\Yspace) \subset \mathcal{C}^\kappa(\bar{\Omega},\Yspace)$.
\end{Corollary}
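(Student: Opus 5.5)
This follows from the isometry $H^m(\Omega,\Yspace)\cong H^m(\Omega)\otimes\Yspace$ (Theorem~\ref{thm:SobolevTensorProduct}) together with $\dim\Yspace<\infty$. Once a basis is fixed, the vector-valued statement reduces to finitely many scalar statements, and Proposition~\ref{prop:Sobo2Diff} handles each of those.

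First fix an orthonormal basis $e_1,\dots,e_D$ of $\Yspace$, where $D=\dim\Yspace$. For $f\in H^m(\Omega,\Yspace)$ set $f_j:=\YinnerProd{f(\cdot),e_j}$. I will show that each $f_j$ lies in $H^m(\Omega)$. Through the tensor-product isometry, $f$ corresponds to $\sum_j f_j\otimes e_j$ with $\sum_j\norm{f_j}_{H^m(\Omega)}^2=\norm{f}_{H^m(\Omega,\Yspace)}^2$. One can also check this directly: pairing the weak-derivative identity with $e_j$ shows that $\YinnerProd{D^\iota f,e_j}$ is the weak derivative $D^\iota f_j$, and this function lies in $L^2(\Omega)$. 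This is the analogue, for Sobolev spaces, of Lemma~\ref{Lemma:projection_against_y_is_in_scalar_RKHS}. Moreover $f=\sum_{j=1}^D f_j e_j$ holds almost everywhere, and since the sum is finite there are no convergence issues.

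Next apply Proposition~\ref{prop:Sobo2Diff} to each coordinate. This gives $f_j\in\mathcal{C}^\kappa(\bar\Omega)$ with $\kappa=[m-d/2]$, after modifying $f_j$ on a null set. The same almost-everywhere modification is made to $f$, which then equals $\tilde f:=\sum_{j=1}^D f_j e_j$. Because the sum is finite and $\Yspace$ is finite-dimensional, $\tilde f$ is $\kappa$ times continuously differentiable with $D^\iota\tilde f=\sum_j (D^\iota f_j)e_j$. Each $D^\iota f_j$ extends continuously to $\bar\Omega$, so $D^\iota\tilde f$ does too. Hence $\tilde f\in\mathcal{C}^\kappa(\bar\Omega,\Yspace)$, which gives the inclusion, read as usual modulo representatives.

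I expect the only delicate step to be the identification of the vector-valued weak derivatives with the coordinatewise ones, that is, justifying $D^\iota f_j=\YinnerProd{D^\iota f,e_j}$. I will take that directly from the definition in \citet{Aubin2000AppliedFunctionalAnalysis}, where weak derivatives are defined by testing against $\mathcal{C}_c^\infty(\Omega)$ functions, and from continuity of $y\mapsto\YinnerProd{y,e_j}$, which commutes with the Bochner integral. The finite-dimensionality assumption is used only to make the coordinate sum finite, so that the uniform continuity of the extensions carries over to $\tilde f$.
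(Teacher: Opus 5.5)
Your coordinatewise reduction is exactly the argument the paper intends. The paper gives no separate proof: it presents the corollary as immediate from Theorem~\ref{thm:SobolevTensorProduct} and Proposition~\ref{prop:Sobo2Diff}, and it uses the same basis expansion in Corollary~\ref{corollary:chain_rule_sobolev_vector}. Everything up to the scalar step is sound. In particular, the identification $D^\iota f_j=\YinnerProd{D^\iota f,e_j}$ that you flag as delicate is routine: pair the defining identity of the weak derivative with $e_j$.

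The genuine gap is the sentence ``Proposition~\ref{prop:Sobo2Diff} handles each of those''. As transcribed in the paper, that proposition is false in general, and hence so is the corollary, already for $\Yspace=\Real$. Brezis' Corollaries 9.13 and 9.15 require $\Omega$ to be of class $\mathcal{C}^1$ with bounded boundary. They also give $\mathcal{C}^{[m-d/2]}(\bar\Omega)$ only when $m-d/2>0$ is \emph{not} an integer. Two counterexamples:
\begin{itemize}
\item For $d=1$, take $\Omega=(0,1)\cup(1,2)$ and $f=\mathbf{1}_{(1,2)}$. Then $f\in H^1(\Omega)$, since its weak derivative vanishes, but $f$ has no continuous extension to $\bar\Omega=[0,2]$.
\item For $d=2$, $m=1$ and $\Omega$ the unit disc, take $u(x)=\log\log(e/|x|)$. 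It satisfies $\int_\Omega|\nabla u|^2\,\ud x=2\pi\int_1^\infty t^{-2}\,\ud t<\infty$, so $u\in H^1(\Omega)$. Yet $u$ is unbounded, although $[m-d/2]=0$.
\end{itemize}
More generally, whenever $m-d/2\geq 0$ is an integer (always the case for even $d$), $H^m(\Omega)\not\subset\mathcal{C}^{[m-d/2]}(\bar\Omega)$.

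Your reduction transfers verbatim whatever scalar embedding it is fed, so you need to add the missing hypotheses: $\Omega$ bounded of class $\mathcal{C}^1$, $m>d/2$, and either $m-d/2\notin\mathbb{Z}$ or $\kappa$ replaced by the largest integer strictly below $m-d/2$. With these, your argument is a complete proof.
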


\subsection*{Sobolev spaces and smooth manifolds}

\begin{Proposition}
	\label{proposition:continuous_inside_sobolev}
	Let $\Omega\subset\Real^s$ be open and bounded, then 
    $\mathcal{C}^{m}(\bar{\Omega}, \Yspace) \subset H^{m}(\Omega, \Yspace)$.
	\begin{proof}
		Since  the derivatives $D^{\iota} f$, $|\iota|\leq m$, of $f \in \mathcal{C}^{m}(\bar{\Omega}, \Yspace)$ exist by definition, we only need to show that they have finite square integral. By definition, the $D^{\iota} f$ can be extended to continuous functions on $\bar{\Omega}$ with continuous $x \mapsto \|D^{\iota} f(x)\|_{\Yspace}^2$ on $\bar{\Omega}$. Since $\bar{\Omega}$ is compact, 
		there exists an $x^{\iota}$ with $\|D^{\iota} f\|_{\Yspace}^2(x^{\iota}) \geq \|D^{\iota} f\|_{\Yspace}^2(x)$ for all $x \in \Omega$ and $\int \|D^{\iota} f(x)\|_{\Yspace}^2 \, dx \leq \|D^{\iota} f(x^\iota)\|_{\Yspace}^2 \int 1\, dx < \infty$.
	\end{proof}
\end{Proposition}

\begin{Corollary}
    \label{cor:continuous_mu}
	Let $\mathcal{M}$ be a differentiable manifold over a Hilbert space $\mathcal{Y}$, let $\mu: \mathcal{X} \rightarrow \mathcal{M}$ be $m$ times continuously differentiable, and let $\psi: \mathcal{N} \rightarrow \Ucal$ be a chart from an open set $\mathcal{N} \subset \mathcal{M}$ onto an open set $\Ucal \subset \Yspace$, i.e.\ a diffeomorphism from $\mathcal{N}$ to $\Ucal$. 
Furthermore, let $\Omega \subset \Xspace \subset \Real^s$ be non-empty, bounded, open, and such that $\bar{\Omega} \subset \mu^{-1}(\mathcal{N})$.
	Then $\left.\psi \circ \mu\right|_{\Omega} \in H^m(\Omega, \mathcal{Y})$.
    Moreover, if there is an $x_0\in\mathcal{X}$ with $\mu(x_0) \in \mathcal{N}$, such a set $\Omega$ exists.
    
	\begin{proof}
    If $\mu(x_0) \in \mathcal{N}$ for any $x_0\in\Xspace$, i.e.\ $\mu^{-1}(\mathcal{N}) \neq \emptyset$, there is an $\epsilon>0$ such that $B_{\epsilon}(x_0) \subset \mu^{-1}(\mathcal{N})$, since the continuity of $\mu$ implies that $\mu^{-1}(\mathcal{N})$ is open. Hence, a possible non-empty choice is $\Omega = B_{\epsilon/2}(x_0)$.\newline
		Since $\mu^{-1}(\mathcal{N})$ is open, and by differentiability of $\mu$, we have $\left.\psi \circ \mu \right|_{\mu^{-1}(\mathcal{N})} \in \Ccal^m(\mu^{-1}(\mathcal{N}), \mathcal{Y})$. Hence, $\left.\psi \circ \mu \right|_{\Omega} \in \Ccal^m(\bar{\Omega}, \mathcal{Y})$ and belongs to $H^m(\Omega, \Yspace)$ by Proposition \ref{proposition:continuous_inside_sobolev}.  
	\end{proof}
\end{Corollary}

\begin{Proposition}
	\label{proposition:chain_rule_sobolev_real}
	Let $\Omega \subset \Real^s$ open and bounded, $f\in H^m(\Omega)$, $\Ucal \subset \Real$ open with $f(\Omega) \subset \Ucal$, and $g\in\Ccal^m(\bar{\Ucal})$. Then $g\circ f \in H^{\kappa+1}(\Omega)$ for $\kappa=[m-d/2]$.
	\begin{proof}
		By Proposition \ref{prop:Sobo2Diff}, $f \in \Ccal^{\kappa}(\bar{\Omega})$, and therefore $D^\iota f$ bounded for $|\iota|\leq \kappa$. Thus, $g \circ f \in \Ccal^{\kappa}(\bar{\Omega})$ and, by Proposition \ref{proposition:continuous_inside_sobolev}, also $g \circ f \in H^{\kappa}({\Omega})$. It remains to show that $D^\iota (g \circ f) \in H^1(\Omega)$ for $|\iota|=\kappa$.\newline
		By multiple application of the chain rule (Faà di Bruno's formula), $D^\iota (g\circ f) \in \operatorname{span}\mathcal{D}_\iota$,  
        with $\mathcal{D}_\iota = \{x \mapsto g^{(a)}\circ f \cdot \prod_{\tilde{\iota} \leq \iota} (D^{\tilde{\iota}} f(x))^{b_{\tilde{\iota}}}:\, a\leq |\iota|,\, 0 \leq b_{\tilde{\iota}} \leq |\iota| - |\tilde{\iota}| +1,\, \tilde{\iota} \leq \iota\}$ where $\tilde{\iota} \leq \iota$ denotes `$\leq$` in all indices. 
        Whenever $|\tilde{\iota}| < \kappa$, the functions in $\mathcal{D}$ are in $\Ccal^{1}(\bar{\Omega})$ since all their components are. 
		Since $\Omega$ is bounded, $\Ccal^{1}(\bar{\Omega}) \subset H^1(\Omega)$ by Proposition \ref{proposition:continuous_inside_sobolev}. 
		Hence, it remains to show that $g^{(a)} \circ f \cdot D^\iota f \in H^1(\Omega)$ for $|\iota|=\kappa$ and $a\leq \kappa$:\newline 
        We have 
		  1.\ that $D^\iota f\in H^1(\Omega) \cap L^\infty(\Omega)$ since $f \in \Ccal^{\kappa}(\bar{\Omega})$ and thus the continuous derivative can be extended to the compact $\bar{\Omega}$,
		and 2.\ that $g^{(a)} \circ f \in \Ccal^1(\bar{\Omega}) \subset H^1(\Omega) \cap L^\infty(\Omega)$ since $m> \kappa\geq a$ and thus $g^{(a)} \in \Ccal^1(\bar{\Ucal})$.
        Together, Proposition \ref{prop:chainruleH1H1} yields that their product is in $H^1(\Omega)$.
	\end{proof}
\end{Proposition}

\begin{Corollary}
	\label{corollary:chain_rule_sobolev_vector}
	Let $\dim(\Yspace)<\infty$, $\Omega \subset \Real^s$ open and bounded, $f\in H^m(\Omega,\Yspace)$, $\Ucal \subset \Real$ open with $f(\Omega) \subset \Ucal$, and $g\in\Ccal^m(\bar{\Ucal},\Yspace)$. Then $g\circ f \in H^{\kappa+1}(\Omega,\Yspace)$ for $\kappa=[m-d/2]$. 
	\begin{proof}
		Let $\{e_j\}_j$ be an ONS of $\Yspace$. Since $\dim(\Yspace)<\infty$, it remains to show that all the $(g\circ f)_i: x \mapsto \langle g\circ f(x), e_i\rangle_{\Yspace}$ are in $H^\kappa(\Omega)$. We also expand $f(x) = \sum_{i} f_i(x)$ with $f_i: x\mapsto \langle f(x), e_i\rangle_{\Yspace}$ and $g(y) = \sum_{i} g_i(y)$ with $g_i: y\mapsto \langle g(y), e_i\rangle_{\Yspace}$ with partial derivatives $\frac{\partial}{\partial e_j} g_i(y)$ into the basis directions.
		In analogy to the proof of Proposition \ref{proposition:chain_rule_sobolev_real}, the derivatives $D^{\iota} (g\circ f)_i$ are spanned by functions of the form $D^{\alpha} g_i \circ f \cdot \prod_{\tilde{\iota}\leq\iota} D^{\tilde{\iota}} f_i^{b_{\tilde{\iota}}}$, with multi-index $\alpha=(\alpha_1,\dots,\alpha_{\dim(\Yspace)})$, for which the same argument as in the proof of 
	    Proposition \ref{proposition:chain_rule_sobolev_real} can be applied.
	\end{proof}
\end{Corollary}

\begin{Corollary}
    \label{cor:change_of_charts}
	Consider a function $\mu: \Omega \rightarrow \mathcal{N}$ between open sets $\Omega\subset \Real^s$ and $\mathcal{N} \subset \mathcal{M}$, where $\mathcal{M}$ is a differentiable manifold over $\Yspace$. Let $\psi_i: \mathcal{N}_i \rightarrow \Ucal_i$, $i=1,2$ be overlapping charts from open sets $\mathcal{N}_1, \mathcal{N}_2 \subset \mathcal{M}$ onto open sets $\Ucal_1,\Ucal_2 \subset \Yspace$ such that $\bar{\mathcal{N}} \subset \mathcal{N}_1 \cap \mathcal{N}_2$.
	Assume further that $\dim({\Yspace})<\infty$ and that $\Omega$ is bounded and of dimension $\operatorname{d}\in \{1,2\}$. 
	Then
	 $\mu = \psi_1^{-1} \circ f_1$ for some $f_1 \in H^m(\Omega, \Yspace)$, $m\geq 1$, if and only if there exists $f_2 \in H^m(\Omega, \Yspace)$ such that $\mu = \psi_2^{-1} \circ f_2$.
	\begin{proof}
        Since $\psi_1$ is a diffeomorphism $\Ucal = \psi_1(\mathcal{N})$ is open and $\bar{\Ucal} = \psi_1({\bar{\mathcal{N}}}) \subset \Ucal_1$. Since $\psi_2 \circ \psi_1^{-1} \in \Ccal^\infty(\Ucal_1)$ by definition, in particular $g:=\left.\psi_2 \circ \psi_1^{-1}\right|_{\Ucal}  \in \Ccal^\infty(\bar{\Ucal})$. Thus, we have the setup of Corollary \ref{corollary:chain_rule_sobolev_vector} with $k = m -1$ 
		and obtain $f_2 := g \circ f_1 \in H^m(\Omega)$, where as desired $\psi_2^{-1} \circ f_2 = \psi_2^{-1}\circ \psi_2 \circ \psi_1^{-1} \circ f_1 = \mu$.
	\end{proof}
\end{Corollary}

\begin{Remark}
    For $\mu:\Xspace \rightarrow \mathcal{M}$ more generally, we can also choose an open $\mathcal{N} \subset \mathcal{M}$ with $\bar{\mathcal{N}} \subset \mathcal{N}_1 \cap \mathcal{N}_2$, and set $\Omega = \mu^{-1}(\mathcal{N})$. If $\mu$ is continuous, $\Omega$ is open and restricting to $\left.\mu\right|_\Omega$ we are back in the setup of Corollary \ref{cor:change_of_charts}.
    In particular, with $d\leq 2$, $\mu$ is continuous whenever it is of the form $\mu = \psi^{-1}_1 \circ f_1$ for $f_1\in H^m(\Xspace)$, $m\geq 1$, by Proposition  \ref{prop:Sobo2Diff}.
\end{Remark}

\section{Main proofs}

\begin{proof}[Proof of Proposition \ref{proposition:mu_well_defined}]
The squared geodesic distance $(q,y)\mapsto \sphereDist^2(q,y)$ is continuous and
bounded on $\sphere\times\sphere$, and therefore Borel measurable and integrable \citep[Assumption~3]{brown1973measurable}.
Consequently, the conditional risk
\[
r(x,q)
:= \Expected[\sphereDist^2(q,Y)\mid X=x]
= \int \sphereDist^2(q,y)\,\theLaw_{Y\mid X=x}(dy)
\]
is well defined for $x \in A$ \citep[Assumption~2 and Remark]{brown1973measurable}.
Using the assumptions and \citet[][Theorem B and 57]{yokotaConvexFunctionsBarycenter2017}, $\inf_{q \in \sphere} r(x,q)$ is achieved for all $x \in A$.
By Theorem~3 of \citet{brown1973measurable}, there exists an absolutely measurable $\mu:\Xspace\to\sphere$ such that
\[
r(x,\mu(x))=\inf_{q\in\sphere} r(x,q)
\quad\text{for all } x \in A.
\]
This concludes the proof.
\end{proof}

\begin{proof}[Proof of Lemma~\ref{Lemma:minimum_interpolant_vvrkhs}]
       By lemma \ref{Lemma:dirctSum} we have that $f = \tilde{f} + g$ with 
\begin{align*}
\tilde{f}&=\sum_{i=1}^n k_{x_i}\xi_i \\
g&=\sum_{j\geq 1} \alpha_j v_j,
\end{align*}
where $\xi_i\in \vspan( f(x_1), \dots,  f(x_n))=\vspan(\tilde{f}(x_1), \dots, \tilde{f}(x_n)), \alpha_j\in \Real$ and $v_j(x_i)=0$ for all $i, j$. Then 
\begin{align*}
            \tilde{f}(x_i) = f(x_i), \quad i=1,\ldots, n;  \quad      \Hnorm{\tilde{f}} \leq \Hnorm{f}.
\end{align*}
and $\langle \tilde{f} , g\rangle_{\Hspace}=\sum_{i=1}^n\sum_{j\geq 1} \langle  k_{x_i}\xi_i  , \alpha_j v_j\rangle_{\Hspace}= \sum_{i=1}^n\sum_{j\geq 1} \langle  \xi_i  , \alpha_j v_j(x_i)\rangle_{\Yspace}=0$, therefore:
\begin{align*}
    \|f\|^2_{\Hspace}= \|\tilde{f}+g\|^2_{\Hspace}= \|\tilde{f}\|^2_{\Hspace}+\|g\|^2_{\Hspace} \geq \|\tilde{f}\|^2_{\Hspace}.
\end{align*}
\end{proof}

\begin{proof}[Proof of Theorem~\ref{theorem:existence_of_global_minimizer_unconstrained}]
  Assume $\dim \vspan(Y_1, \ldots, Y_n) +n < \dim \Yspace$.
        Fix $f \in \Hspace$, and let 
        \begin{align*}
            \mathcal Y &= \vspan\{\origin, Y_1,\dots,Y_n\}, \\ 
            V_f &= \vspan\{f(X_1),\dots,f(X_n)\}, \\
            W_f &= \mathcal Y + V_f, \\
            \widetilde W &= \mathcal Y + \vspan\{ w_1,\ldots, w_n \}.
        \end{align*}
then 
$\mathcal Y \subseteq W_f \cap \widetilde W$, and $\dim W_f \leq \dim \widetilde W$. Hence by Lemma~\ref{Lemma:rotations}, 
$\exists R \in \orthogonal(\Yspace)$ such that
$R|_{\mathcal Y} = \id_{\mathcal Y}$
and $R W_f \subseteq \widetilde W$.
Using these properties together with Lemmas~\ref{Lemma:rotations_and_sphere} and \ref{Lemma:operator_on_Y_induces_operator_on_H}, we get
\begin{align*}
    \risk_n(f, \lambda) &= \frac 1n \sum_{i=1}^n \sphereDist^2( \Exp_{\origin}( f(X_i)), Y_i) + \lambda^2 \Hnorm{f}^2
                    \\ &= \frac 1n \sum_{i=1}^n \sphereDist^2( \Exp_{R \origin}( (\tilde R f)(X_i)), R Y_i) + \lambda^2 \Hnorm{\tilde R f}^2
                    \\ &= \frac 1n \sum_{i=1}^n \sphereDist^2( \Exp_{R \origin}( (\tilde R f)(X_i)), Y_i) + \lambda^2 \Hnorm{\tilde R f}^2
                     \\&= \risk_n( \tilde R f, \lambda),
\end{align*}
where $\tilde R$ is defined in Lemma~\ref{Lemma:operator_on_Y_induces_operator_on_H}. Applying Lemma~\ref{Lemma:minimum_interpolant_vvrkhs}, we get
\[
    \risk_n(\tilde R f, \lambda) \geq \risk_n(\tilde f, \lambda),
\]
for some $\tilde f = \sum_{i=1}^n K_{X_i} \xi_i$ with $\xi_i \in \vspan\{ R f(X_i) : i = 1,\ldots, n\}$.
Notice that $R f(X_i) \in \widetilde W$ but also $\YinnerProd{Rf(X_i), \origin}= 0$ hence $\proj_{\origin} Rf(X_i) = Rf(X_i)$. Hence 
\[
\xi_i \in \vspan\{ R f(X_i) : i = 1,\ldots, n\} = \proj_{\origin}\vspan\{ R f(X_i) : i = 1,\ldots, n\} \subseteq \proj_{\origin} \widetilde W.
\]
This proves \eqref{eq:inf_reduces_to_finite_dim_inf}.

For the existence of the minimizer if $\lambda > 0$, notice that the objective function is continuous in $f$ since $\sup_x k(x,x) < \infty$, and that the optimization set is compact: indeed $f$ only needs to be minimized on an $\Hspace$-ball since $\risk_n(., \lambda)$ is coercive, and $\tilde \Hspace$ is finite-dimensional.

For the last statement, take a rotation $R = \orthogonal(\Yspace)$ such that $R\origin = \origin$, $RY_i = Y_i$ and $R w_l = -w_l$ for all $i,l$. Then 
\[
    \risk_n(\tilde R \hat f, \lambda) = \risk_n( \hat f, \lambda)
\]
so $\tilde R \hat f$ is also a minimizer. By uniqueness, $0 = \tilde R \hat f - \hat f$, and writing $\hat f$ in terms of the elements defining the spans above finishes the proof.

  If $\dim \vspan(Y_1, \ldots, Y_n) +n + 1 \geq \dim \Yspace$, then take only $m := \dim \Yspace - 1 - \dim \vspan(Y_1, \ldots, Y_n)$ orthogonal vectors $w_1,\ldots, w_m \in \vspan(Y_1,\ldots, Y_n)^\perp \cap (\sphere \setminus \{-\origin\})$. This first part of the proof is then trivial obtained since $\widetilde W = T_\origin \sphere$. The uniqueness part follows directly from similar arguments to the one given above.
    \end{proof}

\begin{proof}[Proof of Theorem~\ref{theorem:ell_Hessian_new}]
  The first statement is Proposition~\ref{proposition:analytic_convexity}. 
  The second statement follows from Lemma~\ref{Lemma:gradient_l_v}, the first statement, and Lemma~\ref{Lemma:convex_gradient}.
\end{proof}

Before turning to the proof of Theorem~\ref{theorem:risk_is_convex_new}, we need the following result.
\begin{Lemma}
    \label{Lemma:population_risk_convexity_inequality_new}
    Recall the definition of the population risk $\risk$ from \eqref{eq:population_risk}.
    Under assumption~\ref{assumption:support_of_Y_for_theory_new}, provided $\Expected k(X,X) < \infty$,
 $\nabla \risk(f)$  exists for all $f \in \Uset$ and
    \[
        \nabla \risk(f) =   \Expected\eval_X^\adjoint \nabla \ell_{Y}( f(X))
    \]
    where $\ell_y$ is defined in \eqref{eq:ell_function}.
    In addition, there is an $\epsilon > 0$ such that
    \[
      \risk(f) - \risk(g) \geq \HinnerProd{ \nabla \risk(g) ,  f-g }  + \frac{\epsilon}{2} \Ltwonorm{f - g}^{2} 
    \]
    for all $f,g \in \Uset$.
    \begin{proof}
      The first claim follows by applying the dominated convergence theorem; details are left to the reader. 

        For the second claim, the function $\ell_y$ is convex on $\Cregion$ by Theorem~\ref{theorem:ell_Hessian_new}. Lemma~\ref{Lemma:convex_gradient} implies
        \[
            \ell_y(v') \geq \ell_y(v) + \YinnerProd{ \nabla \ell_y(v), v' - v} + \frac{\epsilon}{2} \Ynorm{v'-v}, \quad \forall v,v' \in \Cregion.
        \]
        Using this, for all $f,g \in \Uset$,
        \begin{align*}
            \risk(f) - \risk(g) &= \Expected\Bigl[ \sphereDist^{2}\bigl(\Exp_{\origin}(f(X)), Y\bigr) - \sphereDist^{2}\bigl(\Exp_{\origin}(g(X)), Y\bigr) \Bigr] \\
                                &\ge \Expected\Bigl[ \YinnerProd{ \nabla \ell_{Y}( g(X)), (f-g)(X) } + \frac{\epsilon}{2} \Ynorm{ (f-g)(X) }^2 \Bigr] \\
                                &= \Expected\Bigl[ \YinnerProd{ \nabla \ell_{Y}( g(X)), \eval_X (f-g) } + \frac{\epsilon}{2} \Ynorm{ (f-g)(X) }^2 \Bigr] \\
                                &= \Expected\Bigl[ \HinnerProd{ \eval_X^\adjoint \nabla \ell_{Y}( g(X)),  f-g } \Bigr]  + \frac{\epsilon}{2} \Ltwonorm{f - g}^{2}\\
                                &=  \HinnerProd{  \Expected\eval_X^\adjoint \nabla \ell_{Y}( g(X)),  f-g }  + \frac{\epsilon}{2} \Ltwonorm{f - g}^{2}\\
                                &=  \HinnerProd{\nabla \risk(g) ,  f-g }  + \frac{\epsilon}{2} \Ltwonorm{f - g}^{2}.
        \end{align*}
        The permutation of expectation and inner-product is valid since 
        \[
            \Expected \Hnorm{\eval_X^\adjoint \nabla \ell_{Y}( g(X))} \leq \Expected \operatornorm{\eval_X \eval_X^\adjoint}^{1/2} \Ynorm{\nabla \ell_Y(g(X))}
            \leq 2\pi \Expected \sqrt{k(X,X)} \leq 2\pi ( \Expected k(X,X) )^{1/2} < \infty
        \]
        where we have used Lemma~\ref{Lemma:gradient_l_v} for the second inequality.
    \end{proof}

\end{Lemma}

\begin{figure}
  \includegraphics{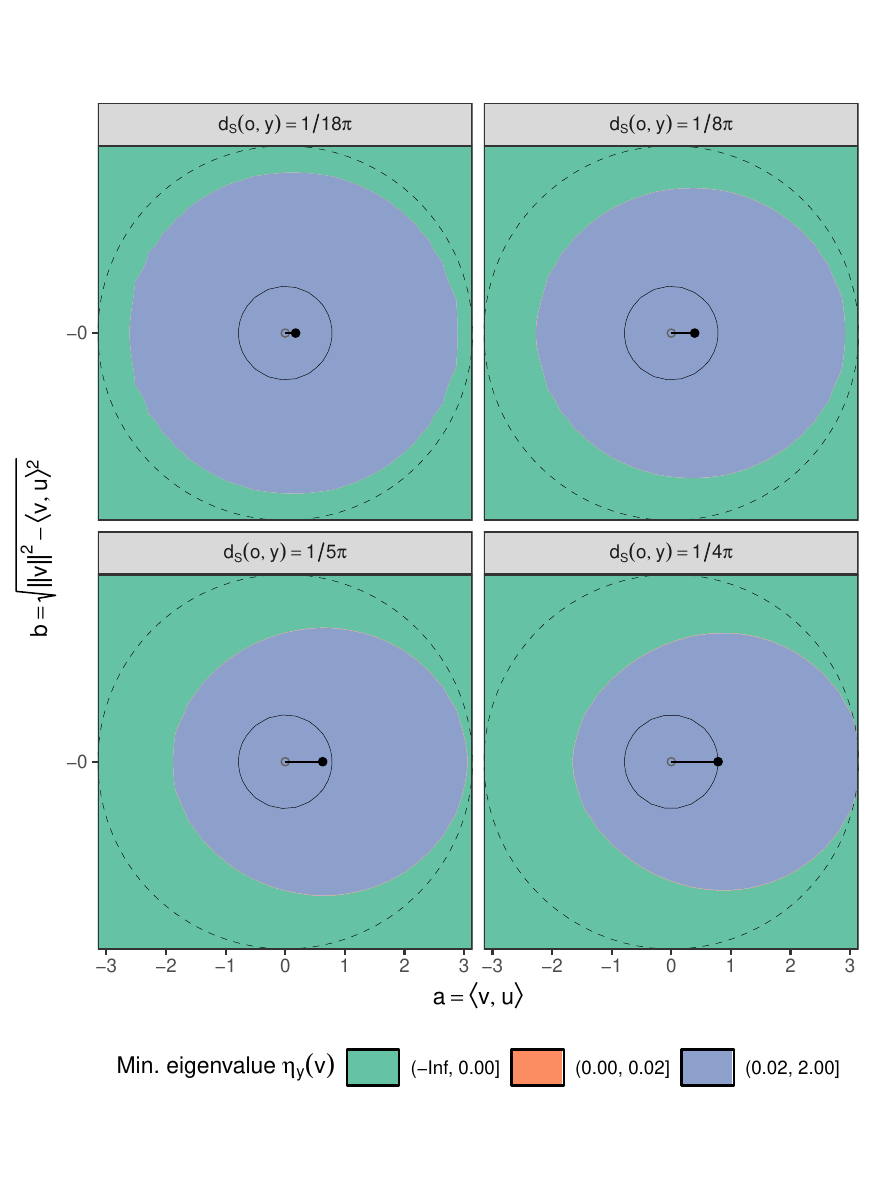}
  \caption{
    Plane $(a,b)$ of the domain of $g_\zeta$, for different values of $\zeta=\sphereDist(\origin, y)$ (one value per panel).
    The colors depict different ranges of values of $g_\zeta(a,b)$, as described in the plot's legend. 
    In each panel, the circle denotes the point $(0,0)$, and the filled circle the point $(\zeta, 0)$. The dashed circle is the circle of radius $\pi$ around $(0,0)$, and correspondings to the circle of radius $\pi$ on $T_\origin \sphere$. The solid circle is the circle of radius $\pi/4$ around $(0, 0)$. 
    This plot in conjunction with Proposition~\ref{proposition:convexity_plot} illustrates that if $\zeta = \sphereDist(\origin,y) \leq \pi/4$ then $\ell_y(v)$ is convex over the region $v \in \ball_{T_\origin}( 0, \pi/4)$.
  }
      \label{figure:local_convexity} 
\end{figure}

\begin{proof}[Proof of Theorem~\ref{theorem:risk_is_convex_new}]
  \mbox{}

  \begin{enumerate}
    \item This is given by Lemma~\ref{Lemma:population_risk_convexity_inequality_new} and Lemma~\ref{Lemma:convex_gradient}, since $f_\origin$ minimizes $\risk(\cdot)$ over $\Uset$.
    \item In the expression of $\risk_n$ from \eqref{eq:empirical_risk_as_function_of_ell}, the sum is convex on $\Uset$ by Theorem~\ref{theorem:ell_Hessian_new}, and the term $\lambda_n^2 \Hnorm{f}^2$ is strictly convex on $\Hspace$. Hence $f \in \Uset \mapsto \risk_n(f, \lambda_n)$ is strictly convex on $\Uset$. By Lemma~\ref{Lemma:Uset_is_closed_and_nonempty_new}, $\Uset$ is closed. It is straightforward to see that $f \mapsto \risk_n(f,\lambda_n)$ is continuous and coercive, and the existence of a minimizer therefore follows from Lemma~\ref{Lemma:convex-closed-semi_continuity}. The uniqueness follows from the strict convexity.
  \end{enumerate}
\end{proof}

\begin{proof}[Proof of Theorem~\ref{theorem:rates_for_f_weakest_assumptions_new}]
  Let $\nu_0 = \vep$, given in Theorem~\ref{theorem:ell_Hessian_new}.
        We shall apply \citet[][Theorem~3.4.6 p.434]{vandervaartWeakConvergenceEmpirical2023a}  to
        \begin{align*}
            M_n(f) = M(f) &=  - \Expected \sphereDist^2(\Exp_\origin f(X), Y), f \in \Hspace,
            \\ \Mprocess_n(f) &= - \frac 1n \sum_{i=1}^n \sphereDist^2(\Exp_\origin f(X_i), Y_i),
            \\ d_n(f,g) &= d(f,g) = \frac{\sqrt{\nu_0}}{2 \sqrt 2} \Ltwonorm{f-g} 
            \\ \Theta_n &= \Uset
            \\ \theta_n &= \theta_{n,0} = f_\origin
            \\ \mathcal J_n(f) &= \Hnorm{f}
            \\ ( \lambda_n )_{n\geq 1} &\subset (0, \infty),
            \\ \underline{\delta}_n &= \lambda_n \Hnorm{f_\origin}
        \end{align*}
        Assume throughout that $\delta > \underline{\delta}_n$.
        By Lemma~\ref{Lemma:population_risk_convexity_inequality_new} and Lemma~\ref{Lemma:convex_gradient}, 
        \[
            \sup_{f \in \Uset: \delta/2 < d(f,f_\origin) \leq \delta} M(f) - M(f_\origin) \leq -\delta^2.
        \]
        so the first assumption of \citet[][Theorem~3.4.6 p.434]{vandervaartWeakConvergenceEmpirical2023a} holds. Next, we need to bound
        \[
            v_n(\delta) := \Expected^* \sup_{f \in \Uset: d(f,f_\origin) \leq \delta, \Hnorm{f} < \delta/\lambda_n} 
            \sqrt{n}\left| (\Mprocess_n - M)(f) - (\Mprocess_n - M)(f_\origin) \right|
        \]
        Notice that
        \[
            (\Mprocess_n - M)(f) - (\Mprocess_n - M)(f_\origin)  = (\Pprocess_n - \theLaw_{X,Y})(\psi_f - \psi_{f_\origin}),
        \]
        where  $\psi_f(x,y) = -\sphereDist^2( \Exp_\origin( f(x) ), y)$, 
        $\theLaw_{X,Y}$ is the law of $(X,Y)$ and by $\Pprocess_n$ the empirical law of the iid sample $\{(X_i,Y_i) \mid i=1,\ldots, n\}$.  Let
        \[
            \Fset(\delta, \lambda_n) := \{f \in \Uset \mid \Ltwonorm{f-f_\origin} \le \frac{2\sqrt{2}}{\sqrt{\nu_0}} \delta,\ \Hnorm{f} < \delta/\lambda_n\},
        \]
        then
        \begin{align*}
            \sqrt{n} v_n(\delta) &= n \Expected^* \sup_{f \in \Fset(\delta, \lambda_n)} | (\Pprocess_n - \theLaw_{X,Y})(\psi_f - \psi_{f_\origin}) |
            \intertext{Letting $\vep_1,\ldots, \vep_n$ be independent Rademacher variables, taking values $\pm 1$ with probability $1/2$, symmetrization \citep[][Lemma~2.3.1]{vandervaartWeakConvergenceEmpirical2023a} gives,}
                                 &\leq 2  \Expected^* \sup_{f \in \Fset(\delta, \lambda_n)} \left| \sum_{i=1}^n \vep_i (\psi_f(X_i,Y_i) - \psi_{f_\origin}(X_i,Y_i))\right|,
                              \\ &=   2\Expected^* \sup_{f \in \Fset(\delta, \lambda_n)} \left| \sum_{i=1}^n \vep_i \left( \sphereDist^2( \Exp_\origin( f_\origin(X_i) ), Y_i) -\sphereDist^2( \Exp_\origin( f(X_i) ), Y_i) \right) \right|,
                              \intertext{letting $\Expected_{X,Y}^*$ be the outer expectation with respect to $(X_i,Y_i), i=1,\ldots,n$ and $\Expected_\vep^*$ be the outer expectation with respect to the Rademacher variables $\{ \vep_i \}$. Conditional on the sample $\{ (X_i, Y_i) \}$, let $h_i:T_\origin \sphere \to \Real$ be defined by $h_i(z) = \sphereDist^2( \Exp_\origin( f_\origin(X_i) ), Y_i) -\sphereDist^2( \Exp_\origin( z ), Y_i)$. Then, }
                                 &=   2 \Expected^*_{X,Y} \Expected^*_\vep   \sup_{f \in \Fset(\delta, \lambda_n)} \left| \sum_{i=1}^n \vep_i h_i(f(X_i)) \right|
                                 \intertext{Using the reverse triangle inequality and Lemma~\ref{Lemma:distance_and_exp_fixed_p}, we can see that the $h_i$ are Lipschitz with constant $2 \pi$. Furthermore, $f_\origin \in \Fset(\delta, \lambda_n)$ and $h_i(f_\origin(X_i)) = 0$ for all $i$. Applying Proposition~\ref{proposition:vector-contraction-with-absolute-value}, we get, for any fixed orthonormal basis $\{ e_j \}$ of $T_\origin \sphere$,}
                                 &\leq 8 \sqrt{2} \pi   \Expected^*_{X,Y} \Expected_\vep^*   \sup_{f \in \Fset(\delta, \lambda_n)} \sum_{i=1}^n \sum_{j\geq 1} \vep_{ij} \YinnerProd{f(X_i), e_j}
                                 \intertext{where $\{\vep_{ij}\}$ are all independent Rademacher variables. Using the inequality $\Expected Z \leq (\Expected Z^2)^{1/2}$ for a random variable $Z$,}
                                 &\leq 8 \sqrt{2} \pi    \left[  \Expected^*_{X,Y} \Expected_\vep^* \sup_{f \in \Fset(\delta, \lambda_n)} \sum_{i,i'=1}^n \sum_{j,j'\geq 1} \vep_{ij}\vep_{i'j'} \YinnerProd{f(X_i), e_j}\YinnerProd{f(X_{i'}), e_{j'}} \right]^{1/2}
                                 \intertext{taking the expectation with respect to the Rademacher variables, we get}
                                 &= 8 \sqrt{2} \pi    \left[  \Expected^*_{X,Y}  \sup_{f \in \Fset(\delta, \lambda_n)} \sum_{i=1}^n \sum_{j\geq 1}  \YinnerProd{f(X_i), e_j}^2 \right]^{1/2}
                                 \intertext{since $\{e_j\}$ is an orthonormal basis,}
                                 &= 8 \sqrt{2} \pi    \left[  \Expected^*_{X,Y} \sup_{f \in \Fset(\delta, \lambda_n)} \sum_{i=1}^n \Ynorm{f(X_i)}^2 \right]^{1/2}
                                 \intertext{Using Lemma~\ref{Lemma:bounding_fX},}
                                 &\leq 8 \sqrt{2} \pi    \left[  \Expected_{X,Y} \sup_{f \in \Fset(\delta, \lambda_n)} \sum_{i=1}^n {k(X_i, X_i)} \Hnorm{f}^2 \right]^{1/2}
                              \\ &\leq 8 \sqrt{2} \pi  \frac{\delta}{\lambda_n} \sqrt{n}  \left[  \Expected_{X,Y}  {k(X_1, X_1)}  \right]^{1/2}
                              \\ & \leq \sqrt{n}\, 8 \sqrt{2} \pi \sqrt{c}  \frac{\delta}{\lambda_n} 
        \end{align*}
        Hence $v_n(\delta) \lesssim \delta/\lambda_n =: \phi_n(\delta)$. $\phi_n(\delta)$ is sub-quadratic in $\delta$ and 
        \[
            \phi_n(\delta) \leq \sqrt{n}\delta^2 \iff \delta \geq n^{-1/2} \lambda_n^{-1}.
        \]
        For any $\delta_n \geq \text{max}(n^{-1/2}\lambda_n^{-1}, \lambda_n \Hnorm{f_\origin})$, \citet[][Theorem~3.4.6]{vandervaartWeakConvergenceEmpirical2023a} implies $\Ltwonorm{\hat f - f_\origin}^2 = O_{\theLaw^*}(\delta_n^2)$. The choice $\lambda_n \asymp n^{-1/4}$ concludes the proof.
    \end{proof}

\begin{proof}[Proof of Theorem~\ref{theorem:L2_convergence_mu_weakest_assumptions_new}]
        Using Lemma~\ref{Lemma:distance_and_exp_fixed_p},
        \[
            \int \sphereDist^2(\mu(x), \hat \mu_n(x)) \ud\theLaw_X(x)  \leq \int \Ynorm{f_\origin(x) - \hat f_{n}(x)}^2 \ud\theLaw_X(x),
        \]
        and the claim follows from Theorem~\ref{theorem:rates_for_f_weakest_assumptions_new}.
    \end{proof}

\begin{proof}[Proof of Theorem~\ref{theorem:rates_for_f_depending_on_smoothness_new}]
        The start of the proof is similar to the proof of Theorem~\ref{theorem:rates_for_f_weakest_assumptions_new}. 
        Let $\nu_0 = \vep$, given in Theorem~\ref{theorem:ell_Hessian_new}.
        We shall apply \citet[][Theorem~3.4.6 p.434]{vandervaartWeakConvergenceEmpirical2023a}  to
        \begin{align*}
            M_n(f) = M(f) &=  - \Expected \sphereDist^2(\Exp_\origin f(X), Y), f \in \Hspace,
            \\ \Mprocess_n(f) &= - \frac 1n \sum_{i=1}^n \sphereDist^2(\Exp_\origin f(X_i), Y_i),
            \\ d_n(f,g) &= d(f,g) = \frac{\sqrt{\nu_0}}{2 \sqrt 2} \Ltwonorm{f-g} 
            \\ \Theta_n &= \Uset
            \\ \theta_n &= \theta_{n,0} = f_\origin
            \\ \mathcal J_n(f) &= \Hnorm{f}
            \\ ( \lambda_n )_{n\geq 1} &\subset (0, \infty),
            \\ \underline{\delta}_n &= \lambda_n \Hnorm{f_\origin}
        \end{align*}
        Assume throughout that $\delta > \underline{\delta}_n$.
        By Lemma~\ref{Lemma:population_risk_convexity_inequality_new} and Lemma~\ref{Lemma:convex_gradient}, 
        \[
            \sup_{f \in \Uset: \delta/2 < d(f,f_\origin) \leq \delta} M(f) - M(f_\origin) \leq -\delta^2.
        \]
        so the first assumption of \citet[][Theorem~3.4.6 p.434]{vandervaartWeakConvergenceEmpirical2023a} holds. Next, we need to bound
        \[
            v_n(\delta) := \Expected^* \sup_{f \in \Uset: d(f,f_\origin) \leq \delta, \Hnorm{f} < \delta/\lambda_n} 
            \sqrt{n}\left| (\Mprocess_n - M)(f) - (\Mprocess_n - M)(f_\origin) \right|
        \]
        Notice that
        \[
            (\Mprocess_n - M)(f) - (\Mprocess_n - M)(f_\origin)  = (\Pprocess_n - \theLaw_{X,Y})(\psi_f - \psi_{f_\origin}),
        \]
        where  $\psi_f(x,y) = -\sphereDist^2( \Exp_\origin( f(x) ), y)$, 
        $\theLaw_{X,Y}$ is the law of $(X,Y)$ and by $\Pprocess_n$ the empirical law of the iid sample $\{(X_i,Y_i) \mid i=1,\ldots, n\}$.  Let
        \[
            \Fset(\delta, \lambda_n) := \{f \in \Uset \mid \Ltwonorm{f-f_\origin} \le \frac{2\sqrt{2}}{\sqrt{\nu_0}} \delta,\ \Hnorm{f} < \delta/\lambda_n\},
        \]
        and
        \[
            \Gset(\delta, \lambda_n) := \{g \in \Hspace \mid \Ltwonorm{g} \le \frac{2\sqrt{2}}{\sqrt{\nu_0}} \delta,\ \Hnorm{g} < 2\delta/\lambda_n\},
        \]
        then
        \begin{align*}
            \sqrt{n} v_n(\delta) &= n \Expected^* \sup_{f \in \Fset(\delta, \lambda_n)} | (\Pprocess_n - \theLaw_{X,Y})(\psi_f - \psi_{f_\origin}) |
            \intertext{Letting $\vep_1,\ldots, \vep_n$ be independent Rademacher variables, taking values $\pm 1$ with probability $1/2$, symmetrization \citep[][Lemma~2.3.1]{vandervaartWeakConvergenceEmpirical2023a} gives,}
                                 &\leq 2  \Expected^* \sup_{f \in \Fset(\delta, \lambda_n)} \left| \sum_{i=1}^n \vep_i (\psi_f(X_i,Y_i) - \psi_{f_\origin}(X_i,Y_i))\right|,
                              \\ &=   2\Expected^* \sup_{f \in \Fset(\delta, \lambda_n)} \left| \sum_{i=1}^n \vep_i \left( \sphereDist^2( \Exp_\origin( f_\origin(X_i) ), Y_i) -\sphereDist^2( \Exp_\origin( f(X_i) ), Y_i) \right) \right|,
                              \intertext{let $g = f - f_\origin$, then notice that $f \in \Fset(\delta, \lambda_n) \Rightarrow g \in \Gset(\delta, \lambda_n)$, hence}
                               &\leq   2\Expected^* \sup_{g \in \Gset(\delta, \lambda_n)} \left| \sum_{i=1}^n \vep_i \left( \sphereDist^2( \Exp_\origin( f_\origin(X_i) ), Y_i) -\sphereDist^2( \Exp_\origin( g(X_i) + f_\origin(X_i) ), Y_i) \right) \right|,
                              \intertext{letting $\Expected_{X,Y}^*$ be the outer expectation with respect to $(X_i,Y_i), i=1,\ldots,n$ and $\Expected_\vep^*$ be the outer expectation with respect to the Rademacher variables $\{ \vep_i \}$. Conditional on the sample $\{ (X_i, Y_i) \}$, let $h_i:T_\origin \sphere \to \Real$ be defined by $h_i(z) = \sphereDist^2( \Exp_\origin( f_\origin(X_i) ), Y_i) -\sphereDist^2( \Exp_\origin( z + f_\origin(X_i) ), Y_i)$. Then, }
                                 &=   2 \Expected^*_{X,Y} \Expected^*_\vep   \sup_{g \in \Gset(\delta, \lambda_n)} \left| \sum_{i=1}^n \vep_i h_i(g(X_i)) \right|
                                 \intertext{Using the reverse triangle inequality and Lemma~\ref{Lemma:distance_and_exp_fixed_p}, we can see that the $h_i$ are Lipschitz with constant $2 \pi$. Furthermore, $g=0 \in \Gset(\delta, \lambda_n)$ and $h_i(0) = 0$ for all $i$. Applying Proposition~\ref{proposition:vector-contraction-with-absolute-value}, we get, for any fixed orthonormal basis $\{ e_j: j=1,\ldots, J \}$ of $T_\origin \sphere$, where $J = \dim(T_\origin \sphere) = \dim(\Yspace)-1$,}
                                 &\leq 8 \sqrt{2} \pi   \Expected^*_{X,Y} \Expected_\vep^*   \sup_{g \in \Gset(\delta, \lambda_n)} \sum_{i=1}^n \sum_{j=1}^J \vep_{ij} \YinnerProd{g(X_i), e_j}
                                 \intertext{where $\{\vep_{ij}\}$ are all independent Rademacher variables. Then,}
                                 &\leq 8 \sqrt{2} \pi \sum_{j=1}^J   \Expected^*_{X,Y} \Expected_\vep^*   \sup_{g \in \Gset(\delta, \lambda_n)} \sum_{i=1}^n \vep_{ij} \YinnerProd{g(X_i), e_j}
                              \intertext{Letting $\Gset_k(\delta, \lambda_n) := \{g \in \Hspace_k \mid \Ltwonorm{g} \le \frac{2\sqrt{2}}{\sqrt{\nu_0}} \delta,\ \Hknorm{g} < 2\delta/\lambda_n\} \subset \Hspace_k$, where $\Hspace_k$ is the RKHS associated to the scalar kernel $k$, Lemma~\ref{Lemma:projection_against_y_is_in_scalar_RKHS} yields}
                               &\leq 8 \sqrt{2} \pi 2 \frac{\delta}{\lambda_n} \sum_{j=1}^J   \Expected^*_{X,Y} \Expected_\vep^*   \sup_{\check g \in \Gset_k(\delta, \lambda_n)} \sum_{i=1}^n \vep_{ij} {\check g(X_i)}
                              \\ &= 8 \sqrt{2} \pi 2 \frac{\delta}{\lambda_n} \sum_{j=1}^J   \Expected^*_{X,Y} \Expected_\vep^*   \sup_{\check g \in \Gset_k(\delta, \lambda_n)} \sum_{i=1}^n \vep_{ij} \frac{\check g(X_i)}{2\delta/\lambda_n}
                              \intertext{Notice that $\Hknorm{\check g/(2\delta/\lambda_n)} \leq 1$ and $\Ltwonorm{\check g/(2\delta/\lambda_n)}^2 \leq 2 \lambda_n^2/\nu_0$. Letting 
$\sigma_l$ be the $l$th eigenvalue of the spectral decomposition of $\KIntegralOperator$,  \citet[][Theorem~6.5]{bartlettLocalRademacherComplexities2005} yields}
                               &\leq 16 \sqrt{2} \pi 2 \frac{\delta}{\lambda_n} n \sum_{j=1}^J \left( \frac{2}{n}  \sum_{l \geq 1} \min\{ \sigma_l, 2 \lambda_n^2/\nu_0 \} \right)^{1/2}
                            \\ &= \frac{32}{\sqrt{\nu_0}} \pi J \sqrt{n} {\delta} \left( \sum_{l \geq 1} \min\{ 1, \frac{\nu_0 \sigma_l/2}{\lambda_n^2}\} \right)^{1/2}
                            \\ &= \frac{32}{\sqrt{\nu_0}} \pi J \sqrt{n} {\delta} \left( \sum_{l \geq 1} \min\{ 1, \frac{\sigma_l}{2\lambda_n^2/\nu_0}\} \right)^{1/2}
                        \intertext{and writing $\check \lambda_n^2 =  2\lambda_n^2/\nu_0$,}
                             &\lesssim  \sqrt{n} {\delta} \left( \sum_{l \geq 1} \min\{ 1, \frac{\sigma_l}{\check \lambda_n^2}\} \right)^{1/2}.
                             \intertext{and using the inequality $\min(1, a/b) \leq 2a/(a+b)$ we get}
                             &\lesssim  \sqrt{n} {\delta} \left( N(\check \lambda_n^2) \right)^{1/2},
        \end{align*}
                             where $N(\check \lambda) = \sum_{l \geq 1} \sigma_l/(\sigma_l + \check \lambda)$ is known as the `effective dimension' or `degree of freedom'.
                             Hence \[ v_n(\delta) \lesssim \delta (N(\check \lambda_n^2))^{1/2} =: \phi_n(\delta). \]
        $\phi_n(\delta)$ is sub-quadratic and 
        \[
            \phi_n(\delta) \leq \sqrt{n}\delta^2 \iff \delta \geq n^{-1/2} (N(\check \lambda_n^2))^{1/2}.
        \]
        For any $\delta_n \geq \text{max}(n^{-1/2}(N(\check \lambda_n^2))^{1/2}, \frac{\nu_0 \check \lambda_n}{2} \Hnorm{f_\origin})$, \citet[][Theorem~3.4.6]{vandervaartWeakConvergenceEmpirical2023a} implies $\Ltwonorm{\hat f - f_\origin}^2 = O_{\theLaw^*}(\delta_n^2)$.

        We now consider the various types of eigenvalue decays.
        \begin{description}
            \item[Finite-rank kernel.]  In this case $\sigma_l = 0$ for $l > L$ and hence $N(\lambda) \leq L$. The choice $\lambda_n \asymp e^{-n}$ gives
                $\Ltwonorm{\hat f - f_\origin}^2 = O_{\theLaw^*}(n^{-1})$.

            \item[Polynomial decay.]  In this case $\sigma_l \leq C l^{-p}$ with $p > 1$. Lemma~\ref{Lemma:effective_dimension_polynomial_decay} gives $(N(\lambda^2))^{1/2} = O(\lambda^{-1/p})$. The choice $\lambda_n \asymp n^{ -\frac{p}{2(p+1)} }$ 
                give $\Ltwonorm{\hat f - f_\origin}^2 = O_{\theLaw^*}(n^{-\frac{p}{p+1}})$.

            \item[Stretched exponential.] 
            In this case $\sigma_l \leq C e^{-\alpha l^{1/q}}$ with $C, \alpha > 0$ and $q > 0$. Lemma~\ref{Lemma:effective_dimension_stretched_exponential} gives $(N(\lambda^2))^{1/2} = O( (\log(1/\lambda))^{q/2} )$. The choice $\lambda_n \asymp n^{ -1/2 }$
            gives $\Ltwonorm{\hat f - f_\origin}^2 = O_{\theLaw^*}( (\log(n))^{q}/n)$.
        \end{description}
    \end{proof}

\begin{proof}[Proof of Theorem~\ref{theorem:L2_convergence_mu_smoothness_dependent_new}]
        Using Lemma~\ref{Lemma:distance_and_exp_fixed_p},
        \[
            \int \sphereDist^2(\mu(x), \hat \mu_n(x)) \ud\theLaw_X(x)  \leq \int \Ynorm{f_\origin(x) - \hat f_{n}(x)}^2 \ud\theLaw_X(x),
        \]
        and the claim follows from Theorem~\ref{theorem:rates_for_f_depending_on_smoothness_new}.
    \end{proof}

\begin{proof}[Proof of Theorem~\ref{theorem:strong_consistency_new}]
The following proof draws inspiration from \citet[][Theorem~5.7]{van2000asymptotic}. 
    Let 
    $\psi_f(x,y) = \sphereDist^2( \Exp_p( f(x) ), y)$
    and
    \[
        \mathcal{F} = \{\psi_f \mid f \in \widetilde \Uset \}, \quad 
        \mathcal{F}' = \{\psi_f \mid f \in \ball_\Hspace(0,C) \}
    \]
Since $\widetilde \Uset \subset \ball_\Hspace(0,C)$, the monotonicity of bracketing numbers and Lemma~\ref{Lemma:bracketing_for_strong_consistency} imply that
 \[
     \bracketing( \vep, \mathcal F, L^1(\theLaw)) \leq 
     \bracketing( \vep, \mathcal F ' , L^1(\theLaw)) \leq \covering( \vep(4\pi)^{-1}, \ball_\Hspace(0,C), \inftynorm{\cdot}).
 \]
 Lemma~\ref{Lemma:ball_in_VVRKHS_compact_in_sup_norm} implies therefore that $\bracketing( \vep, \mathcal F , L^1(\theLaw)) < \infty $ for every $\vep>0$. Since $\risk_n(f, \lambda_n) = \Pprocess_n \psi_f + \lambda_n^2 \Hnorm{f}^2$ and $\risk(f) = \theLaw \psi_f$, and
 \[
 \sup_{f \in \widetilde \Uset} \left| \risk_n(f, \lambda_n) - \risk(f) \right| \leq 
 \sup_{f \in \widetilde \Uset} \left| (\Pprocess_n - \theLaw)\psi_f \right| + \sup_{f \in \widetilde \Uset} \lambda_n^2 \Hnorm{f}^2 
 \]
 The first term converges outer almost surely to zero by
 \citet[][Theorem~2.4.1]{vandervaartWeakConvergenceEmpirical2023a} and the second term is bounded by $\lambda_n^2 C^2 \to 0$ since $\lambda_n \to 0$. Therefore,
 \begin{equation} \label{eq:sup_convergence_of_risk}
     \sup_{f \in \widetilde \Uset} |\risk_n(f, \lambda_n)- \risk(f)| \outerASconv 0.
 \end{equation}
    
We have that for every $n$,
\begin{align*}
  \risk(\tilde f_n) - \risk(f_\origin)
  &= \big( \risk(\tilde f_n) - \risk_n(\tilde f_n,\lambda_n) \big)
   + \big( \risk_n(\tilde f_n,\lambda_n) - \risk(f_\origin) \big) \\
  &\le \big( \risk(\tilde f_n) - \risk_n(\tilde f_n,\lambda_n) \big)
   + \big( \risk_n(f_\origin,\lambda_n) - \risk(f_\origin) \big),
\end{align*}
where we used that $\risk_n(\tilde f_n,\lambda_n) \le \risk_n(f_\origin,\lambda_n)$ for all $n$.
Since $\tilde f_n \in \tilde\Uset$, we further obtain
\begin{align}
  \risk(\tilde f_n) - \risk(f_\origin)
  &\le \sup_{f\in\tilde\Uset} \big| \risk(f) - \risk_n(f,\lambda_n) \big|
   + \big( \risk_n(f_\origin,\lambda_n) - \risk(f_\origin) \big).
  \label{eq:asconsist-step}
\end{align}
By \eqref{eq:sup_convergence_of_risk},
$
  \risk_n(f_\origin,\lambda_n) \outerASconv  \risk(f_\origin).
$
By definition of outer almost sure convergence, and using \eqref{eq:asconsist-step}, we get that there exists a measurable $\Delta_n$ such that $\risk(\hat f_n) - \risk(f_\origin) \leq \Delta_n$ and $\Delta_n \ASconv 0$.
Theorem~\ref{theorem:risk_is_convex_new} therefore implies that 
\[
  \Ltwonorm{\tilde f_n - f_\origin}^2 \leq \frac 2 \epsilon \Delta_n \ASconv 0,
\]
which concludes the proof.
\end{proof}

\begin{proof}[Proof of Proposition~\ref{proposition:choice_of_origin_is_irrelevant_to_the_model}]
  We use the fact that $\sphere$ is a differentiable manifold, and obtain Statement 1 and 2 as special cases of Corollary~\ref{cor:change_of_charts} and \ref{cor:continuous_mu}, respectively.
  Denote $\mathcal{N}_p=\sphere \setminus \{-p\}$ for $p\in\sphere$,
  such that $\Exp_p$ is a diffeomorphism on $\mathcal{N}_p$,
  and $\mathcal{N}=\{q \in \sphere: \sphereDist(\origin, q) < c\}$. The assumptions $c<\pi$ and $\sphereDist(\origin,\origin')<\pi$ yield that $\bar{\mathcal{N}} \subset \mathcal{N}_\origin \cap \mathcal{N}_{\origin'}$ since for any $q\in\sphere$ with $\sphereDist(\origin, q) \leq c$, we have $q\in\mathcal{N}_\origin$ immediately and $q\in\mathcal{N}_{\origin'}$ as $\sphereDist(q, \origin')\leq \sphereDist(q, \origin) + \sphereDist(\origin, \origin')< \pi$ by the triangle inequality. 
  Together with Condition~1, this provides the setting of Corollary~\ref{cor:change_of_charts}, and together with Condition~2, the setting of Corollary~\ref{cor:continuous_mu}.
\end{proof}

\begin{proof}[proof of Lemma~\ref{Lemma:Bclosedsupnorm}]
  Since the operator-kernel is bounded and continuous, $B \subset C(\Xspace, \Yspace)$. \citep[Proposition 2]{carmeliVectorValuedReproducing2010}
  Let $\{ f_i\}_i \subset B$ be a Cauchy sequence in the topology of $C(\Xspace, \Yspace)$. Since the latter is a Banach space, it is in particular complete, and there exists  a $f_\infty \in C(\Xspace, \Yspace)$ such that 
  $$ \sup_{x \in \Xspace} \Ynorm{f_i(x)  - f_\infty(x)} \xrightarrow{i \rightarrow \infty} 0.  $$
  In the topology of $\Hspace$, $B$ is weakly sequentially compact \citep[Theorem 1.96]{barbu2012convexity}, so there exists a subsequence $\{ f_{i_j}\}_j$ that weakly converges to $f^*\in B$, that we will rename $\{ f_{j}\}_j$ . For all $y \in \Yspace$, the reproducing property and weak convergence gives
  \begin{align*}
    \YinnerProd{  f_j(x) - f^*(x), y } = \HinnerProd{  f_j - f^*, K_x y }  \xrightarrow{j\rightarrow \infty} 0.
  \end{align*}
  Therefore for each fixed $x \in \Xspace$, we have weak convergence in $\Yspace$ of $f_j(x)$ to $f^*(x)$. 
  Since $\Yspace$ is finite-dimensional, weak convergence is equivalent to strong convergence and $\Ynorm{f_j(x) -f^*(x)} \xrightarrow{j \rightarrow\infty} 0$ for each $x \in \Xspace$. Therefore, $f_j$ converges pointwise to $f^* \in B$. Since it also converges uniformly to $f_\infty$, $ f_\infty = f^*  \in  B$ and $B$ is closed in $C(\Xspace, \Yspace)$.
\end{proof}

\begin{proof}[proof of Lemma~\ref{Lemma:ball_in_VVRKHS_compact_in_sup_norm}]
  Following \citet{pillonettoSolutionsNonlinearControl2008}, we show the conditions of the Arzelà-Ascoli theorem \citep{abraham2012manifolds}. We have to show the following:
  \begin{enumerate}
    \item $B$ is equicontinuous: for all $\epsilon>0$, exists $\delta>0$ such that for all $f \in B   $ for all $x, \check{x} \in \Xspace$ with $d_X(x,\check{x}) < \delta $, $\Ynorm{f(x)-f(\check{x})}< \epsilon$.
    \item $B $ is pointwise bounded, for all $x \in \Xspace$, $\{f(x), f \in B \}$ is bounded in $\Yspace$.
    \item $B $ is closed in the supremum norm.
  \end{enumerate}
  We show these in order.
  \begin{enumerate}
    \item Let $\epsilon > 0$.
      Since $K$ is continuous on a compact set, it is uniformly continuous.
      Choose $\eta > 0$ such that for all $x_1, x_2 \in \Xspace$, 
      $d_X(x_1, x_2) < \eta$ implies 
      $\operatornorm{K(x_1,x_1)-K(x_1, x_2)} \leq \frac{\epsilon^2}{2r^2}$ and 
      $\operatornorm{K(x_2,x_2)-K(x_1, x_2)} \leq \frac{\epsilon^2}{2r^2}$.
      Let $y \in \Yspace$, $\Ynorm{y}=1$, and consider $x, \check{x} \in \Xspace$ with 
      $d_X(x, \check x) < \eta$.
      Using the reproducing property, for any such $y$ we have
      \begin{align*}
        \YinnerProd{ f(x)-f(\check{x}), y } 
    &= \HinnerProd{ f, K_{x}y-K_{\check{x}}y } \\
    &\leq \Hnorm{f}\,\Hnorm{K_{x}y-K_{\check{x}}y}.
      \end{align*}
      Moreover,
      \begin{align*}
        \Hnorm{K_{x}y-K_{\check{x}}y}^2
    &= \YinnerProd{(K(x,x)-K(x,\check x))y,y}
    + \YinnerProd{(K(\check x,\check x)-K(x,\check x))y,y} \\
    &\le \operatornorm{K(x,x)-K(x,\check x)}
    + \operatornorm{K(\check x,\check x)-K(x,\check x)},
      \end{align*}
      where the last inequality uses $\Ynorm{y}=1$ and the definition of the operator norm.
      Hence
      \begin{align*}
        \Ynorm{ f(x) - f(\check x)} 
    &= \sup_{\Ynorm{y}=1} \YinnerProd{ f(x)-f(\check{x}), y } \\
    &\leq \Hnorm{f}\,
    \bigl(\operatornorm{K(x,x)-K(x, \check x)} 
    + \operatornorm{K(\check x,\check x)-K(x, \check x)}\bigr)^{1/2} \\
    &\leq r\left(\frac{\epsilon^2}{2r^2} + \frac{\epsilon^2}{2r^2}\right)^{1/2} 
    = \epsilon.
      \end{align*}

    \item From Lemma~\ref{Lemma:bounding_fX}, since $x \mapsto K(x,x)$ is a continuous on a compact set,  
      \[
        \Ynorm{f(x)}\leq \sup_{x \in \Xspace} \operatornorm{K(x,x)}^{1/2} \Hnorm{f} \leq C_k r < \infty,
      \]
      where $C_k = \sup_{x \in \Xspace} \operatornorm{K(x,x)}^{1/2} < \infty$.
    \item This follows from Lemma \ref{Lemma:Bclosedsupnorm}.
  \end{enumerate}
\end{proof}

\clearpage 

\section{Results under numerical proofs}
\label{sec:results_numerical_proofs}

\subsection{Assumptions}

\begin{Definition} \label{defn:Vset_numerical}
    
    For $\rho > 0$, we define the \emph{deterministic} set $\Vset(\rho) \subset \Hspace$ by
    \[
        \Vset(\rho)
        :=
        \left\{
            f \in \Hspace \,\middle|\,
            \begin{array}{l}
                \Ynorm{f(X)} < \pi,
                \quad \mathrm{a.s.} \\
              \Ynorm{ f(X) - \Log_\origin Y } \leq \rho
                \quad \mathrm{a.s.}, \\[0.3em]
            \end{array}
        \right\}.
    \]
\end{Definition}
\begin{Assumption}
    \label{assumption:support_of_Y_for_theory_numerical}
Assume that, for one of the $(r, \rho)$ pairs in Table~\ref{table:r_rho_pairs}, we have
     \[
       \sphereDist(\origin, Y) \leq r \quad \text{ almost surely},
     \]
     and $f_\origin \in \Vset(\rho)$.
      \begin{table}[h!]
        \centering
        \begin{tabular}{ll}
          $r$ & $\rho$ \\ 
          \hline  \\[0.1ex]
          $\frac{\pi }{2}$ & $\frac{\pi}{2.1}$ \\ \\
          $\frac{2\pi}{3}$ & $\frac{\pi}{6.6}$ \\
        \end{tabular}

        \caption{Data diameter $r$ and error magnitude $\rho$ considered in Assumption~\ref{assumption:support_of_Y_for_theory_numerical}.} \label{table:r_rho_pairs}
      \end{table}
\end{Assumption}

\subsection{Convexity results}

\begin{Lemma} \label{lemma:ell_Hessian_numerical}
Let $\Dregion_\rho(y) = \ball_{T_\origin \sphere}(\Log_\origin y,  \rho)$.
Assume $(r, \rho)$ is a pair given in Table~\ref{table:r_rho_pairs}, and $\sphereDist(\origin, y) \leq r$. Then,
    \begin{enumerate}
        \item The minimal eigenvalue of the Hessian of $\ell_y$ is at least $\epsilon > 0$ on $\Dregion_\rho(y)$, for some $\epsilon > 0$.
        \item $\nabla \ell_y(v)$ is well defined on $\Dregion_\rho(y)$ and
            \[
                \ell_y(v') \geq \ell_y(v) + \YinnerProd{v'-v, \nabla \ell_y(v)} + \frac{\epsilon}{2} \Ynorm{v'-v}^2 
            \]
            In particular, $\ell_y$ is convex on $\Dregion_\rho(y)$.
    \end{enumerate}
    \begin{proof}[Numerical proof]
      Our proof relies on a numerical argument, requiring the computation of the minimal eigenvalue of the Hessian of $\ell_y$ and its evaluation on a range of values of $ \zeta = \sphereDist(y, \origin) = \zeta$, $\innerProd{v,y}$ and $\norm{v}$, using Proposition~\ref{proposition:convexity_plot}. 
      For the first statement, Proposition~\ref{proposition:convexity_plot} tells us that it is enough to show that ${g_\zeta}{|\ball_{\Real^2}( (\zeta, 0), \rho)} \geq \epsilon > 0$ for $\zeta := \sphereDist(\origin, y) \in [0,r]$, where $g_\zeta$ is defined in Corollary~\ref{corollary:ell_Hessian_decomposition}. We do this by numerical evaluations on a grid of values for $\zeta$ and the arguments of $g_\zeta$. The results of the numerical evaluations are depicted in Figures~\ref{fig:pi2-convexity-overall} and~\ref{fig:pi2-convexity-zoom} for $r=\pi/2$ and Figures~\ref{fig:2pi3-convexity-overall} and~\ref{fig:2pi3-convexity-zoom} for $r=2\pi/3$. The code for reproducing the numerical calculations is available in the \thesupplement.

      The second statement follows from Lemma~\ref{Lemma:gradient_l_v}, the first statement, and Lemma~\ref{Lemma:convex_gradient}.

    \end{proof}
\end{Lemma}

\begin{figure}
  \includegraphics{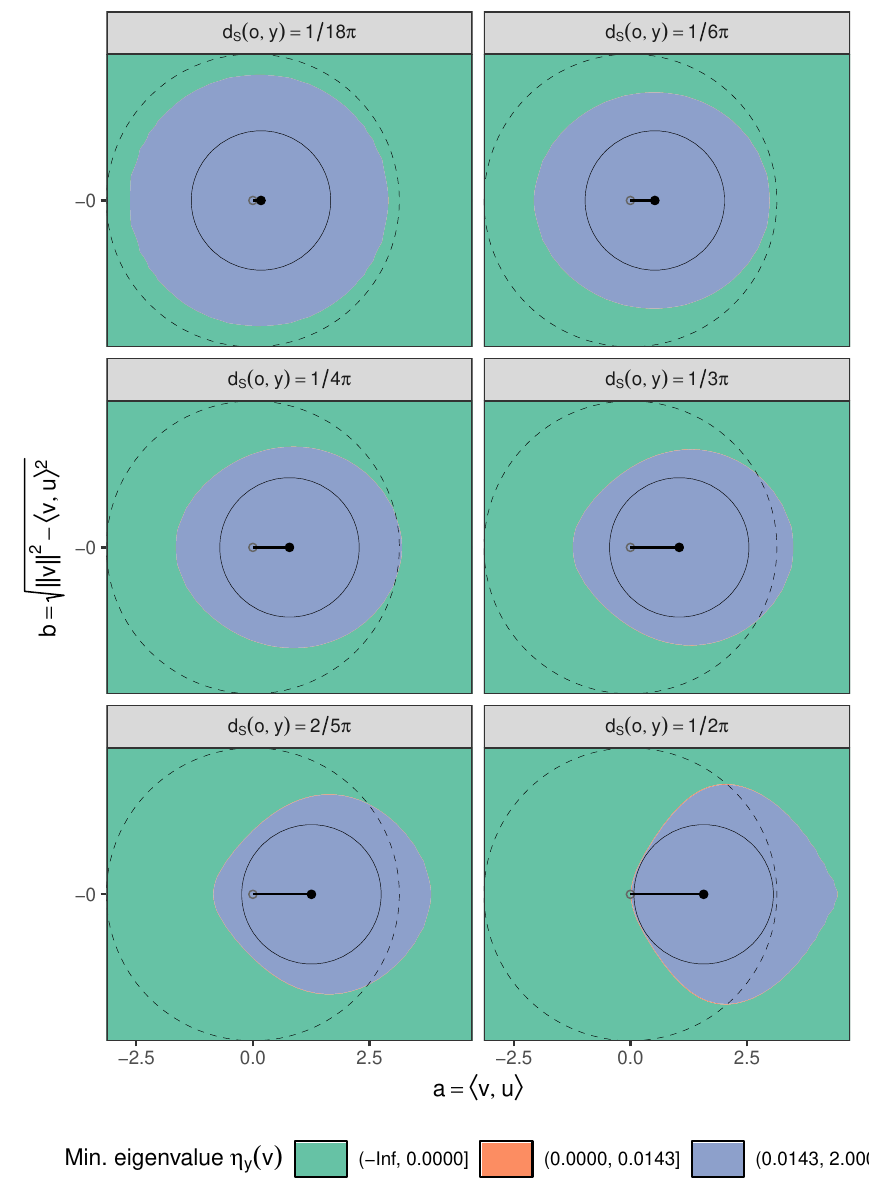}
  \caption{
    Plane $(a,b)$ of the domain of $g_\zeta$, for different values of $\zeta=\sphereDist(\origin, y)$ (one value per panel).
    The colors depict different ranges of values of $g_\zeta(a,b)$, as described in the plot's legend. 
    In each panel, the circle denotes the point $(0,0)$, and the filled circle the point $(\zeta, 0)$. The dashed circle is the circle of radius $\pi$ around $(0,0)$, and correspondings to the circle of radius $\pi$ on $T_\origin \sphere$. The solid circle is the circle of radius $\pi/2.1$ around $(\zeta, 0)$. 
    This plot in conjunction with Proposition~\ref{proposition:convexity_plot} shows numerically that if $\zeta = \sphereDist(\origin,y) \leq \pi/2$ then $\ell_y(v)$ is convex over the region $v \in \ball_{T_\origin}( \Log_\origin(y), \pi/2.1)$.
    A zoom of the bottom right panel is given in Figure~\ref{fig:pi2-convexity-zoom}.
  }
  \label{fig:pi2-convexity-overall}
\end{figure}

\begin{figure}
  \includegraphics{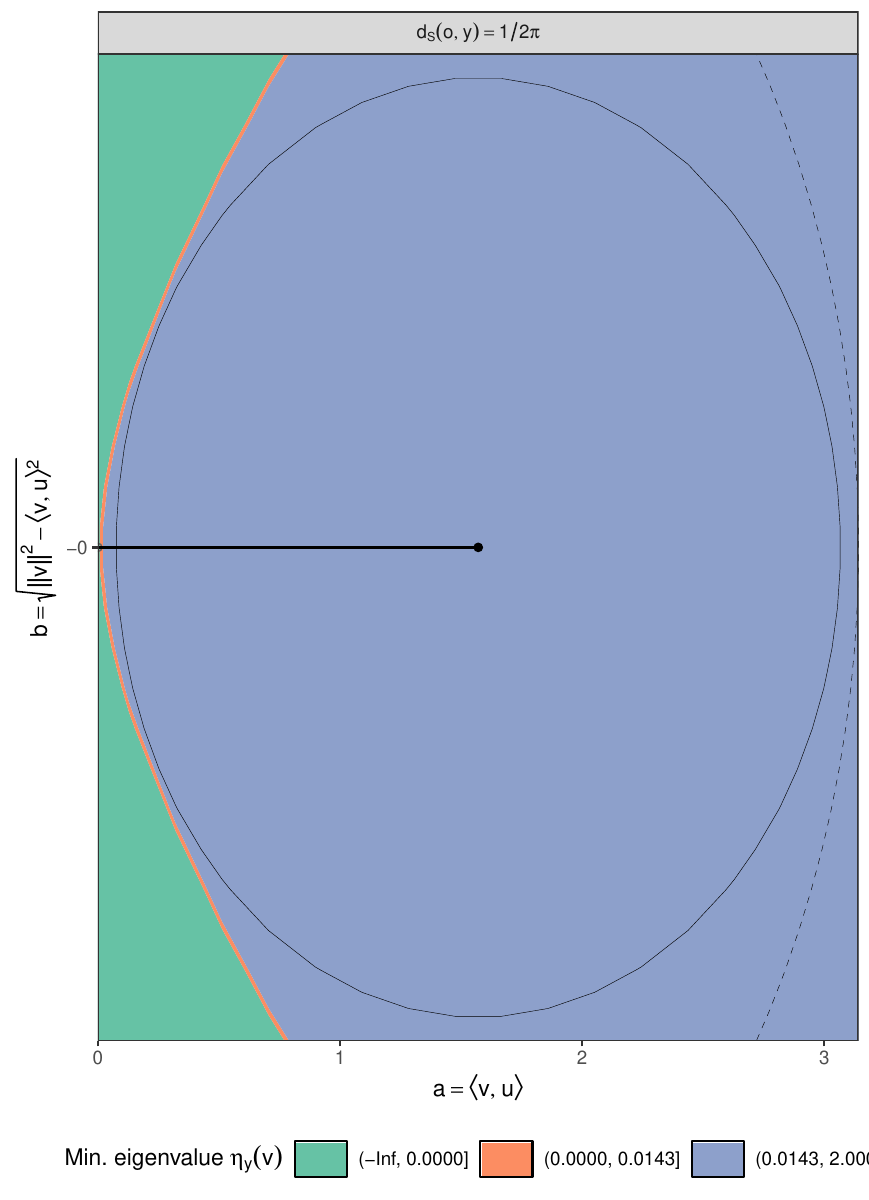}
  \caption{Zoom of the bottom right panel of Figure~\ref{fig:pi2-convexity-overall}.}
  \label{fig:pi2-convexity-zoom}
\end{figure}

\begin{figure}
  \includegraphics{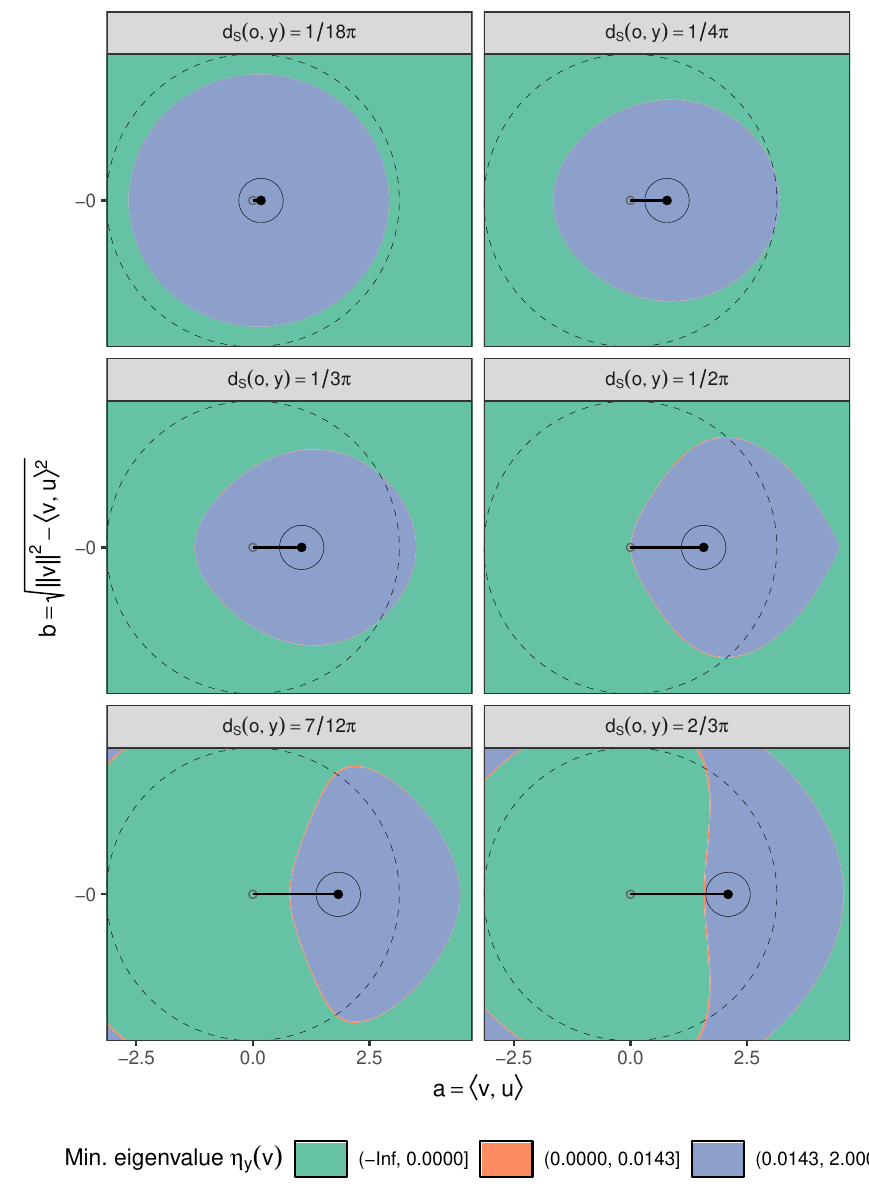}
  \caption{
    Plane $(a,b)$ of the domain of $g_\zeta$, for different values of $\zeta=\sphereDist(\origin, y)$ (one value per panel).
    The colors depict different ranges of values of $g_\zeta(a,b)$, as described in the plot's legend. 
    In each panel, the circle denotes the point $(0,0)$, and the filled circle the point $(\zeta, 0)$. The dashed circle is the circle of radius $\pi$ around $(0,0)$, and correspondings to the circle of radius $\pi$ on $T_\origin \sphere$. The solid circle is the circle of radius $\pi/6.6$ around $(\zeta, 0)$. 
    This plot in conjunction with Proposition~\ref{proposition:convexity_plot} shows numerically that if $\zeta = \sphereDist(\origin,y) \leq 2\pi/3$ then $\ell_y(v)$ is convex over the region $v \in \ball_{T_\origin}( \Log_\origin(y), \pi/6.6)$.
    A zoom of the bottom right panel is given in Figure~\ref{fig:pi2-convexity-zoom}.
  }
  \label{fig:2pi3-convexity-overall}
\end{figure}

\begin{figure}
  \includegraphics{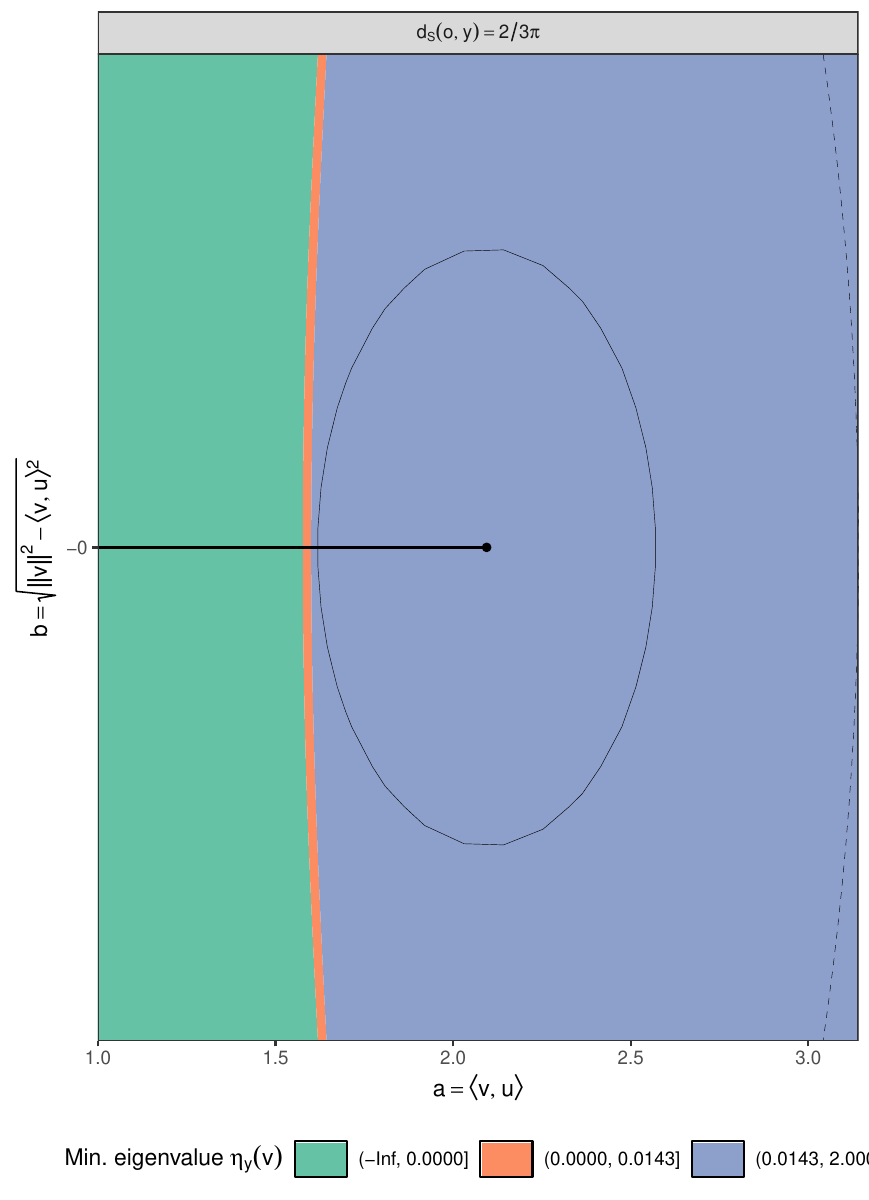}
  \caption{Zoom of the bottom right panel of Figure~\ref{fig:2pi3-convexity-overall}.}
  \label{fig:2pi3-convexity-zoom}
\end{figure}

\begin{Lemma}
    \label{Lemma:population_risk_convexity_inequality_numerical}
    Recall the definition of the population risk $\risk$ from \eqref{eq:population_risk}.
    Under assumption~\ref{assumption:support_of_Y_for_theory_numerical}, provided $\Expected k(X,X) < \infty$,
 $\nabla \risk(f)$  exists for all $f \in \Vset(\rho)$ and
    \[
        \nabla \risk(f) =   \Expected\eval_X^\adjoint \nabla \ell_{Y}( f(X))
    \]
    where $\ell_y$ is defined in \eqref{eq:ell_function}.
    In addition, 
    \[
        \risk(f) - \risk(g) \geq \HinnerProd{ \nabla \risk(g) ,  f-g }  + \frac 18 \Ltwonorm{f - g}^{2} 
    \]
    for all $f,g \in \Vset(\rho)$.
    \begin{proof}
      The proof mimicks the proof of Lemma~\ref{Lemma:population_risk_convexity_inequality_new}; details are left to the reader.
    \end{proof}
\end{Lemma}

\begin{Theorem} \label{theorem:risk_is_convex_numerical}
    Under Assumption~\ref{assumption:support_of_Y_for_theory_numerical}, 
    \begin{enumerate}
        \item For all $f \in \Vset(\rho)$, $\risk(f) \geq \risk(f_\origin) + \frac{1}{8} \Ltwonorm{f-f_\origin}^2$. 
        \item For $\lambda_n > 0$, $\risk_n(\cdot, \lambda_n)$ is strictly convex on $\Vset(\rho)$, and provided $\sup_x k(x,x) < \infty$, it admits a unique minimizer $\hat f_n$ over $\Vset(\rho)$.  
    \end{enumerate}
    
\begin{proof}
  The proof mimicks the proof of Theorem~\ref{theorem:risk_is_convex_new}; details are left to the reader.
\end{proof}
\end{Theorem}

\subsubsection{Rates of convergence under minimal assumptions}

\begin{Theorem} \label{theorem:rates_for_f_weakest_assumptions_numerical}
  Assume that Assumption~\ref{assumption:support_of_Y_for_theory_numerical} holds,  and $\sup_x k(x,x) =c < \infty$. 
  For $\lambda_n \asymp n^{-1/4}$,
    \[
        \Ltwonorm{\hat f_n - f_\origin}^2 = O_{\theLaw^*}(n^{-1/2})
    \]
    
\end{Theorem}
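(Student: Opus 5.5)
The plan is to copy the proof of Theorem~\ref{theorem:rates_for_f_weakest_assumptions_new}, replacing $\Uset$ by $\Vset(\rho)$ and the curvature constant $\epsilon$ by $1/8$ from Theorem~\ref{theorem:risk_is_convex_numerical}. We apply \citet[][Theorem~3.4.6]{vandervaartWeakConvergenceEmpirical2023a} with the same setup:
\begin{itemize}
\item $M(f) = -\risk(f)$ and $\Mprocess_n(f) = -\frac1n\sum_i \sphereDist^2(\Exp_\origin f(X_i),Y_i)$;
\item $\Theta_n = \Vset(\rho)$, $\theta_n = f_\origin$, $\mathcal J_n = \Hnorm{\cdot}$;
\item $d(f,g) = c_0\Ltwonorm{f-g}$ with $c_0 = \sqrt{1/8}/\sqrt{2} = 1/4$;
\item $\underline\delta_n = \lambda_n\Hnorm{f_\origin}$.
\end{itemize}
Here $\hat f_n$ is the unique minimizer of $\risk_n(\cdot,\lambda_n)$ over $\Vset(\rho)$. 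It exists by Theorem~\ref{theorem:risk_is_convex_numerical}(2), which needs $\Vset(\rho)$ to be closed and convex. Convexity is clear: both defining constraints are convex in $f$, since they are norm balls, pointwise in $X$, around $0$ and around $\Log_\origin Y$. Closedness follows from Lemma~\ref{Lemma:essential_sup} together with Lemma~\ref{Lemma:bounding_fX}, using $\sup_x k(x,x)<\infty$.

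\textbf{Curvature condition.} Theorem~\ref{theorem:risk_is_convex_numerical}(1) gives $\risk(f)-\risk(f_\origin) \ge \frac18\Ltwonorm{f-f_\origin}^2$ for $f\in\Vset(\rho)$, with $f_\origin\in\Vset(\rho)$ by Assumption~\ref{assumption:support_of_Y_for_theory_numerical}. Hence
\[
\sup_{f\in\Vset(\rho):\,\delta/2<d(f,f_\origin)\le\delta} M(f)-M(f_\origin) \lesssim -\delta^2 .
\]
Underlying this is Lemma~\ref{Lemma:population_risk_convexity_inequality_numerical}, combined with the first-order optimality condition $\HinnerProd{\nabla\risk(f_\origin), f-f_\origin}\ge 0$ from Lemma~\ref{Lemma:convex_gradient} on the convex set $\Vset(\rho)$. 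We use that $f_\origin$ minimizes $\risk$ over $\Vset(\rho)$: it is the conditional Fréchet mean, so it minimizes pointwise.

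\textbf{Modulus of continuity.} The bound on $v_n(\delta)$ carries over verbatim, because it never used the defining constraint of $\Uset$. We symmetrize and define the Lipschitz maps
\[
h_i(z) = \sphereDist^2(\Exp_\origin f_\origin(X_i),Y_i) - \sphereDist^2(\Exp_\origin z, Y_i).
\]
These have constant $2\pi$ by Lemma~\ref{Lemma:distance_and_exp_fixed_p} and vanish at $f_\origin(X_i)$. Proposition~\ref{proposition:vector-contraction-with-absolute-value}, Jensen's inequality, and Lemma~\ref{Lemma:bounding_fX} then give, with $\Hnorm{f}<\delta/\lambda_n$,
\[
v_n(\delta) \lesssim \sqrt{c}\,\delta/\lambda_n =: \phi_n(\delta).
\]
This $\phi_n$ is linear in $\delta$, so $\phi_n(\delta)/\delta^\alpha$ is decreasing for $\alpha\in(1,2)$. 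The condition $\phi_n(\delta_n)\le\sqrt n\,\delta_n^2$ holds for $\delta_n\ge n^{-1/2}\lambda_n^{-1}$. Taking $\delta_n = \max(n^{-1/2}\lambda_n^{-1},\lambda_n\Hnorm{f_\origin})$ and $\lambda_n\asymp n^{-1/4}$ gives $\delta_n^2\asymp n^{-1/2}$, and the theorem yields the claim.

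The main obstacle is the curvature step. The convexity region $\Dregion_\rho(y)$ is centred at $\Log_\origin y$ and so varies with $Y$, unlike $\Cregion$. To apply Lemma~\ref{Lemma:ell_Hessian_numerical} pointwise we need both $f(X)$ and $f_\origin(X)$ to lie in $\Dregion_\rho(Y)$ almost surely. This holds exactly by the definition of $\Vset(\rho)$ and the assumption $f_\origin\in\Vset(\rho)$. I would check this carefully, together with the requirement $\Ynorm{f(X)}<\pi$, which keeps $\nabla\ell_Y$ well defined. I would also check that $\sphereDist(\origin,Y)\le r<\pi$ ensures $\Log_\origin Y$ is defined. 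Once these pointwise inclusions are verified, everything else is as in the $\pi/4$ case.
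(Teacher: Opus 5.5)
You follow the paper's own route. The paper proves this statement by saying the proof of Theorem~\ref{theorem:rates_for_f_weakest_assumptions_new} goes through verbatim, and you carry out exactly that transcription: $\Vset(\rho)$ for $\Uset$, curvature constant $1/8$, same $\delta_n$ and $\lambda_n$. The curvature step is fine, including your check that $f(X)$ and $f_\origin(X)$ both lie in $\Dregion_\rho(Y)$.

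\textbf{The gap.} It is in the modulus-of-continuity step, which you say ``carries over verbatim''. It does carry over, but it inherits a flaw already present in the $\pi/4$ proof. That flaw is fatal here because the statement does not assume $\dim\Yspace<\infty$.

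After Proposition~\ref{proposition:vector-contraction-with-absolute-value} you must bound $\Expected_\vep\sup_{f\in\Fset(\delta,\lambda_n)}S(f)$, where $S(f)=\sum_{i}\sum_{j}\vep_{ij}\YinnerProd{f(X_i),e_j}$. Jensen gives $(\Expected_\vep\sup_f S(f)^2)^{1/2}$. You cannot then replace $\Expected_\vep\sup_f$ by $\sup_f\Expected_\vep$, since that inequality goes the other way. This interchange is exactly what ``taking the expectation with respect to the Rademacher variables'' does in the paper's chain.

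Done correctly, $S(f)=\HinnerProd{f,W}$ with $W=\sum_{i,j}\vep_{ij}K_{X_i}e_j$, so
\[
\Expected_\vep\sup_{\Hnorm{f}<\delta/\lambda_n}S(f)\le\frac{\delta}{\lambda_n}\bigl(\Expected_\vep\Hnorm{W}^2\bigr)^{1/2}=\frac{\delta}{\lambda_n}\Bigl(\dim(T_\origin\sphere)\sum_{i=1}^n k(X_i,X_i)\Bigr)^{1/2}.
\]

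\textbf{Consequences.} If $\dim\Yspace<\infty$, this yields $v_n(\delta)\lesssim\sqrt{c\,\dim\Yspace}\,\delta/\lambda_n$, and the rest of your argument gives the same $n^{-1/2}$ rate.

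If $\dim\Yspace=\infty$, the bound is vacuous, and no sharper estimate after the contraction can rescue it. Take $k\equiv1$ and $Y=\Exp_\origin(\xi)$ with $\xi\in T_\origin\sphere$ symmetric and $\Ynorm{\xi}\le\rho/2$. By symmetry and uniqueness of the Fréchet mean, $f_\origin=0$, and $\Fset(\delta,\lambda_n)$ contains an $\Hspace$-ball of positive radius. Then $\sup_f S(f)=+\infty$ almost surely: the coordinatewise vector contraction has already lost everything.

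So the infinite-dimensional case, which this theorem explicitly covers, needs a different, dimension-free control of the localized process. One option is to exploit that $\ell_y(v)$ depends on $v$ only through $\Ynorm{v}$ and $\YinnerProd{v,y}$. For instance, the class $\{(x,y)\mapsto\YinnerProd{f(x),y}:\Hnorm{f}\le R\}$ has Rademacher complexity at most $R\sqrt{c/n}$ regardless of $\dim\Yspace$. As written, your argument, like the paper's, proves the statement only for finite-dimensional $\Yspace$.

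\textbf{Minor point.} Lemma~\ref{Lemma:essential_sup} gives openness, not closedness, for the strict constraint $\Ynorm{f(X)}<\pi$. $\Vset(\rho)$ is nonetheless closed, because that constraint is implied by the other one: $\Ynorm{\Log_\origin Y}\le r$ and $r+\rho<\pi$ for both pairs in Table~\ref{table:r_rho_pairs}.
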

\begin{proof}
  The proof follows the exact same steps of the proof of Theorem~\ref{theorem:rates_for_f_weakest_assumptions_new}; details are left to the reader.
\end{proof}

The rates obtained translate directly into rates for the conditional mean estimator $\hat \mu_n(x)= \Exp_\origin( \hat f_n(x) )$. Recall the definition of the conditional mean $\mu$ from \eqref{eq:muX}.
\begin{Theorem} \label{theorem:L2_convergence_mu_weakest_assumptions_numerical}
  Assume that Assumption~\ref{assumption:support_of_Y_for_theory_numerical} holds,  and $\sup_x k(x,x) =c < \infty$. 
  For $\lambda_n \asymp n^{-1/4}$,
  \[
    \int_\Xspace \sphereDist^2(\mu(x), \hat \mu_n(x)) \ud\theLaw_X(x) = O_{\theLaw^*}(n^{-1/2})
  \]
\end{Theorem}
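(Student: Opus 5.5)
The plan mirrors the proof of Theorem~\ref{theorem:L2_convergence_mu_weakest_assumptions_new}. The rate for $\hat f_n$ in $L^2(\theLaw_X)$ is already available from Theorem~\ref{theorem:rates_for_f_weakest_assumptions_numerical}. It therefore only remains to transfer it to the spherical distance between $\mu(x)=\Exp_\origin(f_\origin(x))$ and $\hat\mu_n(x)=\Exp_\origin(\hat f_n(x))$.

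\emph{Step 1 (representation of $\mu$).} By definition \eqref{eq:f0}, $f_\origin=\Log_\origin\circ\mu$. Assumption~\ref{assumption:support_of_Y_for_theory_numerical} gives $f_\origin\in\Vset(\rho)$, so $\Ynorm{f_\origin(X)}<\pi$ almost surely. Since $\Exp_\origin$ and $\Log_\origin$ are mutually inverse on the open ball of radius $\pi$, we get $\mu(X)=\Exp_\origin(f_\origin(X))$ almost surely. Hence $\mu$ and $\Exp_\origin\circ f_\origin$ coincide $\theLaw_X$-almost everywhere, and the integral can be written with $\Exp_\origin\circ f_\origin$ in place of $\mu$.

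\emph{Step 2 (Lipschitz transfer).} Lemma~\ref{Lemma:distance_and_exp_fixed_p} gives, pointwise in $x$,
\[
\sphereDist\bigl(\Exp_\origin(f_\origin(x)),\Exp_\origin(\hat f_n(x))\bigr)\le \Ynorm{f_\origin(x)-\hat f_n(x)} .
\]
This bound holds for arbitrary tangent vectors, so it needs no control on $\Ynorm{\hat f_n(x)}$ beyond $\hat f_n\in\Hspace$. Squaring and integrating against $\theLaw_X$ yields
\[
\int_\Xspace \sphereDist^2(\mu(x),\hat\mu_n(x))\,\ud\theLaw_X(x)\le \Ltwonorm{\hat f_n-f_\origin}^2 .
\]

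\emph{Step 3 (conclusion).} With $\lambda_n\asymp n^{-1/4}$, Theorem~\ref{theorem:rates_for_f_weakest_assumptions_numerical} gives that the right-hand side is $O_{\theLaw^*}(n^{-1/2})$. Monotonicity of outer probability then transfers this order to the left-hand side.

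The only delicate point is Step 1. One must check that $f_\origin$ is well defined and inverts $\Exp_\origin$ on a full-measure set, and here the relaxed assumption $\sphereDist(\origin,Y)\le r$ with $r>\pi/4$ replaces the earlier hypothesis. This is, however, directly guaranteed by the condition $\Ynorm{f_\origin(X)}<\pi$ built into $\Vset(\rho)$. The substantive work has already been done in Theorem~\ref{theorem:rates_for_f_weakest_assumptions_numerical}, which rests on the numerically verified convexity of Lemma~\ref{lemma:ell_Hessian_numerical}.
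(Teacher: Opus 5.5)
Your proposal is correct and follows the paper's proof. The paper likewise bounds the integrated squared spherical distance by $\Ltwonorm{\hat f_n - f_\origin}^2$ via Lemma~\ref{Lemma:distance_and_exp_fixed_p}, then invokes Theorem~\ref{theorem:rates_for_f_weakest_assumptions_numerical}; your Step~1 only makes explicit the identity $\mu = \Exp_\origin\circ f_\origin$ ($\theLaw_X$-a.e.), which the paper leaves implicit and which really rests on $\Exp_\origin\circ\Log_\origin=\id$ on $\sphere\setminus\{-\origin\}$, so the bound $\Ynorm{f_\origin(X)}<\pi$ you cite is not actually needed.
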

\begin{proof}
  The proof follows the exact same steps of the proof of Theorem~\ref{theorem:L2_convergence_mu_weakest_assumptions_new}; details are left to the reader.
\end{proof}

\subsubsection{Rates of convergence depending on VVRKHS smoothness}

\begin{Theorem} \label{theorem:rates_for_f_depending_on_smoothness_numerical}
  Assume that Assumption~\ref{assumption:support_of_Y_for_theory_numerical} holds, $\dim(\Yspace) < \infty$, and $\sup_x k(x,x) =c < \infty$.  We have the following results, depending on the rates of decay of the eigenvalues of $\KIntegralOperator$ given in Definition~\ref{definition:eigenvalue_decay_rates}:
            \[
               \Ltwonorm{\hat f_n - f_\origin}^2 = \begin{cases}
                     O_{\theLaw^*}(n^{-1}) & \text{for finite-rank kernel and } \lambda_n \asymp e^{-n}, \\
                 O_{\theLaw^*}(n^{-\frac{p}{p+1}}) & \text{for polynomial decay and } \lambda_n \asymp n^{-\frac{p}{2(p+1)}},\\
                 O_{\theLaw^*}( (\log(n))^{q}/n) & \text{for stretched exponential decay and } \lambda_n \asymp n^{-1/2}.
                \end{cases}
            \]
\end{Theorem}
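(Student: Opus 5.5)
The plan is to repeat the proof of Theorem~\ref{theorem:rates_for_f_depending_on_smoothness_new} almost line by line. The only changes are to replace $\Uset$ by $\Vset(\rho)$ and to replace the curvature constant $\epsilon$ of Theorem~\ref{theorem:ell_Hessian_new} by the explicit constant $1/4$ coming from Lemma~\ref{Lemma:population_risk_convexity_inequality_numerical}. Concretely, I would apply \citet[][Theorem~3.4.6]{vandervaartWeakConvergenceEmpirical2023a} with
\[
M(f) = -\risk(f), \qquad \Mprocess_n(f) = -\risk_n(f,0), \qquad \Theta_n = \Vset(\rho), \qquad \mathcal J_n = \Hnorm{\cdot},
\]
take $d(f,g)$ proportional to $\Ltwonorm{f-g}$ (with $\nu_0 = 1/4$), and set $\underline\delta_n = \lambda_n \Hnorm{f_\origin}$. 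The estimator $\hat f_n$ is the unique minimizer over $\Vset(\rho)$ given by Theorem~\ref{theorem:risk_is_convex_numerical}, and $f_\origin \in \Vset(\rho)$ by Assumption~\ref{assumption:support_of_Y_for_theory_numerical}. The first hypothesis of the theorem, that $M(f)-M(f_\origin) \le -\delta^2$ on the shells $\delta/2< d(f,f_\origin)\le\delta$, then follows from part~1 of Theorem~\ref{theorem:risk_is_convex_numerical}.

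For the modulus of continuity I would follow the chain of inequalities used before: symmetrization, then passing to $g=f-f_\origin \in \Gset(\delta,\lambda_n)$ with the Lipschitz functions $h_i(z)=\sphereDist^2(\Exp_\origin f_\origin(X_i),Y_i)-\sphereDist^2(\Exp_\origin(z+f_\origin(X_i)),Y_i)$. These are $2\pi$-Lipschitz by Lemma~\ref{Lemma:distance_and_exp_fixed_p} and satisfy $h_i(0)=0$, so Proposition~\ref{proposition:vector-contraction-with-absolute-value} applies. Next I would split over a finite orthonormal basis of $T_\origin\sphere$; this is where $\dim\Yspace<\infty$ is used. Lemma~\ref{Lemma:projection_against_y_is_in_scalar_RKHS} reduces each coordinate to a scalar RKHS ball intersected with an $L^2$ ball, and the local Rademacher bound of \citet[][Theorem~6.5]{bartlettLocalRademacherComplexities2005} handles that intersection. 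This gives $v_n(\delta)\lesssim \delta\, N(\check\lambda_n^2)^{1/2}$ with $\check\lambda_n^2 \asymp \lambda_n^2$, so $\delta_n \asymp \max\{n^{-1/2}N(\check\lambda_n^2)^{1/2}, \lambda_n\}$.

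None of these steps uses the geometric constraint defining the parameter set. They use only that the $h_i$ are Lipschitz and that the difference class sits inside $\Gset$, which remains true for $\Vset(\rho)$. The one point to check is that $\Vset(\rho)$ is still compatible with the $M$-estimation theorem. Since $\hat f_n$ is defined as the exact minimizer over $\Vset(\rho)$ of $\Mprocess_n - \lambda_n^2\mathcal J_n^2$, the penalized-criterion requirement is satisfied. The set $\Vset(\rho)$ is convex and closed, which follows from Lemma~\ref{Lemma:essential_sup} together with the closedness of $\{\Ynorm{f(X)-\Log_\origin Y}\le\rho\}$. I expect this to be the main technical check, since the first condition $\Ynorm{f(X)}<\pi$ is open rather than closed. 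I would address it by noting that when $r<\pi-\rho$ the second condition already forces $\Ynorm{f(X)}\le r+\rho<\pi$, which holds for both pairs in Table~\ref{table:r_rho_pairs}.

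Finally, I would plug in the effective-dimension bounds. A finite-rank kernel gives $N\le L$. Polynomial decay gives $N(\lambda^2)^{1/2}=O(\lambda^{-1/p})$ by Lemma~\ref{Lemma:effective_dimension_polynomial_decay}. Stretched exponential decay gives $N(\lambda^2)^{1/2}=O((\log 1/\lambda)^{q/2})$ by Lemma~\ref{Lemma:effective_dimension_stretched_exponential}. Balancing each of these against $\lambda_n$ with the stated choices of $\lambda_n$ yields the three rates.
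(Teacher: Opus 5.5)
Your proposal is correct and follows the same route as the paper, which simply reruns the proof of Theorem~\ref{theorem:rates_for_f_depending_on_smoothness_new} with $\Uset$ replaced by $\Vset(\rho)$ and the curvature constant taken from Lemma~\ref{Lemma:population_risk_convexity_inequality_numerical}. Your extra check is a useful one that the paper does not spell out: for both pairs in Table~\ref{table:r_rho_pairs} you have $r+\rho<\pi$, so the open constraint $\Ynorm{f(X)}<\pi$ is implied by the closed one and $\Vset(\rho)$ is therefore closed.
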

\begin{proof}
  The proof follows the exact same steps of the proof of Theorem~\ref{theorem:rates_for_f_depending_on_smoothness_new}; details are left to the reader.
\end{proof}
The rates obtained translate directly into rates for the conditional mean estimator $\hat \mu_n(x)= \Exp_\origin( \hat f_n(x) )$. 
\begin{Theorem} \label{theorem:L2_convergence_mu_smoothness_dependent_numerical}
  Assume that Assumption~\ref{assumption:support_of_Y_for_theory_numerical} holds, $\dim(\Yspace) < \infty$, and $\sup_x k(x,x) =c < \infty$.  We have the following results, depending on the rates of decay of the eigenvalues of $\KIntegralOperator$ given in Definition~\ref{definition:eigenvalue_decay_rates}:
            \[
                \int_\Xspace \sphereDist^2(\mu(x), \hat \mu_n(x)) \ud\theLaw_X(x) = 
                \begin{cases}
                     O_{\theLaw^*}(n^{-1}) & \text{for finite-rank kernel and } \lambda_n \asymp e^{-n} , \\
                 O_{\theLaw^*}(n^{-\frac{p}{p+1}}) & \text{for polynomial decay and } \lambda_n \asymp n^{-\frac{p}{2(p+1)}},\\
                 O_{\theLaw^*}( (\log(n))^{q}/n) & \text{for stretched exponential decay and } \lambda_n \asymp n^{-1/2}.
                \end{cases}
            \]
\end{Theorem}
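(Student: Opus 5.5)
This follows the proof of Theorem~\ref{theorem:L2_convergence_mu_smoothness_dependent_new}. The target statement, Theorem~\ref{theorem:L2_convergence_mu_smoothness_dependent_numerical}, is about the conditional mean $\mu$. The plan is to reduce it to the corresponding statement for $f$, namely Theorem~\ref{theorem:rates_for_f_depending_on_smoothness_numerical}, using the fact that $\Exp_\origin$ is $1$-Lipschitz.

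First I would check the pointwise identity $\mu(x) = \Exp_\origin(f_\origin(x))$ for $\theLaw_X$-almost every $x$. By definition, $f_\origin = \Log_\origin \circ \mu$. Under Assumption~\ref{assumption:support_of_Y_for_theory_numerical} we have $f_\origin \in \Vset(\rho)$, so $\Ynorm{f_\origin(X)} < \pi$ almost surely. Hence $\mu(X) \neq -\origin$, and $\Exp_\origin$ inverts $\Log_\origin$ on $\sphere\setminus\{-\origin\}$.

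Next, Lemma~\ref{Lemma:distance_and_exp_fixed_p} holds for arbitrary tangent vectors. Applied pointwise, it gives
\[
\sphereDist^2(\mu(x),\hat\mu_n(x)) = \sphereDist^2\bigl(\Exp_\origin f_\origin(x), \Exp_\origin \hat f_n(x)\bigr) \le \Ynorm{f_\origin(x)-\hat f_n(x)}^2 .
\]
Integrating against $\theLaw_X$ yields
\[
\int_\Xspace \sphereDist^2(\mu(x),\hat\mu_n(x))\,\ud\theLaw_X(x) \le \Ltwonorm{\hat f_n - f_\origin}^2 .
\]
The three rates, with their choices of $\lambda_n$, then transfer directly from Theorem~\ref{theorem:rates_for_f_depending_on_smoothness_numerical}. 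Stochastic boundedness in outer probability is preserved because the bound is a pointwise domination.

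The only delicate point is that $\hat f_n$ here is the minimizer over $\Vset(\rho)$ given by Theorem~\ref{theorem:risk_is_convex_numerical}, so I must rely on Theorem~\ref{theorem:rates_for_f_depending_on_smoothness_numerical} being stated for that estimator. I take that as given. No further obstacle is expected.
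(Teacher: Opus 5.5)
Your proposal is correct and is essentially the paper's argument. The paper also bounds the integrated squared spherical distance by $\Ltwonorm{\hat f_n - f_\origin}^2$ using Lemma~\ref{Lemma:distance_and_exp_fixed_p}, then invokes Theorem~\ref{theorem:rates_for_f_depending_on_smoothness_numerical}; your check that $\mu = \Exp_\origin \circ f_\origin$ almost surely and your remark on monotonicity of outer probability simply spell out steps the paper leaves implicit.
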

\begin{proof}
  The proof follows the exact same steps of the proof of Theorem~\ref{theorem:L2_convergence_mu_smoothness_dependent_new}; details are left to the reader.
\end{proof}

\subsection{Strong Consistency}

Let $\widetilde \Vset(\rho) = \{ f \in V(\rho) \mid \Hnorm{f} \leq C \}$ for some arbitrary $C > \Hnorm{f_\origin}$.
Let $\tilde f_n$ be the minimizer of $\risk_n(\cdot, \lambda_n)$ over $\widetilde \Vset(\rho)$.

\begin{Theorem}\label{theorem:strong_consistency_numerical} 
  Assume that Assumptions~\ref{assumption:support_of_Y_for_theory_numerical} and~\ref{assumption:for_almost_sure_consistency} hold, and $\sup_x k(x,x) =c < \infty$.
  If $\lambda_n \downarrow 0$ as $n \to \infty$,
  \begin{align*}
    \Ltwonorm{\tilde f_n - f_\origin} \outerASconv 0, \quad \text{as } n \to \infty.
  \end{align*}
\end{Theorem}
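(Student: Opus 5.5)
The plan is to repeat the proof of Theorem~\ref{theorem:strong_consistency_new} almost verbatim, replacing $\widetilde \Uset$ by $\widetilde \Vset(\rho)$ and Theorem~\ref{theorem:risk_is_convex_new} by Theorem~\ref{theorem:risk_is_convex_numerical}. Set $\psi_f(x,y) = \sphereDist^2(\Exp_\origin(f(x)), y)$, $\mathcal F = \{\psi_f \mid f \in \widetilde\Vset(\rho)\}$ and $\mathcal F' = \{\psi_f \mid f \in \ball_\Hspace(0,C)\}$.

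The first step is to control bracketing numbers. Since $\widetilde \Vset(\rho) \subseteq \ball_\Hspace(0,C)$, monotonicity of bracketing numbers and Lemma~\ref{Lemma:bracketing_for_strong_consistency} give
\[
\bracketing(\vep, \mathcal F, L^1(\theLaw)) \leq \covering(\vep(4\pi)^{-1}, \ball_\Hspace(0,C), \inftynorm{\cdot}).
\]
The Lipschitz bound in that lemma uses only Lemma~\ref{Lemma:distance_and_exp_fixed_p} and the boundedness of $\sphereDist$ by $\pi$, so it does not depend on the support assumption. Under Assumption~\ref{assumption:for_almost_sure_consistency}, Lemma~\ref{Lemma:ball_in_VVRKHS_compact_in_sup_norm} shows this covering number is finite for every $\vep>0$. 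Theorem~2.4.1 of \citet{vandervaartWeakConvergenceEmpirical2023a} then gives $\sup_{f\in\widetilde\Vset(\rho)}|(\Pprocess_n-\theLaw)\psi_f| \outerASconv 0$. The penalty satisfies $\lambda_n^2\Hnorm{f}^2 \le \lambda_n^2 C^2 \to 0$ uniformly on $\widetilde\Vset(\rho)$. Together these give the uniform convergence $\sup_{\widetilde\Vset(\rho)}|\risk_n(\cdot,\lambda_n)-\risk|\outerASconv 0$.

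The second step is the basic comparison. Since $f_\origin\in\widetilde\Vset(\rho)$ (by Assumption~\ref{assumption:support_of_Y_for_theory_numerical} and $C>\Hnorm{f_\origin}$), the minimizing property of $\tilde f_n$ yields
\[
\risk(\tilde f_n)-\risk(f_\origin) \le \sup_{\widetilde\Vset(\rho)}|\risk-\risk_n(\cdot,\lambda_n)| + \bigl(\risk_n(f_\origin,\lambda_n)-\risk(f_\origin)\bigr).
\]
By the first step this is bounded by a measurable $\Delta_n \ASconv 0$. Since $\tilde f_n\in\Vset(\rho)$, part 1 of Theorem~\ref{theorem:risk_is_convex_numerical} gives $\Ltwonorm{\tilde f_n-f_\origin}^2\le 8\Delta_n$, which proves the claim.

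The main obstacle is well-posedness of $\tilde f_n$, meaning existence of a minimizer over $\widetilde\Vset(\rho)$. To handle it, I would use Lemma~\ref{Lemma:essential_sup} with the modulus $\eta(t)=\sqrt c\,t$ from Lemma~\ref{Lemma:bounding_fX} to show that the set $\{\Ynorm{f(X)-\Log_\origin Y}\le\rho \text{ a.s.}\}$ is closed and convex. Convexity holds because a norm ball is convex in $f$. Intersecting with the closed ball $\ball_\Hspace(0,C)$ gives a closed, convex, bounded set. The condition $\Ynorm{f(X)}<\pi$ is open rather than closed. However, $\Ynorm{f(X)}\le \zeta+\rho\le r+\rho<\pi$ for both table pairs, so it holds automatically on the closed set, and the constraint set is closed.

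On this set, $\risk_n(\cdot,\lambda_n)$ is convex, since each $\ell_{Y_i}$ is convex on $\Dregion_\rho(Y_i)$ by Lemma~\ref{lemma:ell_Hessian_numerical}, and it is continuous. Lemma~\ref{Lemma:convex-closed-semi_continuity} then provides a minimizer. The a.s.\ constraint transfers to the sample points only outside a null event, which I would absorb into the outer-a.s.\ statement.
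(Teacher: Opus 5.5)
Your proposal is correct and follows the paper's route: the paper's proof just transcribes the argument of Theorem~\ref{theorem:strong_consistency_new}. That argument runs through bracketing via Lemma~\ref{Lemma:bracketing_for_strong_consistency} and Lemma~\ref{Lemma:ball_in_VVRKHS_compact_in_sup_norm}, uniform Glivenko--Cantelli, the basic comparison using $f_\origin\in\widetilde\Vset(\rho)$, and the quadratic growth of Theorem~\ref{theorem:risk_is_convex_numerical}, which gives $\Ltwonorm{\tilde f_n-f_\origin}^2\le 8\Delta_n$.

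Your paragraph on the existence of $\tilde f_n$ goes beyond the paper, which simply presupposes the minimizer. That argument is also sound: closedness via Lemma~\ref{Lemma:essential_sup}, the bound $r+\rho<\pi$ for both table pairs, and Lemma~\ref{Lemma:convex-closed-semi_continuity}.
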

\begin{proof}
The proof follows the exact same steps of the proof of Theorem~\ref{theorem:strong_consistency_new}; details are left to the reader.
\end{proof}

\else
\fi

\end{document}